\def\llncs{0}
\def\fullpage{1}
\def\anonymous{0}
\def\authnote{0}
\def\notxfont{0}
\def\submission{0}
\definecolor{darkblue}{rgb}{0,0,0.6}
\definecolor{darkgreen}{rgb}{0,0.5,0}
\definecolor{maroon}{rgb}{0.5,0.1,0.1}
\definecolor{dpurple}{rgb}{0.2,0,0.65}
\DeclareMathAlphabet{\mathpzc}{OT1}{pzc}{m}{it}
\newtheoremstyle{thicktheorem}%
{\topsep}
{\topsep}
{\itshape}{}%
{\bfseries}%
{.}
{ }%
{\thmname{#1}\thmnumber{ #2}%
		\thmnote{ (#3)}%
}
\newtheoremstyle{remark}
{\topsep}
{\topsep}
	{}
	{}
	{}
	{.}
	{ }
	{\textit{\thmname{#1}}\thmnumber{ #2}
			\thmnote{ (#3)}%
	}
	\theoremstyle{thicktheorem}
	\newtheorem{theorem}{Theorem}[section]
	\newtheorem{lemma}[theorem]{Lemma}
	\newtheorem{corollary}[theorem]{Corollary}
	\newtheorem{definition}[theorem]{Definition}
	\theoremstyle{remark}
	\newtheorem{remark}[theorem]{Remark}
\newtheorem{MyClaim}[theorem]{Claim}
\Crefname{MyClaim}{Claim}{Claims}
	\crefname{theorem}{Theorem}{Theorems}
	\crefname{assumption}{Assumption}{Assumptions}
	\crefname{construction}{Construction}{Constructions}
	\crefname{corollary}{Corollary}{Corollaries}
	\crefname{conjecture}{Conjecture}{Conjectures}
	\crefname{definition}{Definition}{Definitions}
	\crefname{exmaple}{Example}{Examples}
	\crefname{experiment}{Experiment}{Experiments}
	\crefname{counterexample}{Counterexample}{Counterexamples}
	\crefname{lemma}{Lemma}{Lemmata}
	\crefname{observation}{Observation}{Observations}
	\crefname{proposition}{Proposition}{Propositions}
	\crefname{remark}{Remark}{Remarks}
	\crefname{claim}{Claim}{Claims}
	\crefname{fact}{Fact}{Facts}
	\crefname{note}{Note}{Notes}
 \crefname{appendix}{App.}{Appendices}
 \crefname{section}{Sec.}{Sections}
\renewcommand*{\backref}[1]{}
	\renewcommand*{\backref}[1]{(Cited on page~#1.)}
\newcommand{\mor}[1]{}
\newcommand{\minki}[1]{}
\newcommand{\takashi}[1]{}
\newcommand{\mor}[1]{$\ll$\textsf{\color{red} Tomoyuki: { #1}}$\gg$}
\newcommand{\takashi}[1]{$\ll$\textsf{\color{orange} Takashi: { #1}}$\gg$}
\newcommand{\minki}[1]{$\ll$\textsf{\color{darkgreen} Minki: { #1}}$\gg$}
\newcommand{\calY}{\mathcal{Y}}
\newcommand{\calX}{\mathcal{X}}
\newcommand{\swap}{\mathsf{Swap}}
\newcommand{\eval}{\mathsf{Eval}}
\newcommand{\Tr}{\mathrm{Tr}}
\newcommand{\garbage}{\mathrm{garbage}}
\newcommand{\alive}{\mathrm{Alive}}
\newcommand{\dead}{\mathrm{Dead}}
\newcommand{\qsk}{\mathpzc{sk}}
\newcommand{\qct}{\mathpzc{ct}}
\newcommand{\dms}{\mathsf{DMS}}
\newcommand{\halmic}{\mathsf{HM}}
\newcommand{\inj}{\mathsf{inj}}
\newcommand{\col}{\mathsf{col}}
\newcommand{\injsetup}{\mathsf{keyed}\text{-}\mathsf{inj}}
\newcommand{\gl}{\mathsf{GL}}
\newcommand{\glsetup}{\mathsf{GL}\text{-}\mathsf{setup}}
\newcommand{\ywlq}{\mathsf{YWLQ}}
\newcommand{\my}{\mathsf{MY}}
\newcommand{\StateGen}{\mathsf{StateGen}}
\newcommand{\sample}{\overset{\hspace{0.1em}\mathsf{\scriptscriptstyle\$}}{\leftarrow}}
\newcommand{\seteq}{\coloneqq}
\newcommand{\cA}{\mathcal{A}}
\def\makeuppercase#1{
\expandafter\newcommand\csname tl#1\endcsname{\widetilde{#1}}
}
\def\makelowercase#1{
\expandafter\newcommand\csname tl#1\endcsname{\widetilde{#1}}
}
\newcommand{\regC}{\mathbf{C}}
\newcommand{\regR}{\mathbf{R}}
\newcommand{\regZ}{\mathbf{Z}}
\newcommand{\regD}{\mathbf{D}}
\newcommand{\regB}{\mathbf{B}}
\newcommand{\regA}{\mathbf{A}}
\newcommand{\regX}{\mathbf{X}}
\newcommand{\regY}{\mathbf{Y}}
\newcommand{\regPP}{\mathbf{P}}
\newcommand{\regRX}{\mathbf{R}_{\mathsf{X}}}
\newcommand{\regRK}{\mathbf{R}_{\mathsf{K}}}
\newcommand{\secp}{\lambda}
\newcommand{\crs}{\mathsf{crs}}
\newcommand{\A}{\entity{A}}
\newcommand{\B}{\entity{B}}
\newcommand*{\sk}{\keys{sk}}
\newcommand*{\pk}{\keys{pk}}
\newcommand*{\pp}{\keys{pp}}
\newcommand*{\vk}{\keys{vk}}
\newcommand*{\td}{\keys{td}}
\newcommand*{\keys}[1]{\mathsf{#1}}
\newcommand*{\algo}[1]{\ensuremath{\mathsf{#1}}}
\newcommand*{\entity}[1]{\mathcal{#1}}
\newenvironment{boxfig}[2]{\begin{figure}[#1]\fbox{\begin{minipage}{0.97\linewidth}
                        \vspace{0.2em}
                        \makebox[0.025\linewidth]{}
                        \begin{minipage}{0.95\linewidth}
            {{
                        #2 }}
                        \end{minipage}
                        \vspace{0.2em}
                        \end{minipage}}}{\end{figure}}
\newcommand{\bit}{\{0,1\}}
\newcommand{\setup}{\algo{Setup}}
\newcommand{\keygen}{\algo{KeyGen}}
\newcommand{\enc}{\algo{Enc}}
\newcommand{\dec}{\algo{Dec}}
\newcommand{\sign}{\algo{Sign}}
\newcommand{\vrfy}{\algo{Vrfy}}
\newcommand{\negl}{{\mathsf{negl}}}
\newcommand{\poly}{{\mathrm{poly}}}
\DeclareRobustCommand
\title{From the Hardness of Detecting Superpositions to Cryptography: Quantum Public Key Encryption and Commitments}
\author[1]{Minki Hhan}
\author[2]{\hskip 1em Tomoyuki Morimae}
\author[2,3]{\hskip 1em Takashi Yamakawa}
\affil[1]{{\small KIAS, Seoul, Republic of Korea}\authorcr{\small minkihhan@kias.re.kr}}
\affil[2]{{\small Yukawa Institute for Theoretical Physics, Kyoto University, Kyoto, Japan}\authorcr{\small tomoyuki.morimae@yukawa.kyoto-u.ac.jp}}
\affil[3]{{\small NTT Social Informatics Laboratories, Tokyo, Japan}\authorcr{\small takashi.yamakawa.ga@hco.ntt.co.jp}}
\author{
 Minki Hhan\inst{1} \and Tomoyuki Morimae\inst{2} \and Takashi Yamakawa\inst{2,3}
}
\institute{KIAS, Seoul, Republic of Korea \and
 Yukawa Institute for Theoretical Physics, Kyoto University, Kyoto, Japan \and NTT Social Informatics Laboratories, Tokyo, Japan
}
\author{}\institute{}
\date{}
\begin{document}
\ifnum\submission=0
\begin{flushright}
YITP-22-109
\end{flushright}
\fi

{\let\newpage\relax\maketitle}

\ifnum\llncs=1
\thispagestyle{plain}
\fi

\begin{abstract} 
Recently, Aaronson et al. (arXiv:2009.07450) showed that detecting interference between two orthogonal states is as hard as swapping these states. While their original motivation was from quantum gravity, we show its applications in quantum cryptography. 
\begin{enumerate}
    \item
    We construct the first public key encryption scheme from cryptographic \emph{non-abelian} group actions. Interestingly, the ciphertexts of our scheme are quantum even if messages are classical. 
    This resolves an open question posed by  Ji et al. (TCC '19).  We construct the scheme through a new abstraction called swap-trapdoor function pairs, which may be of independent interest. 
    \item We give a simple and efficient compiler that converts the flavor of quantum bit commitments.  
    More precisely, for any prefix
    $\mathrm{X},\mathrm{Y}\in \{\text{computationally,statistically,perfectly}\}$,  
    if the base scheme is X-hiding and Y-binding, then the resulting scheme is Y-hiding and X-binding. Our compiler calls the base scheme only once. Previously, all known compilers call the base schemes polynomially many times (Crépeau et al., Eurocrypt '01 and Yan, Asiacrypt '22).
    For the security proof of the conversion,  we generalize the result of Aaronson et al. by considering quantum auxiliary inputs. 
\end{enumerate}
\end{abstract}

\ifnum\llncs=0
\thispagestyle{empty}
\clearpage

\newpage
\setcounter{tocdepth}{2}
\tableofcontents
\thispagestyle{empty}
\clearpage
\pagenumbering{arabic}
\fi

\ifnum\llncs=0 
\section{Introduction}\label{sec:intro}
When can we efficiently distinguish a superposition of two orthogonal states from their probabilistic mix?
A folklore answer to this question was that we can efficiently distinguish them whenever we can efficiently map one of the states to the other. Recently, Aaronson, Atia and, Susskind~\cite{AAS20} gave a complete answer to the question. They confirmed that the folklore was almost correct but what actually characterizes the distinguishability is the ability to \emph{swap} the two states rather than the ability to map one of the states to the other.\footnote{We remark that the meaning of ``swap'' here is different from that of the SWAP gate as explained below.} 

We explain their result in more detail by using the example of Schr{\"o}dinger's cat following~\cite{AAS20}. Let $\ket{\alive}$ and $\ket{\dead}$ be orthogonal states, which can be understood as the states of alive and dead cats in Schr{\"o}dinger's cat experiment. Then, the authors showed that one can efficiently swap $\ket{\alive}$ and $\ket{\dead}$ (i.e., there is an efficiently computable unitary $U$ such that $U\ket{\dead}=\ket{\alive}$ and $U\ket{\alive}=\ket{\dead}$) if and only if there is an efficient distinguisher that distinguishes $\frac{\ket{\alive}+\ket{\dead}}{\sqrt{2}}$ and $\frac{\ket{\alive}-\ket{\dead}}{\sqrt{2}}$ with certainty. 
Note that distinguishing $\frac{\ket{\alive}+\ket{\dead}}{\sqrt{2}}$ and $\frac{\ket{\alive}-\ket{\dead}}{\sqrt{2}}$ is equivalent to distinguishing $\frac{\ket{\alive}+\ket{\dead}}{\sqrt{2}}$ and the uniform probabilistic mix of $\ket{\alive}$ and $\ket{\dead}$.\footnote{
The distinguishing advantage is (necessarily) halved.
This can be seen by the following equality: \tiny\begin{align*}
 &\frac{1}{2}\left(\ket{\alive}\bra{\alive}+\ket{\dead}\bra{\dead}\right)\\
 =
 &\frac{1}{2}\left(\left(\frac{\ket{\alive}+\ket{\dead}}{\sqrt{2}}\right)\left(\frac{\bra{\alive}+\bra{\dead}}{\sqrt{2}}\right)+\left(\frac{\ket{\alive}-\ket{\dead}}{\sqrt{2}}\right)\left(\frac{\bra{\alive}-\bra{\dead}}{\sqrt{2}}\right)\right).
 \end{align*}
 } 
 Moreover, they showed that the equivalence is robust in the sense that a partial ability to swap $\ket{\alive}$ and $\ket{\dead}$, i.e., $|\bra{\dead}U\ket{\alive}+\bra{\alive}U\ket{\dead}|=\Gamma$ for some $\Gamma > 0$
is equivalent to distinguishability of $\frac{\ket{\alive}+\ket{\dead}}{\sqrt{2}}$ and $\frac{\ket{\alive}-\ket{\dead}}{\sqrt{2}}$ with advantage $\Delta=\Gamma/2$. 
They gave an interpretation of their result that observing interference between alive and dead cats is ``necromancy-hard'', i.e., at least as hard as bringing a dead cat back to life. 

While their original motivation was from quantum gravity, we find their result interesting from cryptographic perspective. Roughly speaking, the task of swapping $\ket{\alive}$ and $\ket{\dead}$ can be thought of as a kind of search problem where one is given $\ket{\alive}$ (resp. $\ket{\dead}$) and asked to ``search'' for $\ket{\dead}$ (resp. $\ket{\alive}$). On the other hand, the task of distinguishing $\frac{\ket{\alive}+\ket{\dead}}{\sqrt{2}}$ and $\frac{\ket{\alive}-\ket{\dead}}{\sqrt{2}}$ is apparently a decision problem.  
From this perspective, we can view their result as a ``search-to-decision'' reduction. Search-to-decision reductions have been playing the central role in cryptography, e.g., the celebrated Goldreich-Levin theorem~\cite{STOC:GolLev89}.  
Based on this observation, we tackle the following two problems in quantum cryptography.\footnote{It may be a priori unclear why these problems are related to \cite{AAS20}. This will become clearer in the technical overview in \Cref{sec:overview}.} 



\smallskip
\noindent\textbf{Public key encryption from non-abelian group actions.}
Brassard and Yung~\cite{C:BraYun90} initiated the study of cryptographic group actions. We say that a group $G$ acts on a set $S$ by an action $\star:G\times S\rightarrow S$ if the following are satisfied:
\begin{enumerate}
    \item For the identity element $e\in G$ and any $s\in S$, we have $e\star s=s$.
    \item For any $g,h\in G$ and any $s\in S$, we have $(gh)\star s = g\star(h\star s)$.
\end{enumerate}
For a cryptographic purpose, we assume (at least) that the group action is one-way, i.e., it is hard to find $g'$ such that $g'\star s=g\star s$ given $s$ and $g\star s$. The work of \cite{C:BraYun90} proposed instantiations of such cryptographic group actions based on the hardness of discrete logarithm, factoring, or graph isomorphism problems. 

Cryptographic group actions are recently gaining a renewed attention from the perspective of \emph{post-quantum} cryptography. Ji et al. \cite{TCC:JQSY19} proposed new instantiations based on general linear group actions on tensors. Alamati et al. \cite{AC:ADMP20} proposed isogeny-based instantiations based on earlier works~\cite{Couveignes06,cryptoeprint:2006/145,AC:CLMPR18}. Both of them are believed to be secure against quantum adversaries. 

An important difference between the instantiations in \cite{TCC:JQSY19} and \cite{AC:ADMP20} is that the former considers \emph{non-abelian} groups whereas the latter considers \emph{abelian} groups. 
Abelian group actions are particularly useful because they give rise to a non-interactive key exchange protocol similar to Diffie-Hellman key exchange~\cite{DH76}.
Namely, suppose that $s\in S$ is published as a public parameter, Alice publishes $g_A \star s$  as a public key while keeping $g_A$ as her secret key, and Bob publishes $g_B \star s$  as a public key while keeping $g_B$ as his secret key. Then, they can establish a shared key $g_A\star (g_B\star s) = g_B\star (g_A\star s)$. On the other hand, an eavesdropper Eve cannot know the shared key since she cannot know $g_A$ or $g_B$ by the one-wayness of the group action.\footnote{For the actual security proof, we need a stronger assumption than the one-wayness. This is similar to the necessity of decisional Diffie-Hellman assumption, which is stronger than the mere hardness of the discrete logarithm problem, for proving security of Diffie-Hellman key exchange.} 
This also naturally gives a public key encryption (PKE) scheme similar to ElGamal encryption~\cite{ElGamal85}.  
On the other hand, the above construction does not work if $G$ is a non-abelian group. Indeed, cryptographic applications given in \cite{TCC:JQSY19} are limited to \emph{Minicrypt} primitives~\cite{Impagliazzo95}, i.e., those that do not imply PKE in a black-box manner.  Thus, \cite{TCC:JQSY19} raised the following open question:\footnote{The statement of the open problem in~\cite{TCC:JQSY19} is quoted as follows: ``{\it Finally, it is an important open problem to build quantum-secure public-key encryption schemes based on
hard problems about GLAT or its close variations.}'' 
Here, GLAT stands for General Linear Action on Tensors, which is their instantiation of non-abelian group action. 
Thus, {\bf Question 1} is slightly more general than what they actually ask. \label{footnote:quotation}
}

\begin{quote}
{\bf Question 1:} \emph{Can we construct PKE from non-abelian group actions?
}
\end{quote}

\smallskip
\noindent\textbf{Flavor conversion for quantum bit commitments.}
Commitments are one of the most important primitives in cryptography. It enables one to ``commit'' to a  (classical) bit\footnote{We can also consider commitments for multi-bit strings. But we focus on \emph{bit} commitments in this paper.} in such a way that the committed bit is hidden from other parties before the committer reveals it, which is called the \emph{hiding} property, and the committer cannot change the committed bit after sending the commitment, which is called the \emph{binding} property. One can easily see that it is impossible for \emph{classical} commitments to achieve both hiding and binding properties against unbounded-time adversaries. It is known to be impossible even with \emph{quantum} communication~\cite{LC97,May97}. Thus, it is a common practice in cryptography to relax either of them to hold only against computationally bounded adversaries. We say that a commitment scheme is computationally (resp. statistically) binding/hiding, if it holds against (classical or quantum depending on the context) polynomial-time (resp. unbounded-time) adversaries. Then, there are the following two \emph{flavors} of commitments: One is computationally hiding and statistically binding, and the other is computationally binding and statistically hiding.\footnote{Of course, we can also consider computationally hiding and computationally binding one, which is weaker than both flavors.} 
In the following, whenever we require statistical hiding or binding, the other one should be understood as computational since it is impossible to statistically achieve both of them as already explained. 

In classical cryptography, though commitments of both flavors are known to be equivalent to the existence of one-way functions~\cite{JC:Naor91,SIAM:HILL99,STOC:HaiRei07}, 
there is no known direct conversion between them that preserves efficiency or the number of interactions. 
Thus, their constructions have been studied separately. 

Recently, Yan~\cite{AC:Yan22}, based on an earlier work by Crépeau, Légaré, and Salvail~\cite{EC:CreLegSal01}, showed that the situation is completely different for quantum bit commitments, which rely on quantum communication between the sender and receiver. First, he showed a round-collapsing theorem, which means that any interactive quantum bit commitments can be converted into non-interactive ones. 
Then he gave a conversion that converts the flavor of any non-interactive quantum bit commitments using the round-collapsing theorem.  

Though Yan's conversion gives a beautiful equivalence theorem, a disadvantage of the conversion is that it does not preserve the efficiency. 
Specifically, it calls the base scheme polynomially many times (i.e., $\Omega(\secp^2)$ times for the security parameter $\secp$). Then, it is natural to ask the following question: 

\begin{quote}
{\bf Question 2:} \emph{Is there an efficiency-preserving flavor conversion for quantum bit commitments?
}
\end{quote}

\subsection{Our Results}
We answer both questions affirmatively using (a generalization of) the result of~\cite{AAS20}. 

For {\bf Question 1}, we construct a PKE scheme with quantum ciphertexts based on non-abelian group actions. 
This resolves the open problem posed by~\cite{TCC:JQSY19}.\footnote{
The statement of their open problem (quoted in \Cref{footnote:quotation}) does not specify if we are allowed to use quantum ciphertexts. Thus, we claim to resolve the problem even though we rely on quantum ciphertexts. 
If they mean \emph{post-quantum} PKE (which has classical ciphertexts), this is still open. 
}
Our main construction only supports classical one-bit messages, but we can convert it into one that supports quantum multi-qubit messages by hybrid encryption with quantum one-time pad as showin in~\cite{C:BroJef15}.
Interestingly, ciphertexts of our scheme are quantum even if messages are classical.  
We show that our scheme is IND-CPA secure if the group action satisfies \emph{pseudorandomness}, which is a stronger assumption than the one-wayness introduced in \cite{TCC:JQSY19}. In addition, we show a ``win-win'' result similar in spirit to \cite{EC:Zhandry19b}.  We show that if the group action is one-way, then  our PKE scheme is IND-CPA secure \emph{or} we can use the group action to construct one-shot signatures~\cite{STOC:AGKZ20}.\footnote{This is a simplified claim and some subtle issues about uniformness of the adversary and ``infinitely-often security'' are omitted here. See \Cref{lem:Gap_CF_and_CH} for the formal statement.}  Note that constructing one-shot signatures has been thought to be a very difficult task. The only known construction is relative to a classical oracle and there is no known construction in the standard model. Even for its significantly weaker variant called tokenized signatures~\cite{BDS17}, the only known construction in the standard model is based on indistinguishability obfuscation~\cite{C:CLLZ21}. 
Given the difficulty of constructing tokenized signatures, let alone one-shot signatures, it is reasonable to conjecture that our PKE scheme is IND-CPA secure if we built it on ``natural'' one-way group actions. Our PKE scheme is constructed through an abstraction called \emph{swap-trapdoor function pairs} (STFs), which may be of independent interest.  

For {\bf Question 2},
We give a new conversion between the two flavors of quantum commitments. That is, for $\mathrm{X},\mathrm{Y}\in \{\text{computationally,statistically,perfectly}\}$, 
    if the base scheme is X-hiding and Y-binding, then the resulting scheme is Y-hiding and X-binding.
Our conversion calls the base scheme only once in superposition. 
Specifically, if $Q_b$ is the unitary applied by the sender when committing to $b\in \bit$ in the base scheme, 
the committing procedure of the resulting scheme consists of a single call to $Q_0$ or $Q_1$ controlled by an additional qubit  (i.e., application of a unitary such that $\ket{b}\ket{\psi}\mapsto \ket{b}(Q_b\ket{\psi})$) and additional constant number of gates. 
For the security proof of our conversion, we develop a generalization of the result of~\cite{AAS20} where we consider auxiliary quantum inputs. 


We show several applications of our conversion. 
We remark that our conversion does not give any new feasibility results since similar conversions with worse efficiency were already known~\cite{EC:CreLegSal01,AC:Yan22}. However, our conversion gives schemes with better efficiency in terms of the number of calls to the building blocks.   
\ifnum\llncs=0
\ifnum\llncs=1
\section{Summary of Applications of Our Conversion}\label{sec:summary_applications}
\fi

\begin{enumerate}
    \item In \Cref{sec:construction_Naor}, we apply our conversion to the statistically binding scheme from PRGs by Yan, Weng, Lin, and Quan~\cite{YWLQ15}. Then, we obtain the first statistically hiding quantum commitment scheme from PRGs that makes only a single call to the PRGs. 
    
    \item In \Cref{sec:construction_PRS}, 
    based on a recent work, by Morimae and Yamakawa~\cite{C:MorYam22}, we show that we can use (single-copy-secure) pseudorandom state generators (PRSGs)~\cite{C:JiLiuSon18} instead of PRGs in the above construction. As a result, we obtain the  the first statistically hiding quantum commitment scheme from PRSGs that makes only a single call to the PRSGs. 

 \item 
 In \Cref{sec:new_inj}, we give a novel simple construction of a perfectly hiding quantum commitment scheme from injective one-way functions that makes a single call to the base function. By applying our conversion to it, we obtain a perfectly binding quantum commitment scheme from injective one-way functions that makes a single call to the base function. Though there is a classical construction of such a scheme based on the Goldreich-Levin theorem~\cite{STOC:GolLev89}, our construction has a shorter commitment length since a commitment does not need to include a seed for the hardcore predicate.
       
 \item In \Cref{sec:new_collapsing}, we show that replacing injective one-way functions with (sufficiently length-decreasing) collapsing functions~\cite{EC:Unruh16} in the above constructions yields commitment schemes with the other flavor. As a result, we obtain the first statistically binding quantum commitment scheme from collapsing hash functions that makes a single call to the collapsing hash function.   
   
    
\end{enumerate}

We provide more detailed comparisons with existing constructions after the presentation of each construction in \Cref{sec:application}. 
In addition, we present more applications of our conversion (including applications to the schemes of \cite{C:HalMic96,EC:DumMaySal00}) in \Cref{sec:more_application}.
\else
\fi

\ifnum\llncs=0
\ifnum\llncs=0
\subsection{Related Work}
\else
\section{More Related Work}\label{sec:more_related_work}
\fi
\noindent\textbf{Cryptographic group actions.} 
Brassard and Yung~\cite{C:BraYun90} initiated the study of cryptographic group actions and proposed instantiations based on the hardness of graph isomorphism,
discrete logarithm, or factoring. However, they are not suitable for our purpose since it turns out that the graph isomorphism problem can be solved in (classical) quasi polynomial-time~\cite{STOC:Babai16}\footnote{Another issue is that the graph isomorphism problem is easy for a uniformly random instance, and thus it cannot satisfy our definition of one-wayness (\Cref{def:one-way}) that requires average case hardness. If we modify the definition of the one-wayness to choose the hardest instance, the graph isomorphism-based construction may satisfy it, and such a version suffices for our applications. However, since such a construction can be broken in quasi-polynomial time by Babai's algorithm \cite{STOC:Babai16}, we do not consider this instantiation  and simply consider average case version in the definition of one-wayness. A similar remark can be found in \cite[Remark~1]{TCC:JQSY19}.} and discrete logarithm and factoring problems can be solved in quantum polynomial time~\cite{Shor99}.

 Alamati et al. \cite{AC:ADMP20} gave an abstraction of isogeny-based cryptography as group actions. However, the isogeny-based construction only supports limited functionality formalized as \emph{Restricted Effective Group Action} (REGA). Though it might be possible to modify our definition of group actions (\Cref{def:GA_efficient}) to capture isogeny-based construction by considering similar restrictions, we do not do so because isogeny-based PKE is already known even without relying on quantum ciphetexts~\cite{Couveignes06,cryptoeprint:2006/145,PQCRYPTO:JaoDeFo11,AC:CLMPR18}.

We consider the general linear group action on tensors proposed by \cite{TCC:JQSY19} as a main instantiation for our construction of PKE.  Though their security is a newly introduced assumption by \cite{TCC:JQSY19}, they justify it by pointing out reductions to many important problems in different areas including coding theory, computational group theory, and multivariate cryptography~\cite{FGS19}. They also discuss potential cryptanalyses and demonstrate that none of them seems to work. See  \cite{TCC:JQSY19} for the details.


\smallskip\noindent\textbf{Quantum key distribution.} 
Bennett and Brassard~\cite{BB84} constructed an unconditionally secure key exchange protocol with quantum communication, which is known as \emph{quantum key distribution}. 
We remark that quantum key distribution protocols are inherently interactive unlike our quantum PKE with quantum ciphertexts. Indeed, it is easy to see that unconditionally secure PKE with classical keys and quantum ciphertexts is impossible since a brute-force search for the correct decryption key would totally break security.

\smallskip
\noindent\textbf{Quantum public key encryption.}
There are several works that proposed ``quantum PKE'' schemes. We compare them with our PKE with quantum ciphertexts. 

The ``quantum PKE'' in \cite{C:OkaTanUch00} is entirely classical except that the key generation algorithm can be quantum. The security of their scheme relies on the hardness of the subset-sum problem. Thus, 
their quantum PKE is incomparable to our PKE with quantum ciphertexts where key generation is classical, and their underlying assumption is also incomparable to ours.

The ``quantum PKE'' in \cite{EC:KKNY05} is PKE with quantum ciphertexts and \emph{quantum public keys}. On the other hand, our quantum-ciphertext PKE uses quantum ciphertexts and \emph{classical} public keys. 
Thus, their quantum PKE is a weaker primitive than our PKE with quantum ciphertexts. 
We remark classical public keys are much more desirable since we can certify classical public keys by using digital signatures while generating signatures on quantum messages is known to be impossible~\cite{Alagic_2021}.
The technical aspect of our PKE scheme is somewhat similar to \cite{EC:KKNY05} in the sense that both embed messages into phases of quantum states.

\smallskip
\noindent\textbf{Quantum bit commitments.}
Bennett, Brassard, and Crépeau  \cite{BB84,C:BraCre90} initiated the study of quantum bit commitments. Unfortunately, it turned out to be impossible to construct an unconditionally secure quantum bit commitments~\cite{LC97,May97}. 
Thus, later works constructed quantum bit commitments relying on complexity assumptions~\cite{EC:DumMaySal00,EC:CreLegSal01,KO09,KO11,YWLQ15,C:MorYam22,C:AnaQiaYue22}. 
A seminal work by Yan~\cite{AC:Yan22} showed that any (possibly interactive) quantum bit commitments can be converted into one in a non-interactive \emph{canonical} form. His definition of quantum bit commitments in the canonical form requires a seemingly weak binding property called \emph{honest-binding}. However, he showed that it is actually equivalent to \emph{sum-binding}, which has been traditionally used as a definition of a binding property of quantum bit commitments~\cite{EC:DumMaySal00,EC:CreLegSal01,KO09,KO11,C:MorYam22}. In addition, some works~\cite{YWLQ15,FUYZ20,AC:Yan21,C:MorYam22} showed that quantum bit commitments in the canonical form can be used as a building block of other cryptographic primitives including zero-knowledge proofs or arguments (of knowledge), oblivious transfers, and multi-party computations. 
Thus, we use quantum bit commitments in the canonical form (with the honest-binding property) as defined in~\cite{AC:Yan22} as a default definition of quantum bit commitments in this paper.

\noindent\textbf{Other notions of binding.}
As explained above, we use honest-binding as a default definition of binding. On the other hand, there are several other definitions of binding for quantum commitments. We review them and give comparisons with honest-binding. (Similar discussions can also be found in \cite{AC:Yan22}.)  

Bitansky and Brakerski~\cite{TCC:BitBra21}  introduced the notion of \emph{classical-binding} for quantum commitments. It roughly requires that the committed message is uniquely determined by the commitment. Though this is impossible to achieve for canonical quantum bit commitments, they avoid the impossibility by having the receiver \emph{measure} the commitment in a certain way. 
The advantage of the classical binding property is that it is conceptually similar to the binding of classical commitments, and thus it is easy to give security proofs when plugging it into some protocol as a substitute for classical commitments. On the other hand, existing works~\cite{YWLQ15,FUYZ20,C:MorYam22} show that the statistical honest-binding quantum commitments are already useful for many applications. Indeed, there seems no known application for which classical-binding suffices but honest-binding does not. 

Ananth, Qian, and Yuen~\cite{C:AnaQiaYue22} introduced a new definition of a statistical binding property for quantum commitments, which we call AQY-binding. The motivation of this definition is for the application to quantum oblivious transfers and multi-party computation~\cite{C:BCKM21b}. However, \cite[Appendix B]{C:MorYam22} observed that the statistical honest-binding property implies the AQY-binding property based on the technique of \cite{FUYZ20}. A full proof is given in \cite[Appendix B]{AC:Yan22}.

Yan~\cite{AC:Yan21} proved that the \emph{computational} honest-binding property implies what is called the computational \emph{predicate-binding} property, which is sufficient for implementing Blum's Hamiltonicity protocol.

There are several other definitions of \emph{computational} binding for quantum (string) commitments~\cite{TCC:CDMS04,C:DamFehSal04} that are shown to be more useful in applications than computational honest binding ones. However, there is no known construction that satisfies the definition of \cite{TCC:CDMS04}, and the only known construction that satisfies \cite{C:DamFehSal04} is in the CRS model and based on a special assumption that is tailored to their construction. (See \cite{EC:Unruh16,AC:Yan21} for more details of these definitions.) 

 


\if0
\smallskip
\noindent\textbf{Applicability of canonical quantum bit commitments.}
We review known results about the applicability of canonical bit commitments. Yan~\cite{YWLQ15} and  Fang et al.~\cite{FUYZ20}  showed that statistically binding canonical bit commitments can be used as a building block for zero-knowledge proofs (of knowledge) and quantum oblivious transfers with game-based security. Morimae and Yamakawa~\cite{C:MorYam22} observed that the statistically binding canonical bit commitments can also be used for constructions of simulation-secure quantum oblivious transfers and multi-party computation \cite{C:BCKM21b} based on \cite{C:AnaQiaYue22}. The framework developed in~\cite{FUYZ20} is very powerful, and we believe that statistically binding canonical bit commitments are as useful as statistically-binding classical commitments in most applications. 

On the other hand, the power of \emph{computationally} binding canonical-bit commitments is less understood. The only known positive result is by Yan~\cite{AC:Yan21} \mor{citation is broken!} who showed that it suffices for implementing the statistically zero-knowledge version of Blum's Hamiltonicity protocol. It is an important open problem to (dis)prove applicability for other protocols where computationally binding \emph{classical}  commitments are used (e.g.,\cite{JC:GolKah96}). {\color{green}Nonetheless, we remark that computationally binding canonical bit commitments satisfy a meaningful notion of binding as a commitment protocol itself. As already mentioned, it implies the (computational) sum-binding property, which means the sum of probabilities that commitment can be opened to $0$ and $1$ is at most $1+\negl(\secp)$. Indeed, this is the default binding notions adopted in many existing works~\cite{EC:DumMaySal00,EC:CreLegSal01,KO09,KO11,C:MorYam22}. 
}\mor{green no toko iitaikotoga yoku wakarimasen.}\takashi{I mean that comp. binding commitment schemes are not known to be very useful for cryptographic applications, but it is meaningful at least as a stand-alone commitment scheme.}

\noindent\textbf{Other notions of binding.}
In this paper, we use the honest-binding, which is equivalent to sum-binding, as a default definition of binding. On the other hand, there have been several other definitions of binding for quantum commitments. We review them and give comparisons with honest-binding. (Similar discussions can be found in \cite{AC:Yan22}.)  

Bitansky and Brakerski~\cite{TCC:BitBra21} \mor{link is broken} introduced the notion of \emph{classical-binding} for quantum commitments. It roughly requires that the committed message is uniquely determined by the commitment. Though this is impossible to achieve for canonical quantum bit commitments, they avoid the impossibility by having the receiver \emph{measure} the commitment in a certain way, and construct statistically classical-binding quantum commitments from the existence of one-way functions. They prove that it is impossible for statistically hiding commitments to satisfy even the computational classical-binding property. Thus, the focus of classical-binding commitments is on statistically binding ones. \minki{computational hiding and binding security instead of statistical?}
The advantage of the classical binding property is that it is conceptually similar to the binding of classical commitments, and thus it is easy to give security proofs when plugging it into some protocol as a substitute for classical commitments. On the other hand, as already mentioned, existing works~\cite{YWLQ15,FUYZ20,C:MorYam22} show that the statistical honest-binding quantum commitments are already useful for many applications. Indeed, there seems no application for which classical-binding suffices but honest-binding does not. 
Ananth, Qian, and Yuen~\cite{C:AnaQiaYue22} introduced a new definition of a statistical binding property for quantum commitments, which we call AQY-binding. The motivation of this definition is for the application to quantum oblivious transfers and multi-party computation~\cite{C:BCKM21b}. However, \cite[Appendix B]{C:MorYam22} observed that the statistical honest-binding property implies the AQY-binding property based on the technique of \cite{FUYZ20}.


Yan~\cite{AC:Yan21}\mor{link is broken} proved that the \emph{computational} honest-binding property implies what is called the computational \emph{predicate-binding} property, which is sufficient for implementing Blum's Hamiltonicity protocol.

There are several other definitions of \emph{computational} binding for quantum (string) commitments~\cite{TCC:CDMS04,C:DamFehSal04} that are shown to be more useful in applications than computational honest binding ones. However, there is no known construction that satisfies the definition of \cite{TCC:CDMS04}, and the only known construction that satisfies \cite{C:DamFehSal04} is in the CRS model and based on a special assumption that is tailored to their construction. (See \cite{EC:Unruh16,AC:Yan21}\mor{link is broken} for more details of these definitions.) It is an interesting direction for future work to find a definition of a computational binding property for quantum commitments that are useful in many applications and can be constructed from standard assumptions in the plain model.

 

Unruh~\cite{EC:Unruh16,AC:Unruh16} introduced the \emph{collapse-binding} property for post-quantum commitments, i.e., \emph{classical} commitments secure against \emph{quantum} adversaries. However, the definition of the collapse-binding property only considers classical commitments, and it is unclear how to define it for quantum commitments in a meaningful way.
\fi

\subsection{Concurrent Work}
A concurrent work by Gunn, Ju, Ma, and Zhandry~\cite{GJMZ22} defines commitments to quantum states and shows duality between binding and hiding for them. In particular, as the special case of commitments to classical strings, they give a similar flavor conversion to ours~\cite[Section 4.4.1]{GJMZ22}. 
The difference from our work is that their definitions of binding and hiding are different from ours. 
Specifically, they require what they call ``$Z$-binding'', which is similar to collapse-binding introduced by Unruh~\cite{EC:Unruh16}, and ``$X$-hiding'', which requires indistinguishability against adversaries that can submit a superposition of classical messages as a challenge message.\footnote{$Z$ and $X$ for $Z$-binding and $X$-binding stand for Pauli operators. Do not confuse them with our notation $\mathrm{X},\mathrm{Y}\in \{\text{computationally,statistically,perfectly}\}$.} 
We make the following observation about these definitions:
\begin{itemize}
    \item Though $Z$-binding is seemingly stronger than Yan's binding (\cref{def:binding}), we can prove that they are actually equivalent using the result of \cite{cryptoeprint:2022/786}.\footnote{We thank James Bartusek, Fermi Ma, and Jun Yan for suggesting it.}
    \item 
    Though $X$-hiding seems stronger than Yan's hiding (\cref{def:hiding}), there is a very simple compiler to upgrade Yan's hiding into $X$-hiding. For committing to a bit $b$, we simply add an additional one-qubit register that is set to be $\ket{b}$ in the reveal register. This additional qubit prevents a hiding adversary from causing interference between commitments to different bits, and thus $X$-hiding immediately follows from Yan's hiding for this scheme.
\end{itemize}
Based on the above observation, we can recover our coversion from that in \cite{GJMZ22}. 
Specifically, starting from 
a quantum bit commitment scheme 
that satisfies (statistical or computational) Yan's hiding and Yan's biding, we first apply the compiler in the second item above to upgrade Yan's hiding to $X$-hiding and then apply the conversion of \cite{GJMZ22}. This results in exactly the same conversion as the one presented in this paper.


\else
\fi

\section{Technical Overview}\label{sec:overview} 
We give a technical overview of our results. In the overview, we assume that the reader has read the informal explanation of the result of \cite{AAS20} at the beginning of \Cref{sec:intro}.
\subsection{Part I: PKE from Group Actions}\label{sec:overview_QPKE}
Suppose that a (not necessarily abelian) group $G$ acts on a finite set $S$ by a group action $\star:G\times S \rightarrow S$. Suppose that it is one-way, i.e., it is hard to find $g'$ such that $g'\star s=g\star s$ given $s$ and $g\star s$.\footnote{We will eventually need pseudorandomness, which is stronger than one-wayness, for the security proof of our PKE scheme. We defer the introduction of pseudorandomness for readability. 
} 

Our starting point is the observation made in \cite{C:BraYun90} that one-way group actions give claw-free function pairs as follows. Let $s_0$ and $s_1\seteq g\star s_0$ be public parameters where $s_0\in S$ and $g\in G$ are uniformly chosen.
Then if we define a function $f_b:G\rightarrow S$ by $f_b(h)\seteq h\star s_b$ for $b\in \bit$, the pair $(f_0,f_1)$ is claw-free, i.e., it is hard to find $h_0$ and $h_1$ such that $f_0(h_0)=f_1(h_1)$. This is because if one can find such $h_0$ and $h_1$, then one can break the one-wayness 
of the group action 
by outputting $h_1^{-1}h_0$,
since $f_0(h_0)=f_1(h_1)$ implies $(h_1^{-1}h_0)\star s_0 =s_1$.  

Unfortunately, claw-free function pairs are not known to imply PKE. The reason of the difficulty of constructing PKE is that claw-free function pairs do not have trapdoors. Indeed, it is unclear if there is a trapdoor that enables us to invert $f_0$ and $f_1$ for the above group-action-based construction. 
Our first observation is that the above construction actually has a weak form of a trapdoor: If we know $g$ as a trapdoor, then we can find $h_1$ such that $f_0(h_0)=f_1(h_1)$ from $h_0$ by simply setting $h_1\seteq h_0 g^{-1}$ and vice versa. Though this trapdoor $g$ does not give the power to invert $f_0$ or $f_1$, this enables us to break claw-freeness in a strong sense. We formalize such function pairs as swap-trapdoor function pairs (STFs).\footnote{The intuition of the name is that one can ``swap'' $h_0$ and $h_1$ given a trapdoor.} For the details of STFs, see Sec.~\ref{sec:STF}.

Next, we explain our construction of a PKE scheme with quantum ciphertexts. Though it is a generic construction based on STFs with certain properties, we here focus on the group-action-based instantiation for simplicity. (For the generic construction based on STFs, see Sec.~\ref{sec:QPKE_construction}.)
A public key of our PKE scheme consists of $s_0$ and $s_1= g\star s_0$ and a secret key is $g$. For encrypting a bit $b$, the ciphertext is set to be 
\begin{align} \label{eq:ct}
    \qct_b\seteq \frac{1}{\sqrt{2}}\left(
    \ket{0}\ket{f^{-1}_0(y)} + (-1)^b \ket{1}\ket{f^{-1}_1(y)} 
    \right)
\end{align}
for a random $y\in S$.\footnote{Precisely, $y$ is distributed as $h \star s_0$ for uniformly random $h\in G$.} 
Here, $\ket{f^{-1}_{b'}(y)}$ is the uniform superposition over $f^{-1}_{b'}(y)\seteq\{h\in G: f_{b'}(h)=y\}$ for $b'\in\bit$.
The above state can be generated by a standard technique similar to \cite{FOCS:BCMVV18,FOCS:Mahadev18b}. Specifically, we first prepare
\begin{align*}
\frac{1}{\sqrt{2}}(\ket{0} + (-1)^b \ket{1})
    \otimes 
    \frac{1}{\sqrt{|G|}}\sum_{h\in G}\ket{h},
\end{align*}
compute a group action by $h$ in the second register on $s_0$ or $s_1$ controlled by the first register to get 
\begin{align*}
\frac{1}{\sqrt{2|G|}}\Big(\sum_{h\in G}\ket{0}\ket{h}\ket{h\star s_0} + (-1)^b \sum_{h\in G}\ket{1}\ket{h}\ket{h\star s_1}\Big),
\end{align*}
and measure the third register to get $y\in S$. At this point, the first and second registers collapse to the state in \Cref{eq:ct}.\footnote{Note that $|f_0^{-1}(y)|=|f_1^{-1}(y)|$ for all $y\in S$.}
Decryption can be done as follows. Given a ciphertext $\qct_b$, we apply a unitary $\ket{h}\rightarrow \ket{h g}$ on the second register controlled on the first register. Observe that the unitary maps $\ket{f^{-1}_1(y)}$ to $\ket{f^{-1}_0(y)}$. Then, the resulting state is  $\frac{1}{\sqrt{2}}\left(
    \ket{0}\ket{f^{-1}_0(y)} + (-1)^b \ket{1}\ket{f^{-1}_0(y)} 
    \right)$. 
Thus, measuring the first register in the Hadamard basis results in message $b$. 


Next, we discuss how to prove security. Our goal is to prove that the scheme is IND-CPA secure, i.e., $\qct_0$ and $\qct_1$ are computationally indistinguishable. Here, we rely on the result of \cite{AAS20}. According to their result, one can distinguish $\qct_0$ and $\qct_1$ if and only if one can swap $\ket{0}\ket{f^{-1}_0(y)}$ and $\ket{1}\ket{f^{-1}_1(y)}$. Thus, it suffices to prove the hardness of swapping $\ket{0}\ket{f^{-1}_0(y)}$ and $\ket{1}\ket{f^{-1}_1(y)}$ with a non-negligible advantage.\footnote{See \Cref{thm:AAS} for the precise meaning of the advantage for swapping.} 
Unfortunately, we do not know how to prove this solely assuming the claw-freeness of $(f_0,f_1)$. Thus, we introduce a new assumption called \emph{conversion hardness}, which requires that one cannot find $h_1$ such that $f_1(h_1)=y$ given $\ket{f^{-1}_0(y)}$ with a non-negligible probability. Assuming it, the required hardness of swapping follows straightforwardly since if one can swap $\ket{0}\ket{f^{-1}_0(y)}$ and $\ket{1}\ket{f^{-1}_1(y)}$, then one can break the conversion hardness by first mapping $\ket{0}\ket{f^{-1}_0(y)}$ to $\ket{1}\ket{f^{-1}_1(y)}$ and then measuring the second register. 

The remaining issue is how to prove conversion hardness based on a reasonable assumption on the group action. We show that pseudorandomness introduced in \cite{TCC:JQSY19} suffices for this purpose. Pseudorandomness requires the following two properties:
\begin{enumerate}
\item \label{item:orbit_overview}
The probability that there exists $g\in G$ such that  $g\star s_0 =s_1$ is negligible where $s_0,s_1\in S$ are uniformly random.
\item \label{item:PR_overvoew}
The distribution of 
$(s_0,s_1\seteq g\star s_0)$ where 
$s_0\in S$ and $g \in G$ are uniformly random is computationally indistinguishable from the uniform distribution over $S^2$. 
\end{enumerate}
Note that we require \Cref{item:orbit_overview} because otherwise \Cref{item:PR_overvoew} may unconditionally hold, in which case there is no useful cryptographic application. 
We argue that pseudorandomness implies conversion hardness as follows. By \Cref{item:PR_overvoew}, the attack against the conversion hardness should still succeed with almost the same probability even if we replace $s_1$ with a uniformly random element of $S$. However, then there should exist no solution by \Cref{item:orbit_overview}. Thus, the original success probability should be negligible.  

While \cite{TCC:JQSY19} gave justification of pseudorandomness of their instantiation of group actions, it is a stronger assumption than one-wayness. Thus, it is more desirable to get PKE scheme solely from one-wayness. Toward this direction, we show the following ``win-win'' result inspired by \cite{EC:Zhandry19b}. If $(f_0,f_1)$ is claw-free but not conversion hard, then we can construct a one-shot signatures. 
Roughly one-shot signatures are a quantum primitive which enables us to generate a classical verification key $\vk$ along with a quantum signing key $\qsk$ in such a way that one can use $\qsk$ to generate a signature for whichever message of one's choice, but cannot generate signatures for different messages simultaneously. 
\ifnum\submission=0
(See \Cref{def:OSS} for the formal definition.) 
\fi
For simplicity, suppose that  $(f_0,f_1)$ is claw-free but its conversion hardness is totally broken. 
That is, we assume that we can efficiently find $h_1$ such that $f_1(h_1) = y$ given $\ket{f^{-1}_0(y)}$. Our idea is to set $\ket{f^{-1}_0(y)}$ to be the secret key  and $y$ to be the corresponding verification key. 
For signing to $0$, the signer simply measures $\ket{f^{-1}_0(y)}$ to get $h_0\in f^{-1}_0(y)$ and set $h_0$ to be the signature for the message 0. For signing to $1$, the signer runs the adversary against conversion hardness to get $h_1$ such that $f_1(h_1) = y$ and set $h_1$ to be the signature for the message 1. If one can generate signatures to $0$ and $1$ simultaneously, we can break claw-freeness since $f_0(h_0)=f_1(h_1)=y$. Thus, the above one-shot signature is secure if $(f_0,f_1)$ is claw-free. In the general case where the conversion hardness is not necessarily completely broken, our idea is to amplify the probability of finding $h_1$ from $\ket{f_0^{-1}(y)}$ by a parallel repetition. 
\ifnum\submission=0
See \Cref{sec:proof_Gap_CF_and_CH} for the full proof.  
\fi
Based on this result, we can see that if the group action is one-way, then our PKE scheme is IND-CPA secure or we can construct one-shot signatures.

\subsection{Part II: Flavor Conversion for Commitments}\label{sec:overview_conversion}
\noindent\textbf{Definition of quantum bit commitments.}
First, we recall the definition of quantum bit commitments as formalized by Yan \cite{AC:Yan22}.
He (based on earlier works~\cite{ICALP:ChaKerRos11,YWLQ15,FUYZ20}) showed that any (possibly interactive) quantum bit commitment scheme can be written in the following (non-interactive) canonical form. A canonical quantum bit commitment scheme is characterized by a pair of unitaries $(Q_0,Q_1)$ over two registers $\regC$ (called the commitment register) and $\regR$ (called the reveal register) and works as follows.
\begin{itemize}
    \item[{\bf Commit phase}:] 
    For committing to a bit $b\in \bit$, the sender generates the state $Q_b\ket{0}_{\regC,\regR}$ and sends $\regC$ to the receiver while keeping $\regR$ on its side.\footnote{We write $\ket{0}$ to mean $\ket{0\ldots 0}$ for simplicity.} 
    \item[{\bf Reveal phase}:] 
    For revealing the committed bit, the sender sends $\regR$ along with the committed bit $b$ to the receiver. Then, the receiver applies $Q_b^\dagger$ to $\regC$ and $\regR$ and measures both registers. If the measurement outcome is $0\ldots 0$, the receiver accepts and otherwise rejects.
\end{itemize}

We require a canonical quantum bit commitment scheme to satisfy the following hiding and binding properties. 
The hiding property is defined analogously to that of classical commitments.  
That is, the computational (resp. statistical) hiding property requires that quantum polynomial-time (resp. unbounded-time) receiver (possibly with quantum advice) cannot distinguish commitments to $0$ and $1$ if only given $\regC$. 

On the other hand, the binding property is formalized in a somewhat different way from the classical case. The reason is that a canonical quantum commitment scheme cannot satisfy the binding property in the classical sense. The classical binding property roughly requires that a malicious sender can open a commitment to either of $0$ or $1$ except for a negligible probability. On the other hand, in canonical quantum bit commitment schemes, if the sender generates a uniform superposition of commitments to $0$ and $1$, it can open the commitment to $0$ and $1$ with probability $1/2$ for each.\footnote{A recent work by Bitansky and Brakerski~\cite{TCC:BitBra21} showed that a quantum commitment scheme may satisfy the classical binding property if the receiver performs a measurement in the commit phase. However, such a measurement is not allowed for canonical quantum bit commitments.} 
Thus, we require a weaker binding property called the honest-binding property, which intuitively requires that it is difficult to map an honestly generated commitment of $0$ to that of $1$ without touching $\regC$. More formally, the computational (resp. statistical) honest-binding property requires that for any polynomial-time computable (resp. unbounded-time computable) unitary $U$ over $\regR$ and an additional register $\regZ$ and an auxiliary state $\ket{\tau}_{\regZ}$, we have   
\begin{align*}
    \left\|(Q_1\ket{0}\bra{0}Q_1^\dagger)_{\regC,\regR}(I_{\regC}\otimes U_{\regR,\regZ})((Q_0\ket{0})_{\regC,\regR}\ket{\tau}_{\regZ})\right\|=\negl(\secp).
\end{align*}
One may think that honest-binding is too weak because it only considers honestly generated commitments. However, somewhat surprisingly, \cite{AC:Yan22} proved that it is equivalent to another binding notion called the \emph{sum-binding}~\cite{EC:DumMaySal00}.\footnote{The term ``sum-binding'' is taken from \cite{EC:Unruh16}.} The sum-binding property requires that the sum of probabilities that any (quantum polynomial-time, in the case of computational binding) \emph{malicious} sender can open a commitment to $0$ and $1$ is at most $1+\negl(\secp)$. In addition, it has been shown that the honest-binding property is sufficient for cryptographic applications including zero-knowledge proofs/arguments (of knowledge), oblivious transfers, and multi-party computation~
\cite{YWLQ15,FUYZ20,AC:Yan21,C:MorYam22}. In this paper, we refer to honest-binding if we simply write binding.  


\smallskip
\noindent\textbf{Our conversion.}
We propose an efficiency-preserving flavor conversion for quantum bit commitments inspired by the result of \cite{AAS20}. 
Our key observation is that the swapping ability and distinguishability look somewhat similar to breaking binding and hiding of quantum commitments, respectively.
The correspondence between distinguishability and breaking hiding is easier to see: The hiding property directly requires that distinguishing commitments to $0$ and $1$ is hard. The correspondence between the swapping ability and breaking binding is less clear, but one can find similarities by recalling the definition of (honest-)binding for quantum commitments: Roughly, the binding property requires that it is difficult to map the commitment to $0$ to that to $1$. Technically, a binding adversary does not necessarily give the ability to swap commitments to $0$ and $1$ since it may map the commitment to $1$ to an arbitrary state instead of to the commitment to $0$. But ignoring this issue (which we revisit later), breaking binding property somewhat corresponds to swapping.

However, an important difference between security notions of quantum commitments and the setting of the theorem of \cite{AAS20} is that the former put some restrictions on registers the adversary can touch: For hiding, the adversary cannot touch the reveal register $\regR$, and for binding, the adversary cannot touch the commitment register $\regC$. To deal with this issue, we make another key observation that the equivalence between swapping and distinguishing shown in~\cite{AAS20} preserves \emph{locality}. That is, if the swapping unitary does not touch some qubits of $\ket{\alive}$ or $\ket{\dead}$, then the corresponding distinguisher does not touch those qubits either, and vice versa.  

The above observations suggest the following conversion. Let $\{Q_0,Q_1\}$ be a canonical quantum bit commitment scheme. Then, we construct another scheme $\{Q'_0,Q'_1\}$ as follows: 
\begin{itemize}
 \item The roles of commitment and reveal registers are swapped from $\{Q_0,Q_1\}$ and the commitment register is augmented by an additional one-qubit register. 
  That is, if $\regC$ and $\regR$ are the commitment and reveal registers of  $\{Q_0,Q_1\}$,  then the commitment and reveal registers of $\{Q'_0,Q'_1\}$ are defined as $\regC':= (\regR,\regD)$ and $\regR':= \regC$ where $\regD$ is a one-qubit register.  
  \item For $b\in \bit$, the unitary $Q_b'$ is defined as follows: 
\begin{align}\label{eq:Q'_b}
    Q_b'\ket{0}_{\regC,\regR}\ket{0}_{\regD}:= \frac{1}{\sqrt{2}}\left((Q_{0}\ket{0})_{\regC,\regR}\ket{0}_{\regD}
    +(-1)^{b} (Q_{1}\ket{0})_{\regC,\regR}\ket{1}_{\regD} \right),
\end{align}
where $(\regC',\regR')$ is rearranged as $(\regC,\regR,\regD)$.\footnote{We only present how $Q_b'$ works on $\ket{0}_{\regC,\regR}\ket{0}_{\regD}$ for simplicity. Its definition on general states can be found in \Cref{thm:conversion}.}  
\end{itemize}

One can see that $\{Q'_0,Q'_1\}$ is almost as efficient as $\{Q_0,Q_1\}$: For generating, $Q'_b\ket{0}_{\regC,\regR}\ket{0}_{\regD}$ one can first prepare $\ket{0}_{\regC,\regR}(\ket{0}+(-1)^b\ket{1})_{\regD}$ and then apply $Q_0$ or $Q_1$ to $(\regC,\regR)$ controlled by $\regD$. 
We prove that the hiding and binding properties of $\{Q_0,Q_1\}$ imply binding and hiding properties of $\{Q'_0,Q'_1\}$, respectively. Moreover, the reduction preserves all three types of computational/statistical/perfect security. Thus, this gives a conversion between different flavors of quantum bit commitments.

\smallskip
\noindent\textbf{Security proof.} 
At an intuitive level, the theorem of \cite{AAS20} with the above ``locality-preserving'' observation seems to easily give a reduction from security of $\{Q'_0,Q'_1\}$ to that of $\{Q_0,Q_1\}$: If we can break the hiding property of $\{Q'_0,Q'_1\}$, then we can distinguish $Q_b'\ket{0}_{\regC,\regR}\ket{0}_{\regD}$ without touching $\regR'=\regC$. Then, their theorem with the above observation gives a swapping algorithm that swaps $(Q_{0}\ket{0}_{\regC,\regR})\ket{0}_{\regD}$ and $(Q_{1}\ket{0}_{\regC,\regR})\ket{1}_{\regD}$ without touching $\regR'=\regC$, which clearly breaks the binding property of $\{Q_0,Q_1\}$. One may expect that the reduction from binding to hiding works analogously.  However, it is not as easy as one would expect due to the following reasons.
\begin{enumerate}
  \item \label{item:swap}
    An adversary that breaks the binding property is weaker than a ``partial'' swapping unitary that swaps $Q'_{0}\ket{0}_{\regC',\regR'}$ and $Q'_{1}\ket{0}_{\regC',\regR'}$ needed for \cite{AAS20}. For example, suppose that we have a unitary $U$ such that $UQ'_{0}\ket{0}_{\regC',\regR'}=Q'_{1}\ket{0}_{\regC',\regR'}$ and $UQ'_{1}\ket{0}_{\regC',\regR'}=-Q'_{0}\ket{0}_{\regC',\regR'}$. Clearly, this completely breaks the binding property of $\{Q'_0,Q'_1\}$. However, this is not sufficient for applying  \cite{AAS20} since $|\bra{0}{Q'_1}^\dagger UQ'_{0}\ket{0}+\bra{0}{Q'_0}^\dagger UQ'_{1}\ket{0}|=0$.
    \item \label{item:ancilla}
    For security of quantum bit commitments, we have to consider adversaries with quantum advice, or at least those with ancilla qubits even for security against uniform adversaries. However, the theorem of \cite{AAS20} does not consider any ancilla qubits. 
\end{enumerate}

Both issues are already mentioned in \cite{AAS20}. 
In particular, \Cref{item:swap} is an essential issue. They prove the existence of a pair of orthogonal states $\ket{\alive}$ and $\ket{\dead}$ such that we can map $\ket{\alive}$ to $\ket{\dead}$ by an efficient unitary, but $|\bra{\dead} U\ket{\alive}+\bra{\alive}U\ket{\dead}|\approx 0$ for all efficient unitaries $U$~\cite[Theorem~3]{AAS20}.
For \Cref{item:ancilla}, they (with acknowledgment to Daniel Gottesman) observe that the conversion from a distinguisher to a swapping unitary works even with any quantum advice, but the other direction does not work if there are ancilla qubits~\cite[Footnote 2]{AAS20}.

One can see that the above issues are actually not relevant to the reduction from the hiding of $\{Q'_0,Q'_1\}$ to the binding of $\{Q_0,Q_1\}$. However, for the reduction from the binding of $\{Q'_0,Q'_1\}$ to the hiding of $\{Q_0,Q_1\}$, both issues are non-trivial. Below, we show how to resolve those issues.


\smallskip
\noindent\textbf{Solution to \Cref{item:swap}.}
By the result of \cite[Theorem~3]{AAS20} as already explained, this issue cannot be resolved if we think of $Q'_{0}\ket{0}_{\regC',\regR'}$ and $Q'_{1}\ket{0}_{\regC',\regR'}$ as general orthogonal states. Thus, we look into the actual form of them presented in \Cref{eq:Q'_b}.
Then, we observe that an adversary against the binding property does not touch $\regD$ since that is part of the commitment register $\regC'$ of $\{Q'_0,Q'_1\}$. Therefore, he cannot cause any interference between $(Q_{0}\ket{0})_{\regC,\regR}\ket{0}_{\regD}$ and  $(Q_{1}\ket{0})_{\regC,\regR}\ket{1}_{\regD}$. Therefore, if it maps $$\frac{1}{\sqrt{2}}\left((Q_{0}\ket{0})_{\regC,\regR}\ket{0}_{\regD}
    + (Q_{1}\ket{0})_{\regC,\regR}\ket{1}_{\regD}\right)\mapsto \frac{1}{\sqrt{2}}\left((Q_{0}\ket{0})_{\regC,\regR}\ket{0}_{\regD}
    - (Q_{1}\ket{0})_{\regC,\regR}\ket{1}_{\regD}\right),$$ then it should also map 
    $$\frac{1}{\sqrt{2}}\left((Q_{0}\ket{0})_{\regC,\regR}\ket{0}_{\regD}
    - (Q_{1}\ket{0})_{\regC,\regR}\ket{1}_{\regD}\right)
    \mapsto \frac{1}{\sqrt{2}}\left((Q_{0}\ket{0})_{\regC,\regR}\ket{0}_{\regD}
    + (Q_{1}\ket{0})_{\regC,\regR}\ket{1}_{\regD}\right).$$ Thus, the ability to map $Q'_{0}\ket{0}_{\regC',\regR'}$ to $Q'_{1}\ket{0}_{\regC',\regR'}$ is equivalent to swapping them for this particular construction when one is not allowed to touch $\regD$. A similar observation extends to the imperfect case as well. Therefore, \Cref{item:swap} is not an issue for the security proof of this construction. 

\smallskip
\noindent\textbf{Solution to \Cref{item:ancilla}.}
To better understand the issue, we review how the conversion from a swapping unitary to a distinguisher works. For simplicity, we focus on the perfect case here, i.e., we assume that there is a unitary $U$ such that $U\ket{\dead}=\ket{\alive}$ and $U\ket{\alive}=\ket{\dead}$ for orthogonal states $\ket{\alive}$ and $\ket{\dead}$. Then, we can construct a distinguisher $\A$ that distinguishes $\frac{\ket{\alive}+\ket{\dead}}{\sqrt{2}}$ and $\frac{\ket{\alive}-\ket{\dead}}{\sqrt{2}}$ as follows: Given a state $\ket{\eta}$, which is either of the above two states $\frac{\ket{\alive}+\ket{\dead}}{\sqrt{2}}$ or $\frac{\ket{\alive}-\ket{\dead}}{\sqrt{2}}$, it prepares $\frac{\ket{0}+\ket{1}}{\sqrt{2}}$ in an ancilla qubit, applies $U$ controlled by the ancilla, and measures the ancilla in Hadamard basis. An easy calculation shows that the measurement outcome is $1$ with probability $1$ if  $\ket{\eta}=\frac{\ket{\alive}+\ket{\dead}}{\sqrt{2}}$ and $0$ with probability $1$ if $\ket{\eta}=\frac{\ket{\alive}-\ket{\dead}}{\sqrt{2}}$.

Then, let us consider what happens if the swapping unitary uses ancilla qubits. That is, suppose that we have $U\ket{\dead}\ket{\tau}=\ket{\alive}\ket{\tau'}$ and $U\ket{\alive}\ket{\tau}=\ket{\dead}\ket{\tau'}$ for some ancilla states $\ket{\tau}$ and $\ket{\tau'}$. When $\ket{\tau}$ and $\ket{\tau'}$ are orthogonal, the above distinguisher does not work because there does not occur interference between states with $0$ and $1$ in the control qubit. To resolve this issue, our idea is to ``uncompute'' the ancilla state. A naive idea to do so is to apply $U^\dagger$, but then this is meaningless since it just goes back to the original state. Instead, we prepare a ``dummy'' register that is initialized to be $\frac{\ket{\alive}+\ket{\dead}}{\sqrt{2}}$. Then, we add an application of $U^\dagger$ to the ancilla qubits and the dummy register controlled by the control qubit. Then, the ancilla qubit goes back to $\ket{\tau}$ while the state in the dummy register does not change because it is invariant under the swapping of $\ket{\alive}$ and $\ket{\dead}$. Then, we can see that this modified distinguisher distinguishes  $\frac{\ket{\alive}+\ket{\dead}}{\sqrt{2}}$ and $\frac{\ket{\alive}-\ket{\dead}}{\sqrt{2}}$ with advantage $1$. 

Unfortunately, when the swapping ability is imperfect, the above distinguisher does not work. However, we show that the following slight variant of the above works: Instead of preparing $\frac{\ket{\alive}+\ket{\dead}}{\sqrt{2}}$, it prepares $\frac{\ket{\alive}\ket{0}+\ket{\dead}\ket{1}}{\sqrt{2}}$. After the controlled application of $U^{\dagger}$, it flips the rightmost register (i.e., apply Pauli $X$ to it). In the perfect case, this variant also works with advantage $1$ since the state in the dummy register becomes $\frac{\ket{\dead}\ket{0}+\ket{\alive}\ket{1}}{\sqrt{2}}$ after the application of the controlled $U^\dagger$, which goes back to the original state $\frac{\ket{\alive}\ket{0}+\ket{\dead}\ket{1}}{\sqrt{2}}$ by the flip. 
Our calculation shows that this version is robust, i.e., it works even for the imperfect case. 

There are several caveats for the above. First, it requires the distinguisher to take an additional quantum advice $\frac{\ket{\alive}\ket{0}+\ket{\dead}\ket{1}}{\sqrt{2}}$, which is not necessarily efficiently generatable in general.\footnote{We remark that they are efficiently generatable in our application where $\ket{\alive}$ and $\ket{\dead}$ correspond to commitments to $0$ and $1$.} 
Second, there occurs a quadratic reduction loss unlike the original theorem in \cite{AAS20} without ancilla qubits. Nonetheless, they are not a problem for our purpose.

\section{Preliminaries}
\ifnum\llncs=0
\ifnum\llncs=1
\section{Omitted Preliminaries}\label{sec:omitted_preliminaries}
\fi

\smallskip
\noindent\textbf{Basic notations.} 
We use $\secp$ to mean the security parameter throughout the paper. The dependence on $\secp$ is often implicit. For example, we simply write a function $f:\bit^n \rightarrow \bit^m$ to mean a collection $\{f_\secp:\bit^{n(\secp)} \rightarrow \bit^{m(\secp)}\}_{\secp\in \mathbb{N}}$ for some functions $n(\secp)$ and $m(\secp)$ etc.
For a finite set $X$, we write $x\gets X$ to mean that we uniformly take $x$ from $X$. 
For a (possibly randomized) classical or quantum algorithm $\A$, we write $y\gets \A(x)$ to mean that $\A$ takes $x$ as input and outputs $y$. 
For a function $f:X\rightarrow Y$ and $y\in Y$, we write $f^{-1}(y)$ to mean the set of all preimages of $y$, i.e., $f^{-1}(y)\seteq \{x\in X:f(x)=y\}$. 
We say that a probability distribution is statistically close to another probability distribution if their statistical distance is negligible.

\smallskip
\noindent\textbf{Notations for quantum computations.} 
For simplicity, $|0...0\rangle$ is sometimes written as $|0\rangle$.
Quantum registers are denoted by bold fonts, e.g., $\regA,\regB$ etc. 
$\Tr_\regA(\rho_{\regA,\regB})$ is the partial trace over the register $\regA$ of the bipartite state $\rho_{\regA,\regB}$.
For simplicity, the tensor product $\otimes$ is sometimes omitted: for example, $|\psi\rangle\otimes|\phi\rangle$
is sometimes written as $|\psi\rangle|\phi\rangle$.
$I$ is the identity operator on a single qubit.
For simplicity, we often write $I^{\otimes m}$ just as $I$ when the dimension is clear from the context. 
For any two states $\rho_1$ and $\rho_2$,
$F(\rho_1,\rho_2)$ is the fidelity between them. 
For a set $S$ of classical strings, we define $\ket{S}\seteq \frac{1}{\sqrt{|S|}}\sum_{x\in S}\ket{x}$. 

\smallskip
\noindent\textbf{Computational models.}
We say that a classical algorithm is probabilistic polynomial time (PPT) if it can be computed by a polynomial-time (classical) probabilistic Turing machine.  
We say that a quantum algorithm is quantum polynomial time (QPT) if it can be computed by a polynomial-time quantum Turing machine (or equivalently a quantum circuit generated by a polynomial-time Turing machine). 
We say that a quantum algorithm is non-uniform QPT if it can be computed by a polynomial-size quantum circuits (or polynomial-time quantum Turing machine) with quantum advice. We use non-uniform QPT algorithms as a default model of adversaries unless otherwise noted.

We say that a sequence $\{U_\secp\}_{\secp\in \mathbb{N}}$ of unitary operators is polynomial-time computable if there is a polynomial-time Turing machine that on input $1^\secp$ outputs a description of a quantum circuit that computes $U_\secp$. We often omit the dependence on $\secp$ and simply write $U$ is polynomial-time computable to mean the above. 

\smallskip
\noindent\textbf{Distinguishing advantage.}
For a quantum algorithm $\A$ and quantum states $\ket{\psi}$ and $\ket{\phi}$, we say that $\A$ distinguishes $\ket{\psi}$ and $\ket{\phi}$ with advantage $\Delta$ if 
\begin{align*}
    |\Pr[\A(\ket{\phi})=1]-\Pr[\A(\ket{\psi})=1]|=\Delta.
\end{align*}

\subsection{Basic Cryptographic Primitives}
\begin{definition}[One-way functions]
We say that a classical polynomial-time computable function $f:\bit^n \rightarrow \bit^m$ is a \emph{one-way function (OWF)} if for any non-uniform QPT adversary $\A$, we have 
\begin{align*}
    \Pr[f(x')=f(x):x\gets \bit^n,x' \gets \A(1^\secp,f(x))]=\negl(\secp).
\end{align*}
We say that a one-way function 
$f$ is an \emph{injective} one-way function if $f$ is injective, and that a one-way function 
$f$ is a one-way \emph{permutation} if $f$ is a permutation.
\end{definition}

\begin{definition}[Keyed one-way functions]
We say that a family $\{f_k:\bit^n \rightarrow \bit^m\}_{k\in \mathcal{K}}$ of classical polynomial-time computable functions is a \emph{keyed} one-way function if for any non-uniform QPT adversary $\A$, we have 
\begin{align*}
    \Pr[f_k(x')=f_k(x):k\gets \mathcal{K}, x\gets \bit^n,x' \gets \A(1^\secp,k,f_k(x))]=\negl(\secp).
\end{align*}
We say that 
a keyed one-way fucntion $\{f_k:\bit^n \rightarrow \bit^m\}_{k\in \mathcal{K}}$ is a keyed \emph{injective} one-way function if $f_k$ is injective for all $k\in \mathcal{K}$.
\end{definition}

\begin{definition}[Pseudorandom generators]
We say that a classical polynomial-time computable function $G:\bit^n \rightarrow \bit^m$ is a \emph{pseudorandom generator (PRG)} if $m> n$ and for any non-uniform QPT adversary $\A$, we have 
\begin{align*}
    |\Pr[\A(y)=1:y\gets \bit^{m}]-\Pr[\A(G(x))=1:x\gets \bit^n]|=\negl(\secp).
\end{align*}
\end{definition}

It is well-known that PRGs exist if and only if OWFs exist~\cite{SIAM:HILL99}.

\begin{definition}[Collapsing functions~{\cite{EC:Unruh16}}]\label{def:collapsing}
For a polynomial-time computable function family $\mathcal{H}=\{H_k:\bit^{L}\rightarrow \bit^\ell\}_{k\in \mathcal{K}_{\mathcal{H}}}$ and an adversary $\A$, we define an experiment $\mathsf{Exp}_{\A}^{\mathsf{collapse}}(1^\secp)$ as follows:
\begin{enumerate}
    \item The challenger generates $k\gets \mathcal{K}_{\mathcal{H}}$.
    \item \label{step:send_X}
    $\A$ is given $k$ as input and generates a hash value $y\in \bit^\ell$ and a quantum state $\sigma$ over registers $(\regX,\regA)$ where
    $\regX$ stores an element of $\bit^L$ and $\regA$ is $\A$'s internal register.
    Then it sends $y$ and register $\regX$ to the challenger, and keeps $\regA$ on its side.
    \item The challenger picks $b\gets \bit$. If $b=0$, the challenger does nothing and if $b=1$, the challenger measures register $\regX$ in the computational basis.
    The challenger returns register $\regX$ to $\A$.
    \label{step:collapsing_game_measure}
    \item $\A$ outputs a bit $b'$. The experiment outputs $1$ if $b'=b$ and $0$ otherwise. 
\end{enumerate}
We say that $\A$ is a valid adversary if the following is satisfied:
if we measure the state in $\regX$ right after Step \ref{step:send_X}, then the outcome $x$ satisfies $H_k(x)=y$ with probability $1$. 

We say that $\mathcal H$ is \emph{collapsing} if
for any non-uniform QPT valid adversary $\A,$ we have
\[
|\Pr[1\gets \mathsf{Exp}_{\A}^{\mathsf{collapse}}(1^\secp)]-1/2|= \negl(\secp).
\]
\minki{I changed the orders of notions due to the distance between ``collapsing'' and the definition itself. The previous definition is hidden below.}
\if0
A polynomial-time computable function family $\mathcal{H}=\{H_k:\bit^{L}\rightarrow \bit^\ell\}_{k\in \mathcal{K}_{\mathcal{H}}}$ is collapsing if the following is satisfied:

For an adversary $\A$, we define an experiment $\mathsf{Exp}_{\A}^{\mathsf{collapse}}(1^\secp)$ as follows:
\begin{enumerate}
    \item The challenger generates $k\gets \mathcal{K}_{\mathcal{H}}$.
    \item \label{step:send_X}
    $\A$ is given $k$ as input and generates a hash value $y\in \bit^\ell$ and a quantum state $\sigma$ over registers $(\regX,\regA)$ where
    $\regX$ stores an element of $\bit^L$ and $\regA$ is $\A$'s internal register.
    Then it sends $y$ and register $\regX$ to the challenger, and keeps $\regA$ on its side.
    \item The challenger picks $b\gets \bit$. If $b=0$, the challenger does nothing and if $b=1$, the challenger measures register $\regX$ in the computational basis.
    The challenger returns register $\regX$ to $\A$.
    \label{step:collapsing_game_measure}
    \item $\A$ outputs a bit $b'$. The experiment outputs $1$ if $b'=b$ and $0$ otherwise. 
\end{enumerate}
We say that $\A$ is a valid adversary if the following is satisfied:
if we measure the state in $\regX$ right after Step \ref{step:send_X}, then the outcome $x$ satisfies $H_k(x)=y$ with probability $1$. 

For any non-uniform QPT valid adversary $\A$ we have
\[
|\Pr[1\gets \mathsf{Exp}_{\A}^{\mathsf{collapse}}(1^\secp)]-1/2|= \negl(\secp).
\]
\fi
\end{definition}

As shown in \cite{EC:Unruh16}, the collapsing property implies the collision-resistance. That is, if $\mathcal{H}=\{H_k:\bit^{L}\rightarrow \bit^\ell\}_{k\in \mathcal{K}_{\mathcal{H}}}$ is collapsing, then it is also collision-resistant, i.e., no non-uniform QPT adversary can find $x\neq x'$ such that $H_k(x)=H_k(x')$ with non-negligible probability given $k\gets \mathcal{K}_\mathcal{H}$.
It is clear that injective functions are collapsing.

Unruh~\cite{AC:Unruh16} showed that there is a collapsing function family with arbitrarily long (or even unbounded) input-length under the LWE assumption (or more generally, under the existence of lossy functions in a certain parameter regime).

\begin{definition}[Single-copy-secure PRSGs~\cite{C:MorYam22}]
\label{definition:1PRS}
A \emph{single-copy-secure pseudorandom quantum states generator (PRSG)} is a QPT
algorithm $\StateGen$ that, on input $k\in\{0,1\}^n$,
outputs an $m$-qubit quantum state $|\phi_k\rangle$.
As the security, we require the following: 
for any non-uniform QPT adversary $\cA$,  
\begin{eqnarray*}
|\Pr_{k\leftarrow \{0,1\}^n}[\cA(|\phi_k\rangle)\to1]
-\Pr_{|\psi\rangle\leftarrow \mu_m}[\cA(|\psi\rangle)\to1]|
=\negl(\secp),
\end{eqnarray*}
where $\mu_m$ is the Haar measure on $m$-qubit states.\footnote{Intuitively,
$|\psi\rangle\leftarrow \mu_m$ means that an $m$-qubit pure state is sampled uniformly at random
from the set of all $m$-qubit pure states.}
\end{definition}
Single-copy-secure PRSGs are a restricted version of (poly-copy-secure) PRSGs introduced in \cite{C:JiLiuSon18},
where any polynomially many copies of $|\phi_k\rangle$ are computationally indistnguishable from the same number of
copies of Haar random states.
If one-way functions exist, (poly-copy-secure) PRSGs exist~\cite{C:JiLiuSon18}.
On the other hand, there is an evidence that (poly-copy-secure) PRSGs do not imply one-way functions~\cite{Kre21}.

\else
Notations used throughout the paper and definitions of basic cryptographic primitives are given in the full version. 
\fi

\subsection{Canonical Quantum Bit Commitments}
We define \emph{canonical} quantum bit commitments as defined in \cite{AC:Yan22}.  

\begin{definition}[Canonical quantum bit commitments]\label{def:canonical_com}
A canonical quantum bit commitment scheme is represented by a family $\{Q_0(\secp),Q_1(\secp)\}_{\secp\in \mathbb{N}}$ of polynomial-time computable unitaries over two registers $\regC$ (called the \emph{commitment} register) and $\regR$ (called the \emph{reveal} register).  
In the rest of the paper, we often omit $\secp$ and simply write $Q_0$ and $Q_1$ to mean $Q_0(\secp)$ and $Q_1(\secp)$. 
\end{definition}
\begin{remark}
Canonical quantum bit commitments are supposed to be used as follows. In the commit phase, to commit to a bit $b\in \bit$, 
the sender generates a state $Q_b\ket{0}_{\regC,\regR}$ and sends $\regC$ to the receiver while keeping $\regR$. In the reveal phase, the sender sends $b$ and $\regR$ to the receiver. The receiver projects the state on $(\regC,\regR)$ onto   $Q_b\ket{0}_{\regC,\regR}$, and accepts if it succeeds and otherwise rejects.  
\end{remark}


\begin{definition}[Hiding]\label{def:hiding}
We say that a canonical quantum bit commitment scheme $\{Q_0,Q_1\}$ is computationally (rep. statistically) \emph{hiding} if $\Tr_{\regR}(Q_0(\ket{0}\bra{0})_{\regC,\regR}Q_0^\dagger)$ is computationally (resp. statistically) indistinguishable from $\Tr_{\regR}(Q_1(\ket{0}\bra{0})_{\regC,\regR}Q_1^\dagger)$. 
We say that it is perfectly hiding if they are identical states.  
\end{definition}

\begin{definition}[Binding]\label{def:binding}
We say that a canonical quantum bit commitment scheme $\{Q_0,Q_1\}$ is computationally (rep. statistically) \emph{binding} if for any polynomial-time computable   
(resp. unbounded-time) unitary $U$ over $\regR$ 
and an additional register $\regZ$ and any polynomial-size state $\ket{\tau}_{\regZ}$, 
it holds that 
\begin{align*}
    \left\|(Q_1\ket{0}\bra{0}Q_1^\dagger)_{\regC,\regR}(I_{\regC}\otimes U_{\regR,\regZ})((Q_0\ket{0})_{\regC,\regR}\ket{\tau}_{\regZ})\right\|=\negl(\secp).
\end{align*}
We say that it is perfectly binding if the LHS is $0$ for all unbounded-time unitary $U$. 
\end{definition}

\if0
\takashi{The following lemma is used in \Cref{sec:construction_PRS}.}
\begin{lemma}\label{lem:paraphrase_security}
For a canonical quantum bit commitment scheme $\{Q_0,Q_1\}$, the following holds:
\begin{enumerate}
    \item \label{item:paraphrase_hiding}
    $\{Q_0,Q_1\}$ is statistically (resp. perfectly) hiding if and only if there is a unitary $U$ over $\regR$ such that
    $$
    |(\bra{0}Q_1^{\dagger})_{\regC,\regR} (I_{\regC}\otimes U_{\regR}) (Q_0\ket{0})_{\regC,\regR}|=1-\negl(\secp)~(\text{resp.~}1).
    $$
    \item \label{item:paraphrase_binding}
    $\{Q_0,Q_1\}$ is statistically (resp. perfectly) binding if and only if 
    $$
    F(\Tr_{\regR}(Q_0\ket{0}_{\regC,\regR}),\Tr_{\regR}(Q_1\ket{0}_{\regC,\regR}))=\negl(\secp)~(\text{resp.~}0).
    $$
\end{enumerate}
\end{lemma}
\begin{proof}
This lemma is essentially proven in~\cite{LC97,May97}, but for the convenience
of readers, we provide proofs here. For simplicity, we consider only the 
case of statistical one. (The perfect one is shown in a similar way.)

We first show the first item.
If it is statistical hiding,
we have 
\begin{eqnarray*}
F(\Tr_\regR(Q_0|0\rangle_{\regC,\regR}),\Tr_\regR(Q_1|0\rangle_{\regC,\regR}))\ge1-\negl(\lambda).
\end{eqnarray*}
Then, from Uhlmann's theorem, we have the equation of the first item with a certain $U_\regR$.
If the equation of the first item is satisfied, on the other hand, 
again from Uhlmann's theorem, we have
$F(\Tr_\regR(Q_0|0\rangle_{\regC,\regR}),\Tr_\regR(Q_1|0\rangle_{\regC,\regR}))\ge1-\negl(\lambda)$
which is the definition of the statistical hiding.
\takashi{Uhlmann's theorem + the fact that purifications of the same state differ by a unitary + the relation between fidelity and trace distance.}
\end{proof}
\mor{lemma to shite iru?}
\fi

\subsection{Equivalence between Swapping and Distinguishing}
\minki{Does this section is appropriate for preliminaries? I usually put things that are well-known to communities... But I agree that there is no other good place for this theorem.}
The following theorem was proven in \cite{AAS20}. 
\if0
\begin{theorem}[{\cite[Theorem 2]{AAS20}}] \label{thm:AAS}
\begin{enumerate}
    \item \label{item:AAS_one}
    Let $\ket{x},\ket{y}$ be orthogonal $n$-qubit states, and suppose that $\bra{y}U\ket{x}=a$ and $\bra{x}U\ket{y}=b$ for a unitary $U$ over $n$-qubit states. 
    Let $A$ be the unitary over $(n+1)$-qubit states as described in Figure \takashi{Figure 1 of \cite{AAS20}.} Then, $A$ distinguishes $\ket{\psi}=\frac{\ket{x}+\ket{y}}{\sqrt{2}}$ and  $\ket{\phi}=\frac{\ket{x}-\ket{y}}{\sqrt{2}}$ with an advantage $\frac{|a+b|}{2}$, i.e., 
    \begin{align*}
      \left|  \left\|(\bra{1}\otimes I_{n})A\ket{0}\ket{\psi}\right\|^2 -
      \left\|(\bra{1}\otimes I_n)A\ket{0}\ket{\phi}\right\|^2
      \right|=\frac{|a+b|}{2}.
    \end{align*}
    \item  \label{item:AAS_two}
    Let $\ket{\psi},\ket{\phi}$ be orthogonal $n$-qubit states, and suppose that a unitary $A$ over $n$-qubit state distinguishes $\ket{\psi}$ and $\ket{\phi}$ with advantage $\Delta$, i.e., 
     \begin{align*}
      \left|  \left\|(\bra{1}\otimes I_{n-1})A\ket{\psi}\right\|^2 
      -\left\|(\bra{1}\otimes I_{n-1})A\ket{\phi}\right\|^2
      \right|=\Delta.
    \end{align*}
    Let $U$ be the unitary over $n$-qubit states described in Figure \takashi{Figure 2 of \cite{AAS20}.} Then, we have  
    \begin{align*}
        \frac{|\bra{y}U\ket{x}+\bra{x}U\ket{y}|}{2}=\Delta.
    \end{align*}
\end{enumerate}
\end{theorem}
\begin{remark}
In the statement \cite[Theorem 2]{AAS20}, they do not specify the actual construction of $U$ in \Cref{item:AAS_one} and $A$ in \Cref{item:AAS_two}, but they are clear from their proof. 
We explicitly write their constructions to make sure that if $U$ (resp. $A$) does not touch a certain register, neither does $A$ (resp. $U$) in \Cref{item:AAS_one} (resp. \Cref{item:AAS_two}). This observation is important for our application to the conversion for quantum bit commitments. 
\end{remark}
\fi

\begin{theorem}[{\cite[Theorem 2]{AAS20}}] \label{thm:AAS}~
\begin{enumerate}
    \item \label{item:AAS_one}
    Let $\ket{x},\ket{y}$ be orthogonal $n$-qubit states.
    Let $U$ be a polynomial-time computable unitary over $n$-qubit states and define $\Gamma$ as
    \begin{align*}
        \Gamma :=  
        \left|\bra{y}U\ket{x}
        + 
        \bra{x}U\ket{y}
        \right|.
    \end{align*} 
    Then, there exists a QPT distinguisher $\A$ that makes a single black-box access to controlled-$U$  and distinguishes  
    $\ket{\psi}\coloneqq\frac{\ket{x}+\ket{y}}{\sqrt{2}}$ and  $\ket{\phi}\coloneqq\frac{\ket{x}-\ket{y}}{\sqrt{2}}$ with advantage $\frac{\Gamma}{2}$.
    Moreover, if $U$ does not act on some qubits, then $\A$ also does not act on those qubits.
    \item  \label{item:AAS_two}
    Let $\ket{\psi},\ket{\phi}$ be orthogonal $n$-qubit states, and suppose that a QPT distinguisher $\A$ 
    distinguishes $\ket{\psi}$ and $\ket{\phi}$ with advantage $\Delta$ without using any ancilla qubits.  
    Then, there exists a polynomial-time computable unitary $U$ over $n$-qubit states such that 
    \begin{align*}
        \frac{|\bra{y}U\ket{x}+\bra{x}U\ket{y}|}{2}=\Delta
    \end{align*}
    where  $|x\rangle\coloneqq\frac{|\psi\rangle+|\phi\rangle}{\sqrt{2}}$ and $|y\rangle\coloneqq\frac{|\psi\rangle-|\phi\rangle}{\sqrt{2}}$.
     Moreover, if $\A$ does not act on some qubits, then $U$ also does not act on those qubits.
\end{enumerate}
\end{theorem}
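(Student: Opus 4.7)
The plan is to prove both directions by exhibiting explicit circuits: a phase-adjusted Hadamard test for the forward direction, and a unitary derived from the distinguisher's final measurement for the converse.

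First, for \Cref{item:AAS_one}, I would construct $\A$ as follows. Introduce a single fresh ancilla qubit initialized to $\ket{0}$, apply $H$, apply controlled-$U$ with the ancilla as control, apply a single-qubit phase $P(\theta)\coloneqq \mathrm{diag}(1, e^{i\theta})$ on the ancilla (for a constant $\theta$ to be fixed below), apply $H$ again, and output the ancilla's measurement result. A direct computation yields
\begin{align*}
\Pr[\A(\ket{\chi}) = 1] = \tfrac{1}{2} - \tfrac{1}{2}\mathrm{Re}\bigl(e^{i\theta}\bra{\chi} U\ket{\chi}\bigr),
\end{align*}
and combining this with the identity $\ket{\psi}\bra{\psi} - \ket{\phi}\bra{\phi} = \ket{x}\bra{y} + \ket{y}\bra{x}$ (which follows from $\ket{x} = (\ket{\psi} + \ket{\phi})/\sqrt{2}$ and $\ket{y} = (\ket{\psi} - \ket{\phi})/\sqrt{2}$) gives distinguishing advantage $\tfrac{1}{2}|\mathrm{Re}(e^{i\theta}(\bra{y} U\ket{x} + \bra{x} U\ket{y}))|$. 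Setting $\theta$ equal to $-\arg(\bra{y} U\ket{x} + \bra{x} U\ket{y})$ then makes this quantity exactly $\Gamma/2$.

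For \Cref{item:AAS_two}, I would run this logic in reverse. Since $\A$ uses no ancillas, it can be written as applying a unitary $V$ on the $n$-qubit input register and then measuring the first qubit in the computational basis. Define $U \coloneqq V^\dagger Z_1 V$, where $Z_1 \coloneqq I - 2\ket{1}\bra{1} \otimes I_{n-1}$. Then $U$ is Hermitian and unitary, and
\begin{align*}
\Pr[\A(\ket{\chi}) = 1] = \Tr\bigl((\ket{1}\bra{1} \otimes I_{n-1})\, V\ket{\chi}\bra{\chi} V^\dagger\bigr) = \tfrac{1}{2}\bigl(1 - \bra{\chi} U\ket{\chi}\bigr).
\end{align*}
Applying the same identity as above, the distinguishing advantage equals $\tfrac{1}{2}|\bra{y} U\ket{x} + \bra{x} U\ket{y}|$, which is the claimed relation, and $U$ is polynomial-time computable because $V$ is.

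The ``moreover'' clauses fall out of the locality of both constructions: in the forward direction, all nontrivial gates on the input register come from controlled-$U$, so $\A$ touches only the qubits $U$ does (the phase and Hadamard gates act on the fresh ancilla); in the converse direction, if $V = V' \otimes I_k$ for some qubit $k$ other than the measured one, then $U = (V')^\dagger Z_1 V' \otimes I_k$ as well. The main subtlety I expect to grapple with is the bookkeeping of complex phases in \Cref{item:AAS_one}: the naive Hadamard test only yields advantage $\tfrac{1}{2}|\mathrm{Re}(\bra{y} U\ket{x} + \bra{x} U\ket{y})|$, which is strictly less than $\Gamma/2$ when $\bra{y}U\ket{x} + \bra{x}U\ket{y}$ has a nontrivial imaginary part, and it is the phase gate $P(\theta)$ that upgrades the real part to the full magnitude. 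Since the theorem asserts only the existence of a QPT $\A$ for each given $U$, the circuit is allowed to hard-code this $U$-dependent $\theta$, so no additional uniformity problem arises.
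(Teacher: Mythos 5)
Your proposal is correct and takes essentially the same route as the paper (which simply recalls the construction from~\cite{AAS20} and presents the circuits in \Cref{fig:AAS_one,fig:AAS_two}). For \Cref{item:AAS_one} you place an explicit phase gate $P(\theta)$ on the ancilla rather than absorbing the phase into $\widetilde{U}\coloneqq e^{i\theta}U$ as in \Cref{fig:AAS_one}, but since controlled-$(e^{i\theta}U)$ is exactly $P(\theta)$ on the control followed by controlled-$U$, the two circuits are the same operator; and for \Cref{item:AAS_two} your $U=V^\dagger Z_1 V$ is precisely the circuit of \Cref{fig:AAS_two}, with the identity $\bra{\chi}U\ket{\chi}=1-2\Pr[\A(\ket{\chi})=1]$ giving a slightly more streamlined bookkeeping than the explicit Schmidt-style decomposition of $V_\A\ket{\psi}$ and $V_\A\ket{\phi}$ that the paper writes out, though the content is identical.
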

\begin{remark}[Descriptions of quantum circuits.]
For the reader's convenience, we give the concrete descriptions of quantum circuits for the above theorem, which are presented in \cite{AAS20}. 

For \Cref{item:AAS_one},  let $\widetilde{U}\seteq e^{i\theta} U$ for $\theta$ such that 
\[
\mathrm{Re}(\bra{y}\widetilde{U}\ket{x}
        + 
        \bra{x}\widetilde{U}\ket{y})=\left|\bra{y}U\ket{x}
        + 
        \bra{x}U\ket{y}
        \right|.\]
Then, $\A$ is described in \Cref{fig:AAS_one}.

For \Cref{item:AAS_two}, let $V_{\A}$ be a unitary such that 
\begin{align*}
    &V_{\A}\ket{\psi}=\sqrt{p}\ket{1}\ket{\psi_1}+\sqrt{1-p}\ket{0}\ket{\psi_0}\\
    &V_{\A}\ket{\phi}=\sqrt{1-p+\Delta}\ket{0}\ket{\phi_0}+\sqrt{p-\Delta}\ket{1}\ket{\phi_1}
\end{align*}
for some $\ket{\psi_0}$, $\ket{\psi_1}$, $\ket{\phi_0}$, and $\ket{\phi_1}$. 
That is, $V_{\A}$ is the unitary part of $\A$.
Then, $U$ is described in \Cref{fig:AAS_two}.

\begin{figure}[t]
\[\Qcircuit @C=1em @R=0.5em {
\lstick{\ket{0}}  
        &\gate{H} &\ctrl{1}             &\gate{H}   &\meter \\
\lstick{\ket{\psi}\text{~or~}\ket{\phi}} 
        &\qw      &\gate{\widetilde{U}} &\qw        &\qw\\
}\]
\caption{Quantum circuit for $\A$ in \Cref{item:AAS_one} of \Cref{thm:AAS}.}
\label{fig:AAS_one}
\end{figure}

\begin{figure}[t]
\centering
\[\Qcircuit @C=0.8em @R=0.8em {
& \multigate{3}{V_\A}   &\qw&\qw        &\qw&\gate{Z}   &\qw&\qw        &\qw&\multigate{3}{V_\A^\dagger}& \qw\\
& \ghost{V_\A}          &\qw&\qw        &\qw&\qw        &\qw&\qw        &\qw&\ghost{V_\A^\dagger}       & \qw \\
& \ghost{V_\A}          &\qw&\myvdots   &   &\qw        &\qw&\myvdots   &   &\ghost{V_\A^\dagger}       & \qw\\
& \ghost{V_\A}          &\qw&\qw        &\qw&\qw        &\qw&\qw        &\qw&\ghost{V_\A^\dagger}       & \qw 
\inputgroupv{1}{4}{0.8em}{2.6em}{\ket{x}\text{~or~}\ket{y}\phantom{aaaaaa}}\\
}\]
\caption{Quantum circuit for $U$ in \Cref{item:AAS_two} of \Cref{thm:AAS}.}
\label{fig:AAS_two}
\end{figure}

\end{remark}
\begin{remark}
Though the final requirement in both items (``Moreover,...'') is not explicitly stated in \cite[Theorem 2]{AAS20}, it is easy to see from \Cref{fig:AAS_one,fig:AAS_two}. This observation is important for our application to commitments and PKE.
\end{remark}
\section{Quantum-Ciphertext Public Key Encryption}\label{sec:QPKE}
In \Cref{sec:STF}, we introduce a notion of swap-trapdoor function pairs, which can be seen as a variant of trapdoor claw-free function pairs~\cite{FOCS:GolMicRiv84}. 
In \Cref{sec:QPKE_construction}, we define quantum-ciphertext PKE and construct it based on STFs.  
In \Cref{sec:STF_from_GA}, we construct STFs based on group actions.

\subsection{Swap-Trapdoor Function Pairs}\label{sec:STF}
We introduce a notion of swap-trapdoor function pairs (STFs). 
Similarly to trapdoor claw-free function pairs, a STF consists of two functions $f_0,f_1: \calX\rightarrow \calY$. 
We require that there is a trapdoor which enables us to ``swap'' preimages under $f_0$ and $f_1$, i.e., given $x_b$, we can find $x_{b\oplus 1}$ such that $f_{b\oplus 1 }(x_{b\oplus 1})=f_b(x_b)$. 
The formal definition of STFs is given below. 


\begin{definition}[Swap-trapdoor function pair]\label{def:STF}
A \emph{swap-trapdoor function pair (STF)} consists of algorithms $(\setup,\eval,\swap)$.
\begin{description}
\item[$\setup(1^\secp)\to(\pp,\td)$:] This is a PPT algorithm that takes the security parameter $1^\secp$ as input, and outputs a public parameter $\pp$ and a trapdoor $\td$. The public parameter $\pp$ specifies functions $f_b^{(\pp)}:\calX\rightarrow \calY$ for each $b\in\bit$. We often omit the dependence on $\pp$ and simply write $f_b$ when it is clear from the context.  
\item[$\eval(\pp,b,x)\to y$:] This is a deterministic classical polynomial-time algorithm that takes a public parameter $\pp$, a bit $b\in\bit$, and an element $x\in \calX$ as input, and outputs $y\in \calY$. 
\item[$\swap(\td,b,x)\to x'$:] This is a deterministic classical polynomial-time algorithm that takes a trapdoor $\td$, a bit $b\in \bit$, and an element $x\in \calX$ as input, and outputs $x'\in \calX$.
\end{description}
We require a STF to satisfy the following:

\smallskip\noindent\textbf{Evaluation correctness.}
For any $(\pp,\td)\gets \setup(1^\secp)$ , $b\in\bit$, and $x\in \calX$, we have $\eval(\pp,b,x)=f_b(x)$.

\smallskip\noindent\textbf{Swapping correctness.}
For any $(\pp,\td)\gets \setup(1^\secp)$, $b\in \bit$, and $x\in \calX$, 
 if we let $x'\gets \swap(\td,b,x)$, then we have $f_{b\oplus 1}(x')=f_b(x)$ and $\swap(\td,b\oplus 1,x')=x$.
In particular, $\swap(\td,b,\cdot)$ induces an efficiently computable and invertible one-to-one mapping between $f^{-1}_0(y)$ and $f^{-1}_1(y)$ for any $y\in \calY$. 
 

\smallskip\noindent\textbf{Efficient random sampling over $\calX$.} 
There is a PPT algorithm that samples an almost uniform element of $\calX$ (i.e., the distribution of the sample is statistically close to the uniform distribution). 

\smallskip\noindent\textbf{Efficient superposition over $\calX$.} 
There is a QPT algorithm
 that generates a state whose trace distance from $\ket{\calX}=\frac{1}{\sqrt{|\calX|}}\sum_{x\in\calX}|x\rangle$ is $\negl(\secp)$.
 \end{definition}

\begin{remark}[A convention on ``Efficient random sampling over $\calX$'' and ``Efficient superposition over $\calX$'' properties]\label{rem:random_sampling_and_superposition}
In the rest of this paper, we assume that we can sample elements from \emph{exactly} the uniform distribution of $\calX$.  Similarly, we assume that we can \emph{exactly} generate $\ket{\calX}$ in QPT. They are just for simplifying the presentations of our results, and all the results hold with the above imperfect version with additive negligible loss for security or correctness. 
\end{remark}


We define two security notions for STFs which we call \emph{claw-freeness} and \emph{conversion hardness}.  Looking ahead, what we need in our construction of quantum-ciphertext PKE in \Cref{sec:QPKE_construction} is only conversion hardness. However, since there are interesting relations between them as we show later, we define both of them here.  



\begin{definition}[Claw-freeness]\label{def:claw-free}
We say that a STF $(\setup,\eval,\swap)$ satisfies \emph{claw-freeness} if for any non-uniform QPT algorithm $\A$, we have 
\begin{align*}
    \Pr[f_0(x_0)=f_1(x_1):(\pp,\td)\gets \setup(1^\secp),(x_0,x_1)\gets \A(\pp)]=\negl(\secp).
\end{align*}
\end{definition}

\begin{definition}[Conversion hardness]\label{def:conv_hard}
We say that a STF $(\setup,\eval,\swap)$ satisfies \emph{conversion hardness} if for any non-uniform QPT algorithm $\A$, we have 
\begin{align*}
    \Pr[f_1(x_1)=y:(\pp,\td)\gets \setup(1^\secp),x_0\gets \calX, y\seteq f_0(x_0),  x_1\gets \A(\pp,\ket{f_0^{-1}(y)})]=\negl(\secp)
\end{align*} 
where we remind that $\ket{f_0^{-1}(y)}\seteq\frac{1}{\sqrt{|f_0^{-1}(y)|}}\sum_{x\in f_0^{-1}(y)}\ket{x}$.  
\end{definition}

\if0
\mor{How about the following game-based definition where it is clearer who generates $\ket{f_0^{-1}(y)}$?}
{\color{red}
\begin{enumerate}
\item
The challenger runs $(\pp,\td)\gets \setup(1^\secp)$.
\item
The challenger generates $\frac{1}{\sqrt{|\calX|}}\sum_{x\in\calX}|x\rangle|f_0(x)\rangle$.
\item
The challenger measures the second register to get $y\in\calY$.
The post-measurement state of the first register is $|f_0^{-1}(y)\rangle$.
\item
The challenger sends $\ket{f_0^{-1}(y)}$ to $\cA$.
\item
$\cA$ sends $x_1$ to the challenger.
\item
The output of the game is 1 if $f_1(x_1)=y$. Otherwise, it is 0.
\end{enumerate}
We say that a STF satisfies conversion hardness if the probability that the game outputs 1 is $\negl(\secp)$
for any non-uniform QPT adversary $\cA$.
}
\mor{This definition can also be generalized in such a way that $\cA$ first generates
any state $\sum_{x\in\calX}c_x|x\rangle_A\otimes|\phi_x\rangle_B$
and sends the register $A$ to $C$, which is more advantageous for $\cA$.
Shall I write this generalized version in appendix, or it is not necessary?(I do not know any application of this
generalized version.)}
\fi

\begin{remark}[On asymmetry of $f_0$ and $f_1$.]
Conversion hardness requires that it is hard to find $x_1$ such that $f_1(x_1)=y$ given $\ket{f_0^{-1}(y)}$. We could define it in the other way, i.e., it is hard to find $x_0$ such that $f_0(x_0)=y$ given $\ket{f_1^{-1}(y)}$. These two definitions do not seem to be equivalent. However, it is easy to see that if there is a STF that satisfies one of them, then it can be modified to satisfy the other one by just swapping the roles of $f_0$ and $f_1$. In this sense, the choice of the definition from these two versions is arbitrary. 
\end{remark}


We show several lemmas on the relationship between claw-freeness and conversion hardness.


First, we show that claw-freeness implies conversion hardness if $f_0$ is \ifnum\submission=0
collapsing~
(\Cref{def:collapsing}).\footnote{Collapsingness of $f_0$ can be naturally defined 
by ignoring $f_1$ and simply considering $\pp$ as a function index for $f_0$.}
\else
collapsing.
\fi

\begin{lemma}[Claw-free and collapsing $\rightarrow$ Conversion hard]\label{lem:collapsing_conversion_hard}
If $f_0$ is collapsing, then claw-freeness implies conversion hardness.
\end{lemma}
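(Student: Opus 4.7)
The plan is to prove the contrapositive: assuming a non-uniform QPT adversary $\A$ that breaks conversion hardness with non-negligible probability $\eps$, I will construct a non-uniform QPT adversary $\B$ that breaks claw-freeness, invoking collapsingness of $f_0$ as the key tool.

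First, I would observe that the challenge state $\ket{f_0^{-1}(y)}$ in the conversion hardness game can be generated efficiently by a superposition-and-measure trick: prepare $\ket{\calX}\ket{0}$, apply $\eval(\pp,0,\cdot)$ coherently to obtain $\frac{1}{\sqrt{|\calX|}}\sum_{x\in\calX}\ket{x}\ket{f_0(x)}$, and measure the second register to obtain $y$. The induced distribution of $y$ has weight $|f_0^{-1}(y)|/|\calX|$, which coincides with the distribution in the conversion hardness game where $x_0\gets\calX$ and $y\seteq f_0(x_0)$; moreover, the first register collapses to exactly $\ket{f_0^{-1}(y)}$.

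The reduction $\B$ then proceeds as follows. On input $\pp$, it prepares the above state on registers $\regX\regY$, measures $\regY$ to obtain $y$, and then measures $\regX$ in the computational basis to obtain some $x_0$ with $f_0(x_0)=y$. Next, it runs $\A(\pp,\ket{x_0})$ (feeding $\A$ the collapsed state $\ket{x_0}$ instead of the true challenge $\ket{f_0^{-1}(y)}$) to obtain $x_1$, and outputs $(x_0,x_1)$. Whenever $f_1(x_1)=y$, this is a valid claw.

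The main step is to argue that $\A$ still outputs $x_1$ with $f_1(x_1)=y$ with non-negligible probability even when fed $\ket{x_0}$ in place of $\ket{f_0^{-1}(y)}$. For this I would invoke the collapsing property of $f_0$ via the following distinguisher $\cC$ in the collapsing game: given a key $k=\pp$, $\cC$ prepares the superposition and measures the second register to obtain $y$, then sends $(y,\regX)$ to the collapsing challenger (keeping no internal $\regA$-state); upon receiving $\regX$ back, $\cC$ runs $\A(\pp,\regX)$ to get $x_1$ and outputs $1$ iff $f_1(x_1)=y$. Validity holds since measuring $\regX$ before sending would yield a preimage of $y$ under $f_0$ with certainty. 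Letting $p$ and $q$ denote the probability that $\cC$ outputs $1$ when the collapsing challenger does not and does measure, respectively, collapsingness yields $|p-q|=\negl(\secp)$. Since $p=\eps$ is non-negligible by assumption, so is $q$, and $\B$ finds a claw with probability at least $q$, contradicting claw-freeness.

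The main obstacle I anticipate is purely conceptual: carefully setting up $\cC$ as a \emph{valid} collapsing-game adversary (in particular, ensuring that $(y,\regX)$ is sent after the $\regY$-measurement so that the validity condition $H_\pp(x)=y$ is met with probability one), and verifying that $\cC$'s success probability in the two branches really equals the conversion hardness success probabilities on $\ket{f_0^{-1}(y)}$ and on $\ket{x_0}$. Once this is set up, the remaining bookkeeping is routine.
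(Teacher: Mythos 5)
Your proposal is correct and takes the same route as the paper's proof: the paper also argues via the contrapositive, uses the collapsing property to replace $\ket{f_0^{-1}(y)}$ by a classically sampled $\ket{x_0}$ without changing $\A$'s success probability up to negligible error, and then builds a claw-finder $\B$ that outputs $(x_0,x_1)$. The only difference is presentational: the paper leaves the collapsing reduction implicit (``by the assumption that $f_0$ is collapsing, we can show that $\ldots$''), whereas you spell out the collapsing-game distinguisher $\cC$ explicitly, which is exactly the argument the paper is alluding to.
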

\ifnum\llncs=0 \begin{proof}
Suppose that $(\setup,\eval,\swap)$ does not satisfy conversion hardness. Then, there is a non-uniform QPT adversary $\A$ such that 
\begin{align*}
    \Pr[f_1(x_1)=y:(\pp,\td)\gets \setup(1^\secp),x_0\gets \calX, y\seteq f_0(x_0),  x_1\gets \A(\pp,\ket{f_0^{-1}(y)})]
\end{align*}
is non-negligible. 
By the assumption that $f_0$ is collapsing, we can show that 
\begin{align*}
   \left|\Pr[f_1(x_1)=y:(\pp,\td)\gets \setup(1^\secp),x_0\gets \calX, y\seteq f_0(x_0),  x_1\gets \A(\pp,\ket{f_0^{-1}(y)})]\right.\\
   \left.
   -
   \Pr[f_1(x_1)=y:(\pp,\td)\gets \setup(1^\secp),x_0\gets \calX, y\seteq f_0(x_0),  x_1\gets \A(\pp,\ket{x_0})]
   \right|\\
   =\negl(\secp).
\end{align*}

Combining the above,
\begin{align*}
    \Pr[f_1(x_1)=y:(\pp,\td)\gets \setup(1^\secp),x_0\gets \calX, y\seteq f_0(x_0),  x_1\gets \A(\pp,\ket{x_0})]
\end{align*}
is non-negligible. 

Then, we use $\A$ to construct a non-uniform QPT adversary $\B$ that breaks claw-freeness as follows.

\begin{description}
\item[$\B(\pp)$:] 
Pick $x_0\gets \calX$, run $x_1\gets \A(\pp,x_0)$, and output $(x_0,x_1)$. 
\end{description} 
Then, $\B$ breaks claw-freeness.
\end{proof}
\else
We defer the proof to the full version. 
\fi

As a special case of \Cref{lem:collapsing_conversion_hard}, claw-freeness implies conversion hardness  when $f_0$ is \emph{injective} (in which case $f_1$ is also injective). This is because any injective function is trivially collapsing.

\minki{It is almost obvious to construct a conversion hard but not claw-free STF assuming the existence of conversion hard STFs. I added this below just for completeness.}

We remark that a conversion hard STF is not necessarily claw-free, because a claw can be augmented in STF without hurting the conversion hardness.

Next, we show a ``win-win'' result inspired from \cite{EC:Zhandry19b}. We roughly show that a claw-free but non-conversion-hard STF can be used to construct one-shot signatures~\cite{STOC:AGKZ20}. Roughly one-shot signatures are a genuinely quantum primitive which enables us to generate a classical verification key $\vk$ along with a quantum signing key $\qsk$ in such a way that one can use $\qsk$ to generate a signature for whichever message of one's choice, but cannot generate signatures for different messages simultaneously. 
\ifnum\submission=0
(See \Cref{def:OSS} for the formal definition.) 
\fi
The only known construction of one-shot signatures is relative to a classical oracle and there is no known construction in the standard model. Even for its weaker variant called tokenized signatures~\cite{BDS17}, the only known construction in the standard model is based on indistinguishability obfuscation~\cite{C:CLLZ21}.
Given the difficulty of constructing tokenized signatures, let alone one-shot signatures, it is reasonable to conjecture that natural candidate constructions 
of STFs satisfy conversion hardness if it satisfies claw-freeness. 
This is useful because claw-freeness often follows from weaker assumptions than conversion hardness, which is indeed the case for the group action-based construction in \Cref{sec:STF_from_GA}.

Before stating the lemma, we remark some subtlety about the lemma. 
Actually, we need to assume a STF that is claw-free but not \emph{infinitely-often} \emph{uniform} conversion hard. Here, ``infinitely-often'' means that it only requires the security to hold for infinitely many security parameters rather than all security parameters. (See \cite[Sec. 4.1]{EC:Zhandry19b} for more explanations about infinitely-often security.) 
The ``uniform'' means that security is required to hold only against uniform adversaries as opposed to non-uniform ones. 
Alternatively, we can weaken the assumption to a STF that is claw-free but not uniform conversion hard if we weaken the goal to be \emph{infinitely-often} one-shot signatures. 
We remark that similar limitations also exist for the ``win-win'' result in \cite{EC:Zhandry19b}.  

Then, the lemma is given below. 
\begin{lemma}[Claw-free and non-conversion hard STF $\rightarrow$ One-shot signatures]\label{lem:Gap_CF_and_CH}
Let $(\setup,\eval,\swap)$ be a STF that satisfies claw-freeness. Then, the following statements hold:
\begin{enumerate}
    \item \label{item:gap_one}
    If $(\setup,\eval,\swap)$ is not infinitely-often uniform conversion hard, then we can use it to construct one-shot signatures.
     \item \label{item:gap_two}
     If $(\setup,\eval,\swap)$ is not uniform conversion hard, then we can use it to construct infinitely-often one-shot signatures.
\end{enumerate}
\end{lemma}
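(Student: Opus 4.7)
The plan is to turn the non-conversion-hardness adversary $\A$ into the bit-$1$ signing procedure of a one-shot signature scheme whose unforgeability reduces to claw-freeness. My base template is: the key generation $\KG$ samples $(\pp,\td)\gets\setup(1^\secp)$, prepares $\ket{\calX}$, applies $\eval(\pp,0,\cdot)$ coherently, and measures the image register to obtain $y$; it then publishes $\vk=(\pp,y)$, sets $\qsk=\ket{f_0^{-1}(y)}$, and discards $\td$. Signing bit $0$ measures $\qsk$ in the computational basis; signing bit $1$ runs $\A(\pp,\qsk)$; verification accepts $\sigma$ for bit $b$ iff $f_b(\sigma)=y$. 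An adversary that produces valid signatures for both bits immediately hands over a claw $(x_0,x_1)$ with $f_0(x_0)=y=f_1(x_1)$, which a reduction will forward to the claw-freeness challenger after generating $(\vk,\qsk)$ itself using only the public algorithms (no trapdoor needed).

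Bit-$0$ correctness is perfect, but bit-$1$ correctness is only the non-negligible success probability $\epsilon(\secp)$ of $\A$. Because $\A$ is destructive on $\qsk$, one cannot amplify on a single register, so I will amplify via parallel repetition with \emph{asymmetric} verification. Key generation then independently samples $(\pp_i,y_i)$ for $i\in[n]$ and outputs $\vk'=\{(\pp_i,y_i)\}_{i\in[n]}$ together with $\qsk'=\bigotimes_i \ket{f_0^{-1}(y_i)}$, each $f_0$ implicitly indexed by the corresponding $\pp_i$. Signing bit $0$ measures \emph{every} register and outputs the full vector $(x_i^0)_i$; signing bit $1$ runs $\A$ on each register and outputs the nonempty support $S\subseteq[n]$ of coordinates that validate, together with the corresponding $x_i^1$'s. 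Verification requires the full vector for bit $0$ and only $|S|\ge 1$ with each listed $x_i^1$ valid for bit $1$. By i.i.d.-ness of the $(\pp_i,y_i)$, the probability that $\A$ fails on every coordinate is $(1-\epsilon)^n$, which is $\negl(\secp)$ for $n=\omega(\log\secp/\epsilon)=\poly(\secp)$.

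For unforgeability, any pair of accepting signatures yields, for \emph{every} $i\in S$, a claw $f_0(x_i^0)=y_i=f_1(x_i^1)$ at parameter $\pp_i$. The reduction will embed its claw-free challenge $\pp$ at a uniformly random index $i^*$ and sample the remaining $(\pp_j,y_j)$ honestly; since the attacker's view is independent of $i^*$, conditioned on winning we have $\Pr[i^*\in S]\ge \mathbb{E}[|S|/n]\ge 1/n$, giving only a $1/n$-factor loss. Items $1$ and $2$ then split by the set $L\subseteq\mathbb{N}$ on which $\A$'s success is non-negligible: under the hypothesis of item $1$, $L$ is cofinite, and a uniform machine can hardcode the finitely many exceptions to yield a full one-shot signature; under the hypothesis of item $2$, $L$ is only infinite, yielding an infinitely-often one-shot signature. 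The main obstacle I foresee is the correctness amplification --- since $\A$'s destructive measurement blocks standard single-register amplification --- and the asymmetric-verification template is what resolves it, forcing any winning attacker's bit-$1$ support to lie inside the fully covered bit-$0$ support $[n]$ and guaranteeing a claw regardless of how small $\epsilon$ is.
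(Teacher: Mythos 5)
Your construction matches the paper's almost exactly: the base scheme (publish $\pp$, sign $0$ by measuring $\ket{f_0^{-1}(y)}$, sign $1$ by running the conversion-hardness attacker $\A$, verify by checking $f_b(\sigma)=y$), the parallel-repetition amplification with asymmetric verification (bit-$0$ covers all coordinates, bit-$1$ only needs at least one), and the security reduction that plants the claw-freeness challenge at a random index $i^*$ and loses only a polynomial factor. Your observation that $\A$'s destructive measurement forces parallel repetition rather than sequential amplification is the same insight driving the paper's design. Item~\ref{item:gap_one} is also handled in the paper along the lines you sketch, i.e., by first upgrading ``not infinitely-often uniform conversion hard'' to a single uniform $\A$ and a single polynomial with success probability $\geq 1/\poly(\secp)$ for all $\secp$, at which point there are no exceptions to worry about; your ``hardcode the finitely many exceptions'' is a reasonable informal stand-in for that step.

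There is, however, a genuine gap in your treatment of item~\ref{item:gap_two}. The paper's notion of \emph{infinitely-often} one-shot signatures (\Cref{def:OSS}) relaxes only the \emph{security} condition to hold on infinitely many $\secp$; \emph{correctness} is still required to hold for all $\secp$. Under the hypothesis of item~\ref{item:gap_two} your $\A$ only succeeds on an infinite set $L$, so on $\secp\notin L$ the bit-$1$ signer aborts with high probability and correctness fails --- your construction therefore delivers ``infinitely-often correctness and infinitely-often security,'' which is weaker than what the lemma claims under the paper's definitions. The paper closes this gap by modifying verification: it empirically estimates $\A$'s per-instance success probability to within $1/(4\poly(\secp))$ and \emph{auto-accepts} whenever the estimate falls below $1/(2\poly(\secp))$. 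This restores correctness for all $\secp$ (on bad $\secp$ verification accepts vacuously, on good $\secp$ the signer succeeds), while security is deliberately abandoned on the bad $\secp$ and retained on the infinitely many good ones via claw-freeness. You would need to add this trick, or else explicitly weaken the definition of infinitely-often OSS to also relax correctness, to make item~\ref{item:gap_two} go through.
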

\ifnum\llncs=0 \begin{proof}(sketch.)
We give a proof sketch here. The full proof can be found in \Cref{sec:proof_Gap_CF_and_CH}. 

For simplicity, suppose that  $(\setup,\eval,\swap)$ is claw-free but its conversion hardness is totally broken. 
That is, we assume that we can efficiently find $x_1$ such that $f_1(x_1) = y$ given $(\pp,\ket{f^{-1}_0(y)})$. Our idea is to set $\pp$ to be the public parameter of one-shot signatures, $\ket{f^{-1}_0(y)}$ to be the secret key, and $y$ to be the corresponding verification key. 
For signing to $0$, the signer simply measures $\ket{f^{-1}_0(y)}$ to get $x_0\in f^{-1}_0(y)$ and set $x_0$ to be the signature. For signing to $1$, the signer runs the adversary against conversion hardness to get $x_1$ such that $f_1(x_1) = y$. If one can generate signatures to $0$ and $1$ simultaneously, we can break claw-freeness since $f_0(x_0)=f_1(x_1)=y$. Thus, the above one-shot signature is secure if $(\setup,\eval,\swap)$ is claw-free. 

In the actual proof, we only assume an adversary that finds $x_1$ with a non-negligible (or noticeable) probability rather than $1$. Then, our idea is to simply repeat the above construction parallelly many times so that at least one of the execution of the adversary succeeds with overwhelming probability. 

We need to assume that the adversary is uniform since that is used as part of the signing algorithm. The ``infinitely-often'' restriction comes from the fact that an inverse of non-negligible function may not be polynomial.
\end{proof}
\else
We defer the proof to the full version  
since the idea is already explained in \cref{sec:overview_QPKE}. 
\fi

\smallskip\noindent\textbf{Instantiations.} Our main instantiation of STFs is based on group actions, which is given in \Cref{sec:STF_from_GA}. 

A lattice-based instantiation is also possible if we relax the requirements to allow some ``noises'' similarly to \cite{FOCS:BCMVV18}. The noisy version is sufficient for our construction of quantum-ciphertext PKE given in \Cref{sec:QPKE_construction}. However, since lattice-based (classical) PKE schemes are already known~\cite{JACM:Regev09,STOC:GenPeiVai08}, we do not try to capture lattice-based instantiations in the definition of STFs. 

\subsection{Quantum-Ciphertext Public Key Encryption}\label{sec:QPKE_construction}
In this section, we define quantum-ciphertext PKE and construct it based on STFs.

\smallskip\noindent\textbf{Definition.} 
We define quantum-ciphertext PKE for one-bit messages for simplicity.
The multi-bit message version can be defined analogously, and a simple parallel repetition works to expand the message length. Moreover, we can further extend the message space to quantum states by a hybrid encryption with quantum one-time pad as in \cite{C:BroJef15}, i.e., we encrypt a quantum message by a quantum one-time pad, and then encrypt the key of the quantum one-time pad by quantum PKE for classical messages.  

\begin{definition}[Quantum-ciphertext public key encryption]\label{def:QPKE}
A \emph{quantum-ciphertext public key encryption (quantum-ciphertext PKE)} scheme (with single-bit messages) consists of algorithms $(\keygen,\enc,\dec)$.
\begin{description}
\item[$\keygen(1^\secp)\rightarrow (\pk,\sk)$:]
This is a PPT algorithm that takes the security parameter $1^\secp$ as input, and outputs a classical public key $\pk$ and a classical secret key $\sk$.
\item[$\enc(\pk,b)\rightarrow \qct$:] 
This is a QPT algorithm that takes a public key $\pk$ and a message $b\in \bit$ as input, and outputs a quantum ciphertext $\qct$. 
\item[$\dec(\sk,\qct)\rightarrow b'/\bot$:]
This is a QPT algorithm that takes a secret key $\sk$ and a ciphertext $\qct$ as input, and outputs a message $b'\in \bit$ or $\bot$. 
\end{description}
It must satisfy correctness as defined below:

\noindent\textbf{Correctness.}
For any $m\in \bit$, we have 
\begin{align*}
    \Pr[m'=m:(\pk,\sk)\gets \keygen(1^\secp),\qct \gets \enc(\pk,m),m'\gets \dec(\sk,\qct)]=1-\negl(\secp).
\end{align*}
\end{definition}

We define IND-CPA security for quantum-ciphertext PKE similarly to that for classical PKE as follows.
\begin{definition}[IND-CPA security]\label{def:IND-CPA}
We say that a quantum-ciphertext PKE scheme $(\keygen,\enc,\dec)$ is \emph{IND-CPA secure} if for any non-uniform QPT adversary $\A$, we have 
\begin{align*}
    \left|
    \Pr\left[
    \A(\pk,\qct_0)=1
    \right]
    -
    \Pr\left[
    \A(\pk,\qct_1)=1
    \right]
    \right|
    =\negl(\secp),
\end{align*}
where $ (\pk,\sk)\gets \keygen(1^\secp)$, 
    $\qct_0\gets \enc(\pk,0)$, and $\qct_1\gets \enc(\pk,1)$.
\end{definition}

\smallskip\noindent\textbf{Construction.}
Let $(\setup,\eval,\swap)$ be a STF. 
We construct a quantum-ciphertext PKE scheme $(\keygen,\enc,\dec)$ as follows.

\begin{description}
\item[$\keygen(1^\secp)$:] 
Generate $(\pp,\td)\gets \setup(1^\secp)$ and output $\pk\seteq \pp$ and $\sk \seteq \td$. 
\item[$\enc(\pk,b\in \bit)$:] 
Parse $\pk=\pp$. 
Prepare two registers $\regD$ and $\regX$.
Generate the state  
\begin{align*}
    \frac{1}{\sqrt{2}}(\ket{0}+(-1)^b\ket{1})_{\regD}\ket{\calX}_{\regX} =
    \frac{1}{\sqrt{2|\calX|}}(\ket{0}+(-1)^b\ket{1})_{\regD}\sum_{x\in \calX}\ket{x}_{\regX}.
\end{align*}
Prepare another register $\regY$, 
coherently compute $f_0$ or $f_1$ into $\regY$ controlled by $\regD$ to get 
\begin{align*}
    \sum_{x\in \calX}\frac{1}{\sqrt{2|\calX|}}(\ket{0}_{\regD}\ket{x}_{\regX}\ket{f_0(x)}_{\regY}+(-1)^b \ket{1}_{\regD}\ket{x}_{\regX}\ket{f_1(x)}_{\regY}),
\end{align*}
and measure $\regY$ to get $y\in \calY$. At this point, $\regD$ and $\regX$ collapse to the following state:\footnote{Note that the swapping correctness implies that $|f^{-1}_0(y)|=|f^{-1}_1(y)|$ for any $y\in \calY$.}
\begin{align*}
    \frac{1}{\sqrt{2}}(\ket{0}_{\regD}\ket{f^{-1}_0(y)}_{\regX}+(-1)^b \ket{1}_{\regD}\ket{f^{-1}_1(y)}_{\regX}).
\end{align*}
The above state is set to be $\qct$.\footnote{Remark that one does not need to include $y$ in the ciphertext.} 
\item[$\dec(\sk,\qct)$:]
Parse $\sk=\td$. 
Let $U_\td$ be a unitary over $\regD$ and $\regX$ such that\footnote{Note that the second operation is possible because $\swap(\td,0,\swap(\td,1,x))=x$.}
\begin{align*}
    U_\td \ket{0}_{\regD} \ket{x}_{\regX} &= \ket{0}_{\regD}\ket{x}_{\regX},\\
    U_\td \ket{1}_{\regD} \ket{x}_{\regX} &= \ket{1}_{\regD}\ket{\swap(\td,1,x)}_{\regX}.
\end{align*}
Apply $U_\td$ on the register $(\regD,\regX)$ and measure $\regD$ in the Hadamard basis and output the measurement outcome $b'\in\bit$. 
\end{description}


\smallskip
\noindent\textbf{Correctness.}
\begin{theorem}
$(\keygen,\enc,\dec)$ satisfies correctness.
\end{theorem}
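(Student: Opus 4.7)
The plan is to trace the state through encryption and decryption and show that it decouples into a product state from which Hadamard measurement deterministically reveals the plaintext bit. First I would invoke \Cref{rem:random_sampling_and_superposition} to work with the exact uniform superposition $\ket{\calX}$, so that after the computation of $f_0/f_1$ into $\regY$ and measurement of $\regY$ yielding some $y \in \calY$, the post-measurement state on $(\regD,\regX)$ is exactly
\begin{align*}
\ket{\qct_b(y)} = \frac{1}{\sqrt{2}}\bigl(\ket{0}_{\regD}\ket{f_0^{-1}(y)}_{\regX} + (-1)^b \ket{1}_{\regD}\ket{f_1^{-1}(y)}_{\regX}\bigr),
\end{align*}
where normalization is justified by the swapping correctness, which yields $|f_0^{-1}(y)| = |f_1^{-1}(y)|$ (the footnote in the construction).

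Next I would analyze $U_\td$ applied to $\ket{\qct_b(y)}$. The branch conditioned on $\regD = \ket{0}$ is untouched, so it remains $\ket{0}_{\regD}\ket{f_0^{-1}(y)}_{\regX}$. For the branch conditioned on $\regD = \ket{1}$, the key observation from the swapping correctness is that $\swap(\td,1,\cdot)$ induces an efficiently computable, invertible one-to-one map from $f_1^{-1}(y)$ onto $f_0^{-1}(y)$. Hence
\begin{align*}
U_\td\, \ket{1}_{\regD}\ket{f_1^{-1}(y)}_{\regX} = \ket{1}_{\regD} \cdot \frac{1}{\sqrt{|f_1^{-1}(y)|}} \sum_{x \in f_1^{-1}(y)} \ket{\swap(\td,1,x)}_{\regX} = \ket{1}_{\regD}\ket{f_0^{-1}(y)}_{\regX},
\end{align*}
where the last equality uses that $\swap(\td,1,\cdot)$ is a bijection onto $f_0^{-1}(y)$ combined with $|f_0^{-1}(y)| = |f_1^{-1}(y)|$. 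Therefore
\begin{align*}
U_\td \ket{\qct_b(y)} = \frac{1}{\sqrt{2}}\bigl(\ket{0}_{\regD} + (-1)^b \ket{1}_{\regD}\bigr) \otimes \ket{f_0^{-1}(y)}_{\regX} = H\ket{b}_{\regD} \otimes \ket{f_0^{-1}(y)}_{\regX}.
\end{align*}

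The two registers are now in a product state, so measuring $\regD$ in the Hadamard basis yields $b$ with probability $1$, independently of $y$. This holds for every $y$ that can arise in encryption, so averaging over $y$ preserves perfect correctness, i.e., $\Pr[m' = m] = 1$. I do not anticipate a genuine obstacle here: the only subtle point is that the Hadamard measurement of $\regD$ succeeds perfectly because the $\regX$-register factors out cleanly, which in turn hinges entirely on the bijectivity statement in the swapping correctness; everything else is a direct calculation. If one kept the imperfect-sampling variant of \Cref{rem:random_sampling_and_superposition}, the argument would go through with an additive negligible loss via a standard trace-distance bound.
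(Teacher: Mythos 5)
Your proof is correct and follows essentially the same route as the paper's: trace the encryption to the entangled ciphertext state, apply $U_\td$ branch by branch using the bijectivity from swapping correctness to collapse both branches to $\ket{f_0^{-1}(y)}_{\regX}$, and observe the resulting product state yields $b$ under Hadamard measurement. You spell out the bijection argument for the $\ket{1}_\regD$ branch a bit more explicitly than the paper, but the structure and key observations are identical.
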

\begin{proof}
An honestly generated ciphertext $\qct$ is of the form 
\begin{align*}
    \frac{1}{\sqrt{2}}(\ket{0}_{\regD}\ket{f^{-1}_0(y)}_{\regX}+(-1)^b \ket{1}_{\regD}\ket{f^{-1}_1(y)}_{\regX}).
\end{align*}
By the definition of $U_\td$ and the swapping correctness, it is easy to see that we have 
\begin{align*}
   &U_\td \ket{0}_{\regD}\ket{f^{-1}_0(y)}_{\regX}=\ket{0}_{\regD}\ket{f^{-1}_{0}(y)}_{\regX},\\
   &U_\td \ket{1}_{\regD}\ket{f^{-1}_1(y)}_{\regX}=\ket{1}_{\regD}\ket{f^{-1}_{0}(y)}_{\regX}.
\end{align*}
Thus, applying $U_\td$ on $\qct$ results in the following state:
\begin{align*}
    \frac{1}{\sqrt{2}}(\ket{0}_{\regD}\ket{f^{-1}_0(y)}_{\regX}+(-1)^b \ket{1}_{\regD}\ket{f^{-1}_0(y)}_{\regX})=\frac{1}{\sqrt{2}}(\ket{0}_{\regD}+(-1)^b \ket{1}_{\regD})\otimes \ket{f^{-1}_0(y)}_{\regX}.
\end{align*}
The measurement of $\regD$ in the Hadamard basis therefore results in $b$. 
\end{proof}

\smallskip
\noindent\textbf{Security.}
\begin{theorem}\label{thm:IND-CPA}
If $(\setup,\eval,\swap)$ satisfies conversion hardness, $(\keygen,\enc,\dec)$ is IND-CPA secure.
\end{theorem}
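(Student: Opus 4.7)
The plan is to prove the contrapositive: given a non-uniform QPT adversary $\A$ distinguishing $\qct_0 = \enc(\pk,0)$ and $\qct_1 = \enc(\pk,1)$ with non-negligible advantage $\Delta$, I will build a non-uniform QPT adversary $\B$ that breaks conversion hardness with probability at least $\Delta^2$. First I decompose the ciphertext: by construction $\qct_b = \sum_y p_y \ket{\psi^y_b}\bra{\psi^y_b}$ where $\ket{\psi^y_b} = \tfrac{1}{\sqrt 2}(\ket{0}_\regD\ket{f_0^{-1}(y)}_\regX + (-1)^b \ket{1}_\regD\ket{f_1^{-1}(y)}_\regX)$ and, by the swapping correctness, $p_y = |f_0^{-1}(y)|/|\calX| = |f_1^{-1}(y)|/|\calX|$, which matches exactly the distribution of $y = f_0(x_0)$ for $x_0\gets\calX$ in the conversion hardness game. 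WLOG (flipping $\A$'s output bit if necessary) $\Delta = \sum_y p_y \Delta_y$ where $\Delta_y := |\Pr[\A(\pk,\ket{\psi^y_0})=1] - \Pr[\A(\pk,\ket{\psi^y_1})=1]|$.

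Next I set up Theorem~\ref{thm:AAS}~\cref{item:AAS_two}. To accommodate $\A$'s ancillas, define $\ket{x_y} := \ket{0}_\regD \ket{f_0^{-1}(y)}_\regX \ket{0^k}_{\text{anc}}$ and $\ket{z_y} := \ket{1}_\regD \ket{f_1^{-1}(y)}_\regX \ket{0^k}_{\text{anc}}$, which are orthogonal and satisfy $\ket{\psi^y_b}\ket{0^k} = \tfrac{1}{\sqrt 2}(\ket{x_y} + (-1)^b \ket{z_y})$, so $\A$, viewed as acting on the input together with its ancilla register, distinguishes the two states without extra ancilla qubits with advantage $\Delta_y$. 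The theorem therefore produces a polynomial-time computable unitary $U$ (derived from $\A$ and $\pk$, but independent of $y$) such that $\tfrac{1}{2}|\bra{z_y} U \ket{x_y} + \bra{x_y} U \ket{z_y}| = \Delta_y$ for every $y$.

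Now I define $\B$. On input $(\pp, \ket{f_0^{-1}(y)})$, it prepares $\ket{x_y}$ by attaching fresh $\ket{0}$'s to the $\regD$ and ancilla registers, flips a fair coin $c\in\{0,1\}$, applies $U$ if $c=0$ or $U^\dagger$ if $c=1$, measures $\regX$ in the computational basis to obtain $x_1$, and outputs $x_1$. Write $a_y := \bra{z_y} U \ket{x_y}$ and $b_y := \bra{x_y} U \ket{z_y}$. Since $\ket{z_y}$ is supported on basis states whose $\regX$-component lies in $f_1^{-1}(y)$, letting $P$ be the projector onto $\regX\in f_1^{-1}(y)$, Cauchy-Schwarz yields
\[
\Pr[f_1(x_1)=y\mid c=0] = \|P U\ket{x_y}\|^2 \geq |\bra{z_y} P U\ket{x_y}|^2 = |a_y|^2,
\]
and symmetrically $\Pr[f_1(x_1)=y\mid c=1] \geq |\bra{z_y} U^\dagger \ket{x_y}|^2 = |b_y|^2$.

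Finally, I average. Cauchy-Schwarz gives $|a_y|^2 + |b_y|^2 \geq \tfrac{1}{2}|a_y + b_y|^2 = 2\Delta_y^2$, so the conditional success probability given $y$ is at least $\Delta_y^2$, and Jensen's inequality applied to $t\mapsto t^2$ yields
\[
\Pr[\B \text{ succeeds}] \geq \sum_y p_y \Delta_y^2 \geq \Bigl(\sum_y p_y \Delta_y\Bigr)^2 = \Delta^2,
\]
which is non-negligible, contradicting conversion hardness. The main obstacle is the asymmetry of conversion hardness --- $\B$ is handed $\ket{f_0^{-1}(y)}$ but not $\ket{f_1^{-1}(y)}$, so the naive move of feeding $\ket{z_y}$ into $U$ is unavailable; this is exactly what the $U$-versus-$U^\dagger$ coin flip and the $|a_y|^2+|b_y|^2 \geq \tfrac{1}{2}|a_y+b_y|^2$ bound are designed to handle, at the cost of only a constant factor in the overall reduction.
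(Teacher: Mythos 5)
Your proof is correct and reaches the same quantitative bound ($\Delta^2$), but it organizes the reduction differently from the paper, in a way that is arguably cleaner.

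The paper's proof purifies \emph{all} the randomness (the choice of $\pp$ and of $y$) into a single pair of orthogonal pure states $\ket{\psi_0},\ket{\psi_1}$, applies \Cref{item:AAS_two} of \Cref{thm:AAS} once to this global purification, assumes WLOG that one of $|\bra{\psi_1'}(U\otimes I)\ket{\psi_0'}|$ or $|\bra{\psi_0'}(U\otimes I)\ket{\psi_1'}|$ is at least $\Delta$, and then peels that quantity apart via a chain of Cauchy--Schwarz / triangle / Jensen inequalities. You instead condition on $y$, apply \Cref{item:AAS_two} of \Cref{thm:AAS} per~$y$, and observe that the unitary produced ($U=V_\A^\dagger Z V_\A$ in the language of \Cref{fig:AAS_two}) is a fixed object independent of the input states, so the \emph{same} $U$ gives $\tfrac12|\bra{z_y}U\ket{x_y}+\bra{x_y}U\ket{z_y}|=\Delta_y$ for every $y$. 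You then replace the paper's ``WLOG take one branch'' step with a coin flip between $U$ and $U^\dagger$, combined with the elementary bound $|a|^2+|b|^2\ge\tfrac12|a+b|^2$, which removes the need for a case split at the cost of a (harmless) factor $\tfrac12$, and you finish with Jensen's over the $y$-distribution. The trade-off: the paper's route only invokes the \emph{statement} of \Cref{item:AAS_two}, whereas your route needs the slight strengthening ``the same $U$ works for all input pairs,'' which is true but follows from the construction in \Cref{fig:AAS_two} rather than the statement alone (the paper's own remark after \Cref{thm:AAS} already grants that the circuit-level properties of $U$ are fair game). Your route is more modular and, I think, easier to read.

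Two small fix-ups. First, the line ``WLOG $\Delta = \sum_y p_y\Delta_y$'' is not right --- the sign of $\Pr[\A(\pk,\ket{\psi^y_0})=1]-\Pr[\A(\pk,\ket{\psi^y_1})=1]$ can differ across $y$, and flipping $\A$'s output flips all signs simultaneously. What you actually have is the inequality $\Delta\le\sum_y p_y\Delta_y$, which is all you use, so just change the equality to an inequality. Second, you condition on $y$ but not explicitly on $\pp$: the IND-CPA advantage also averages over $\pp\gets\setup$, so you should either say ``fix a $\pp$ for which the conditional advantage is $\ge\Delta$'' or add another layer of Jensen over $\pp$; either patch is one line. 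You should also replace the $\ket{0^k}_{\text{anc}}$ with $\A$'s actual non-uniform advice $\ket{\tau}$, since that is the ``ancilla'' \Cref{item:AAS_two} forbids the distinguisher from creating fresh --- the mechanism is identical, but writing $\ket{\tau}$ matches what you actually need.
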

\ifnum\llncs=0
\begin{proof}
First, we remark that the IND-CPA security is identical to computational indistinguishability of the following two states $\ket{\psi_0}$ and $\ket{\psi_1}$ against any non-uniform QPT distinguisher $\A$ that does not act on $(\regY,\regPP')$:
\begin{align*}
    \ket{\psi_b}\seteq 
    \sum_{\pp}\sqrt{D(\pp)}\ket{\pp}_{\regPP}\ket{\pp}_{\regPP'}\sum_{x\in \calX}\frac{1}{\sqrt{2|\calX|}}(\ket{0}_{\regD}\ket{x}_{\regX}\ket{f_0(x)}_{\regY}+(-1)^b \ket{1}_{\regD}\ket{x}_{\regX}\ket{f_1(x)}_{\regY})
\end{align*}
where $D(\pp)\seteq \Pr[\pp'=\pp:(\pp',\td')\gets \setup(1^\secp)]$.

Suppose that there is a non-uniform QPT distinguisher $\A$ with an advice $\ket{\tau}_{\regZ}$ that does not act on $(\regY,\regPP')$ and distinguishes $\ket{\psi_0}$ and $\ket{\psi_1}$ with a non-negligible advantage $\Delta$. 
 
Since $\ket{\psi_0}$ and $\ket{\psi_1}$ are orthogonal, by \Cref{item:AAS_two} of \Cref{thm:AAS}, there exists a polynomial-time computable unitary $U$ over $(\regPP,\regD,\regX,\regZ)$ such that\footnote{For applying \Cref{item:AAS_two} of \Cref{thm:AAS}, we assume that $\A$ does not use an additional ancilla qubits besides $\ket{\tau}_\regZ$ w.l.o.g. (Sufficiently many qubits that are initialized to be $\ket{0}$ could be included in $\ket{\tau}_{\regZ}$.)}
 \begin{align*}
        \frac{1}{2}\left|
        \begin{array}{ll}
        &\bra{\psi'_1}_{\regPP,\regPP',\regD,\regX,\regY}\bra{\tau}_{\regZ}(U_{\regPP,\regD,\regX,\regZ}\otimes I_{\regPP',\regY})\ket{\psi'_0}_{\regPP,\regPP',\regD,\regX,\regY}\ket{\tau}_{\regZ}\\
    &+\bra{\psi'_0}_{\regPP,\regPP',\regD,\regX,\regY}\bra{\tau}_{\regZ}(U_{\regPP,\regD,\regX,\regZ}\otimes I_{\regPP',\regY})\ket{\psi'_1}_{\regPP,\regPP',\regD,\regX,\regY}\ket{\tau}_{\regZ}
    \end{array}
        \right|
        =\Delta
    \end{align*}
where 
\begin{align}
    \ket{\psi'_b}=\frac{\ket{\psi_0}+(-1)^b\ket{\psi_1}}{\sqrt{2}}=
   \sum_{\pp}\sqrt{\frac{D(\pp)}{|\calX|}}\ket{\pp}_{\regPP}\ket{\pp}_{\regPP'}\sum_{x\in \calX}\ket{b}_{\regD}\ket{x}_{\regX}\ket{f_b(x)}_{\regY}. \label{eq:def_psi_prime}
\end{align}
In the following, we simply write $U$ to mean $U_{\regPP,\regD,\regX,\regZ}$ for notational simplicity. 
Thus, we must have 
\begin{align}
|\bra{\psi'_1}_{\regPP,\regPP',\regD,\regX,\regY}\bra{\tau}_{\regZ}(U\otimes I_{\regPP',\regY})\ket{\psi'_0}_{\regPP,\regPP',\regD,\regX,\regY}\ket{\tau}_{\regZ}|\geq \Delta
\end{align}
or
\begin{align}
|\bra{\psi'_0}_{\regPP,\regPP',\regD,\regX,\regY}\bra{\tau}_{\regZ}(U\otimes I_{\regPP',\regY})\ket{\psi'_1}_{\regPP,\regPP',\regD,\regX,\regY}\ket{\tau}_{\regZ}|\geq \Delta. 
\end{align}
Without loss of the generality, we assume that the former inequality holds.
Then we show that this contradicts conversion hardness of $(\setup,\eval,\swap)$. 
We construct a non-uniform QPT adversary $\B$ that takes $\ket{\tau}$ as an advice and breaks the conversion hardness (Definition~\ref{def:conv_hard}) of $(\setup,\eval,\swap)$ as follows. 

\begin{description}
\item[$\B\left((\pp,\ket{f^{-1}_0(y)}_{\regX});\ket{\tau}_{\regZ}\right)$:] 
On input $(\pp,\ket{f^{-1}_0(y)}_{\regX})$ and a quantum advice $\ket{\tau}_{\regZ}$, prepare a single qubit register $\regD$ that is initialized to be $\ket{0}_{\regD}$,  
apply $U$ on $\ket{\pp}_{\regPP}\ket{0}_\regD\ket{f^{-1}_0(y)}_{\regX}\ket{\tau}_{\regZ}$, measure $\regX$ to obtain an outcome $x'$, and output $x'$. 
\end{description}

For any $\pp$, we have
\begin{align}
    &\Pr\left[f_1(x')=y :x\gets \calX,y\seteq f_0(x), x'\gets \B\left((\pp,\ket{f_0^{-1}(y)}_{\regX});\ket{\tau}_{\regZ}\right)\right] \notag \\
    &=\sum_{\substack{y\in \calY\\ x'\in f_1^{-1}(y)}  }
    \frac{|f_0^{-1}(y)|}{|\calX|}
    \left\|\bra{x'}_{\regX}U\ket{\pp}_{\regPP}\ket{0}_{\regD}\ket{f^{-1}_0(y)}_{\regX}\ket{\tau}_{\regZ}\right\|^2 \label{eq:qcPKE_Pr}\\
    &\geq \frac{1}{|\calX|}\left(\sum_{\substack{y\in \calY\\ x'\in f_1^{-1}(y)}  }
    \sqrt{\frac{|f_0^{-1}(y)|}{|\calX|}}
    \left\|\bra{x'}_{\regX}U\ket{\pp}_{\regPP}\ket{0}_{\regD}\ket{f^{-1}_0(y)}_{\regX}\ket{\tau}_{\regZ}\right\|\right)^2 \label{eq:qcPKE_CS}\\
    &\geq \frac{1}{|\calX|^2}\left\|\sum_{\substack{y\in \calY\\ x'\in f_1^{-1}(y)}  }
    \sqrt{|f_0^{-1}(y)|}
    \bra{x'}_{\regX}U\ket{\pp}_{\regPP}\ket{0}_{\regD}\ket{f^{-1}_0(y)}_{\regX}\ket{\tau}_{\regZ}\right\|^2 \label{eq:qcPKE_Tri}\\
&\geq \frac{1}{|\calX|^2}\left|\sum_{\substack{y\in \calY\\x\in f_0^{-1}(y)\\ x'\in f_1^{-1}(y)}  }
    \bra{\pp}_{\regPP}\bra{1}_{\regD}\bra{x'}_{\regX}\bra{\tau}_{\regZ}U\ket{\pp}_{\regPP}\ket{0}_{\regD}\ket{x}_{\regX}\ket{\tau}_{\regZ}\right|^2 \label{eq:qcPKE_Sum}\\
&= \frac{1}{|\calX|^2}
\left|
\begin{array}{l}
\left(\sum_{x'\in \calX}\bra{\pp}_{\regPP}\bra{\pp}_{\regPP'}\bra{1}_{\regD}\bra{x'}_{\regX}\bra{f_1(x')}_{\regY}\bra{\tau}_{\regZ}\right)\\
\left( U\otimes I_{\regPP',\regY}\right)
\left(\sum_{x\in \calX}\ket{\pp}_{\regPP}\ket{\pp}_{\regPP'}\ket{0}_{\regD}\ket{x}_{\regX}\ket{f_0(x)}_{\regY}\ket{\tau}_{\regZ}\right)
\end{array}
\right|^2  \label{eq:qcPKE_split},
\end{align}  
where \Cref{eq:qcPKE_Pr} follows from the definition of $\B$,  
\Cref{eq:qcPKE_CS} follows from Cauchy–Schwarz inequality and $\sum_{y\in \calY}|f^{-1}_1(y)|=|\calX|$, \Cref{eq:qcPKE_Tri} follows from the triangle inequality, and \Cref{eq:qcPKE_Sum} follows from the 
definition $\ket{f_0^{-1}(y)}= \frac{1}{|f_0^{-1}(y)|^{1/2}}\sum_{x\in f_0^{-1}(y)}\ket{x}$ and the
fact that inserting $\bra{\pp}_{\regPP}\bra{1}_{\regD}\bra{\tau}_{\regZ}$ can only decrease the norm. 

Therefore, we have 
\begin{align*}
    &\Pr[f_1(x')=y :(\pp,\td)\gets \setup(1^\secp), x\gets \calX,y\seteq f_0(x), x'\gets \B((\pp,\ket{f_0^{-1}(y)}_{\regX});\ket{\tau}_{\regZ})]\\
    &=\sum_{\pp}D(\pp)\left[\Pr[f_1(x')=y :x\gets \calX,y\seteq f_0(x), x'\gets \B((\pp,\ket{f_0^{-1}(y)}_{\regX});\ket{\tau}_{\regZ})]\right]\\
    &\geq 
    \sum_{\pp}
    \frac{D(\pp)}{|\calX|^2}
    \left|
\begin{array}{l}
\left(\sum_{x'\in \calX}\bra{\pp}_{\regPP}\bra{\pp}_{\regPP'}\bra{1}_{\regD}\bra{x'}_{\regX}\bra{f_1(x')}_{\regY}\bra{\tau}_{\regZ}\right)\\
\left( U\otimes I_{\regPP',\regY}\right)
\left(\sum_{x\in \calX}\ket{\pp}_{\regPP}\ket{\pp}_{\regPP'}\ket{0}_{\regD}\ket{x}_{\regX}\ket{f_0(x)}_{\regY}\ket{\tau}_{\regZ}\right)
\end{array}
\right|^2\\
     &\geq 
    \left|\sum_{\pp}
    \frac{D(\pp)}{|\calX|}\left(
\begin{array}{l}
\left(\sum_{x'\in \calX}\bra{\pp}_{\regPP}\bra{\pp}_{\regPP'}\bra{1}_{\regD}\bra{x'}_{\regX}\bra{f_1(x')}_{\regY}\bra{\tau}_{\regZ}\right)\\
\left( U\otimes I_{\regPP',\regY}\right)
\left(\sum_{x\in \calX}\ket{\pp}_{\regPP}\ket{\pp}_{\regPP'}\ket{0}_{\regD}\ket{x}_{\regX}\ket{f_0(x)}_{\regY}\ket{\tau}_{\regZ}\right)
\end{array}
\right)
\right|^2\\
    &= 
    \left|\bra{\psi'_1}_{\regPP,\regPP',\regD,\regX,\regY}\bra{\tau}_{\regZ}(U\otimes I_{\regPP',\regY})\ket{\psi'_0}_{\regPP,\regPP',\regD,\regX,\regY}\ket{\tau}_{\regZ}\right|^2,
\end{align*}
where the first inequality follows from \Cref{eq:qcPKE_split}, the second inequality follows from Jensen's inequality,
and the final equality follows from \Cref{eq:def_psi_prime}.  

This is non-negligible by our assumption. Therefore, $\B$ breaks the conversion hardness of $(\setup,\eval,\swap)$, which is a contradiction. Thus, $(\keygen,\enc,\dec)$ is IND-CPA secure.
\end{proof}.
\else
We can prove \Cref{thm:IND-CPA} by using \Cref{item:AAS_two} of \Cref{thm:AAS}.
We defer the proof to 
the full version
since the idea is already explained in \cref{sec:overview_QPKE}.
\fi


\subsection{Instantiation from Group Actions}\label{sec:STF_from_GA}

We review basic definitions about cryptographic group actions and their one-wayness and pseudorandomness
following \cite{TCC:JQSY19}. Then, we construct a STF based on it. 

\smallskip
\noindent\textbf{Basic definitions.}

\begin{definition}[Group actions]
Let $G$ be a (not necessarily abelian) group, $S$ be a set, and $\star:G\times S \rightarrow S$ be a function where we write $g\star s$ to mean $\star(g,s)$. We say that $(G,S,\star)$ is a \emph{group action} if it satisfies the following:
\begin{enumerate}
    \item For the identity element $e\in G$ and any $s\in S$, we have $e\star s=s$.
    \item For any $g,h\in G$ and any $s\in S$, we have $(gh)\star s = g\star(h\star s)$.
\end{enumerate}
\end{definition}

\if0
We use the following notations. For $s\in S$, we denote by $G_s\subseteq G$ to mean the stabilizer subgroup with respect to $s$, i.e., 
\begin{align*}
    G_s\seteq \{g\in G: g\star s =s\}.
\end{align*}
For $g\in G$ and a subgroup $H\subseteq G$, we denote by $gH$ to mean the left coset with respect to $g$ and $H$, i.e., 
\begin{align*}
    gH\seteq \{gh\in G: h\in H\}.
\end{align*}
We denote by $\ket{G}$ to mean the uniform superposition over $G$, i.e.,
\begin{align*}
    \ket{G}\seteq \frac{1}{|G|^{1/2}}\sum_{g\in G}\ket{g}.
\end{align*}
Similarly, for $g\in G$ and a subgroup $H\subseteq G$, we denote by $\ket{gH}$ to mean the uniform superposition over $gH$, i.e.,
\begin{align*}
    \ket{gH}\seteq \frac{1}{|H|^{1/2}}\sum_{h\in H}\ket{gh}.
\end{align*}
\fi

To be useful for cryptography, we have to at least assume that basic operations about $(G,S,\star)$ have efficient algorithms. We require the following efficient algorithms similarly to \cite{TCC:JQSY19}.
\begin{definition}[Group actions with efficient algorithms]\label{def:GA_efficient}
We say that a group action $(G,S,\star)$ has \emph{efficient algorithms} if it satisfies the following:\footnote{Strictly speaking, we have to consider a family $\{(G_\secp ,S_\secp ,\star_\secp)\}_{\secp \in \mathbb{N}}$ of group actions parameterized by the security parameter to meaningfully define the efficiency requirements. We omit the dependence on $\secp$ for notational simplicity throughout the paper.} 
\begin{itemize}
    \item[{\bf Unique representations:}] Each element of $G$ and $S$ can be represented as a bit string of length $\poly(\secp)$ in a unique manner. Thus, we identify these elements and their representations.
    \item[{\bf Group operations:}] There are classical deterministic polynomial-time algorithms that compute $gh$ from $g\in G$ and $h \in G$ and $g^{-1}$ from $g \in G$.
    \item[{\bf Group action:}] There is a classical deterministic polynomial-time algorithm that computes $g\star s$ from $g\in G$ and $s\in S$.
    \item[{\bf Efficient recognizability:}] There are classical deterministic polynomial-time algorithms that decide if a given bit string represents an element of $G$ or $S$, respectively.
    \item[{\bf Random sampling:}] There are PPT algorithms that sample almost uniform elements of $G$ or $S$ (i.e., the distribution of the sample is statistically close to the uniform distribution), respectively.
    \item[{\bf Superposition over $G$:}] There is a QPT algorithm that generates a state whose trace distance from $\ket{G}$ is $\negl(\secp)$. 
\end{itemize}
\end{definition}
\begin{remark}[A convention on ``Random sampling'' and ``Superposition over $G$'' properties]
In the rest of this paper, we assume that we can sample elements from \emph{exactly} uniform distributions of $G$ and $S$. Similarly, we assume that we can \emph{exactly} generate $\ket{G}$ in QPT. They are just for simplifying the presentations of our results, and all the results hold with the above imperfect version with additive negligible loss for security or correctness. 
\end{remark}

The above requirements are identical to those in \cite{TCC:JQSY19} except for the ``superposition over $G$'' property. We remark that all candidate constructions proposed in  \cite{TCC:JQSY19} satisfy this property as explained later. 

\smallskip
\noindent\textbf{Assumptions.}
We define one-wayness and pseudorandomness following \cite{TCC:JQSY19}.
\begin{definition}[One-wayness]\label{def:one-way}
We say that a group action $(G,S,\star)$ with efficient algorithms is \emph{one-way} if for any non-uniform QPT adversary $\A$, we have 
\begin{align*}
    \Pr\left[
    g'\star s = g\star s
    :
    \begin{array}{l}
     s\gets S,
    g\gets G,
    g' \leftarrow \A(s,g\star s)
    \end{array}
    \right]=\negl(\secp).
\end{align*}
\end{definition}

\begin{definition}[Pseudorandomness]\label{def:PR}
We say that a group action $(G,S,\star)$ with efficient algorithms is \emph{pseudorandom} if it satisfies the following:
\begin{enumerate}
\item \label{item:orbit}
We have 
\begin{align*}
\Pr[\exists g\in G\text{~s.t.~}g\star s =t:s,t\gets S]=\negl(\secp).
\end{align*}
\item \label{item:PR}
For any non-uniform QPT adversary $\A$, we have 
\begin{align*}
    \left|
    \Pr\left[
    1\gets \A(s,t):
     s\gets S,
    g\gets G,
    t\seteq g\star s
    \right]
    -
    \Pr\left[
    1\gets \A(s,t):
     s,t\gets S
    \right]
    \right|
    =
    \negl(\secp).
\end{align*}
\end{enumerate}
\end{definition}
\begin{remark}[On \Cref{item:orbit}]
We require \Cref{item:orbit} to make \Cref{item:PR} non-trivial. For example, if $(G,S,\star)$ is transitive, i.e., for any $s,t\in S$, there is $g\in G$ such that $g\star s =t$, \Cref{item:PR} trivially holds because the distributions of $t=g\star s$ is uniformly distributed over $S$ for any fixed $s$ and random $g\gets G$.  
\end{remark}
\begin{remark}[Pseudorandom $\rightarrow$ One-way]
We remark that the pseudorandomness immediately implies the one-wayness as noted in \cite{TCC:JQSY19}. 
\end{remark}

\noindent\textbf{Instantiations.}
Ji et al. \cite{TCC:JQSY19} gave several candidate constructions of one-way and pseudorandom group actions with efficient algorithms based on general linear group actions on tensors. 
We briefly describe one of their candidates below. 
Let $\mathbb{F}$ be a finite field, and $k$, $d_1,d_2...,d_k$ be positive integers (which are typically set as 
$k=3$ and $d_1=d_2=d_3$). 
We set $G\seteq \prod_{j=1}^{k}GL_{d_j}(\mathbb{F})$, $S\seteq \bigotimes_{j=1}^{k} \mathbb{F}^{d_j}$, and define the group action by the matrix-vector multiplication as
\begin{align*}
    (M_j)_{j\in[k]} \star T\seteq \left(\bigotimes_{j=1}^{k}M_j\right) T
\end{align*}
for $(M_j)_{j\in[k]}\in \prod_{j=1}^{k}GL_{d_j}(\mathbb{F})$ and $T\in \bigotimes_{j=1}^{k} \mathbb{F}^{d_j}$. 
See \cite{TCC:JQSY19} for attempts of cryptanalysis and justification of the one-wayness and pseudorandomness. 
We remark that we introduced an additional requirement of the ``superposition over $G$'' property in \Cref{def:GA_efficient}, but their candidates satisfy this property. 
In their candidates, the group $G$ is a direct product of general linear groups over finite fields (or symmetric groups for one of the candidates), and a uniformly random matrix over finite fields is invertible with overwhelming probability for appropriate parameters.

\if0
Alamati et al. \cite{AC:ADMP20} introduced a similar formalization of cryptographic group actions as an abstraction of isogeny-based cryptography. However, there are several differences between our and their requirements. Notably, they do not require unique representation for elements of $G$ and efficient computability of the action by any group element. This is because the isogeny-based group action is not known to satisfy them. Thus, we cannot capture isogeny-based group action by our definition. We might be able to capture it by relaxing the requirements for group actions similarly to \cite{AC:ADMP20}.
However, we do not do so because our motivation here is to construct (quantum) PKE schemes, whose isogeny-based construction are already known even without relying on quantum ciphertexts~\cite{Couveignes06,PQCRYPTO:JaoDeFo11,AC:CLMPR18}. 



Brassard and Yung~\cite{C:BraYun90} proposed cryptographic group actions based on the hardness of graph isomorphism,
discrete logarithm, or factoring. However, it turns out that the graph isomorphism problem can be solved in (classical) quasi polynomial-time~\cite{STOC:Babai16}\footnote{Another issue is that the graph isomorphism problem is easy for a uniformly random instance, and thus it cannot satisfy the one-wayness as defined in \Cref{def:one-way}. If we modify the definition of the one-wayness so that we choose the hardest instance $s\in S$, the graph isomorphism-based construction may satisfy it, and such a version suffices for our applications. However, since such a construction can be broken in quasi-polynomial time by Babai's algorithm \cite{STOC:Babai16}, we do not consider the instantiation based on graph ismorphism and simply consider uniform $s\in S$ in the definition of one-wayness. A similar remark can be found in \cite[Remark~1]{TCC:JQSY19}.} and discrete logarithm and factoring problems can be solved in QPT~\cite{Shor99}.
\fi



\smallskip\noindent\textbf{Construction of STF.}
We construct a STF based on group actions. Let $(G,S,\star)$ be a group action with efficient algorithms (as defined in \Cref{def:GA_efficient}). Then, we construct a STF as follows.

\begin{description}
\item[$\setup(1^\secp)$:] Generate $s_0\gets S$ and $g\gets G$, set $s_1\seteq g\star s_0$, and output $\pp\seteq (s_0,s_1)$ and $\td\seteq g$. 
For $b\in \bit$, we define $f_b:G\rightarrow S$ by 
    $f_b(h)\seteq h\star s_b$.
\item[$\eval(\pp=(s_0,s_1),b,h)$:] Output $f_b(h)=h\star s_b$. 
\item[$\swap(\td=g,b,h)$:] 
If $b=0$, output $hg^{-1}$. If $b=1$, output $hg$. 
\end{description}

The evaluation correctness is trivial. The swapping correctness can be seen as follows: For any $h\in G$, $f_1(\swap(\td,0,h))=f_1(hg^{-1})=(hg^{-1})\star s_1 = h \star s_0 = f_0(h)$. Similarly, for any $h\in G$,  $f_0(\swap(\td,1,h))=f_0(hg)=(hg)\star s_0 = h \star s_1 = f_1(h)$. For any $h\in G$, $\swap(\td,1,\swap(\td,0,h))=\swap(\td,1,hg^{-1})=(hg^{-1})g=h$. 

The efficient sampling and efficient superposition properties directly follow from the corresponding properties of the group action.

We prove the following theorem.
\begin{theorem}\label{thm:GA_to_STF}
The following hold:
\begin{enumerate}
    \item If $(G,S,\star)$ is one-way, then $(\setup,\eval,\swap)$ is claw-free. \label{item:OW_to_CF}
    \item If $(G,S,\star)$ is pseudorandom, then $(\setup,\eval,\swap)$ is conversion hard. \label{item:PR_to_CH}
\end{enumerate}
\end{theorem}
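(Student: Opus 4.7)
The plan is to prove each item by a reduction to the corresponding assumption on $(G,S,\star)$. Item 1 is essentially an algebraic repackaging of the claw-freeness winning condition, while item 2 uses the pseudorandomness dichotomy to switch $s_1$ from an orbit element of $s_0$ to an independent element outside the orbit, at which point a successful conversion adversary would violate orbit disjointness.

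For item 1, assume a non-uniform QPT $\mathcal{B}$ breaks claw-freeness. I build $\mathcal{A}$ against one-wayness: on input $(s,t)$ with $t = g\star s$ for a hidden random $g\gets G$, $\mathcal{A}$ sets $\pp := (s_0,s_1) = (s,t)$ and runs $(h_0,h_1)\gets \mathcal{B}(\pp)$. The distribution of $\pp$ is exactly that produced by $\setup(1^\secp)$. From the winning condition $h_0\star s_0 = h_1\star s_1 = (h_1 g)\star s_0$, I get $(h_1^{-1}h_0)\star s = g\star s = t$, so $\mathcal{A}$ outputs $g' := h_1^{-1}h_0$, which is a valid one-way preimage whenever $\mathcal{B}$ succeeds.

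For item 2, assume a non-uniform QPT $\mathcal{B}$ breaks conversion hardness with non-negligible advantage $\varepsilon$. I build a distinguisher $\mathcal{D}$ for pseudorandomness. On input $(s,t)$, $\mathcal{D}$ sets $\pp := (s, t)$, prepares $\ket{G}$ (using the efficient superposition property of \Cref{def:GA_efficient}), coherently evaluates $h\mapsto h\star s$ into a fresh register, and measures that register to obtain $y$; the first register collapses to $\ket{f_0^{-1}(y)}$ by the standard preimage-state construction. $\mathcal{D}$ then runs $x_1\gets \mathcal{B}(\pp,\ket{f_0^{-1}(y)})$ and outputs $1$ iff $x_1\star t = y$. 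When $t = g\star s$, the joint distribution of $(\pp,y,\ket{f_0^{-1}(y)})$ is exactly that of the conversion-hardness experiment, so $\mathcal{D}$ outputs $1$ with probability $\varepsilon$. When $t\gets S$ is independent, item~\ref{item:orbit} of \Cref{def:PR} guarantees that with overwhelming probability no $g\in G$ satisfies $g\star s = t$, i.e., the $G$-orbits of $s$ and $t$ are disjoint; since $y$ lies in the orbit of $s$ by construction and any $x_1\star t$ lies in the orbit of $t$, the check $x_1\star t = y$ fails except on this negligible event. Hence $\mathcal{D}$ achieves distinguishing advantage $\varepsilon - \negl(\secp)$.

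The one step that deserves care is justifying that the simulation in item 2 is truly perfect: one must verify that the procedure ``prepare $\ket{G}$, coherently apply $f_0$, measure the image'' produces $(y,\ket{f_0^{-1}(y)})$ with exactly the distribution $y = x_0\star s_0$ for $x_0\gets G$. This follows from the elementary identity $\sum_{h\in G}\ket{h}\ket{h\star s_0} = \sum_{y\in \text{Im}(f_0)}|f_0^{-1}(y)|^{1/2}\ket{f_0^{-1}(y)}\ket{y}$ together with the fact (implicit in the swapping correctness) that the fiber sizes are uniform across $y$ in the same orbit. Beyond this routine check, no further obstacle arises.
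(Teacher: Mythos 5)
Your proof is correct and takes essentially the same approach as the paper: item 1 is the same algebraic reduction outputting $h_1^{-1}h_0$, and item 2 is the same distinguisher that coherently prepares $\ket{f_0^{-1}(y)}$ and runs the conversion adversary, using the orbit-disjointness condition to kill the success probability in the ``independent $t$'' branch. (Your closing remark about fiber sizes being uniform is true for group actions but not actually needed for the simulation to be perfect: the identity $\sum_{h}\ket{h}\ket{h\star s_0}=\sum_{y}|f_0^{-1}(y)|^{1/2}\ket{f_0^{-1}(y)}\ket{y}$ already gives $y$ with probability $|f_0^{-1}(y)|/|G|$, which matches the distribution of $f_0(x_0)$ for uniform $x_0$ regardless of fiber uniformity.)
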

\ifnum\llncs=0 \begin{proof}~\\
\noindent\textbf{Proof of \Cref{item:OW_to_CF}.}
Suppose that $(\setup,\eval,\swap)$ is not claw-free. Then there is a non-uniform QPT adversary $\A$ such that 
\begin{align*}
    \Pr[f_0(h_0)=f_1(h_1):(\pp,\td)\gets \setup(1^\secp),(h_0,h_1)\gets \A(\pp)]
\end{align*}
is non-negligible. We use $\A$ to construct a non-uniform QPT adversary $\B$ that breaks one-wayness of $(G,S,\star)$ as follows:

\begin{description}
\item[$\B(s_0,s_1)$:] 
Set $\pp\seteq (s_0,s_1)$, run $(h_0,h_1)\gets \A(\pp)$, and outputs $h_1^{-1}h_0$.
\end{description}

By the assumption, we have $f_0(h_0)=f_1(h_1)$ with a non-negligible probability.
By the definition of $f_0$ and $f_1$, $f_0(h_0)=f_1(h_1)$ is equivalent to
$h_0\star s_0=h_1\star s_1$, which means $h_1^{-1}h_0\star s_0=s_1$. Since this occurs with a non-negligible probability $\B$ breaks one-wayness of $(G,S,\star)$. Thus,  $(\setup,\eval,\swap)$ is claw-free.

\smallskip
\noindent\textbf{Proof of \Cref{item:PR_to_CH}.}
Suppose that $(\setup,\eval,\swap)$ is not conversion hard. Then there is a non-uniform QPT algorithm $\A$ such that
\begin{align*}
    \Pr[f_1(x_1)=y:(\pp,\td)\gets \setup(1^\secp),x_0\gets \calX, y\seteq f_0(x_0),  x_1\gets \A(\pp,\ket{f_0^{-1}(y)})]
\end{align*}
is non-negligible. 
This is equivalent to that 
\begin{align*}
\Pr\left[
    h_0\star s_0 = h_1 \star s_1
    :
    \begin{array}{ll}
    &s_0\gets S,
    g,h_0\gets G,\\
    &s_1  \seteq g \star s_0, 
    y\seteq h_0\star s_0, \\
    &h_1 \leftarrow \A(s_0,s_1,\ket{f_0^{-1}(y)})
    \end{array}
    \right]
\end{align*}
is non-negligible.
On the other hand, by \Cref{item:orbit} of \Cref{def:PR}, we have 
\begin{align*}
  \Pr\left[
    h_0\star s_0 = h_1 \star s_1
    :
    \begin{array}{ll}
    &s_0,s_1\gets S,
    h_0\gets G,\\
    & y\seteq h_0\star s_0, \\
    &h_1 \leftarrow \A(s_0,s_1,\ket{f_0^{-1}(y)})
    \end{array}
    \right]= \negl(\secp).
\end{align*}
Therefore, 
\begin{align}\label{eq:distinguish_PR}
    \left|
        \begin{array}{ll}
    &\Pr\left[
    h_0\star s_0 = h_1 \star s_1
    :
    \begin{array}{ll}
    &s_0\gets S,
    g,h_0\gets G,\\
    &s_1  \seteq g \star s_0, 
    y\seteq h_0\star s_0, \\
    &h_1 \leftarrow \A(s_0,s_1,\ket{f_0^{-1}(y)})
    \end{array}
    \right]\\
    &-
  \Pr\left[
    h_0\star s_0 = h_1 \star s_1
    :
    \begin{array}{ll}
    &s_0,s_1\gets S,
    h_0\gets G,\\
    & y\seteq h_0\star s_0, \\
    &h_1 \leftarrow \A(s_0,s_1,\ket{f_0^{-1}(y)})
    \end{array}
    \right]
     \end{array}
     \right|
\end{align}
is non-negligible. 

Then, we construct the following non-uniform QPT adversary $\B$ that breaks pseudorandomness of $(G,S,\star)$:
\begin{description}
\item[$\B(s_0,s_1)$:] Generate a state $\frac{1}{\sqrt{|G|}}\sum_{h_0\in G}\ket{h_0}\ket{h_0\star s_0}$ and measure the second register to get $y\in S$. Then, the first register collapses to $\ket{f_0^{-1}(y)}$. Run $h_1\gets \A(s_0,s_1,\ket{f_0^{-1}(y)})$.
Output $1$ if $h_1\star s_1=y$ and otherwise $0$. 
\end{description}

We can see that $\B$'s advantage to distinguish the two cases ($s_0\gets S$, $g\gets G$, $s_1\seteq g\star s_0$ or $s_0,s_1\gets S$) is exactly  \Cref{eq:distinguish_PR}, which is non-negligible. This contradicts pseudorandomness of  $(G,S,\star)$ (\Cref{item:PR} of \Cref{def:PR}). 
Thus, $(\setup,\eval,\swap)$ is conversion hard.
\end{proof}
\else
We defer the proof to the full version
because it is easy. 
\fi

\noindent\textbf{Quantum-ciphertext PKE from group actions.}
Recall that conversion hard STFs suffice for constructing IND-CPA secure quantum ciphertext PKE (\Cref{thm:IND-CPA}). 
Then, by \Cref{lem:collapsing_conversion_hard,lem:Gap_CF_and_CH,thm:GA_to_STF}, we obtain the following corollaries. 
\begin{corollary}
If there exists a pseudorandom group action with efficient algorithms, there exists an IND-CPA secure quantum-ciphertext PKE.
\end{corollary}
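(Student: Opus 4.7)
The plan is to simply chain the two main implications already established in this section. Starting from a pseudorandom group action with efficient algorithms $(G,S,\star)$, the first step is to plug it into the construction of a STF presented just before \Cref{thm:GA_to_STF}. By \Cref{item:PR_to_CH} of \Cref{thm:GA_to_STF}, the resulting triple $(\setup,\eval,\swap)$ is a conversion-hard STF. The evaluation correctness, swapping correctness, efficient random sampling over $G$, and efficient superposition over $G$ required by \Cref{def:STF} all transfer directly from the efficient algorithms of the underlying group action guaranteed by \Cref{def:GA_efficient}, so no additional structural verification is needed.

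The second step is to feed this conversion-hard STF into the quantum-ciphertext PKE construction $(\keygen,\enc,\dec)$ from \Cref{sec:QPKE_construction}. Correctness of that scheme was already verified unconditionally, and \Cref{thm:IND-CPA} shows that conversion hardness of the underlying STF suffices for IND-CPA security. Composing these two implications immediately yields an IND-CPA secure quantum-ciphertext PKE. There is essentially no obstacle: both intermediate reductions—pseudorandomness of the group action to conversion hardness of the STF, and conversion hardness of the STF to IND-CPA security via the circuit from \Cref{item:AAS_two} of \Cref{thm:AAS}—have already been carried out, so the corollary is purely a matter of instantiation.
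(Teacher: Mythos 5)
Your proof is correct and follows exactly the paper's intended route: instantiate the group-action-based STF construction, invoke \Cref{item:PR_to_CH} of \Cref{thm:GA_to_STF} to get conversion hardness, then invoke \Cref{thm:IND-CPA} to conclude IND-CPA security. The paper presents this corollary as an immediate consequence of precisely these two results, so there is nothing to add.
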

\begin{remark}[Lossy encryption]
Actually, we can show that the quantum-ciphertext PKE constructed from a pseudorandom group action is lossy encryption \cite{EC:BelHofYil09}, which is stronger than IND-CPA secure one. We omit the detail since our focus is on constructing IND-CPA secure schemes. 
\end{remark}
\begin{corollary}
If there exists a one-way group action with efficient algorithms such that $f_0$ is collapsing,\footnote{We currently have no candidate of such a one-way group action.} there exists a uniform IND-CPA secure quantum-ciphertext PKE scheme. 
\end{corollary}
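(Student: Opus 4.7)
The plan is to prove the corollary by chaining three previously established reductions. First, I would invoke Item~\ref{item:OW_to_CF} of Theorem~\ref{thm:GA_to_STF} on the given one-way group action $(G,S,\star)$ to obtain a STF $(\setup,\eval,\swap)$ that is claw-free, using the construction of this section with $\pp=(s_0,s_1=g\star s_0)$ and trapdoor $\td=g$. Second, since the hypothesis grants that $f_0(h)=h\star s_0$ is collapsing, I would apply Lemma~\ref{lem:collapsing_conversion_hard} to upgrade claw-freeness to conversion hardness. Third, I would apply Theorem~\ref{thm:IND-CPA} to this conversion-hard STF to conclude the existence of an IND-CPA secure quantum-ciphertext PKE scheme.

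To justify the word ``uniform'' in the statement, I would verify that each reduction in the chain faithfully passes through the adversarial class: Item~\ref{item:OW_to_CF} of Theorem~\ref{thm:GA_to_STF}, Lemma~\ref{lem:collapsing_conversion_hard}, and Theorem~\ref{thm:IND-CPA} are all black-box reductions that, when fed a uniform QPT adversary against the resulting primitive, produce a uniform QPT attacker against the assumption used in that step (one-wayness, collapsing property, and conversion hardness respectively). Since no step introduces non-uniform advice, uniform one-wayness together with uniform collapsing of $f_0$ yield uniform IND-CPA security of the resulting scheme.

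There is no substantive obstacle here: all the real work has already been done in the earlier lemmas and theorems. The technically deepest ingredient is the use of Item~\ref{item:AAS_two} of Theorem~\ref{thm:AAS} inside the proof of Theorem~\ref{thm:IND-CPA}, which converts an IND-CPA distinguisher into a conversion-hardness attacker via the swap-distinguish equivalence; the present corollary simply composes this with the two intermediate steps and therefore inherits its correctness directly. The only thing worth briefly checking is that the instantiation $f_0(h)=h\star s_0$ used by the group-action-based STF coincides with the object whose collapsing property is hypothesized, so that Lemma~\ref{lem:collapsing_conversion_hard} applies verbatim; this is immediate from the construction of $\setup$ above.
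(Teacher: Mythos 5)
Your chain is exactly the paper's: Item~\ref{item:OW_to_CF} of \Cref{thm:GA_to_STF} gives claw-freeness from one-wayness, \Cref{lem:collapsing_conversion_hard} upgrades it to conversion hardness using the collapsing hypothesis on $f_0$, and \Cref{thm:IND-CPA} then yields the IND-CPA scheme. One small remark on your uniformity discussion: since the paper's default definitions of one-wayness, collapsing, and conversion hardness are all stated against non-uniform QPT adversaries and each of the three reductions preserves this, the chain actually delivers full non-uniform IND-CPA security (of which the stated uniform IND-CPA is a trivial weakening); you do not need to downgrade the hypotheses to ``uniform one-wayness'' and ``uniform collapsing'' to make sense of the word ``uniform'' in the conclusion.
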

\begin{corollary}
If there exists a one-way group action with efficient algorithms, there exists a uniform IND-CPA secure quantum-ciphertext PKE scheme or infinitely-often one-shot signatures.\footnote{The uniform IND-CPA security is defined similarly to the IND-CPA security in \Cref{def:IND-CPA} except that the adversary is restricted to be \emph{uniform} QPT.}
\end{corollary}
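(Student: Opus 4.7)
The plan is to combine \Cref{thm:GA_to_STF}, \Cref{lem:Gap_CF_and_CH}, and \Cref{thm:IND-CPA} via a case analysis on whether the STF obtained from the given group action is uniform conversion hard.

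First, I would apply \Cref{item:OW_to_CF} of \Cref{thm:GA_to_STF} to the one-way group action with efficient algorithms to obtain a claw-free STF $(\setup,\eval,\swap)$. This reduces the corollary to deriving one of the two alternative conclusions (uniform IND-CPA quantum-ciphertext PKE or infinitely-often one-shot signatures) starting from a claw-free STF.

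Next, I would split into two cases according to whether $(\setup,\eval,\swap)$ satisfies the uniform analogue of conversion hardness, namely \Cref{def:conv_hard} with the non-uniform QPT adversary replaced by a uniform QPT one. In the first case, where uniform conversion hardness holds, I would observe that the proof of \Cref{thm:IND-CPA} relativizes cleanly to the uniform setting: the reduction constructs an adversary $\B$ against conversion hardness directly from the IND-CPA distinguisher $\A$ without introducing any additional non-uniform advice, so if $\A$ is uniform then so is $\B$. Hence the scheme from \Cref{sec:QPKE_construction} instantiated with this STF is uniform IND-CPA secure. In the second case, where uniform conversion hardness fails, I would invoke \Cref{item:gap_two} of \Cref{lem:Gap_CF_and_CH}, which converts a claw-free but uniformly-non-conversion-hard STF into infinitely-often one-shot signatures.

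The main obstacle is the uniformity bookkeeping in the first case: I have to walk through the reduction in the proof of \Cref{thm:IND-CPA} and confirm that the only potentially non-uniform ingredient is the quantum advice $\ket{\tau}_{\regZ}$ of $\A$ itself, so that restricting $\A$ to be uniform (with trivial $\ket{\tau}$) yields a uniform attacker on conversion hardness. Aside from this routine check, the argument is a direct chaining of the previously established implications, and the dichotomy exhausts the two possibilities, giving the claimed either/or statement.
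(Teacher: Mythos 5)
Your proof is correct and matches the paper's intended reasoning: reduce via \Cref{item:OW_to_CF} of \Cref{thm:GA_to_STF} to a claw-free STF, then split on whether it is uniform conversion hard, invoking a uniformized \Cref{thm:IND-CPA} in one branch and \Cref{item:gap_two} of \Cref{lem:Gap_CF_and_CH} in the other. Your explicit check that the reduction inside \Cref{thm:IND-CPA} preserves uniformity (the only non-uniform ingredient being $\ket{\tau}_{\regZ}$, and the unitary $U$ from \Cref{thm:AAS} being uniformly constructible from a uniform $\A$) is exactly the detail the paper leaves implicit.
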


\section{Equivalence between Swapping and Distinguishing with Auxiliary States} 
For our application to conversion for commitments, we need a generalization of \Cref{thm:AAS} that considers auxiliary quantum states. While it is straightforward to generalize \Cref{item:AAS_two} to such a setting,\footnote{Indeed, such a generalization of \Cref{item:AAS_two} is already implicitly used in the proof of \Cref{thm:IND-CPA}.} a generalization of \Cref{item:AAS_one} is non-trivial. The problems is that the unitary $U$ may not preserve the auxiliary state when it ``swaps'' $\ket{x}$ and $\ket{y}$.\footnote{This is also observed in \cite[Footnote 2]{AAS20}.} 
Intuitively, we overcome this issue by ``uncomputing'' the auxiliary state in a certain sense.

\begin{theorem}[Generalization of \Cref{thm:AAS} with auxiliary states] \label{thm:AAS_generalized}~
\begin{enumerate}
    \item \label{item:AAS_one_generalized}
    Let $\ket{x},\ket{y}$ be orthogonal $n$-qubit states and $\ket{\tau}$ be an $m$-qubit state. 
    Let $U$ be a polynomial-time computable unitary over $(n+m)$-qubit states and define $\Gamma$ as
    \begin{align*}
        \Gamma :=  
        \left\|(\bra{y}\otimes I^{\otimes m})U\ket{x}\ket{\tau}
        + 
        (\bra{x}\otimes I^{\otimes m})U\ket{y}\ket{\tau}
        \right\|.
    \end{align*} 
    Then, there exists a non-uniform QPT distinguisher $\A$ with advice $\ket{\tau'}=\ket{\tau}\otimes \frac{\ket{x}\ket{0}+\ket{y}\ket{1}}{\sqrt{2}}$ that distinguishes  
    $\ket{\psi}=\frac{\ket{x}+\ket{y}}{\sqrt{2}}$ and  $\ket{\phi}=\frac{\ket{x}-\ket{y}}{\sqrt{2}}$ with advantage $\frac{\Gamma^2}{4}$.
    Moreover, if $U$ does not act on some qubits, then $\A$ also does not act on those qubits.
    \item  \label{item:AAS_two_generalized}
    Let $\ket{\psi},\ket{\phi}$ be orthogonal $n$-qubit states, and suppose that a non-uniform QPT distinguisher $A$ with an $m$-qubit advice $\ket{\tau}$ 
    distinguishes $\ket{\psi}$ and $\ket{\phi}$ with advantage $\Delta$ without using additional ancilla qubits besides $\ket{\tau}$. 
    Then, there exists a polynomial-time computable unitary $U$ over $(n+m)$-qubit states such that 
    \begin{align*}
        \frac{|\bra{y}\bra{\tau}U\ket{x}\ket{\tau}+\bra{x}\bra{\tau}U\ket{y}\ket{\tau}|}{2}=\Delta
    \end{align*}
       where  $|x\rangle\coloneqq\frac{|\psi\rangle+|\phi\rangle}{\sqrt{2}}$ and $|y\rangle\coloneqq\frac{|\psi\rangle-|\phi\rangle}{\sqrt{2}}$.
     Moreover, if $\A$ does not act on some qubits, then $U$ also does not act on those qubits.
\end{enumerate}
\end{theorem}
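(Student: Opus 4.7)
\textbf{Plan for Item~\ref{item:AAS_two_generalized}.} I would reduce directly to Item~\ref{item:AAS_two} of \cref{thm:AAS} by absorbing the advice into an enlarged input register. The non-uniform QPT distinguisher $\A$ with $m$-qubit advice $|\tau\rangle$ and no extra ancilla is implemented by a unitary $V_{\A}$ on the joint $(n+m)$-qubit register consisting of the input and advice, followed by measurement of one designated qubit; it distinguishes the orthogonal $(n+m)$-qubit states $|\psi\rangle|\tau\rangle$ and $|\phi\rangle|\tau\rangle$ with advantage $\Delta$ without any additional ancilla. Applying Item~\ref{item:AAS_two} of \cref{thm:AAS} to the pair $(|x\rangle|\tau\rangle, |y\rangle|\tau\rangle)$---which satisfies $|\psi\rangle|\tau\rangle = (|x\rangle|\tau\rangle + |y\rangle|\tau\rangle)/\sqrt{2}$ and analogously for $|\phi\rangle|\tau\rangle$---immediately yields the desired unitary $U$ on $n+m$ qubits with the advertised swap identity, and the ``moreover'' clause is inherited because the construction in \cref{fig:AAS_two} treats $V_\A$ as a black box.

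\textbf{Plan for Item~\ref{item:AAS_one_generalized}.} The first observation is that the naive approach---running the Hadamard test of Item~\ref{item:AAS_one} of \cref{thm:AAS} on the input register with $|\tau\rangle$ as auxiliary---only achieves advantage $|\langle\tau|v_x+v_y\rangle|/2$, where $v_x \seteq (\langle y|\otimes I)U|x\rangle|\tau\rangle$ and $v_y \seteq (\langle x|\otimes I)U|y\rangle|\tau\rangle$. This partial inner product can be much smaller than $\Gamma=\|v_x+v_y\|$ whenever $U$ rotates $|\tau\rangle$ off its original direction while performing the swap, so we cannot simply run the Hadamard test on the input alone. The reference $(|x\rangle|0\rangle+|y\rangle|1\rangle)/\sqrt{2}$ in the advice is precisely a coherent source of $|x\rangle$ and $|y\rangle$ that the distinguisher can feed into $U$ \emph{independently of the input}. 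My plan is therefore to apply $U$ to the reference register $\regX'$ and the $|\tau\rangle$ register $\regT$, producing $(U|x\rangle|\tau\rangle|0\rangle + U|y\rangle|\tau\rangle|1\rangle)/\sqrt{2}$, and to project $(\regX',\regE)$ onto $(|y\rangle|0\rangle + |x\rangle|1\rangle)/\sqrt{2}$; this collapses $\regT$ to the $m$-qubit vector $(v_x+v_y)/2$, an event of probability $\Gamma^2/4$. To turn this unconditional event into a genuine distinguisher between $|\psi\rangle$ and $|\phi\rangle$ inputs, I would interleave the projection with a controlled-SWAP between $\regX$ and $\regX'$ together with an ancilla-based Hadamard test, so that the acceptance amplitude acquires a relative sign between the $|\psi\rangle$ and $|\phi\rangle$ branches of the input. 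The quadratic loss compared to \cref{thm:AAS} simply reflects that the signal is now the \emph{squared norm} of an $m$-qubit vector rather than the modulus of a scalar amplitude.

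\textbf{Main obstacle.} The hard part will be designing the interference step so that (i) the circuit acts only on the qubits that $U$ acts on, as demanded by the ``moreover'' clause (which forbids introducing any ancilla not already supplied by the advice), and (ii) the ``diagonal'' contributions $(\langle x|\otimes I)U|x\rangle|\tau\rangle$ and $(\langle y|\otimes I)U|y\rangle|\tau\rangle$ cancel exactly, leaving only $\|v_x+v_y\|^2/4$ as the probability difference rather than a sum of cross-terms. The identity $v_x+v_y = (\langle\psi|\otimes I)U|\psi\rangle|\tau\rangle - (\langle\phi|\otimes I)U|\phi\rangle|\tau\rangle$, obtained by expanding $|\psi\rangle$ and $|\phi\rangle$ in the $|x\rangle,|y\rangle$ basis, should be the central algebraic tool for rewriting everything in the input basis and arranging the needed cancellations.
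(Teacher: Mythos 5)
\textbf{Item~\ref{item:AAS_two_generalized}.} Your reduction to \Cref{item:AAS_two} of \Cref{thm:AAS}—treating $\ket{x}\ket{\tau}$ and $\ket{y}\ket{\tau}$ as the two orthogonal states and absorbing the advice into the input register—is exactly the paper's one-line proof. Correct.

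\textbf{Item~\ref{item:AAS_one_generalized}.} You have the right raw ingredients: the reference state $\frac{\ket{x}\ket{0}+\ket{y}\ket{1}}{\sqrt{2}}$ as a coherent source of $\ket{x}$ and $\ket{y}$, the observation that $U$ disturbs $\ket{\tau}$ so a naive Hadamard test only picks up $|\langle\tau|v_x+v_y\rangle|$, and the quadratic loss interpreted as a squared norm. But the concrete sub-routine you sketch never lets the input register $\regX$ interact with $U$ at all—you apply $U$ only to $(\regX',\regT)$ and project, which, as you yourself note, gives an ``unconditional event'' of probability $\Gamma^2/4$ that cannot distinguish $\ket{\psi}$ from $\ket{\phi}$. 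The proposed fix (controlled-SWAP between $\regX$ and $\regX'$ plus a Hadamard test) is left entirely unspecified, and you explicitly list designing it as the ``main obstacle.'' So the heart of the argument is missing.

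The paper's resolution is cleaner and is worth contrasting with your plan. Rather than building a distinguisher by hand, it constructs a \emph{derived unitary}
\begin{align*}
    \widetilde{U}\seteq X_{\regB}\,U^\dagger_{\regA',\regZ}\,U_{\regA,\regZ}
\end{align*}
on the full system $(\regA,\regZ,\regA',\regB)$: apply $U$ forward on the \emph{input} and $\regZ$, then $U^\dagger$ backward on the \emph{reference} and $\regZ$, then flip $\regB$. Writing $\ket{\tau'_{ab}} \seteq (\bra{b}\otimes I)U\ket{a}\ket{\tau}$, one verifies by a direct calculation that
\begin{align*}
    \bra{y}\bra{\tau}\bra{\sigma}\widetilde{U}\ket{x}\ket{\tau}\ket{\sigma}+\bra{x}\bra{\tau}\bra{\sigma}\widetilde{U}\ket{y}\ket{\tau}\ket{\sigma} = \tfrac{1}{2}\left\|\ket{\tau'_{xy}}+\ket{\tau'_{yx}}\right\|^2 = \tfrac{\Gamma^2}{2},
\end{align*}
i.e.\ $\widetilde{U}$ \emph{swaps} $\ket{x}\ket{\tau}\ket{\sigma}$ and $\ket{y}\ket{\tau}\ket{\sigma}$ with swap quantity $\Gamma^2/2$. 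One then invokes \Cref{item:AAS_one} of \Cref{thm:AAS} \emph{as a black box} to obtain a Hadamard-test distinguisher $\widetilde{\A}$ with advantage $\Gamma^2/4$, and reinterprets $\widetilde{\A}$ as a distinguisher with advice $\ket{\tau}\ket{\sigma}$. This sidesteps the ``interference step'' you flag as the hard part: the uncomputation happens at the level of the unitary $\widetilde{U}$, not inside the distinguisher, and the distinguisher is inherited automatically. Note also that your worry that the ``moreover'' clause forbids fresh ancillas is a misreading—it only constrains $\A$'s action on qubits that $U$ does not act on, and the paper's Hadamard test does introduce a fresh control qubit.

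Your identity $v_x+v_y = (\bra{\psi}\otimes I)U\ket{\psi}\ket{\tau} - (\bra{\phi}\otimes I)U\ket{\phi}\ket{\tau}$ is true but plays no role in the paper's proof; the paper instead works directly in the $\ket{x},\ket{y}$ basis and exploits orthogonality to discard garbage terms. Concretely, the key step is that after $U_{\regA,\regZ}$ and the projection $\bra{y}_{\regA}$, the $\regZ$ register holds $\ket{\tau'_{xy}}$, and then $U^\dagger_{\regA',\regZ}$ applied with the reference $\ket{\sigma}_{\regA',\regB}$ returns it to something proportional to $\bra{\tau'_{xy}}+\bra{\tau'_{yx}}$ in $\regZ$ after the $X_\regB$ flip and the $\bra{\tau}\bra{\sigma}$ projection. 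I'd encourage you to try to redo the calculation with $\widetilde{U}$ as the starting point and see how the uncomputation falls out.
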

\begin{remark}
We remark that  \Cref{item:AAS_one_generalized}  does \emph{not} preserve the auxiliary state unlike \Cref{item:AAS_two_generalized}. Though this does not capture the intuition that ``one can distinguish $\ket{\psi}$ and $\ket{\phi}$ whenever he can swap $\ket{x}$ and $\ket{y}$'', 
this is good enough for our purpose. 
We also remark that there is a quadratic reduction loss in  \Cref{item:AAS_one_generalized}. We do not know if it is tight while both items of \Cref{thm:AAS} is shown to be tight in \cite{AAS20}.
\end{remark}
\begin{proof}[Proof of \Cref{thm:AAS_generalized}]
\Cref{item:AAS_two_generalized} directly follows from \Cref{item:AAS_two} of \Cref{thm:AAS} by considering $\ket{x}\ket{\tau}$ and $\ket{y}\ket{\tau}$ as $\ket{x}$ and $\ket{y}$ in \Cref{thm:AAS}.  
We prove \Cref{item:AAS_one_generalized} below.

\smallskip
\noindent\textbf{Proof of  \Cref{item:AAS_one_generalized}.}
Let $\regA$ and $\regA'$ be $n$-qubit registers, $\regZ$ be an $m$-qubit register, and $\regB$ be a $1$-qubit register. 
We define a unitary $\widetilde{U}$ over $(\regA,\regZ,\regA',\regB)$ as follows: 
\begin{align} \label{eq:def_tildeU}
    \widetilde{U}:=X_{\regB}U^{\dagger}_{\regA',\regZ}U_{\regA,\regZ}
\end{align}
where 
$X_{\regB}$ is the Pauli $X$ operator on $\regB$ and
$U^{\dagger}_{\regA',\regZ}$ means the inverse of $U_{\regA',\regZ}$, which works similarly to $U_{\regA,\regZ}$ except that it acts on $\regA'$ instead of on $\regA$. 

Then, we prove the following claim. 
\begin{MyClaim}\label{cla:for_AAS_one_generalized}
Let $\ket{x},\ket{y},\ket{\tau}$, and $\Gamma$ be as in \Cref{item:AAS_one_generalized} of \Cref{thm:AAS_generalized}, $\widetilde{U}$ be as defined in \Cref{eq:def_tildeU}, and
$\ket{\sigma}_{\regA',\regB}$ be the state over $(\regA',\regB)$ defined as follows:
\begin{align} \label{eq:def_sigma}
    \ket{\sigma}_{\regA',\regB}:=\frac{\ket{x}_{\regA'}\ket{0}_{\regB}+\ket{y}_{\regA'}\ket{1}_{\regB}}{\sqrt{2}}.
\end{align}
Then, it holds that 
\begin{align*}
\left|\bra{y}_{\regA}\bra{\tau}_{\regZ}\bra{\sigma}_{\regA',\regB}
    \widetilde{U}\ket{x}_{\regA}\ket{\tau}_{\regZ}\ket{\sigma}_{\regA',\regB}
+
\bra{x}_{\regA}\bra{\tau}_{\regZ}\bra{\sigma}_{\regA',\regB}
    \widetilde{U}\ket{y}_{\regA}\ket{\tau}_{\regZ}\ket{\sigma}_{\regA',\regB}\right|
    =\frac{\Gamma^2}{2}.
\end{align*}
\end{MyClaim}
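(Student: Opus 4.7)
The plan is to compute the two scalars inside the absolute value, call them $T_1$ and $T_2$, directly from the definition $\widetilde{U} = X_{\regB}U^\dagger_{\regA',\regZ}U_{\regA,\regZ}$. First, I will expand $\ket{\sigma}_{\regA',\regB}$ in the computational basis of $\regB$ and observe that $\bra{\sigma}_{\regA',\regB}X_{\regB} = \frac{1}{\sqrt{2}}(\bra{y}_{\regA'}\bra{0}_{\regB} + \bra{x}_{\regA'}\bra{1}_{\regB})$. Since $U^\dagger_{\regA',\regZ}$ leaves $\regB$ untouched, the $\regB$-inner products annihilate two of the four cross-terms, keeping only the pairings $(\bra{x}_{\regA'},\ket{y}_{\regA'})$ and $(\bra{y}_{\regA'},\ket{x}_{\regA'})$ on $\regA'$.

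Next, I will introduce the $\regZ$-operators $A \seteq \bra{y}U\ket{x}$ and $B \seteq \bra{x}U\ket{y}$ (obtained by contracting the $\regA$-register of $U$ against $\ket{x},\ket{y}$); a one-line index calculation gives $\bra{x}U^\dagger\ket{y} = A^\dagger$ and $\bra{y}U^\dagger\ket{x} = B^\dagger$ as $\regZ$-operators. Plugging into the surviving $\regA'$-terms, $T_1$ will reduce to
\[
T_1 = \tfrac{1}{2}\bra{y}_{\regA}\bra{\tau}_{\regZ}(A^\dagger + B^\dagger)_{\regZ} U_{\regA,\regZ}\ket{x}_{\regA}\ket{\tau}_{\regZ} = \tfrac{1}{2}\bra{\tau}(A^\dagger + B^\dagger)A\ket{\tau},
\]
and the symmetric computation will give $T_2 = \tfrac{1}{2}\bra{\tau}(A^\dagger + B^\dagger)B\ket{\tau}$.

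Summing and recognising the square of a norm,
\[
T_1 + T_2 = \tfrac{1}{2}\bra{\tau}(A+B)^\dagger(A+B)\ket{\tau} = \tfrac{1}{2}\bigl\|(A+B)\ket{\tau}\bigr\|^2 = \tfrac{\Gamma^2}{2},
\]
which is a non-negative real number, so the absolute value collapses and the claim follows. The main obstacle is careful register bookkeeping rather than any deep idea: the auxiliary register $\regA'$ together with $U^\dagger_{\regA',\regZ}$ plays the role of \emph{uncomputing} the workspace $\regZ$ so that $\ket{\tau}$ can reappear at the end of the scalar, and the role of $X_{\regB}$ is precisely to ensure that the \emph{swapped} $\regA'$-labels are the ones that survive the $\regB$-inner product, so that the expressions $\bra{x}U^\dagger\ket{y}$ and $\bra{y}U^\dagger\ket{x}$ (rather than the diagonal terms, which would give something unrelated to $\Gamma$) appear.
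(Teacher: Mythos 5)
Your plan is correct and follows essentially the same route as the paper's proof: the paper defines vectors $\ket{\tau'_{ab}}_{\regZ} = (\bra{b}_{\regA}\otimes I_{\regZ})U_{\regA,\regZ}\ket{a}_{\regA}\ket{\tau}_{\regZ}$ and works with those, which are precisely your $A\ket{\tau}$, $B\ket{\tau}$ (and the diagonal terms); the $\regB$-selection via $X_\regB$, the $\regA$/$\regA'$-contractions, and the final recognition of $\tfrac{1}{2}\bigl\|(A+B)\ket{\tau}\bigr\|^2 = \tfrac{\Gamma^2}{2}$ all mirror the paper's calculation. Packaging the contractions as $\regZ$-operators $A = \bra{y}U\ket{x}$, $B = \bra{x}U\ket{y}$ rather than as named vectors plus explicit garbage terms is a purely notational streamlining of the same argument.
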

We first finish the proof of \Cref{item:AAS_one_generalized} assuming that \Cref{cla:for_AAS_one_generalized} is correct. 
By \Cref{item:AAS_one} of \Cref{thm:AAS}, \Cref{cla:for_AAS_one_generalized} implies that there is a QPT distinguisher $\widetilde{\A}$ that distinguishes 
\begin{align*}
\ket{\widetilde{\psi}}=\frac{(\ket{x}+\ket{y})_{\regA}\ket{\tau}_{\regZ}\ket{\sigma}_{\regA',\regB}}{\sqrt{2}}
\end{align*}
and 
\begin{align*}
\ket{\widetilde{\phi}}=\frac{(\ket{x}-\ket{y})_{\regA}\ket{\tau}_{\regZ}\ket{\sigma}_{\regA',\regB}}{\sqrt{2}}
\end{align*}
with advantage $\frac{\Gamma^2}{4}$. 
Moreover, $\widetilde{\A}$ does not act on qubits on which $\widetilde{U}$ does not act. 
In particular, $\widetilde{\A}$ does not act on qubits of $\regA$ and $\regZ$ on which $U_{\regA,\regZ}$ does not act since $\widetilde{U}$ acts on $\regA$ and $\regZ$ only through $U_{\regA,\regZ}$ and $U^{\dagger}_{\regA',\regZ}$. 
Thus, by considering $\widetilde{\A}$ as a distinguisher $\A$ with advice $\ket{\tau'}=\ket{\tau}_{\regZ}\ket{\sigma}_{\regA',\regB}$ that distinguishes $\ket{\psi}=\frac{\ket{x}+\ket{y}}{\sqrt{2}}$ and  $\ket{\phi}=\frac{\ket{x}-\ket{y}}{\sqrt{2}}$, 
\Cref{item:AAS_one_generalized} is proven. 
Below, we prove \Cref{cla:for_AAS_one_generalized}.
\begin{proof}[Proof of \Cref{cla:for_AAS_one_generalized}]
For $(a,b)\in \{(x,x),(x,y),(y,x),(y,y)\}$, we define 
\begin{align*}
    \ket{\tau'_{ab}}_{\regZ}:=(\bra{b}_{\regA}\otimes I_{\regZ}) U_{\regA,\regZ}\ket{a}_{\regA}\ket{\tau}_{\regZ}.
\end{align*}
Then, we have 
\begin{align}\label{eq:Gamma}
    \Gamma=\left\|\ket{\tau'_{xy}}_\regZ+\ket{\tau'_{yx}}_\regZ\right\|
\end{align}
and 
\begin{align}
    &U_{\regA,\regZ}\ket{x}_\regA\ket{\tau}_\regZ=\ket{x}_{\regA}\ket{\tau'_{xx}}_{\regZ}+\ket{y}_{\regA}\ket{\tau'_{xy}}_{\regZ}+ \ket{\garbage_x}_{\regA,\regZ} \label{eq:U_x}\\
    &U_{\regA,\regZ}\ket{y}_\regA\ket{\tau}_\regZ=\ket{x}_{\regA}\ket{\tau'_{yx}}_{\regZ}+\ket{y}_{\regA}\ket{\tau'_{yy}}_{\regZ}+ \ket{\garbage_y}_{\regA,\regZ} \label{eq:U_y}
\end{align}
where $\ket{\garbage_x}_{\regA,\regZ}$ and $\ket{\garbage_y}_{\regA,\regZ}$ are (not necessarily normalized) states such that 
\begin{align}
    &(\bra{x}_{\regA}\otimes I_{\regZ})\ket{\garbage_x}_{\regA,\regZ}= (\bra{y}_{\regA}\otimes I_{\regZ})\ket{\garbage_x}_{\regA,\regZ}=0,  \label{eq:orthogonal_garbage_x}
\\
&(\bra{x}_{\regA}\otimes I_{\regZ})\ket{\garbage_y}_{\regA,\regZ}= (\bra{y}_{\regA}\otimes I_{\regZ})\ket{\garbage_y}_{\regA,\regZ}=0.
\label{eq:orthogonal_garbage_y}
\end{align}
Then, 
\begin{align}
\begin{split} \label{eq:ip_one}
&\bra{y}_{\regA}\bra{\tau}_{\regZ}\bra{\sigma}_{\regA',\regB}
    \widetilde{U}\ket{x}_{\regA}\ket{\tau}_{\regZ}\ket{\sigma}_{\regA',\regB}\\
=
&\bra{y}_{\regA}\bra{\tau}_{\regZ}\bra{\sigma}_{\regA',\regB}X_{\regB}U^{\dagger}_{\regA',\regZ}(\ket{x}_{\regA}\ket{\tau'_{xx}}_{\regZ}+\ket{y}_{\regA}\ket{\tau'_{xy}}_{\regZ}+ \ket{\garbage_x}_{\regA,\regZ})\ket{\sigma}_{\regA',\regB}\\
=&\bra{\tau}_{\regZ}\bra{\sigma}_{\regA',\regB}X_{\regB}U^{\dagger}_{\regA',\regZ}\ket{\tau'_{xy}}_{\regZ}\ket{\sigma}_{\regA',\regB}
\end{split}
\end{align}
where the first equality follows from \Cref{eq:U_x} and the second equality follows from \Cref{eq:orthogonal_garbage_x} and the assumption that $\ket{x}$ and $\ket{y}$ are orthogonal.
By \Cref{eq:def_sigma,eq:U_x,eq:U_y}, 
it holds that 
\begin{align}
\begin{split} \label{eq:UZtausigma}
    &
    U_{\regA',\regZ}X_{\regB}\ket{\tau}_{\regZ}\ket{\sigma}_{\regA',\regB}\\
=&
U_{\regA',\regZ}\frac{\ket{\tau}_{\regZ}((\ket{x}_{\regA'}\ket{1}_{\regB}+\ket{y}_{\regA'}\ket{0}_{\regB})}{\sqrt{2}}\\
=&\frac{1}{\sqrt{2}}\left(
\begin{array}{l}
\left(\ket{x}_{\regA'}\ket{\tau'_{xx}}_{\regZ}+\ket{y}_{\regA'}\ket{\tau'_{xy}}_{\regZ}+ \ket{\garbage_x}_{\regA',\regZ}\right)\ket{1}_\regB\\
+\left(\ket{x}_{\regA'}\ket{\tau'_{yx}}_{\regZ}+\ket{y}_{\regA'}\ket{\tau'_{yy}}_{\regZ}+ \ket{\garbage_y}_{\regA',\regZ}\right)\ket{0}_\regB
\end{array}
\right).
\end{split}
\end{align}
Then, it holds that 
\begin{align}
\begin{split}\label{eq:ip_two}
  &\bra{\tau}_{\regZ}\bra{\sigma}_{\regA',\regB}X_{\regB}U^{\dagger}_{\regA',\regZ}\ket{\tau'_{xy}}_{\regZ}\ket{\sigma}_{\regA',\regB}\\
  =&\frac{1}{2}\left(
\begin{array}{l}
\left(\bra{x}_{\regA'}\bra{\tau'_{xx}}_{\regZ}+\bra{y}_{\regA'}\bra{\tau'_{xy}}_{\regZ}+ \bra{\garbage_x}_{\regA',\regZ}\right)\bra{1}_\regB\\
+\left(\bra{x}_{\regA'}\bra{\tau'_{yx}}_{\regZ}+\bra{y}_{\regA'}\bra{\tau'_{yy}}_{\regZ}+ \bra{\garbage_y}_{\regA',\regZ}\right)\bra{0}_\regB
\end{array}
\right)
\left(\ket{x}_{\regA'}\ket{0}_{\regB}+\ket{y}_{\regA'}\ket{1}_{\regB}\right)\ket{\tau'_{xy}}_{\regZ}
\\
  =&\frac{1}{2}
  (\bra{\tau'_{xy}}+\bra{\tau'_{yx}})_{\regZ}\ket{\tau'_{xy}}_{\regZ},
 \end{split}
\end{align}
where the first equality follows from  \Cref{eq:def_sigma,eq:UZtausigma} and the second equality follows from \Cref{eq:orthogonal_garbage_x,eq:orthogonal_garbage_y} and the assumption that $\ket{x}$ and $\ket{y}$ are orthogonal.

\if0
By \Cref{eq:def_sigma,eq:U_x,eq:U_y}, 
it holds that 
\begin{align}
\begin{split} \label{eq:UZtausigma}
    &
    U_{\regA',\regZ}Z_{\regB}\ket{\tau}_{\regZ}\ket{\sigma}_{\regA',\regB}\\
=&
U_{\regA',\regZ}\frac{\ket{\tau}_{\regZ}((\ket{x}+\ket{y})_{\regA'}\ket{0}_{\regB}-(\ket{x}-\ket{y})_{\regA'}\ket{1}_{\regB})}{2}\\
=&\frac{1}{2}\left(
\begin{array}{l}
\left(\ket{x}_{\regA'}\ket{\tau'_{xx}}_{\regZ}+\ket{y}_{\regA'}\ket{\tau'_{xy}}_{\regZ}+ \ket{\garbage_x}_{\regA',\regZ}\right)\ket{0}_\regB\\
+\left(\ket{x}_{\regA'}\ket{\tau'_{yx}}_{\regZ}+\ket{y}_{\regA'}\ket{\tau'_{yy}}_{\regZ}+ \ket{\garbage_y}_{\regA',\regZ}\right)\ket{0}_\regB\\
-\left(\ket{x}_{\regA'}\ket{\tau'_{xx}}_{\regZ}+\ket{y}_{\regA'}\ket{\tau'_{xy}}_{\regZ}+ \ket{\garbage_x}_{\regA',\regZ}\right)\ket{1}_\regB\\
+\left(\ket{x}_{\regA'}\ket{\tau'_{yx}}_{\regZ}+\ket{y}_{\regA'}\ket{\tau'_{yy}}_{\regZ}+ \ket{\garbage_y}_{\regA',\regZ}\right)\ket{1}_\regB
\end{array}
\right)
\end{split}
\end{align}
By \Cref{eq:def_sigma,eq:UZtausigma} as well as \Cref{eq:orthogonal_garbage_x,eq:orthogonal_garbage_y} and the assumption that $\ket{x}$ and $\ket{y}$ are orthogonal, it holds that 
\begin{align}
\begin{split}\label{eq:ip_two}
  &\bra{\tau}_{\regZ}\bra{\sigma}_{\regA',\regB}Z_{\regB}U^{\dagger}_{\regA',\regZ}\ket{\tau'_{xy}}_{\regZ}\ket{\sigma}_{\regA',\regB}\\
  =&\frac{1}{4}
  (\bra{\tau'_{xx}}+\bra{\tau'_{xy}}+\bra{\tau'_{yx}}+\bra{\tau'_{yy}}-\bra{\tau'_{xx}}+\bra{\tau'_{xy}}+\bra{\tau'_{yx}}-\bra{\tau'_{yy}})_{\regZ}\ket{\tau'_{xy}}_{\regZ}\\
  =&\frac{1}{2}
  (\bra{\tau'_{xy}}+\bra{\tau'_{yx}})_{\regZ}\ket{\tau'_{xy}}_{\regZ}.
 \end{split}
\end{align}
\fi

By \Cref{eq:ip_one,eq:ip_two}, we have 
\begin{align}\label{eq:ip_x}
    \bra{y}_{\regA}\bra{\tau}_{\regZ}\bra{\sigma}_{\regA',\regB}
    \widetilde{U}\ket{x}_{\regA}\ket{\tau}_{\regZ}\ket{\sigma}_{\regA',\regB}=\frac{1}{2}
  (\bra{\tau'_{xy}}+\bra{\tau'_{yx}})_{\regZ}\ket{\tau'_{xy}}_{\regZ}.
\end{align}
By a similar calculation, we have 
\begin{align}\label{eq:ip_y}
    \bra{x}_{\regA}\bra{\tau}_{\regZ}\bra{\sigma}_{\regA',\regB}
    \widetilde{U}\ket{y}_{\regA}\ket{\tau}_{\regZ}\ket{\sigma}_{\regA',\regB}=\frac{1}{2}
  (\bra{\tau'_{xy}}+\bra{\tau'_{yx}})_{\regZ}\ket{\tau'_{yx}}_{\regZ}.
\end{align}
By \Cref{eq:ip_x,eq:ip_y}, we have 
\begin{align*}
   & \bra{y}_{\regA}\bra{\tau}_{\regZ}\bra{\sigma}_{\regA',\regB}
    \widetilde{U}\ket{x}_{\regA}\ket{\tau}_{\regZ}\ket{\sigma}_{\regA',\regB}
+
\bra{x}_{\regA}\bra{\tau}_{\regZ}\bra{\sigma}_{\regA',\regB}
    \widetilde{U}\ket{y}_{\regA}\ket{\tau}_{\regZ}\ket{\sigma}_{\regA',\regB}\\
    =&\frac{1}{2}\left\|\ket{\tau'_{xy}}_\regZ+\ket{\tau'_{yx}}_\regZ\right\|^2.
\end{align*}
By combining the above with \Cref{eq:Gamma}, we complete the proof of \Cref{cla:for_AAS_one_generalized}. 
\end{proof}
This completes the proof of \Cref{thm:AAS_generalized}.
\end{proof}

\section{Our Conversion for Commitments}\label{sec:conversion}
\takashi{The new version is commented out for now.}
\if0
\takashi{The intuition for the dual commitment and the security proofs are not super clear to me. (See my comments after \Cref{eqn:unitaryQ} and in the proof.) 
So this might also confuse the readers. How about organizing this section as follows? 
First, we prove the flavor conversion in the original way. Next, we explain that we can actually prove the other directions by noting the duality relation. In this way, if one is only interested in how the conversion works, one does not need to worry about the duality. 
}
In this section, we give a conversion for canonical quantum bit commitments that converts the flavors of security using \Cref{thm:AAS_generalized}. We use the following notation throughout in this section
\[\mathtt{X},\mathtt{Y}\in \{\text{computationally,statistically,perfectly}\}.\]
In fact, we show that any canonical quantum bit commitment scheme has its {\it dual} commitment.
The original commitment is $\mathtt X$ hiding and $\mathtt Y$ binding if and only if the dual commitment is $\mathtt X$ binding and $\mathtt Y$ hiding.

Let $Q=\{Q_0,Q_1\}$ be a canonical quantum bit commitment scheme with the commitment and reveal registers $\regC$ and $\regR$. For a single qubit register $\regD$ holding the committed bit,
$Q$ can be represented by the following unitary map.
\begin{equation}\label{eqn:unitaryQ}
    Q_0\otimes \ket{0}\bra{0}_\regD+Q_1\otimes \ket{1}\bra{1}_\regD
\end{equation}
We abuse $Q$ to denote the above unitary map.
Note that to include the register $\regD$ as a part of the reveal register does not change the security
because the hiding adversary cannot see $\regD$ and the binding adversary already knows $b$.
\takashi{I'm not sure if the modified scheme that includes $\regD$ as a part of reveal register is captured by our definition of canonical commitments (\Cref{def:canonical_com}). Otherwise, its "security" is undefined.
Probably, we can define the modified commitment scheme to be $\{Q_0,Q_1 X_{\regD}\}$, but that doesn't seem very consistent to the expression in \Cref{eqn:unitaryQ}.
Probably, the source of confusion is that you imagine that $\regD$ contains the committed bit in \Cref{eqn:unitaryQ}, but it should actually be initialized to $0$.
}

The dual commitment $Q'$ of $Q$ is obtained by reversing the roles of $\regC$ and $\regR$ (plus $\regD),$ then committing $\ket{\pm}$ instead of $0$ and $1.$ \takashi{This explanation also does not make sense according to the definition.}
\begin{definition}[Dual commitment scheme]
Let $Q=\{Q_0,Q_1\}$ be a canonical quantum bit commitment scheme with the commitment and reveal registers $\regC$ and $\regR$, respectively. We define its \emph{dual} scheme 
\[Q' = \left\{Q_0' = Q \cdot( I_{\regC,\regR} \otimes H_\regD) ,~~Q_1' = Q \cdot (I_{\regC,\regR} \otimes H_\regD \cdot X_\regD)\right\}\]
with the commitment and reveal registers $\regC'=(\regR,\regD)$ and $\regR'=\regC$, respectively.
\end{definition}
The concrete computations of the dual commitment are as follows:
\[
Q'_0 \ket{0} = \frac{(Q_0\ket 0) \ket 0_\regD + (Q_1\ket 0 ) \ket 1_\regD}{\sqrt 2},~~~Q'_1\ket{0}  = \frac{(Q_0\ket 0)  \ket 0_\regD - (Q_1\ket 0)  \ket 1_\regD}{\sqrt 2}
\]
where $Q'_b$ acts on the registers $(\regC',\regR')$ and $Q_b$ acts on $(\regC,\regR),$ respectively.
Note that this equation implies that 
\begin{equation}\label{eqn:dual}
    (Q_0 \ket0) \ket0_\regD = \frac{Q_0'\ket0 + Q_1'\ket0}{\sqrt{2}}\text{~~and~~}(Q_1 \ket0) \ket1_\regD=\frac{Q'_0\ket0 - Q'_1\ket0}{\sqrt{2}}.
\end{equation}

\begin{theorem}[Converting Flavors]\label{thm:conversion}
Let $Q$ be a canonical quantum bit commitment scheme and $Q'$ be its dual commitment. 
The following statements hold for any $\mathtt{X},\mathtt{Y}\in \{\text{computationally,statistically,perfectly}\}$:
\begin{enumerate}
    \item \label{item:hiding_to_binding}
    $Q$ is $\mathtt X$ hiding if and only if $Q'$ is $\mathtt X$ binding.
    \item \label{item:binding_to_hiding}
    $Q$ is $\mathtt Y$ binding if and only if $Q'$ is $\mathtt Y$ hiding.
\end{enumerate}
\end{theorem}

\begin{proof}
We focus on the case of $\mathtt{X,Y}=``\text{computationally}"$. 
The proofs are almost identical for the other cases $\mathtt{X,Y}=``\text{statistically,perfectly}"$.

\smallskip
\noindent\textbf{Proof of \Cref{item:hiding_to_binding}.}
We start from a polynomial-time binding adversary $U$ of $Q'$ that takes a polynomial size advice $\ket \tau$.
The unitary $U=U_{\regR',\regZ}$ is over $\regR' = \regC$ and an ancillary register $\regZ$ holding a state $\ket{\tau}_{\regZ}$,
and does not act on $\regC'=(\regR,\regD),$
such that the following advantage is non-negligible:
\begin{align*}
    \left\|((\bra{0}{Q_1'}^\dagger)_{\regC',\regR'}\otimes I_{\regZ})(I_{\regC'}\otimes U_{\regR',\regZ})((Q'_0\ket{0})_{\regC',\regR'}\ket{\tau}_{\regZ})\right\|.
\end{align*}
We observe that $U$ does not act on $\regD$ since $\regD$ is included in $\regC'=(\regR,\regD)$, 
and thus it cannot cause any interference between states that take $0$ and $1$ in $\regD$. 
Based on this observation and the definition of $\{Q'_0,Q'_1\}$, we have the following equality, where the terms including $\bra{1}_\regD\ket{0}_\regD$ and $\bra{0}_\regD\ket{1}_\regD$ are removed:
\begin{align*}
    &((\bra{0}{Q_1'}^\dagger)_{\regC',\regR'}\otimes I_{\regZ})(I_{\regC'}\otimes U_{\regR',\regZ})((Q'_0\ket{0})_{\regC',\regR'}\ket{\tau}_{\regZ})\\
    =&
    \frac{1}{2}\left(
    \begin{array}{l}
    ((\bra{0}Q_0^\dagger)_{\regC,\regR}\bra{0}_{\regD}\otimes I_{\regZ})(I_{\regR,\regD}\otimes U_{\regC,\regZ})((Q_0\ket{0})_{\regC,\regR}\ket{0}_{\regD}\ket{\tau}_{\regZ})\\
    -((\bra{0}Q_1^\dagger)_{\regC,\regR}\bra{1}_{\regD}\otimes I_{\regZ})(I_{\regR,\regD}\otimes U_{\regC,\regZ})((Q_1\ket{0})_{\regC,\regR}\ket{1}_{\regD}\ket{\tau}_{\regZ})
    \end{array}
    \right).
\end{align*}
Similarly, we have the following equality by changing $Q_0'$ and $Q_1'$:
\begin{align*}
    &((\bra{0}{Q_0'}^\dagger)_{\regC',\regR'}\otimes I_{\regZ})(I_{\regC'}\otimes U_{\regR',\regZ})((Q'_1\ket{0})_{\regC',\regR'}\ket{\tau}_{\regZ})\\
    =&
    \frac{1}{2}\left(
    \begin{array}{l}
    ((\bra{0}Q_0^\dagger)_{\regC,\regR}\bra{0}_{\regD}\otimes I_{\regZ})(I_{\regR,\regD}\otimes U_{\regC,\regZ})((Q_0\ket{0})_{\regC,\regR}\ket{0}_{\regD}\ket{\tau}_{\regZ})\\
    -((\bra{0}Q_1^\dagger)_{\regC,\regR}\bra{1}_{\regD}\otimes I_{\regZ})(I_{\regR,\regD}\otimes U_{\regC,\regZ})((Q_1\ket{0})_{\regC,\regR}\ket{1}_{\regD}\ket{\tau}_{\regZ})
    \end{array}
    \right).
\end{align*}
In particular, the above two equations imply that
\begin{align*}
    &((\bra{0}{Q_1'}^\dagger)_{\regC',\regR'}\otimes I_{\regZ})(I_{\regC'}\otimes U_{\regR',\regZ})((Q'_0\ket{0})_{\regC',\regR'}\ket{\tau}_{\regZ})\\
    =&((\bra{0}{Q_0'}^\dagger)_{\regC',\regR'}\otimes I_{\regZ})(I_{\regC'}\otimes U_{\regR',\regZ})((Q'_1\ket{0})_{\regC',\regR'}\ket{\tau}_{\regZ}).
\end{align*}
Therefore, the following quantity is also non-negligible:
\begin{align*}
    \left\|
    \begin{array}{l}
    ((\bra{0}{Q_1'}^\dagger)_{\regC',\regR'}\otimes I_{\regZ})(I_{\regC'}\otimes U_{\regR',\regZ})((Q'_0\ket{0})_{\regC',\regR'}\ket{\tau}_{\regZ})\\
    +
    ((\bra{0}{Q_0'}^\dagger)_{\regC',\regR'}\otimes I_{\regZ})(I_{\regC'}\otimes U_{\regR',\regZ})((Q'_1\ket{0})_{\regC',\regR'}\ket{\tau}_{\regZ})
    \end{array}
    \right\|.
\end{align*}

If we set $\ket{x}=Q'_0\ket{0}_{\regC',\regR'}$ and $\ket{y}=Q'_1\ket{0}_{\regC',\regR'}$, then $\ket{x}$ and $\ket{y}$ are orthogonal by definition of $Q'$, and satisfy~\Cref{eqn:dual}.
Thus we can apply~\Cref{thm:AAS_generalized}: the unitary map $U$ is equivalent to the non-uniform QPT distinguisher $\A$ for two states
\[
\ket{\psi}=\frac{\ket{x}+\ket{y}}{\sqrt{2}}=(Q_0\ket{0})_{\regC,\regR}\ket{0}_{\regD}
\text{~~and~~} \ket{\phi}=\frac{\ket{x}-\ket{y}}{\sqrt{2}}=(Q_1\ket{0})_{\regC,\regR}\ket{1}_{\regD}
\]
that takes a polynomial-size advice $\ket{\tau'}=\ket{\tau}\otimes \frac{\ket{x}\ket{0}+\ket{y}\ket{1}}{\sqrt{2}}.$ Clearly $\A$ is the hiding adversary for $Q$, concluding the proof of~\Cref{item:hiding_to_binding}.
\takashi{The other direction?}

\smallskip
\noindent\textbf{Proof of \Cref{item:binding_to_hiding}.}
We begin with a polynomial-time binding adversary $U$ of $Q$ that takes a polynomial size advice $\ket \tau_\regZ.$  \takashi{Are you proving the inverse direction of \Cref{item:binding_to_hiding}?}
Since the binding adversary already knows the committed bit \takashi{I don't see what this means and why this remark is needed.}, we can assume that 
the unitary $U$ acts on $(\regR,\regD)$ and $\regZ$ and
\begin{align*}
    \left\|(\bra{0}Q_1^\dagger)_{\regC,\regR}\bra{1}_\regD(U_{\regR,\regD,\regZ}\otimes I_{\regC})((Q_0\ket{0})_{\regC,\regR}\ket{0}_{\regD} \ket{\tau}_{\regZ})\right\|=\Delta
\end{align*}
is non-negligible.
Let us write
\[
\ket{x}=(Q_0\ket0)_{\regC,\regR}\ket 0_\regD
\text{~~and~~} \ket{y}=(Q_1\ket0)_{\regC,\regR}\ket 1_\regD.
\]
The condition states that $\left \| \bra{y}_{\regC,\regR,\regD} (U_{\regR,\regD,\regZ}\otimes I_{\regC}) \ket x_{\regC,\regR,\regD} \ket{\tau}_\regZ\right\|=\Delta.$
Since $\| \ket a + \ket b \| + \| \ket a - \ket b \| \ge 2\| \ket a \|$ holds by the triangular inequality, one of 
\begin{align*}
    \left \| \bra{y}_{\regC,\regR,\regD} (U_{\regR,\regD,\regZ}\otimes I_{\regC}) \ket x_{\regC,\regR,\regD} \ket{\tau}_\regZ
    +
    \bra{x}_{\regC,\regR,\regD} (U_{\regR,\regD,\regZ}\otimes I_{\regC}) \ket y_{\regC,\regR,\regD} \ket{\tau}_\regZ
    \right\|
\end{align*}
or
\begin{align*}
    \left \| \bra{y}_{\regC,\regR,\regD} (U_{\regR,\regD,\regZ}\otimes I_{\regC}) \ket x_{\regC,\regR,\regD} \ket{\tau}_\regZ
    -
    \bra{x}_{\regC,\regR,\regD} (U_{\regR,\regD,\regZ}\otimes I_{\regC}) \ket y_{\regC,\regR,\regD} \ket{\tau}_\regZ
    \right\|
\end{align*}
is at least $\Delta$ thus non-negligible. 
In any case, we can write $U'$ to denote one of $U$ or $U\cdot Z_\regD$ that makes the above quantity non-negligible.
\takashi{What is the above quantity?}

On the other hand, if one of them is non-negligible, then we must have one of
$|\bra{y}_{\regC,\regR,\regD}\bra{\tau}_{\regZ}(U_{\regR,\regD,\regZ}\otimes I_{\regC})\ket{x}_{\regC,\regR,\regD}\ket{\tau}_{\regZ}|$ or $|\bra{x}_{\regC,\regR,\regD}\bra{\tau}_{\regZ}(U_{\regR,\regD,\regZ}\otimes I_{\regC})\ket{y}_{\regC,\regR,\regD}\ket{\tau}_{\regZ}|$ is non-negligible. \takashi{I don't see what you are doing here. Aren't we already assuming that $\left\|\bra{y}_{\regC,\regR,\regD} (U_{\regR,\regD,\regZ}\otimes I_{\regC}) \ket x_{\regC,\regR,\regD} \ket{\tau}_\regZ\right\|$ is non-negligible? then, the "or" statement seems to just weaken the claim. (This might be for proving the other direction, but this is not very clear to me.)}
In any case, we can use $U$ or $U^\dagger$ as the binding adversary of $Q.$ \takashi{Isn't it the assumption that $U$ is a binding adversary for $Q$?}
Combining them, we show that the binding adversary for $Q$ is equivalent to a unitary $U_{\regR,\regD,\regZ}$ such that 
\begin{equation}\label{eqn:bindingQequiv}
    \left \| \bra{y} (U\otimes I_{\regC}) \ket x \ket{\tau}_\regZ
    +
    \bra{x} (U\otimes I_{\regC}) \ket y \ket{\tau}_\regZ
    \right\|
\end{equation}
is non-negligible.

\minki{I will write a claim corresponding to the above.}\takashi{what do you mean?}

Now we show that the hiding adversary for $Q'$ is equivalent to the unitary map that makes~\Cref{eqn:bindingQequiv} non-negligible. Let 
\[
\ket \psi := \frac{\ket x + \ket y}{\sqrt 2 }\text{~~and~~}\ket \phi := \frac{\ket x - \ket y}{\sqrt 2 }.
\]
Since $\ket x$ and $\ket y$ are orthogonal, \Cref{thm:AAS_generalized} states that 
there is a non-uniform QPT distinguisher $\A$ that has a non-negligible advantage if and only if 
there is a unitary $U'$ that makes~\Cref{eqn:bindingQequiv} non-negligible. 
Again, this is equivalent to the binding adversary of $Q$ with a non-negligible advantage. This concludes the proof of~\Cref{item:binding_to_hiding}. \takashi{The other direction?}
\end{proof}
We conclude this section with some observations for the above theorem.

Since~\Cref{thm:conversion} states the equivalence between the security of the base and dual schemes, we can prove the easier one to ensure the security of both schemes.\minki{Let me check if it is useful.}

We can consider the augmented commitment
which includes $\regD$ as a part of the reveal register as the original commitment, 
and commits $b$ by computing $Q\ket{0}\ket0_\regD$ where $Q$ is the unitary in~\Cref{eqn:unitaryQ}. 
If we use this augmented commitment as the original scheme, the dual scheme is obtained by applying only one more single Hadamard operation compared to $Q$. Furthermore, applying this compiler twice gives the original scheme, which explains the name of {\it dual} commitment.

\minki{Below is old}

\fi

In this section, we give a conversion for canonical quantum bit commitments that converts the flavors of security using \Cref{thm:AAS_generalized}.

\begin{theorem}[Converting Flavors]
\label{thm:conversion}
Let  $\{Q_0,Q_1\}$ be a canonical quantum bit commitment scheme. 
Let $\{Q'_0,Q'_1\}$ be a canonical quantum bit commitment scheme described as follows:
\begin{itemize}
 \item The roles of commitment and reveal registers are swapped from $\{Q_0,Q_1\}$ and the commitment register is augmented by an additional one-qubit register. 
  That is, if $\regC$ and $\regR$ are the commitment and reveal registers of  $\{Q_0,Q_1\}$,  then the commitment and reveal registers of $\{Q'_0,Q'_1\}$ are defined as $\regC':= (\regR,\regD)$ and $\regR':= \regC$ where $\regD$ is a one-qubit register.  
  \item For $b\in \bit$, the unitary $Q_b'$ is defined as follows:
  \begin{align*}
    Q_b'\seteq \left(Q_0\otimes \ket{0}\bra{0}_{\regD}+Q_1\otimes \ket{1}\bra{1}_{\regD}\right)\left(I_{\regR,\regC}\otimes Z^b_{\regD}H_{\regD}\right)
\end{align*}
where $Z_{\regD}$ and $H_{\regD}$ denote the Pauli $Z$ and the Hadamard operators on $\regD$. 
\end{itemize}


Then, the following hold for $\mathtt{X},\mathtt{Y}\in \{\text{computationally,statistically,perfectly}\}$:
\begin{enumerate}
   \item \label{item:hiding_to_binding}
   If $\{Q_0,Q_1\}$ is $\mathtt{X}$ hiding, then $\{Q'_0,Q'_1\}$ is $\mathtt{X}$ binding. 
    \item \label{item:binding_to_hiding}
    If $\{Q_0,Q_1\}$ is $\mathtt{Y}$ binding, then $\{Q'_0,Q'_1\}$ is $\mathtt{Y}$ hiding.  
\end{enumerate}
\end{theorem}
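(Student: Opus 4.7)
The plan is to exploit the decomposition
\begin{equation*}
Q_0'\ket{0} = \frac{\ket{x}+\ket{y}}{\sqrt 2}, \qquad Q_1'\ket{0} = \frac{\ket{x}-\ket{y}}{\sqrt 2},
\end{equation*}
where $\ket{x}:=(Q_0\ket{0})_{\regC,\regR}\ket{0}_\regD$ and $\ket{y}:=(Q_1\ket{0})_{\regC,\regR}\ket{1}_\regD$ are orthogonal because they differ on $\regD$. This places the dual scheme directly in the framework of \Cref{thm:AAS_generalized}. Both items will be proved by contrapositive, and the three flavors of $\mathtt X,\mathtt Y$ are handled uniformly by allowing unbounded unitaries and adversaries where appropriate.

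For \Cref{item:binding_to_hiding}, suppose a distinguisher $\A$ with advice $\ket\tau_\regZ$ breaks dual $\mathtt Y$-hiding with advantage $\Delta$. Since $\A$ acts only on $\regC'=(\regR,\regD)$ and $\regZ$ and never on $\regR'=\regC$, it in fact distinguishes the pure states $\ket\psi:=Q_0'\ket 0$ and $\ket\phi:=Q_1'\ket 0$ with the same advantage. Applying \Cref{item:AAS_two_generalized} of \Cref{thm:AAS_generalized} (absorbing any auxiliary workspace of $\A$ into $\ket\tau$), together with its \emph{moreover} clause, produces a unitary $U$ acting only on $(\regR,\regD,\regZ)$ with
\begin{equation*}
\tfrac{1}{2}\bigl|\bra{y}\bra\tau U\ket x\ket\tau + \bra{x}\bra\tau U\ket y\ket\tau\bigr|=\Delta.
\end{equation*}
By the triangle inequality (and by replacing $U$ with $U^\dagger$ if needed), we may assume $|\bra{y}\bra\tau U\ket x\ket\tau|\ge\Delta$. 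Projecting the $(\regD,\regZ)$-valued vector $(\bra{0}Q_1^\dagger)_{\regC,\regR}(I_\regC\otimes U)(Q_0\ket 0)_{\regC,\regR}\ket{0}_\regD\ket\tau$ onto $\bra{1}_\regD\bra\tau_\regZ$ lower-bounds its norm by this quantity, which is exactly the original binding-attack advantage with ancilla $\regZ':=(\regD,\regZ)$ and advice $\ket{0}_\regD\ket\tau$. Hence original $\mathtt Y$-binding is broken.

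For \Cref{item:hiding_to_binding}, suppose a unitary $U$ on $(\regR',\regZ)=(\regC,\regZ)$ with advice $\ket\tau$ breaks dual $\mathtt X$-binding, so $|\bra\phi(I_{\regR,\regD}\otimes U)\ket\psi\ket\tau|\ge\epsilon$. Here \Cref{item:AAS_one_generalized} does \emph{not} apply directly: because $U$ does not touch $\regD$ while $\ket x,\ket y$ differ there, the swap quantity $\|(\bra y\otimes I)U\ket x\ket\tau+(\bra x\otimes I)U\ket y\ket\tau\|$ vanishes. Instead, expand $\bra\phi(I\otimes U)\ket\psi\ket\tau$: the same $\regD$-orthogonality kills the cross terms $\bra x(I\otimes U)\ket y\ket\tau$ and $\bra y(I\otimes U)\ket x\ket\tau$, giving $|\bra x(I\otimes U)\ket x\ket\tau - \bra y(I\otimes U)\ket y\ket\tau|\ge 2\epsilon$. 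Factoring out the trivial $\langle 0|0\rangle_\regD$ identifies $\bra x(I\otimes U)\ket x\ket\tau = \Tr[(\sigma_0\otimes\ket\tau\bra\tau)U]$ and similarly for $y$, with $\sigma_b:=\Tr_\regR(Q_b\ket 0\bra 0 Q_b^\dagger)$. So the two ``Fourier coefficients'' $\Tr[(\sigma_b\otimes\ket\tau\bra\tau)U]$ differ in magnitude by $\ge 2\epsilon$, and hence their real or imaginary parts differ by $\ge\epsilon\sqrt 2$. A Hadamard test with $U$ (or with $iU$, which exchanges real and imaginary parts) acts only on $\regC$ plus local ancillas and distinguishes $\sigma_0$ from $\sigma_1$ with advantage $\ge\epsilon/\sqrt 2$, breaking original $\mathtt X$-hiding.

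The main obstacle is the asymmetry between the two items: the dual-binding violation is a \emph{rotation} sending $\ket\psi$ partially to $\ket\phi$ rather than a \emph{swap} between $\ket x$ and $\ket y$, so \Cref{item:AAS_one_generalized} is silent in \Cref{item:hiding_to_binding} and a separate Hadamard-test argument is needed. Care is also required in tracking which registers each adversary touches, since the \emph{moreover} clauses of \Cref{thm:AAS_generalized} are precisely what ensures that the constructed adversaries respect the register partition demanded by the hiding and binding definitions.
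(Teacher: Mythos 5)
Your proof of \Cref{item:binding_to_hiding} is essentially identical to the paper's: apply \Cref{item:AAS_two_generalized} to the hiding distinguisher with the decomposition $\ket x=(Q_0\ket 0)\ket 0_\regD$, $\ket y=(Q_1\ket 0)\ket 1_\regD$, use a triangle-inequality/conjugation argument to single out one of the two cross terms, and lower-bound the dual binding norm by a one-dimensional projection. That part is sound.

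Your proof of \Cref{item:hiding_to_binding} has a genuine gap. You correctly notice that with \emph{your} choice $\ket x=(Q_0\ket 0)\ket 0_\regD$, $\ket y=(Q_1\ket 0)\ket 1_\regD$ the swap amplitude for $U_{\regC,\regZ}$ vanishes identically (since $U$ leaves $\regD$ alone and $\bra 1_\regD\ket 0_\regD=0$), and that the binding norm instead expands into the \emph{diagonal} terms
\begin{align*}
v_b \seteq \bra x(I\otimes U)\ket x\ket\tau,\qquad b\in\bit,
\end{align*}
with $\|v_0-v_1\|\ge 2\epsilon$. But $v_b$ is a (generically non-trivial) vector in the $\regZ$ Hilbert space, whereas $\Tr[(\sigma_b\otimes\ket\tau\bra\tau)U]$ is the \emph{scalar} $\bra\tau v_b\rangle$; the two are not equal, and the passage from $\|v_0-v_1\|$ to $|\bra\tau v_0\rangle-\bra\tau v_1\rangle|$ is an inequality in the wrong direction. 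The projection onto $\ket\tau$ can lose the entire advantage. A one-qubit example: take $\regC,\regR,\regZ$ each one qubit, $Q_0\ket 0=\ket{00}_{\regC\regR}$, $Q_1\ket 0=\ket{11}_{\regC\regR}$, $U=Z_\regC\otimes X_\regZ$, $\ket\tau=\ket 0_\regZ$. Then $v_0=\ket 1_\regZ$ and $v_1=-\ket 1_\regZ$, so $\|v_0-v_1\|=2$ and indeed the dual binding attack succeeds with amplitude~$1$; yet $\Tr[(\sigma_b\otimes\ket\tau\bra\tau)U]=\langle b|Z|b\rangle\cdot\langle 0|X|0\rangle=0$ for both $b$, so your Hadamard test (and the $iU$ variant) have advantage exactly~$0$. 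Your reduction thus fails to produce a hiding adversary even on an instance where hiding is maximally broken. The underlying issue is that a naive Hadamard test does not perform the uncomputation step $U^\dagger_{\regA',\regZ}U_{\regA,\regZ}$ that is built into the proof of \Cref{item:AAS_one_generalized}; without it, you only see the $\ket\tau$-component of $v_b$.

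The way to make \Cref{item:hiding_to_binding} go through, which is what the paper does, is to \emph{flip the labeling}: set $\ket x\seteq Q_0'\ket 0$, $\ket y\seteq Q_1'\ket 0$ (so that now $\ket\psi=(\ket x+\ket y)/\sqrt 2 = (Q_0\ket 0)\ket 0_\regD$, $\ket\phi=(\ket x-\ket y)/\sqrt 2=(Q_1\ket 0)\ket 1_\regD$). The same $\regD$-separation argument you used then shows that the two swap terms $(\bra y\otimes I_\regZ)U\ket x\ket\tau$ and $(\bra x\otimes I_\regZ)U\ket y\ket\tau$ are not zero but \emph{equal}, so $\Gamma = 2\cdot\|(\bra 0 Q_1'^\dagger)(I\otimes U)(Q_0'\ket 0)\ket\tau\|\ge 2\epsilon$, and \Cref{item:AAS_one_generalized} applies directly; its ``moreover'' clause guarantees the resulting distinguisher does not touch $\regC'=(\regR,\regD)$, which is precisely what a hiding adversary against $\{Q_0,Q_1\}$ needs.
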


Note that we have 
\begin{align*}
    Q_b'\ket{0}_{\regC',\regR'}= \frac{1}{\sqrt{2}}\left((Q_{0}\ket{0})_{\regC,\regR}\ket{0}_{\regD}
    +(-1)^{b} (Q_{1}\ket{0})_{\regC,\regR}\ket{1}_{\regD} \right)
\end{align*}
for $b\in \bit$ 
where $(\regC',\regR')$ is rearranged as $(\regC,\regR,\regD)$.  

\ifnum\llncs=0
\ifnum\llncs=1
\section{Proof of \Cref{thm:conversion}}\label{sec:proof_conversion}
\fi
\begin{proof}[Proof of \Cref{thm:conversion}.]~
Since the proofs are almost identical for all the three cases of $\mathtt{X},\mathtt{Y}\in \{\text{computationally,statistically,}$  $\text{perfectly}\}$, we focus on the case of $\mathtt{X}=\mathtt{Y}=``\text{computationally}"$.

\smallskip
\noindent\textbf{Proof of \Cref{item:hiding_to_binding}.}
Suppose that $\{Q'_0,Q'_1\}$ is not computationally binding. Then, there exists a polynomial-time computable unitary $U$ over $\regR'=\regC$  and an ancillary register $\regZ$ and a state $\ket{\tau}_{\regZ}$ such that 
\begin{align*}
    \left\|((\bra{0}{Q_1'}^\dagger)_{\regC',\regR'}\otimes I_{\regZ})(I_{\regC'}\otimes U_{\regR',\regZ})((Q'_0\ket{0})_{\regC',\regR'}\ket{\tau}_{\regZ})\right\|
\end{align*}
is non-negligible. 
We observe that $U$ does not act on $\regD$ (since that is not part of the reveal register $\regR'$ of $\{Q'_0,Q'_1\}$), and thus it cannot cause any interference between states that take $0$ and $1$ in $\regD$. 
Based on this observation and the denifition of $\{Q'_0,Q'_1\}$, we have 
\begin{align*}
    &((\bra{0}{Q_1'}^\dagger)_{\regC',\regR'}\otimes I_{\regZ})(I_{\regC'}\otimes U_{\regR',\regZ})((Q'_0\ket{0})_{\regC',\regR'}\ket{\tau}_{\regZ})\\
    =&
    \frac{1}{2}\left(
    \begin{array}{l}
    ((\bra{0}Q_0^\dagger)_{\regC,\regR}\bra{0}_{\regD}\otimes I_{\regZ})(I_{\regR,\regD}\otimes U_{\regC,\regZ})((Q_0\ket{0})_{\regC,\regR}\ket{0}_{\regD}\ket{\tau}_{\regZ})\\
    -((\bra{0}Q_1^\dagger)_{\regC,\regR}\bra{1}_{\regD}\otimes I_{\regZ})(I_{\regR,\regD}\otimes U_{\regC,\regZ})((Q_1\ket{0})_{\regC,\regR}\ket{1}_{\regD}\ket{\tau}_{\regZ})
    \end{array}
    \right).
\end{align*}
Similarly, we have 
\begin{align*}
    &((\bra{0}{Q_0'}^\dagger)_{\regC',\regR'}\otimes I_{\regZ})(I_{\regC'}\otimes U_{\regR',\regZ})((Q'_1\ket{0})_{\regC',\regR'}\ket{\tau}_{\regZ})\\
    =&
    \frac{1}{2}\left(
    \begin{array}{l}
    ((\bra{0}Q_0^\dagger)_{\regC,\regR}\bra{0}_{\regD}\otimes I_{\regZ})(I_{\regR,\regD}\otimes U_{\regC,\regZ})((Q_0\ket{0})_{\regC,\regR}\ket{0}_{\regD}\ket{\tau}_{\regZ})\\
    -((\bra{0}Q_1^\dagger)_{\regC,\regR}\bra{1}_{\regD}\otimes I_{\regZ})(I_{\regR,\regD}\otimes U_{\regC,\regZ})((Q_1\ket{0})_{\regC,\regR}\ket{1}_{\regD}\ket{\tau}_{\regZ})
    \end{array}
    \right).
\end{align*}
In particular, 
\begin{align*}
    &((\bra{0}{Q_1'}^\dagger)_{\regC',\regR'}\otimes I_{\regZ})(I_{\regC'}\otimes U_{\regR',\regZ})((Q'_0\ket{0})_{\regC',\regR'}\ket{\tau}_{\regZ})\\
    =&((\bra{0}{Q_0'}^\dagger)_{\regC',\regR'}\otimes I_{\regZ})(I_{\regC'}\otimes U_{\regR',\regZ})((Q'_1\ket{0})_{\regC',\regR'}\ket{\tau}_{\regZ}).\\
\end{align*}

Therefore, 
\begin{align*}
    \left\|
    \begin{array}{l}
    ((\bra{0}{Q_1'}^\dagger)_{\regC',\regR'}\otimes I_{\regZ})(I_{\regC'}\otimes U_{\regR',\regZ})((Q'_0\ket{0})_{\regC',\regR'}\ket{\tau}_{\regZ})\\
    +
    ((\bra{0}{Q_0'}^\dagger)_{\regC',\regR'}\otimes I_{\regZ})(I_{\regC'}\otimes U_{\regR',\regZ})((Q'_1\ket{0})_{\regC',\regR'}\ket{\tau}_{\regZ})
    \end{array}
    \right\|
\end{align*}
is non-negligible. 
If we set $\ket{x}=Q'_0\ket{0}_{\regC',\regR'}$ and $\ket{y}=Q'_1\ket{0}_{\regC',\regR'}$, then $\ket{x}$ and $\ket{y}$ are orthogonal. 
Then, by \Cref{item:AAS_one_generalized} of \Cref{thm:AAS_generalized}, there exists a non-uniform QPT distinguisher $\A$ with a polynomial-size advice $\ket{\tau'}$ that does not act on $\regC'=(\regR,\regD)$ and distinguishes 
\begin{align*}
    \ket{\psi}=\frac{\ket{x}+\ket{y}}{\sqrt{2}}=(Q_0\ket{0})_{\regC,\regR}\ket{0}_{\regD}
\end{align*}
and 
\begin{align*}
    \ket{\phi}=\frac{\ket{x}-\ket{y}}{\sqrt{2}}=(Q_1\ket{0})_{\regC,\regR}\ket{1}_{\regD}
\end{align*}
with a non-negligible advantage.
This means that the computational hiding property of $\{Q_0,Q_1\}$ is broken, which contradicts the assumption. 
Thus, $\{Q'_0,Q'_1\}$ is computationally binding.
This completes the proof of  \Cref{item:hiding_to_binding}.

\if0
\smallskip
\noindent\textbf{Proof of  \Cref{item:hiding_to_binding}}
Since the proofs are almost identical for the two cases of $\mathtt{X}\in \{\text{statistically,perfectly}\}$, we focus on the case of $\mathtt{X}=``\text{statistical}"$.

By \Cref{item:paraphrase_hiding} of \Cref{lem:paraphrase_security}, if $\{Q_0,Q_1\}$ is statistically hiding, then 
there is a unitary $U$ over $\regR$ such that
    $$
    |(\bra{0}Q_1^{\dagger})_{\regC,\regR} (I_{\regC}\otimes U_\regR) (Q_0\ket{0})_{\regC,\regR}|=1-\negl(\secp).
    $$
In particular, there is $\theta$ such that 
$$
    (\bra{0}Q_1^{\dagger})_{\regC,\regR} (I_{\regC}\otimes U_\regR) (Q_0\ket{0})_{\regC,\regR}=e^{i\theta}(1-\negl(\secp)).
    $$
We define a unitary $\widetilde{U}$ over $(\regR,\regD)$ as follows:
\begin{align*}
    \widetilde{U}(\ket{\eta}_{\regR}\ket{0}_{\regD}):=e^{-i\theta}(U\ket{\eta})_{\regR}\ket{1}_{\regD}
\end{align*}
and 
\begin{align*}
    \widetilde{U}(\ket{\eta}_{\regR}\ket{1}_{\regD}):=e^{i\theta}(U^{\dagger}\ket{\eta})_{\regR}\ket{0}_{\regD}.
\end{align*}

Then, if we define 
\begin{align*}
    \ket{x}:=(Q_0\ket{0})_{\regC,\regR}\ket{0}_{\regD}
\end{align*}
and 
\begin{align*}
    \ket{y}:=(Q_1\ket{0})_{\regC,\regR}\ket{1}_{\regD},
\end{align*}
then 
 \begin{align*}
        \frac{|\bra{y}\widetilde{U}\ket{x}+\bra{x}\widetilde{U}\ket{y}|}{2}
        = 
        \mathrm{Re}\left(e^{-i\theta}(\bra{0}Q_1^{\dagger})_{\regC,\regR} (I_{\regC}\otimes U_\regR) (Q_0\ket{0})_{\regC,\regR}\right)
        =1-\negl(\secp).
    \end{align*}
By \Cref{item:AAS_one} of \Cref{theorem:AAS}, 
there exists a distinguisher $A$ that does not act on $\regC$
and 
distinguishes 
\begin{align*}
    \ket{\psi}:=\frac{\ket{x}+\ket{y}}{2}=Q'_0\ket{0}_{\regC,\regR}\ket{0}_{\regD}
\end{align*}
and 
\begin{align*}
    \ket{\phi}:=\frac{\ket{x}-\ket{y}}{2}=Q'_1\ket{0}_{\regC,\regR}\ket{0}_{\regD}
\end{align*}
with advantage $1-\negl(\secp)$. 
Recalling that the commitment and reveal registers of $\{Q'_0,Q'_1\}$ are $\regC'=(\regR,\regD)$ and $\regR'=\regC$, by \Cref{item:paraphrase_binding} of \Cref{lem:paraphrase_security}, the above implies that $\{Q'_0,Q'_1\}$ is statistically binding.  
This completes the proof of \Cref{item:hiding_to_binding}. 
\fi

\smallskip
\noindent\textbf{Proof of \Cref{item:binding_to_hiding}.}
  Suppose that $\{Q'_0,Q'_1\}$ is not computationally hiding. Then, there exists a non-uniform QPT distinguisher $\A$ with advice $\ket{\tau}_\regZ$ that does not act on $\regR'=\regC$ and
distinguishes 
\begin{align*}
    \ket{\psi}:=\frac{1}{\sqrt{2}}\left((Q_{0}\ket{0})_{\regC,\regR}\ket{0}_{\regD}+ (Q_{1}\ket{0})_{\regC,\regR}\ket{1}_{\regD} \right) 
\end{align*}
and 
\begin{align*}
    \ket{\phi}:=\frac{1}{\sqrt{2}}\left((Q_{0}\ket{0})_{\regC,\regR}\ket{0}_{\regD}- (Q_{1}\ket{0})_{\regC,\regR}\ket{1}_{\regD} \right) 
\end{align*}
with a non-negligible advantage $\Delta$. 

Since $\ket{\psi}$ and $\ket{\phi}$ are orthogonal, by \Cref{item:AAS_two_generalized} of \Cref{thm:AAS_generalized}, there exists a polynomial-time computable unitary $U$ over $(\regR,\regD,\regZ)$ such that
 \begin{align*}
        \frac{|\bra{y}_{\regC,\regR,\regD}\bra{\tau}_{\regZ}(U_{\regR,\regD,\regZ}\otimes I_{\regC})\ket{x}_{\regC,\regR,\regD}\ket{\tau}_{\regZ}+\bra{x}_{\regC,\regR,\regD}\bra{\tau}_{\regZ}(U_{\regR,\regD,\regZ}\otimes I_{\regC})\ket{y}_{\regC,\regR,\regD}\ket{\tau}_{\regZ}|}{2}=\Delta
    \end{align*}
where 
\begin{align*}
    \ket{x}=\frac{\ket{\psi}+\ket{\phi}}{\sqrt{2}}=
    (Q_{0}\ket{0})_{\regC,\regR}\ket{0}_{\regD} 
\end{align*}
and 
\begin{align*}
    \ket{y}=\frac{\ket{\psi}-\ket{\phi}}{\sqrt{2}}=
    (Q_{1}\ket{0})_{\regC,\regR}\ket{1}_{\regD}.
\end{align*}

Thus, we must have $|\bra{y}_{\regC,\regR,\regD}\bra{\tau}_{\regZ}(U_{\regR,\regD,\regZ}\otimes I_{\regC})\ket{x}_{\regC,\regR,\regD}\ket{\tau}_{\regZ}|\geq \Delta$ or $|\bra{x}_{\regC,\regR,\regD}\bra{\tau}_{\regZ}(U_{\regR,\regD,\regZ}\otimes I_{\regC})\ket{y}_{\regC,\regR,\regD}\ket{\tau}_{\regZ}|\geq \Delta$. We assume the former w.l.o.g., i.e., we have 
\begin{align*}
\left|
\left(
(\bra{0}Q_{1}^\dagger)_{\regC,\regR}\bra{1}_{\regD}\bra{\tau}_{\regZ}
\right)
(U_{\regR,\regD,\regZ}\otimes I_{\regC})
    \left((Q_{0}\ket{0})_{\regC,\regR}\ket{0}_{\regD} \ket{\tau}_{\regZ}\right)\right|\geq \Delta.
\end{align*}

In particular, we have 
\begin{align*}
    \left\|(Q_1\ket{0}\bra{0}Q_1^\dagger)_{\regC,\regR}(U_{\regR,\regD,\regZ}\otimes I_{\regC})((Q_0\ket{0})_{\regC,\regR}\ket{0}_{\regD} \ket{\tau}_{\regZ})\right\|\ge\Delta.
\end{align*}

This means that $U$ with the auxiliary state $\ket{0}_{\regD} \ket{\tau}_{\regZ}$ breaks the computational binding property of $\{Q_0,Q_1\}$, which contradicts the assumption. 
Thus, $\{Q'_0,Q'_1\}$ is computationally hiding.
This completes the proof of \Cref{item:binding_to_hiding}.

\if0
Suppose that $\{Q'_0,Q'_1\}$ is not computationally hiding. Then, there exists a polynomial-size unitary $A$ over $\regC'=(\regR,\regD)$ and an ancillary register $\regZ$ and a state $\ket{\tau}_{\regZ}$ such that 
 \begin{align*}
     \Delta:=  \left|  \left\|(\bra{1}_{\regD}\otimes I_{\regC,\regR,\regZ})(A_{\regR,\regD,\regZ}\otimes I_{\regC})\ket{\psi}\right\|^2
      -\left\|(\bra{1}_{\regD}\otimes I_{\regC,\regR,\regZ})(A_{\regR,\regD,\regZ}\otimes I_{\regC})\ket{\phi}\right\|^2
      \right|
    \end{align*}
    is non-negligible 
    where 
\begin{align*}
    \ket{\psi}:=\frac{1}{\sqrt{2}}\left((Q_{0}\ket{0})_{\regC,\regR}\ket{0}_{\regD}+ (Q_{1}\ket{0})_{\regC,\regR}\ket{1}_{\regD} \right) \ket{\tau}_{\regZ}
\end{align*}
and 
\begin{align*}
    \ket{\phi}:=\frac{1}{\sqrt{2}}\left((Q_{0}\ket{0})_{\regC,\regR}\ket{0}_{\regD}- (Q_{1}\ket{0})_{\regC,\regR}\ket{1}_{\regD} \right) \ket{\tau}_{\regZ}.
\end{align*}

Let $U$ be the unitary over $(\regR,\regD,\regZ)$ as in Figure \takashi{Figure 2 of \cite{AAS20}}. 

Then, by \Cref{item:AAS_two} of \Cref{theorem:AAS}, we have 
 \begin{align*}
        \frac{|\bra{y}(U_{\regR,\regD,\regZ}\otimes I_{\regC})\ket{x}+\bra{x}(U_{\regR,\regD,\regZ}\otimes I_{\regC})\ket{y}|}{2}=\Delta.
    \end{align*}
where 
\begin{align*}
    \ket{x}=\frac{\ket{\psi}+\ket{\phi}}{\sqrt{2}}=
    (Q_{0}\ket{0})_{\regC,\regR}\ket{0}_{\regD} \ket{\tau}_{\regZ}
\end{align*}
and 
\begin{align*}
    \ket{y}=\frac{\ket{\psi}-\ket{\phi}}{\sqrt{2}}=
    (Q_{1}\ket{0})_{\regC,\regR}\ket{1}_{\regD} \ket{\tau}_{\regZ}.
\end{align*}
\minki{We may need to note that $\ket{x}$ and $\ket{y}$ are orthogonal to invoke AAS theorem.}

Thus, we must have $|\bra{y}(U_{\regR,\regD,\regZ}\otimes I_{\regC})\ket{x}|\geq \Delta$ or $|\bra{x}(U_{\regR,\regD,\regZ}\otimes I_{\regC})\ket{y}|\geq \Delta$. We assume the former w.l.o.g., i.e., we have 
\begin{align*}
\left|
\left(
(\bra{0}Q_{1}^\dagger)_{\regC,\regR}\bra{1}_{\regD}\bra{\tau}_{\regZ}
\right)
(U_{\regR,\regD,\regZ}\otimes I_{\regC})
    \left((Q_{0}\ket{0})_{\regC,\regR}\ket{0}_{\regD} \ket{\tau}_{\regZ}\right)\right|\geq \Delta.
\end{align*}

In particular, we have 
\begin{align*}
    \left\|(Q_1\ket{0}\bra{0}Q_1^\dagger)_{\regC,\regR}(U_{\regR,\regD,\regZ}\otimes I_{\regC})((Q_0\ket{0})_{\regC,\regR}\ket{0}_{\regD} \ket{\tau}_{\regZ})\right\|\ge\Delta.
\end{align*}

This means that $U$ with the auxiliary state $\ket{0}_{\regD} \ket{\tau}_{\regZ}$ breaks the computational binding property of $\{Q_0,Q_1\}$. 
This completes the proof of \Cref{item:binding_to_hiding}.
\fi
\end{proof} 
\else
We defer the proof of \cref{thm:conversion} to the full version
since it easily follows from \cref{thm:AAS_generalized} as explained in \cref{sec:overview_conversion}. 
\fi


\ifnum\llncs=1
\smallskip\noindent\textbf{Applications.} 
We give applications of \cref{thm:conversion} in the full version. 
\fi
\section{Applications of Our Conversion}\label{sec:application}
In this section, we show applications of our conversion (\Cref{thm:conversion}). More applications can be found in \Cref{sec:more_application}.
\if0
For reader's convenience, we clarify commitment schemes of which flavor are constructed in each subsection:
\begin{itemize}
    \item In \Cref{sec:construction_Naor,sec:construction_PRS} (and \Cref{sec:construction_GL}), we construct new statistically (or perfectly) hiding and computationally binding schemes by applying our conversion to existing schemes of the other flavor. 
     \item In \Cref{sec:construction_DMS,sec:construction_collapsing}, we construct new computationally hiding and statistically (or perfectly) binding schemes by applying our conversion to existing schemes of the other flavor. 
     \item In \Cref{sec:new_inj}, we first construct a new perfectly hiding and computationally binding scheme from scratch. By applying our conversion to it, we also obtain a new computationally hiding and perfectly binding scheme.
\end{itemize}
\fi

When we describe a canonical quantum bit commitment scheme $\{Q_0,Q_1\}$, we only describe how $Q_0$ and $Q_1$ act on $\ket{0}$ for simplicity. Quantum circuits that implement $Q_0$ and $Q_1$ can be defined in a natural way. 
\takashi{If this is ambiguous, we may write that in Appendix.}

\subsection{Construction from PRG}\label{sec:construction_Naor}
Naor~\cite{JC:Naor91} constructed a two-message \emph{classical} commitment scheme that is computationally hiding and statistically binding based on PRGs. 
Yan et al.~\cite{YWLQ15} constructed a quantum \emph{non-interactive} version of Naor's commitment.\footnote{Yan~\cite[Appendix C]{AC:Yan22} shows an alternative more direct translation of Naor's commitment to the quantum setting. We could also apply our conversion to that scheme, but we focus on the scheme of \cite{YWLQ15} since that is simpler.}
Let $G:\bit^n\rightarrow \bit^{3n}$ be a PRG.  
Then Yan et al.'s commitment scheme $\{Q_{\ywlq,0},Q_{\ywlq,1}\}$ is described as follows:

  \begin{align*}
    &Q_{\ywlq,0}\ket{0}_{\regC,\regR}:=\frac{1}{\sqrt{2^n}}\sum_{x\in \bit^n}\ket{G(x)}_{\regC}\ket{x, 0^{2n}}_{\regR}\\
    &Q_{\ywlq,1}\ket{0}_{\regC,\regR}:=\frac{1}{\sqrt{2^{3n}}}\sum_{y\in \bit^{3n}}\ket{y}_{\regC}\ket{y}_{\regR}.
\end{align*}
Yan et al.~\cite{YWLQ15} proved the following theorem.
\begin{theorem}[{\cite{YWLQ15}}]\label{thm:YWLQ}
If $G$ is a PRG, $\{Q_{\ywlq,0},Q_{\ywlq,1}\}$ is computationally hiding and statistically binding. 
\end{theorem}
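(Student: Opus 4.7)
The plan is to treat the two properties separately, and in each case reduce to a standard quantity one can read off directly from the two states.

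For computational hiding, I would compute the reduced states on the commitment register $\regC$ after tracing out $\regR$. Since the $\regR$-register in $Q_{\ywlq,0}\ket{0}$ stores $(x,0^{2n})$ and in $Q_{\ywlq,1}\ket{0}$ stores $y$, the tracing-out measures $\regC$ in the computational basis, giving
\[
\rho_0 \;=\; \Tr_\regR\bigl(Q_{\ywlq,0}\ket{0}\bra{0}Q_{\ywlq,0}^\dagger\bigr) \;=\; \frac{1}{2^n}\sum_{x\in\bit^n}\ket{G(x)}\bra{G(x)}, \qquad
\rho_1 \;=\; \Tr_\regR\bigl(Q_{\ywlq,1}\ket{0}\bra{0}Q_{\ywlq,1}^\dagger\bigr) \;=\; \frac{I}{2^{3n}}.
\]
Both states are classical (diagonal in the computational basis), so a quantum distinguisher between $\rho_0$ and $\rho_1$ reduces to a classical distinguisher between the distributions of $G(x)$ for $x\gets\bit^n$ and of $y\gets\bit^{3n}$. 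PRG security of $G$ immediately rules this out, yielding computational hiding.

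For statistical binding, the cleanest route is through fidelity of the reduced states on $\regC$: by Uhlmann's theorem, for any unitary $U$ on $(\regR,\regZ)$ and any state $\ket{\tau}_\regZ$,
\[
\bigl\|(Q_{\ywlq,1}\ket{0}\bra{0}Q_{\ywlq,1}^\dagger)_{\regC,\regR}(I_\regC\otimes U_{\regR,\regZ})(Q_{\ywlq,0}\ket{0})_{\regC,\regR}\ket{\tau}_\regZ\bigr\| \;\le\; \sqrt{F(\rho_0,\rho_1)}.
\]
Since $\rho_1 = I/2^{3n}$, we have $F(\rho_0,\rho_1) = 2^{-3n/2}\,\Tr(\sqrt{\rho_0})$. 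The rank of $\rho_0$ is at most $|G(\bit^n)|\le 2^n$, so by Cauchy–Schwarz $\Tr(\sqrt{\rho_0}) \le \sqrt{2^n \cdot \Tr(\rho_0)} = 2^{n/2}$, giving $F(\rho_0,\rho_1) \le 2^{-n}$, which is negligible. This establishes statistical binding in the form of \Cref{def:binding}.

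The main conceptual step is the binding reduction: one should verify once that the definition of statistical binding used in \Cref{def:binding} is indeed controlled by $\sqrt{F(\rho_0,\rho_1)}$, which is a direct application of Uhlmann's theorem after observing that $(Q_{\ywlq,0}\ket{0})_{\regC,\regR}\ket{\tau}_\regZ$ is a purification of $\rho_0$ on $(\regR,\regZ)$, while $Q_{\ywlq,1}\ket{0}_{\regC,\regR}$ is a purification of $\rho_1$ on $\regR$ alone (extended trivially to $\regZ$). Aside from this standard observation, there is no real obstacle: the hiding part is pure PRG security and the binding part is a one-line dimension count, so the whole argument is essentially a sanity check rather than a technical result.
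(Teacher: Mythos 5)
Your proof is correct. Note, though, that the paper does not supply its own proof of \Cref{thm:YWLQ}; the statement is cited from \cite{YWLQ15}. The paper does, however, prove the structurally identical \Cref{thm:MY} for the PRSG-based scheme, and your argument is essentially that proof transposed to the PRG setting: after tracing out $\regR$, the reduced state on $\regC$ is the image distribution of the generator for $b=0$ and the maximally mixed state for $b=1$, so hiding is exactly generator pseudorandomness; and binding follows from bounding the fidelity of the two reduced states via a rank count and Cauchy--Schwarz (the paper's version of that step cites \cite{AC:Yan22} for the fidelity-to-binding implication, whereas you derive it directly from Uhlmann's theorem, which is equivalent). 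The only thing to tidy up is a small fidelity-convention wobble: the identity $F(\rho_0,\rho_1)=2^{-3n/2}\Tr(\sqrt{\rho_0})$ is for the non-squared fidelity $\|\sqrt{\rho_0}\sqrt{\rho_1}\|_1$, under which Uhlmann gives the bound $\le F$ rather than $\le\sqrt{F}$; since $F\le 1$ the looser $\le\sqrt{F}$ still holds, and either way $F\le 2^{-n}$ yields a negligible bound, so nothing breaks.
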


By applying our conversion to $\{Q_{\ywlq,0},Q_{\ywlq,1}\}$, we obtain the following scheme $\{Q'_{\ywlq,0},Q'_{\ywlq,1}\}$.
  \begin{align*}
    &Q'_{\ywlq,b}\ket{0}_{\regC',\regR'}:=\frac{1}{\sqrt{2^{n+1}}}\sum_{x\in \bit^n}\ket{0,x, 0^{2n}}_{\regC'}\ket{G(x)}_{\regR'}+(-1)^b\frac{1}{\sqrt{2^{3n+1}}}\sum_{y\in \bit^{3n}}\ket{1,y}_{\regC'}\ket{y}_{\regR'}.
\end{align*}

By \Cref{thm:conversion,thm:YWLQ}, we obtain the following theorem.
\begin{theorem}
If $G$ is a PRG, $\{Q'_{\ywlq,0},Q'_{\ywlq,1}\}$ is statistically hiding and computationally binding. 
\end{theorem}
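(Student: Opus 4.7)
The plan is to observe that the theorem follows immediately by combining \Cref{thm:YWLQ} with \Cref{thm:conversion}, so the whole proof reduces to a mechanical verification that the scheme $\{Q'_{\ywlq,0},Q'_{\ywlq,1}\}$ displayed above is literally the dual of $\{Q_{\ywlq,0},Q_{\ywlq,1}\}$ in the sense of \Cref{thm:conversion}.

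First, I would unpack the definition of the dual scheme from \Cref{thm:conversion}: it acts as
\[
Q'_b\ket{0}_{\regC',\regR'}= \tfrac{1}{\sqrt 2}\big((Q_{\ywlq,0}\ket{0})_{\regC,\regR}\ket{0}_{\regD}+(-1)^b(Q_{\ywlq,1}\ket{0})_{\regC,\regR}\ket{1}_{\regD}\big),
\]
with commitment register $\regC'=(\regD,\regR)$ and reveal register $\regR'=\regC$. Plugging in
$Q_{\ywlq,0}\ket{0}=\frac{1}{\sqrt{2^n}}\sum_{x}\ket{G(x)}_\regC\ket{x,0^{2n}}_\regR$ and
$Q_{\ywlq,1}\ket{0}=\frac{1}{\sqrt{2^{3n}}}\sum_{y}\ket{y}_\regC\ket{y}_\regR$, and then reordering the registers so that $\regD$ is written first inside $\regC'$, one directly obtains the expression in the theorem statement. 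This is a routine substitution; no quantitative bookkeeping is required.

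Next, I would invoke \Cref{thm:YWLQ}, which gives that $\{Q_{\ywlq,0},Q_{\ywlq,1}\}$ is computationally hiding and statistically binding under the assumption that $G$ is a PRG. Now I apply \Cref{thm:conversion}: \Cref{item:hiding_to_binding} with $\mathtt{X}=\text{computationally}$ converts the computational hiding of the base scheme into computational binding of the dual, while \Cref{item:binding_to_hiding} with $\mathtt{Y}=\text{statistically}$ converts the statistical binding of the base scheme into statistical hiding of the dual. Combining the two conclusions yields that $\{Q'_{\ywlq,0},Q'_{\ywlq,1}\}$ is statistically hiding and computationally binding, as claimed.

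There is essentially no real obstacle in this proof: the only thing to be careful about is getting the register relabeling right (in particular, keeping track of the fact that the reveal register of the dual scheme is the commitment register $\regC$ of the original, so that the single-qubit flag register $\regD$ is indeed part of the dual's commitment register). Once this is confirmed, the theorem is an immediate corollary of the two cited results.
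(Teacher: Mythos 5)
Your proof is correct and takes exactly the paper's approach: the paper also obtains this theorem as an immediate corollary of \Cref{thm:YWLQ} and \Cref{thm:conversion}, with $\{Q'_{\ywlq,0},Q'_{\ywlq,1}\}$ being the dual of $\{Q_{\ywlq,0},Q_{\ywlq,1}\}$ in the sense of \Cref{thm:conversion}. Your explicit verification that the displayed form matches the dual construction (including the register reordering inside $\regC'$) is the right detail to check, and it goes through as you describe.
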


This is the first statistically hiding and computationally binding quantum bit commitment scheme from PRG that makes only a single call to the PRG. 
If we apply existing conversions~\cite{EC:CreLegSal01,AC:Yan22} to $\{Q_{\ywlq,0},Q_{\ywlq,1}\}$ (or other PRG-based schemes), they result in schemes that make $\Omega(\secp^2)$ calls to the PRG.

Note that it is known that PRG exists assuming the existence of one-way functions~\cite{SIAM:HILL99}.\footnote{Though the original security proof in \cite{SIAM:HILL99} only considers classical adversaries, it also works against quantum adversaries as well assuming quantum-secure one-way functions.} 
In the current state of the art, a construction of PRG makes at least $\Omega(\secp^3)$ calls to the base one-way function~\cite{STOC:HaiReiVad10,STOC:VadZhe12}. 
Thus, if we construct a PRG from a one-way function and count the number of calls to the one-way function,  $\{Q'_{\ywlq,0},Q'_{\ywlq,1}\}$ makes $\Omega(\secp^3)$ calls to the one-way function. We observe that this is asymptotically the same number as that of Koshiba and Odaira~\cite{KO11}. However, it does not seem possible to instantiate the scheme of~\cite{KO11} with a single call to a PRG instead of $\Omega(\secp^3)$ calls to a one-way function. Also, our security analysis is much simpler than theirs once we establish \Cref{thm:conversion}.

\if0
\smallskip\noindent\textbf{Construction from pseudorandom state generators.}
Morimae and Yamakawa \cite[footnote 12 in the full version]{C:MorYam22} mentioned that simply replacing PRGs with single-copy secure PRSGs in $\{Q_{\ywlq,0},Q_{\ywlq,1}\}$ yields a computationally hiding and statistically binding scheme. By applying our conversion to the scheme, we obtain the following theorem.

\begin{theorem}
If there is a single-copy secure PRSGs, then there is statistically hiding and computationally binding canonical bit commitment scheme that makes only single call to the underlying PRSG. 
\end{theorem}
Previously, the only known way to obtain statistically hiding commitments from PRSGs is to apply existing conversions~\cite{EC:CreLegSal01,AC:Yan22} to the scheme of \cite{C:MorYam22} or other PRSG-based schemes~\cite{C:AnaQiaYue22}, which result in a scheme that make $\Omega(\secp^2)$ calls to the PRSG.

See \Cref{sec:construction_PRS} for the details.
\fi 

\subsection{Construction from Pseudorandom State Generators}\label{sec:construction_PRS}
Ananth, Qian, Yuen \cite{C:AnaQiaYue22}, and Morimae and Yamakawa \cite{C:MorYam22} concurrently showed that a primitive called pseudorandom state generators (PRSGs)~\cite{C:JiLiuSon18} can be used to construct computationally hiding and statistically binding quantum bit commitments. 
Especially, Morimae and Yamakawa \cite[footnote 12]{C:MorYam22} mentioned that simply replacing PRGs with single-copy secure PRSGs in $\{Q_{\ywlq,0},Q_{\ywlq,1}\}$ yields a computationally hiding and statistically binding scheme.

Let $\StateGen$ be a single-copy-secure PRSG that, on input $k\in\bit^n$,
outputs an $m$-qubit state $|\phi_k\rangle$ where $m=3n$. 
Then, Morimae and Yamakawa's commitment scheme $\{Q_{\my,0},Q_{\my,1}\}$ is described as follows:

  \begin{align*}
    &Q_{\my,0}\ket{0}_{\regC,\regR}:=\frac{1}{\sqrt{2^n}}\sum_{k\in \bit^n}\ket{\phi_k}_{\regC}\ket{k, 0^{2n}}_{\regR}\\
    &Q_{\my,1}\ket{0}_{\regC,\regR}:=\frac{1}{\sqrt{2^{3n}}}\sum_{r\in \bit^{3n}}\ket{r}_{\regC}\ket{r}_{\regR}.
\end{align*}

\begin{theorem}\label{thm:MY}
If $\StateGen$ is single-copy-secure, then $\{Q_{\my,0},Q_{\my,1}\}$ is computationally hiding and statistically binding.
\end{theorem}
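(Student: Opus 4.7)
The plan is to prove the two properties separately; both rely on the identity
\begin{equation*}
\Exp_{\ket{\psi} \sim \mu_m}\bigl[\ket{\psi}\bra{\psi}\bigr] = I/2^m,
\end{equation*}
which says that a single Haar-random $m$-qubit pure state equals the maximally mixed state in expectation. First I would compute the reduced commitment-register states as
\begin{align*}
\rho_0 &:= \Tr_{\regR}\bigl(Q_{\my,0}\ket{0}\bra{0}Q_{\my,0}^\dagger\bigr) = \frac{1}{2^n}\sum_{k\in\bit^n} \ket{\phi_k}\bra{\phi_k}, \\
\rho_1 &:= \Tr_{\regR}\bigl(Q_{\my,1}\ket{0}\bra{0}Q_{\my,1}^\dagger\bigr) = I/2^{3n}.
\end{align*}

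For computational hiding, a distinguisher between $\rho_0$ and $\rho_1$ is, by a straightforward averaging argument over the choice of $k$, also a distinguisher between one copy of $\ket{\phi_k}$ for uniform $k\gets\bit^n$ and a single Haar-random $m$-qubit state (using the identity above to rewrite $\rho_1$ as the Haar average of $\ket{\psi}\bra{\psi}$). This directly contradicts single-copy security of $\StateGen$.

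For statistical binding, I would invoke Uhlmann's theorem to bound, for any (even unbounded-time) unitary $U$ on $\regR,\regZ$ and any polynomial-size state $\ket{\tau}_\regZ$,
\begin{equation*}
\left\|\bigl(Q_{\my,1}\ket{0}\bra{0}Q_{\my,1}^\dagger\bigr)_{\regC,\regR}\bigl(I_\regC \otimes U_{\regR,\regZ}\bigr)\bigl((Q_{\my,0}\ket{0})_{\regC,\regR}\ket{\tau}_\regZ\bigr)\right\| \leq \sqrt{F(\rho_0, \rho_1)}.
\end{equation*}
Since $\rho_0$ has rank at most $2^n$ while $\rho_1 = I/2^{3n}$ is full rank, a short Cauchy--Schwarz calculation on the eigenvalues of $\rho_0$ will give
\begin{equation*}
F(\rho_0, I/2^{3n}) = 2^{-3n/2}\,\Tr\bigl(\sqrt{\rho_0}\bigr) \leq 2^{-3n/2}\sqrt{2^n} = 2^{-n},
\end{equation*}
yielding the required negligible binding bound $2^{-n/2}$.

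The main conceptual step is the Uhlmann-style reduction of the binding definition to a fidelity bound between the commitment-register states; everything after that is elementary linear algebra exploiting the rank gap. I do not expect a serious obstacle, since single-copy security is exactly what is needed on the hiding side, and the huge dimensional gap between the $n$-qubit seed space and the $3n$-qubit state space directly witnesses statistical binding, exactly as in the PRG-based scheme $\{Q_{\ywlq,0},Q_{\ywlq,1}\}$ of Yan et al.
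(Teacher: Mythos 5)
Your proof follows essentially the same route as the paper: hiding reduces directly to single-copy PRSG security via the identity $\rho_1 = I/2^{3n} = \Exp_{\ket{\psi}\sim\mu_m}[\ket{\psi}\bra{\psi}]$, and binding is established by bounding the fidelity $F(\rho_0,\rho_1)$ using the fact that $\rho_0$ has rank at most $2^n$ (the paper carries out precisely this Cauchy--Schwarz calculation and then cites Yan's fidelity characterization of statistical binding, which is the Uhlmann-style step you make explicit). One small wrinkle: you mix the two fidelity conventions---your formula $F(\rho_0,I/d)=d^{-1/2}\Tr\sqrt{\rho_0}$ is the unsquared convention, under which Uhlmann gives the binding quantity directly as $F\leq 2^{-n}$ rather than $\sqrt{F}=2^{-n/2}$; under the paper's squared convention one instead gets $F\leq 2^{-2n}$ and $\sqrt{F}\leq 2^{-n}$. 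Either way the bound is negligible, so the argument is sound.
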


Since the above is not the main construction of \cite{C:MorYam22}, they did not give a security proof. Thus, we give a security proof for completeness.

\begin{proof}[Proof of \Cref{thm:MY}] 
We let $\ket{\psi_b}_{\regC,\regR}\seteq Q_{\my,b}\ket{0}_{\regC,\regR}$. 

\smallskip\noindent\textbf{Computational hiding.}
Note that $\Tr_{\regR}\left(\ket{\psi_1}\bra{\psi_1}_{\regC,\regR}\right)$ is a maximally mixed state, which is a Haar random state when given a single copy. 
On the other hand, we have $\Tr_{\regR}\left(\ket{\psi_0}\bra{\psi_0}_{\regC,\regR}\right)=\frac{1}{2^n}\sum_{k\in\bit^n}\ket{\phi_k}\bra{\phi_k}$. 
Thus, the computational hiding property immediately follows from the single-copy security of $\StateGen$. 

\smallskip\noindent\textbf{Statistical binding.}
The proof is similar to the proof of binding in \cite{C:MorYam22}. Let $F(\rho,\sigma)$ be the fidelity between $\rho$ and $\sigma$.  Then, we have

\begin{align*}
   &F\Big(\mbox{Tr}_{\regR}(|\psi_0\rangle\langle\psi_0|_{\regC,\regR}),\mbox{Tr}_{\regR}(|\psi_1\rangle\langle\psi_1|_{\regC,\regR})\Big) \\
   &=F\Big(
\frac{1}{2^n}\sum_k|\phi_k\rangle\langle\phi_k|, 
\frac{I^{\otimes m}}{2^m}
\Big)\\
&=\Big\|\sum_{i=1}^\xi\sqrt{\lambda_i}\frac{1}{\sqrt{2^m}}|\lambda_i\rangle\langle\lambda_i|\Big\|_1^2\\
&=\left(\sum_{i=1}^\xi\sqrt{\lambda_i}\frac{1}{\sqrt{2^m}}\right)^2\\
&\leq \left(\sum_{i=1}^\xi\lambda_i\right)\left(\sum_{i=1}^\xi\frac{1}{2^m}\right)\\
&\leq 2^{-2n},
\end{align*}
where in the second equality,
$\sum_{i=1}^\xi\lambda_i|\lambda_i\rangle\langle\lambda_i|$
is the diagonalization of $\frac{1}{2^n}\sum_k|\phi_k\rangle\langle\phi_k|$, 
in the first inequality, we have used 
Cauchy–Schwarz inequality,  and
in the final inequality, we have used $\xi\le2^n$ and $m=3n$. 
This means that $\{Q_{\my,0},Q_{\my,1}\}$ is statistically binding~\cite{AC:Yan22}. 
\end{proof}

By applying our conversion to $\{Q_{\my,0},Q_{\my,1}\}$, we obtain the following scheme $\{Q'_{\my,0},Q'_{\my,1}\}$.\footnote{We could also apply our conversion to the main construction of~\cite{C:MorYam22} to obtain a similar scheme.}
  \begin{align*}
    &Q'_{\my,b}\ket{0}_{\regC',\regR'}:=\frac{1}{\sqrt{2^{n+1}}}\sum_{k\in \bit^n}\ket{0,k, 0^{2n}}_{\regC'}\ket{\phi_k}_{\regR'}+(-1)^b\frac{1}{\sqrt{2^{3n+1}}}\sum_{r\in \bit^{3n}}\ket{1,r}_{\regC'}\ket{r}_{\regR'}.
\end{align*}

By \Cref{thm:conversion,thm:MY}, we obtain the following theorem.
\begin{theorem}
If  $\StateGen$ is single-copy-secure, then $\{Q'_{\my,0},Q'_{\my,1}\}$ is statistically hiding and computationally binding. 
\end{theorem}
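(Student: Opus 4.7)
The plan is to obtain this theorem as an immediate corollary by combining the two ingredients already established in the excerpt: \Cref{thm:MY}, which gives computational hiding and statistical binding for $\{Q_{\my,0},Q_{\my,1}\}$ under the single-copy security of $\StateGen$, and \Cref{thm:conversion}, which swaps the flavors of hiding and binding when passing to the dual scheme. Concretely, applying \Cref{item:hiding_to_binding} of \Cref{thm:conversion} with $\mathtt{X}=``\text{computationally}"$ yields that the dual is computationally binding, and applying \Cref{item:binding_to_hiding} with $\mathtt{Y}=``\text{statistically}"$ yields that the dual is statistically hiding. So the entire content of the proof is to recognize the displayed $\{Q'_{\my,0},Q'_{\my,1}\}$ as the dual of $\{Q_{\my,0},Q_{\my,1}\}$ in the sense of \Cref{thm:conversion}.

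To do so, I would carry out one explicit verification. Starting from the general formula
\[
Q_b'\seteq \left(Q_0\otimes \ket{0}\bra{0}_{\regD}+Q_1\otimes \ket{1}\bra{1}_{\regD}\right)\left(I_{\regR,\regC}\otimes Z^b_{\regD}H_{\regD}\right),
\]
I would substitute $Q_0=Q_{\my,0}$, $Q_1=Q_{\my,1}$, apply $H_\regD$ to $\ket{0}_\regD$ followed by $Z^b_\regD$, and then apply the controlled block to obtain
\[
Q'_{\my,b}\ket{0}_{\regC,\regR,\regD}=\tfrac{1}{\sqrt{2}}\big((Q_{\my,0}\ket{0})_{\regC,\regR}\ket{0}_\regD+(-1)^b(Q_{\my,1}\ket{0})_{\regC,\regR}\ket{1}_\regD\big).
\]
Expanding the two $Q_{\my,b}\ket{0}$ states from their definitions, and renaming registers via $\regC'=(\regD,\regR)$, $\regR'=\regC$ (the dual scheme's role assignment prescribed in \Cref{thm:conversion}), I would match the resulting expression term by term with the displayed
\[
Q'_{\my,b}\ket{0}_{\regC',\regR'}=\tfrac{1}{\sqrt{2^{n+1}}}\sum_{k\in \bit^n}\ket{0,k,0^{2n}}_{\regC'}\ket{\phi_k}_{\regR'}+(-1)^b\tfrac{1}{\sqrt{2^{3n+1}}}\sum_{r\in \bit^{3n}}\ket{1,r}_{\regC'}\ket{r}_{\regR'}.
\]

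There is essentially no obstacle; the only thing to be careful about is bookkeeping of which sub-register plays the role of commitment versus reveal in the dual scheme, since \Cref{thm:conversion} swaps their roles and augments the commitment side with the one-qubit register $\regD$. Once that identification is made, the hypothesis of \Cref{thm:conversion} is satisfied by \Cref{thm:MY}, and the conclusion follows with no further work, thereby upgrading a computationally hiding and statistically binding commitment from PRSGs into a statistically hiding and computationally binding one while still calling the underlying $\StateGen$ only once, which is the main point of this instantiation.
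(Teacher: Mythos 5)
Your proposal is correct and matches the paper's proof exactly: the paper obtains this theorem as an immediate corollary of \Cref{thm:MY} together with \Cref{thm:conversion}, having already observed that the displayed $\{Q'_{\my,0},Q'_{\my,1}\}$ is precisely what the conversion produces from $\{Q_{\my,0},Q_{\my,1}\}$. The explicit term-by-term verification you describe is implicit in the paper's statement ``By applying our conversion to $\{Q_{\my,0},Q_{\my,1}\}$, we obtain the following scheme,'' and the only minor bookkeeping remark is that the paper formally sets $\regC'=(\regR,\regD)$ whereas the displayed kets order $\regD$ first — an inconsequential rearrangement within the commitment register.
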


This is the first statistically hiding and computationally binding quantum bit commitment scheme from PRSGs that makes only a single call to the PRSG. 
If we apply existing conversions~\cite{EC:CreLegSal01,AC:Yan22} to $\{Q_{\my,0},Q_{\my,1}\}$ (or other PRSG-based schemes~\cite{C:AnaQiaYue22}), they result in a schemes that make $\Omega(\secp^2)$ calls to the PRSG.

\subsection{Construction from Injective One-Way Functions}\label{sec:new_inj}
In this section, we show simple constructions of commitments based on any injective one-way functions. 

\smallskip
\noindent\textbf{Perfectly hiding and computationally binding commitment.}
We first construct a perfectly hiding and computationally binding quantum bit commitment scheme from injective one-way function. 
We note that such a commitment is already known from any one-way \emph{permutations} in \cite{EC:DumMaySal00}. (See \Cref{sec:construction_DMS} for details.) Our construction is more general since every  permutation is also injective but the converse is not true.

Let $f:\bit^n \rightarrow \bit^m$ be an injective one-way function. 
Then, we define a canonical quantum bit commitment scheme $\{Q_{\inj,0},Q_{\inj,1}\}$ as follows:

\begin{align*}
    &Q_{\inj,0}\ket{0}_{\regC,\regR}:=\frac{1}{\sqrt{2^n}}\sum_{x\in \bit^n}\ket{x}_{\regC}\ket{f(x)}_{\regR}\\
    &Q_{\inj,1}\ket{0}_{\regC,\regR}:=\frac{1}{\sqrt{2^n}}\sum_{x\in \bit^n}\ket{x}_{\regC}\ket{x, 0^{m-n}}_{\regR}.
\end{align*}
\begin{theorem}\label{thm:security_inj}
If $f$ is an injective one-way function, 
$\{Q_{\inj,0},Q_{\inj,1}\}$ is perfectly hiding and computationally binding.
\end{theorem}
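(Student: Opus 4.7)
The plan is to prove perfect hiding by a direct partial-trace computation, and computational binding by a reduction to the one-wayness of $f$.

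For perfect hiding, the injectivity of $f$ makes $\{\ket{f(x)}\}_{x\in\bit^n}$ an orthonormal family, so tracing out $\regR$ in $Q_{\inj,0}\ket{0}\bra{0}Q_{\inj,0}^\dagger$ kills all off-diagonal terms and yields the maximally mixed state $I_{\regC}/2^n$. The analogous computation for $Q_{\inj,1}\ket{0}\bra{0}Q_{\inj,1}^\dagger$ also yields $I_{\regC}/2^n$, using that $\{\ket{x,0^{m-n}}\}_x$ is orthonormal. Hence the two reduced states on $\regC$ are identical, which is perfect hiding in the sense of \Cref{def:hiding}.

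For computational binding, suppose towards contradiction that a polynomial-time computable unitary $U_{\regR,\regZ}$ together with a polynomial-size advice $\ket{\tau}_{\regZ}$ achieves non-negligible binding advantage $\eps$. Expanding the explicit forms of $Q_{\inj,0}\ket{0}$ and $Q_{\inj,1}\ket{0}$ and peeling off the rank-$1$ projector onto $Q_{\inj,1}\ket{0}_{\regC,\regR}$, I would rewrite the binding norm as $\eps=\bigl\|\tfrac{1}{2^n}\sum_{x\in\bit^n}\ket{\eta_x}_{\regZ}\bigr\|$, where $\ket{\eta_x}_{\regZ}\seteq(\bra{x,0^{m-n}}_{\regR}\otimes I_{\regZ})\,U_{\regR,\regZ}(\ket{f(x)}_{\regR}\ket{\tau}_{\regZ})$. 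The crucial observation is that $\|\ket{\eta_x}\|^2$ is exactly the probability that, on input $\ket{f(x)}_{\regR}\ket{\tau}_{\regZ}$, applying $U_{\regR,\regZ}$ and measuring $\regR$ in the computational basis yields the string $(x,0^{m-n})$.

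This motivates the inverter $\cB$ that, on input $y$, prepares $\ket{y}_{\regR}\ket{\tau}_{\regZ}$, applies $U_{\regR,\regZ}$, measures $\regR$ in the computational basis to obtain $(x',z')$, and outputs $x'$. By the observation above, for a uniform $x^*\gets\bit^n$, the probability that $\cB(f(x^*))$ returns $x^*$ itself (and hence a preimage of $f(x^*)$) is at least $\tfrac{1}{2^n}\sum_x\|\ket{\eta_x}\|^2$. Applying Cauchy--Schwarz to the expression for $\eps$ gives $\tfrac{1}{2^n}\sum_x\|\ket{\eta_x}\|^2\ge\eps^2$, which is non-negligible, contradicting the one-wayness of $f$. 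The main obstacle I anticipate is the bookkeeping in this reduction: one must pass from the single norm of a $1/2^n$-weighted sum of vectors to the average of the squared norms, and this is exactly where Cauchy--Schwarz yields the quadratic loss $\eps\mapsto\eps^2$. This loss is harmless since a non-negligible function squared is still non-negligible. Note also that injectivity of $f$ is only needed for the hiding step; the binding reduction uses nothing more than the one-wayness of $f$ together with the orthogonality of $\{\ket{x,0^{m-n}}\}_x$.
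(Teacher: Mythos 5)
Your proof is correct and follows the paper's argument essentially step for step: perfect hiding via the same partial-trace computation, and computational binding by rewriting the binding norm as $\bigl\|\tfrac{1}{2^n}\sum_{x}\ket{\eta_x}_{\regZ}\bigr\|$ and constructing the same inverter. The only cosmetic differences are that your inverter outputs $x'$ unconditionally while the paper's outputs $\bot$ on failure, and that you fold the paper's triangle-inequality step into a single Cauchy--Schwarz application; both yield the same quadratic loss and the same contradiction with one-wayness.
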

\begin{proof}~

\smallskip
\noindent\textbf{Perfect hiding.}
By the injectivity of $f$, if we trace out $\regR$, then the reduced state in $\regC$ is $\sum_{x\in \bit^n}\ket{x}\bra{x}$ for both $b=0,1$. This implies perfect hiding. 

\smallskip
\noindent\textbf{Computational binding.}
Suppose that the $\{Q_{\inj,0},Q_{\inj,1}\}$ is not computationally binding. Then there exists a polynomial-time computable unitary $U$ over $(\regR,\regZ)$ and an auxiliary state $\ket{\tau}_{\regZ}$ such that 
\begin{align*}
    \left\|(Q_{\inj,1}\ket{0}\bra{0}Q_{\inj,1}^\dagger)_{\regC,\regR}(I_{\regC}\otimes U_{\regR,\regZ})((Q_{\inj,0}\ket{0})_{\regC,\regR}\ket{\tau}_{\regZ})\right\|
\end{align*}
is non-negligible. In particular, its square is also non-negligible. 
It holds that 
\begin{align}
   & \left\|(Q_{\inj,1}\ket{0}\bra{0}Q_{\inj,1}^\dagger)_{\regC,\regR}(I_{\regC}\otimes U_{\regR,\regZ})((Q_{\inj,0}\ket{0})_{\regC,\regR}\ket{\tau}_{\regZ})\right\|^2 \notag\\
  =& \frac{1}{2^{2n}}\left\|\sum_{x\in \bit^n}\bra{x,0^{m-n}}_{\regR}U_{\regR,\regZ}\ket{f(x)}_{\regR}\ket{\tau}_{\regZ}\right\|^2 \notag\\
  \leq & \frac{1}{2^{2n}}\left(\sum_{x\in \bit^n}\left\|\bra{x,0^{m-n}}_{\regR}U_{\regR,\regZ}\ket{f(x)}_{\regR}\ket{\tau}_{\regZ}\right\|\right)^2 \notag\\
  \leq & \frac{1}{2^{n}}\sum_{x\in \bit^n}\left\|\bra{x,0^{m-n}}_{\regR}U_{\regR,\regZ}\ket{f(x)}_{\regR}\ket{\tau}_{\regZ}\right\|^2,
  \label{eq:break_binding}
\end{align}
where the first equality follows from the definition of  $\{Q_{\inj,0},Q_{\inj,1}\}$, the first inequality follows from the triangle inequality, and the second inequality follows from the Cauchy–Schwarz inequality. Thus, the value of \Cref{eq:break_binding} is non-negligible. 

Then, we can construct an adversary $\A$ that breaks the one-wayness of $f$ with advice $\ket{\tau}$ as follows:

\begin{description}
\item[$\A(y;\ket{\tau})$:] Given an instance $y$ and advice $\ket{\tau}$, it generates a state $U\ket{y}_{\regR}\ket{\tau}_{\regZ}$  and measures $\regR$. If the measurement outcome is $(x, 0^{m-n})$ such that $f(x)=y$, it outputs $x$ and otherwise $\bot$.  
\end{description}
We can see that the probability that $\A$ outputs the correct preimage $x$ is the value of \Cref{eq:break_binding}, which is non-negligible.
This contradicts the one-wayness of $f$. Thus, $\{Q_{\inj,0},Q_{\inj,1}\}$ is computationally binding. 
\end{proof}

This is the first \emph{perfectly} hiding quantum bit commitment scheme from injective one-way functions that makes only a single quantum call to the base function.  
Before our work, such a commitment scheme was only known to exist from one-way \emph{permutations}~\cite{EC:DumMaySal00} (which is described in \Cref{sec:construction_DMS}). We remark that Koshiba and Odaira~\cite{KO09,KO11} generalized \cite{EC:DumMaySal00} to make the assumption weaker than the existence of injective one-way functions, but those constructions only achieve \emph{statistical} hiding. 

Alternatively, we can also construct such a commitment scheme by applying our conversion to the (purified version of) construction of computationally hiding and perfectly binding commitment scheme based on Goldreich-Levin theorem~\cite{STOC:GolLev89}. See \Cref{sec:construction_GL} for details.

\smallskip
\noindent\textbf{Computationally hiding and perfectly binding commitment.} 
Next, we apply our conversion to $\{Q_{\inj,0},Q_{\inj,1}\}$ to obtain the following scheme $\{Q'_{\inj,0},Q'_{\inj,1}\}$:

\begin{align*}
    &Q'_{\inj,b}\ket{0}_{\regC',\regR'}:=
    \frac{1}{\sqrt{2^{n+1}}}
    \sum_{x\in \bit^n}
   \left(
    \ket{0}\ket{f(x)}
    +(-1)^b
    \ket{1}\ket{x,0^{m-n}}
    \right)_{\regC'}\otimes
    \ket{x}_{\regR'}.
\end{align*}

By \Cref{thm:conversion,thm:security_inj}, we obtain the following theorem.
\begin{theorem}\label{thm:security_inj_dual}
If $f$ is an injective one-way function, 
$\{Q'_{\inj,0},Q'_{\inj,1}\}$ is computationally hiding and perfectly binding.
\end{theorem}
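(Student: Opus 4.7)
The plan is to obtain \Cref{thm:security_inj_dual} as an immediate corollary of \Cref{thm:conversion} applied to the base scheme $\{Q_{\inj,0}, Q_{\inj,1}\}$, whose security is already established by \Cref{thm:security_inj}. The only real work is to confirm that $\{Q'_{\inj,0}, Q'_{\inj,1}\}$ as defined in the statement is indeed the dual commitment produced by \Cref{thm:conversion} from $\{Q_{\inj,0}, Q_{\inj,1}\}$.

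To do this, I would unfold the definition of the dual scheme from \Cref{thm:conversion}, namely
\[
Q'_b \ket{0}_{\regC', \regR'} = \frac{1}{\sqrt{2}}\bigl((Q_{\inj,0}\ket{0})_{\regC,\regR}\ket{0}_\regD + (-1)^b (Q_{\inj,1}\ket{0})_{\regC,\regR}\ket{1}_\regD\bigr),
\]
with $\regC' = (\regR, \regD)$ and $\regR' = \regC$. Substituting the explicit expressions for $Q_{\inj,0}\ket{0}$ and $Q_{\inj,1}\ket{0}$ and rearranging the registers so that $\regD$ is written first inside $\regC'$ yields exactly
\[
Q'_{\inj,b}\ket{0}_{\regC',\regR'} = \frac{1}{\sqrt{2^{n+1}}}\sum_{x \in \bit^n}\bigl(\ket{0}\ket{f(x)} + (-1)^b \ket{1}\ket{x, 0^{m-n}}\bigr)_{\regC'} \otimes \ket{x}_{\regR'},
\]
matching the scheme in the statement. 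This is a routine unrolling, so I would state it and move on.

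Having identified the dual scheme, I would then invoke \Cref{thm:conversion} directly. By \Cref{thm:security_inj}, $\{Q_{\inj,0}, Q_{\inj,1}\}$ is perfectly hiding and computationally binding. \Cref{item:hiding_to_binding} of \Cref{thm:conversion} with $\mathtt{X}=\text{perfectly}$ then implies that $\{Q'_{\inj,0}, Q'_{\inj,1}\}$ is perfectly binding, while \Cref{item:binding_to_hiding} with $\mathtt{Y}=\text{computationally}$ implies that it is computationally hiding. Combining the two conclusions gives the theorem.

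There is essentially no obstacle: the conversion theorem is designed precisely for this kind of flavor swap, and the computations are cosmetic. The only subtlety worth flagging explicitly in the write-up is the identification of registers in the dual construction (in particular, ensuring that the ancilla qubit $\regD$ introduced by \Cref{thm:conversion} plays the role of the first qubit of $\regC'$ in the statement). Once this bookkeeping is settled, the proof is a one-line application of the earlier results.
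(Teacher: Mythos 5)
Your proposal is correct and follows exactly the paper's own (one-line) argument: the scheme $\{Q'_{\inj,0},Q'_{\inj,1}\}$ is by construction the dual of $\{Q_{\inj,0},Q_{\inj,1}\}$ in the sense of \Cref{thm:conversion}, so the theorem follows immediately from \Cref{thm:conversion} together with \Cref{thm:security_inj}. The register-bookkeeping check you perform to confirm the two expressions coincide is a correct and appropriate sanity check, though the paper leaves it implicit.
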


\smallskip
\noindent\textbf{Comparison with classical construction.}
It is well-known that we can \emph{classically} construct a computationally hiding and perfectly binding non-interactive commitment scheme from injective one-way functions by using Goldreich-Levin theorem~\cite{STOC:GolLev89}. (See \Cref{sec:construction_GL} for more details.) The construction also only makes a single call to the base function. 
Then, one may wonder if it is meaningful to give a \emph{quantum} construction for that. We argue this by remarking the following two points. 

The first is a minor parameter improvement. Our construction has a shorter commitment size than the classical construction (albeit with the apparent disadvantage of the usage of quantum communication). 
Specifically, commitment length of our construction is $m+1$ whereas it is $n+m+1$ in the classical construction. The additional $n$-bit is needed to send the seed for the hardcore bit function in the classical construction.  
We remark that the decommitment length is the same, $n$ for both constructions. Though the improvement is somewhat minor, we believe that it is still worthwhile to show that the quantum communication can reduce the communication complexity of such an important construction of commitments from injective one-way functions.

The second is rather conceptual. We remark that our construction does not make use of any sort of classical hardcore predicates. On the other hand, to our knowledge, the only known way to classically construct a commitment scheme from injective one-way functions (or even one-way permutations) is to rely on some hardcore predicates~\cite{STOC:GolLev89,GRS00,C:HolMauSjo04}. Thus, the source of the pseudorandomness of our construction seems conceptually very different from that for classical constructions. In a nutshell, we interpret the theorem shown by \cite{AAS20} in a completely irrelevant context as a kind of search-to-decision reduction. We believe that this new search-to-decision reduction technique is interesting and will be useful in the future work.

\smallskip
\noindent\textbf{Construction from keyed injective one-way functions.} 
Unfortunately, there is no known candidate of post-quantum injective one-way functions based on standard assumptions.\footnote{Candidate constructions of post-quantum injective one-way functions based on hash functions or block ciphers can be found in \cite[Section 5]{EC:Unruh12}.} On the other hand, there are many candidates of \emph{keyed} injective one-way functions. We remark that our construction can be easily extended to one based on keyed injective one-way functions. 
Let $\{f_k:\bit^n\rightarrow \bit^m\}_{k\in \mathcal{K}}$ be a keyed injective one-way function.
Then, we construct a modified scheme $\{Q_{\injsetup,0},Q_{\injsetup,1}\}$ as follows:

\begin{align*}
    &Q_{\injsetup,0}\ket{0}_{\regC,\regR}:=\frac{1}{\sqrt{2^n|\mathcal{K}|}}\sum_{x\in \bit^n,k\in\mathcal{K}}\ket{x,k}_{\regC}\ket{f_k(x),k}_{\regR}\\
    &Q_{\injsetup,1}\ket{0}_{\regC,\regR}:=\frac{1}{\sqrt{2^n|\mathcal{K}|}}\sum_{x\in \bit^n}\ket{x,k}_{\regC}\ket{x, 0^{m-n},k}_{\regR}
\end{align*}

Then, we can show that $\{Q_{\injsetup,0},Q_{\injsetup,1}\}$ is perfectly hiding and computationally binding similarly to the proof of \Cref{thm:security_inj}.

By applying our conversion, we obtain the following scheme $\{Q'_{\injsetup,0},Q'_{\injsetup,1}\}$.

\begin{align*}
    &Q'_{\injsetup,b}\ket{0}_{\regC',\regR'}:=
    \frac{1}{\sqrt{2^{n+1}|\mathcal{K}|}}
    \sum_{x\in \bit^n,k\in\mathcal{K}}
    \left(
    \ket{0}\ket{f_k(x),k}
    +(-1)^b
    \ket{1}\ket{x,0^{m-n},k}
    \right)_{\regC'}\otimes
    \ket{x,k}_{\regR'}.
\end{align*}

By \Cref{thm:conversion}, $\{Q'_{\injsetup,0},Q'_{\injsetup,1}\}$ is computationally binding and statistically hiding. 

We remark that we can also view it as a quantum-ciphertext PKE (\Cref{def:QPKE}) if we assume that $f_k$ is a trapdoor function. 
That is, we can use $\Tr_{\regR'}\left(Q'_{\injsetup,b}\ket{0}_{\regC',\regR'}\bra{0}_{\regC',\regR'}{Q'}^{\dagger}_{\injsetup,b}\right)$ as an encryption of $b$. We can decrypt it with a trapdoor for $f_k$ by applying a unitary $\ket{x,0^{m-n},k}\mapsto \ket{f_k(x),k}$ on the second register of $\regC'$  controlled on the first register of $\regC'$ (which is efficiently computable with the trapdoor) and then measuring the first register of $\regC'$ in the Hamadard basis. The IND-CPA security directly follows from the computational hiding property of $\{Q'_{\injsetup,0},Q'_{\injsetup,1}\}$. 
This gives a conceptually different way to construct (quantum-ciphertext) PKE from trapdoor functions than that based on hardcore predicates. 


\subsection{Construction from Collapsing Functions}\label{sec:new_collapsing}
In this section, we show simple constructions of commitments based on collapsing functions. Interestingly, the constructions are almost identical to those based on injective one-way functions given in \Cref{sec:new_inj}, but they achieve the other flavors of security than those based on injective one-way functions. 

\smallskip
\noindent\textbf{Computationally hiding and statistically binding commitment.}
We first construct a computationally hiding and statistically binding quantum bit commitment scheme from collapsing functions (\Cref{def:collapsing}). 

Let $\{H_k:\bit^n \rightarrow \bit^m\}_{k\in \mathcal{K}}$ be a family of collapsing functions such that $n\geq m+\secp$.  
Then, we define a canonical quantum bit commitment scheme $\{Q_{\col,0},Q_{\col,1}\}$ as follows:

\begin{align*}
    &Q_{\col,0}\ket{0}_{\regC,\regR}:=\frac{1}{\sqrt{2^n|\mathcal{K}|}}\sum_{x\in \bit^n,k\in \mathcal{K}}\ket{x,k}_{\regC}\ket{H_k(x),0^{n-m},k}_{\regR}\\
    &Q_{\col,1}\ket{0}_{\regC,\regR}:=\frac{1}{\sqrt{2^n|\mathcal{K}|}}\sum_{x\in \bit^n,k\in \mathcal{K}}\ket{x,k}_{\regC}\ket{x,k}_{\regR}.
\end{align*}
\begin{theorem}\label{thm:security_col}
If $\{H_k:\bit^n \rightarrow \bit^m\}_{k\in \mathcal{K}}$ is a family of collapsing functions such that $n\geq m+\secp$, 
$\{Q_{\inj,0},Q_{\inj,1}\}$ is computationally hiding and statistically binding.
\end{theorem}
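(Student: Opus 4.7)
The plan is to handle the two properties separately. Computational hiding will reduce directly to the collapsing property of $\mathcal{H}$, while statistical binding will follow from bounding the fidelity between the two reduced commitment states on $\regC$ using the parameter gap $n \ge m+\secp$, in the same spirit as the binding proof of \Cref{thm:MY}. (Note that the theorem statement should refer to $\{Q_{\col,0},Q_{\col,1}\}$ rather than $\{Q_{\inj,0},Q_{\inj,1}\}$.)

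For computational hiding, I would first write down the two reduced states on $\regC$ after tracing out $\regR$: $\rho_0 := \Tr_{\regR}(Q_{\col,0}\ket{0}\bra{0}Q_{\col,0}^\dagger) = \frac{1}{2^n|\mathcal{K}|}\sum_{k,y}|H_k^{-1}(y)|\,\ket{\psi_{k,y}}\bra{\psi_{k,y}}$, where $\ket{\psi_{k,y}} \propto \sum_{x\in H_k^{-1}(y)}\ket{x,k}$, and $\rho_1 := \Tr_{\regR}(Q_{\col,1}\ket{0}\bra{0}Q_{\col,1}^\dagger) = I/(2^n|\mathcal{K}|)$. Given a hiding adversary $\A$, I would build a collapsing adversary $\B$ that, on input $k$, prepares $\frac{1}{\sqrt{2^n}}\sum_x\ket{x}$ in a register $\regX$, coherently computes $H_k(x)$ into an auxiliary register, measures it to obtain $y$, sends $(y,\regX)$ to the challenger, and after receiving $\regX$ back hands $(\regX,k)$ to $\A$. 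The ``do-nothing'' branch of the collapsing game reproduces the post-measurement branch of $\rho_0$ for this $(k,y)$, while the ``measure'' branch reproduces that of $\rho_1$; since the marginal distribution of $(k,y)$ produced by $\B$ matches the one induced by $\rho_b$, $\B$'s collapsing advantage equals $\A$'s hiding advantage.

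For statistical binding, I would invoke Uhlmann's theorem exactly as in the proof of \Cref{thm:MY} to reduce to showing $F(\rho_0,\rho_1) = \negl(\secp)$, with $F$ the squared fidelity. Since $\rho_1$ is maximally mixed of dimension $d = 2^n|\mathcal{K}|$, we have $F(\rho_0,\rho_1) = (\Tr\sqrt{\rho_0})^2/d$. The key observation is that the states $\ket{\psi_{k,y}}$ are mutually orthogonal across all $(k,y)$ (different $k$ are distinguished by the classical tag, different $y$ by disjoint preimage sets), so the nonzero eigenvalues of $\rho_0$ are exactly $|H_k^{-1}(y)|/d$ for $(k,y)$ with $y \in \mathrm{Image}(H_k)$. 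Applying Cauchy--Schwarz within each $k$ with the bounds $|\mathrm{Image}(H_k)| \le 2^m$ and $\sum_y |H_k^{-1}(y)| = 2^n$ yields $\Tr\sqrt{\rho_0} \le \sqrt{2^m|\mathcal{K}|}$, and hence $F(\rho_0,\rho_1) \le 2^{m-n} \le 2^{-\secp}$, which is negligible.

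The hard part will be carefully verifying that the collapsing reduction faithfully simulates both branches of the hiding game, i.e., that the joint distribution of $(k,y)$ together with the conditional state on $\regX$ returned to $\A$ really agrees with the Schmidt decomposition of $Q_{\col,b}\ket{0}$ along the $\regR$-cut for each $b\in\{0,1\}$. This works out because sampling $k$ uniformly and then measuring $H_k(x)$ on the uniform superposition over $x$ yields $y$ with probability $|H_k^{-1}(y)|/2^n$, which matches the branching probabilities of both $\rho_0$ and $\rho_1$. Once this bookkeeping is settled, the rest is a routine application of \Cref{def:collapsing} for hiding and of the Uhlmann-based correspondence between fidelity and binding advantage, already used for $\{Q_{\my,0},Q_{\my,1}\}$, for binding.
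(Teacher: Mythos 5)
Your hiding proof follows the same route as the paper: the paper's argument (which is terse) is exactly the reduction you spell out, namely that a hiding distinguisher for $\Tr_{\regR}(Q_{\col,0}\ket{0}\bra{0}Q_{\col,0}^\dagger)$ vs.\ $\Tr_{\regR}(Q_{\col,1}\ket{0}\bra{0}Q_{\col,1}^\dagger)$ yields a collapsing adversary by preparing $2^{-n/2}\sum_x\ket{x}\ket{H_k(x)}$, measuring the hash register, and playing the challenger's return state against the distinguisher. Your accounting of the $(k,y)$ distribution and the valid-adversary condition is correct. (You are also right that the theorem statement has a typo and should reference $\{Q_{\col,0},Q_{\col,1}\}$.)

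Your binding proof, however, takes a genuinely different route from the paper's. The paper mirrors the computational-binding argument from the injective case: it shows that any unitary witnessing a non-negligible binding violation can be converted, via the same triangle/Cauchy--Schwarz chain as in \Cref{eq:break_binding}, into an unbounded-time algorithm that, given $(k,H_k(x))$ for uniform $x$, outputs the \emph{exact} preimage $x$ with non-negligible probability, and then observes this is information-theoretically impossible when $n\ge m+\secp$ (the optimal success probability is at most $|\mathrm{Image}(H_k)|/2^n\le 2^{m-n}$). You instead bound the fidelity $F(\rho_0,\rho_1)$ directly: since $\rho_1=I/(2^n|\mathcal{K}|)$ is maximally mixed, $F(\rho_0,\rho_1)=(\Tr\sqrt{\rho_0})^2/(2^n|\mathcal{K}|)$, the $\ket{\psi_{k,y}}$ give the eigendecomposition of $\rho_0$ with eigenvalues $|H_k^{-1}(y)|/(2^n|\mathcal{K}|)$, and Cauchy--Schwarz with $\sum_y|H_k^{-1}(y)|=2^n$ and $|\mathrm{Image}(H_k)|\le 2^m$ gives $F(\rho_0,\rho_1)\le 2^{m-n}\le 2^{-\secp}$; you then invoke the fidelity characterization of statistical binding (as the paper itself does in the proof of \Cref{thm:MY}, citing \cite{AC:Yan22}). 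Both derivations land on the same $2^{m-n}$ quantity, so neither buys extra generality, but your approach is more self-contained and makes the information-theoretic content explicit, whereas the paper's emphasizes the structural parallel with the computational binding proof of $\{Q_{\inj,0},Q_{\inj,1}\}$. Both are correct.
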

\begin{proof}~

\smallskip
\noindent\textbf{Computational hiding.}
We have 
$$\Tr_{\regR}(Q_{\col,0}\ket{0}_{\regC,\regR})=
\frac{1}{|\mathcal{K}|}\sum_{y\in \bit^m,k\in \mathcal{K}}
\frac{|S_{k,y}|}{2^n}
\left(\frac{1}{\sqrt{|S_{k,y}|}}\sum_{x\in S_{k,y}}\ket{x,k}\right)\left(\frac{1}{\sqrt{|S_{k,y}|}}\sum_{x'\in S_{k,y}}\bra{x',k}\right)
$$
where 
$$
S_{k,y}\seteq \{x\in \bit^n: H_k(x)=y\}.
$$

Then, by the collapsing property of $\{H_k\}_{k\in \mathcal{K}}$, we can show that $\Tr_{\regR}(Q_{\col,0}\ket{0}_{\regC,\regR})$ is computationally indistinguishable from 
$$
\frac{1}{2^n|\mathcal{K}|}\sum_{x\in \bit^n,k\in \mathcal{K}}
\ket{x,k}\bra{x,k}.
$$

The above state is exactly the same as $\Tr_{\regR}(Q_{\col,1}\ket{0}_{\regC,\regR})$. Thus, the computational hiding property is proven.

\smallskip
\noindent\textbf{Statistical binding.}
Suppose that the $\{Q_{\col,0},Q_{\col,1}\}$ is not statistically binding. Then by a similar argument to that for the proof of computational binding of $\{Q_{\inj,0},Q_{\inj,1}\}$ in \Cref{sec:new_inj}, we can construct an unbounded-time adversary $\A$ such that  
\begin{align*}
    \Pr[\A(k,H_k(x))=x:k\gets \mathcal{K},x\gets \bit^n]
\end{align*}
is non-negligible. However, this is information-theoretically impossible since $n\geq m+\secp$. Thus, $\{Q_{\col,0},Q_{\col,1}\}$ is statistically binding. 
\end{proof}

This is the first \emph{statistically} binding quantum bit commitment scheme from collapsing functions that makes only a single quantum call to the base function. 
To our knowledge, the only known way to construct statistically binding (classical or quantum) commitments from collapsing functions (or collision-resistant functions in the classical case) is to first construct PRGs regarding collapsing (or collision-resistant) functions as one-way functions and then convert it to commitments by~\cite{JC:Naor91}. This requires super-constant number of calls to the base function since known constructions of PRGs from one-way functions require super-constant number of calls~~\cite{SIAM:HILL99,STOC:HaiReiVad10,STOC:VadZhe12}.   

Note that post-quantum statistically \emph{hiding} commitments from collapsing functions are known~\cite{C:HalMic96,EC:Unruh16}. Thus, by applying our conversion to the purified version of the scheme, we can obtain an alternative construction of statistically binding commitments from collapsing functions. See \Cref{sec:construction_collapsing_HM} for details.

\smallskip
\noindent\textbf{Statistically hiding and computationally binding commitment.} 
Next, we apply our conversion to $\{Q_{\col,0},Q_{\col,1}\}$ to obtain the following scheme $\{Q'_{\col,0},Q'_{\col,1}\}$:

\begin{align*}
    &Q'_{\col,b}\ket{0}_{\regC',\regR'}:=
    \frac{1}{\sqrt{2^{n+1}}|\mathcal{K}|}
    \sum_{x\in \bit^n,k\in\mathcal{K}}
   \left(
    \ket{0}\ket{H_k(x),0^{n-m},k}
    +(-1)^b
    \ket{1}\ket{x,k}
    \right)_{\regC'}\otimes
    \ket{x,k}_{\regR'}.
\end{align*}

By \Cref{thm:conversion,thm:security_col}, we obtain the following theorem.
\begin{theorem}\label{thm:security_col_dual}
If $\{H_k\}_{k\in \mathcal{K}}$ is a family of collapsing functions, 
$\{Q'_{\col,0},Q'_{\col,1}\}$ is statistically hiding and computationally binding.
\end{theorem}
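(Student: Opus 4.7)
The plan is to invoke Theorem~\ref{thm:conversion} essentially as a black box, so the work consists almost entirely of verifying that $\{Q'_{\col,0},Q'_{\col,1}\}$ as displayed really is the dual (in the sense of Theorem~\ref{thm:conversion}) of the base scheme $\{Q_{\col,0},Q_{\col,1}\}$.

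First I would write out the dual commitment prescribed by Theorem~\ref{thm:conversion} applied to $\{Q_{\col,0},Q_{\col,1}\}$. Recall that the dual satisfies
\[
Q'_b \ket{0}_{\regC',\regR'} = \frac{1}{\sqrt{2}}\bigl((Q_{\col,0}\ket{0})_{\regC,\regR}\ket{0}_{\regD} + (-1)^b (Q_{\col,1}\ket{0})_{\regC,\regR}\ket{1}_{\regD}\bigr),
\]
with $\regC' = (\regR,\regD)$ and $\regR' = \regC$. Substituting the explicit forms of $Q_{\col,0}\ket{0}$ and $Q_{\col,1}\ket{0}$ from \Cref{sec:new_collapsing} and relabelling the resulting registers to match the definition of $\regC'$ and $\regR'$, one sees (after pushing the $\ket{x,k}$ components that belong to $\regC$ into $\regR'$, and collecting the $\ket{H_k(x),0^{n-m},k}$ and $\ket{x,k}$ components originally in $\regR$ into $\regC'$, tagged by the $\regD$ qubit) that the resulting state is exactly the one displayed for $Q'_{\col,b}\ket{0}_{\regC',\regR'}$. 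This matching of registers is the only non-mechanical step in the proof.

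Once the dual identification is established, the two conclusions follow immediately:
\begin{enumerate}
\item By \Cref{thm:security_col}, $\{Q_{\col,0},Q_{\col,1}\}$ is computationally hiding. Applying \Cref{item:hiding_to_binding} of \Cref{thm:conversion} with $\mathtt{X}=\text{computationally}$ yields that $\{Q'_{\col,0},Q'_{\col,1}\}$ is computationally binding.
\item By \Cref{thm:security_col}, $\{Q_{\col,0},Q_{\col,1}\}$ is statistically binding. Applying \Cref{item:binding_to_hiding} of \Cref{thm:conversion} with $\mathtt{Y}=\text{statistically}$ yields that $\{Q'_{\col,0},Q'_{\col,1}\}$ is statistically hiding.
\end{enumerate}

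I do not foresee any genuine obstacle: the only point where care is needed is the bookkeeping of the registers $\regC,\regR,\regD$ when checking that the written-down $\{Q'_{\col,0},Q'_{\col,1}\}$ coincides with the dual of $\{Q_{\col,0},Q_{\col,1}\}$. Everything else is a direct citation of \Cref{thm:conversion,thm:security_col}, so the proof should be very short.
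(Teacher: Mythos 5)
Your proposal is correct and matches the paper's argument exactly: the paper obtains \Cref{thm:security_col_dual} by a one-line citation of \Cref{thm:conversion} together with \Cref{thm:security_col}, with the register bookkeeping you describe (identifying the displayed $Q'_{\col,b}$ with the generic dual $\tfrac{1}{\sqrt2}(Q_{\col,0}\ket0\ket0_{\regD}+(-1)^b Q_{\col,1}\ket0\ket1_{\regD})$ under $\regC'=(\regR,\regD)$, $\regR'=\regC$) being the only content. Your verification of the dual form is a reasonable step to spell out even though the paper takes it as read.
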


As already mentioned, statistically hiding and computationally binding commitments from collapsing functions are known even without using quantum communications~\cite{C:HalMic96,EC:Unruh16}. The above theorem gives an alternative construction for such commitments albeit with quantum communications.

\ifnum\anonymous=1
\else
{\bf Acknowledgements.}
MH is supported by
a KIAS Individual Grant QP089801.
TM is supported by
JST Moonshot R\verb|&|D JPMJMS2061-5-1-1, 
JST FOREST, 
MEXT QLEAP, 
the Grant-in-Aid for Scientific Research (B) No.JP19H04066, 
the Grant-in Aid for Transformative Research Areas (A) 21H05183,
and 
the Grant-in-Aid for Scientific Research (A) No.22H00522.
\fi

\bibliographystyle{alpha} 
\bibliography{abbrev3,crypto,reference}

\appendix
\ifnum\llncs=0
\section{Proof of \Cref{lem:Gap_CF_and_CH}}\label{sec:proof_Gap_CF_and_CH}
\else
\subsection{Proof of \Cref{lem:Gap_CF_and_CH}}\label{sec:proof_Gap_CF_and_CH}
\fi
We give a proof of \Cref{lem:Gap_CF_and_CH}. Before giving the proof, we clarify definitions of terms that appear in the statement of the lemma. First, we define (infinitely-often) uniform conversion hardness for group actions.
\begin{definition}[(Infinitely-often) uniform conversion hardness]\label{def:conv_hard_uniform}
We say that an STF $(\setup,\eval,\swap)$ is uniform conversion hard if for any uniform QPT adversary $\A$, we have 
\begin{align*}
    \Pr[f_1(x_1)=y:(\pp,\td)\gets \setup(1^\secp),x_0\gets \calX, y\seteq f_0(x_0),  x_1\gets \A(\pp,\ket{f_0^{-1}(y)})]=\negl(\secp).
\end{align*}
We say that it is infinitely-often uniform conversion hard if the above holds for infinitely many security parameters $\secp\in \mathbb{N}$. 
\end{definition}

Next, we define
(infinitely-often) one-shot signatures. We focus on the case of single-bit messages for simplicity. The message space can be extended to multiple bits by a simple parallel repetition as shown in \cite{STOC:AGKZ20}.

\begin{definition}[One-shot signatures]\label{def:OSS}
A one-shot signature scheme consists of algorithms $(\setup,\keygen,\sign,\vrfy)$.
\begin{description}
\item[$\setup(1^\secp)\rightarrow \crs$:] This is a PPT algorithm that takes the security parameter $1^\secp$ as input, and outputs a classical public parameter $\pp$.
\item[$\keygen(\pp)\rightarrow (\vk,\qsk)$:] This is a QPT algorithm that takes a public parameter $\pp$ as input, and outputs a classical verification key $\vk$ and a quantum signing key $\qsk$.
\item[$\sign(\pp,\qsk,b)\rightarrow \sigma$:] This is a QPT algorithm that takes a public parameter $\pp$, a signing key $\qsk$ and a message $b\in \bit$ as input, and outputs a classical signature $\sigma$. 
\item[$\vrfy(\pp,\vk,b,\sigma)\rightarrow \top/\bot$:] This is a PPT algorithm that takes a public parameter $\pp$, a verification key $\vk$, a message $b$, and a signature $\sigma$ as input, and outputs the decision $\top$ or $\bot$.
\end{description}
We require a one-shot signature scheme to satisfy the following properties.

\smallskip
\noindent\textbf{Correctness.}
For any $b\in \bit$, we have 
\begin{align*}
    \Pr\left[\vrfy(\pp,\vk,b,\sigma)\rightarrow \top 
    :
    \pp\gets \setup(1^\secp),
    (\pk,\qsk)\gets\keygen(\pp),
    \sigma\gets \sign(\pp,\qsk,b)
    \right]= 1-\negl(\secp).
\end{align*}

\smallskip
\noindent\textbf{(Infinitely-often) Security.}
We say that a one-shot signature scheme is secure 
if for any non-uniform QPT adversary $\A$, we have 
\begin{align*}
    \Pr\left[
    \forall b\in \bit~\vrfy(\pp,\vk,b,\sigma_b)=\top
    :
    \pp\gets \setup(1^\secp),
    (\vk,\sigma_0,\sigma_1)\gets \A(\pp)
    \right]=\negl(\secp).
\end{align*}

We say that it is infinitely-often secure if the above holds for infinitely many security parameters $\secp\in \mathbb{N}$. 
\end{definition}

Then, we give a proof of \Cref{lem:Gap_CF_and_CH}.

\begin{proof}[Proof of \Cref{lem:Gap_CF_and_CH}]
Since the proof is almost identical for both \Cref{item:gap_one,item:gap_two}, we first prove \Cref{item:gap_one} and then explain how to modify it to prove \Cref{item:gap_two}. 

\smallskip
\noindent\textbf{Proof of \Cref{item:gap_one}.}
Let $(\setup,\eval,\swap)$ be an STF that is claw-free but not infinitely-often uniform conversion hard. Then, there is a uniform QPT algorithm $\A$ and a polynomial $\poly$ such that 
\begin{align}
\label{eq:break_eow}
  \Pr[f_1(x_1)=y:(\pp,\td)\gets \setup(1^\secp),x_0\gets \calX, y\seteq f_0(x_0),  x_1\gets \A(\pp,\ket{f_0^{-1}(y)})] > 1/\poly(\secp)
\end{align}
for all $\secp$. 
Then, we construct a one-shot signature scheme as follows. Let $N\seteq \poly(\secp)\cdot \secp$. 

\begin{description}
\item[$\setup(1^\secp)$:] For $i\in [N]$, generate $(\pp_i,\td_i)\gets \setup(1^\secp)$, and output
$\pp\seteq \{\pp_i\}_{i\in [N]}$. We write $f_{i,0}$ and $f_{i,1}$ to mean $f_0^{(\pp_i)}$ and $f_1^{(\pp_i)}$, respectively.  
\item[$\keygen(\pp)$:] 
Given $\pp=\{\pp_i\}_{i\in [N]}$, 
for $i\in [N]$, 
generate 
\begin{align*}
    \ket{\calX}=\frac{1}{|\calX|^{1/2}}\sum_{x\in \calX}\ket{x}, 
\end{align*}
coherently compute $f_{i,0}$ in another register to get 
\begin{align*}
    \ket{\calX}=\frac{1}{|\calX|^{1/2}}\sum_{x\in \calX}\ket{x}\ket{f_{i,0}(x)}, 
\end{align*}
measure the second register to get $y_i$. At this point, the first register collapses to $\ket{f_{i,0}^{-1}(y_i)}$. 
Output $\vk\seteq \{y_i\}_{i\in [N]}$ and $\qsk\seteq \{y_i,\ket{f_{i,0}^{-1}(y_i)}\}_{i\in[N]}$. 

\item[$\sign(\pp,\qsk,b)\rightarrow \sigma$:] 
Given $\pp=\{\pp_i\}_{i\in [N]}$, $\qsk= \{y_i,\ket{f_{i,0}^{-1}(y_i)}\}_{i\in[N]}$, and $b\in \bit$, do the following.
\begin{itemize}
    \item If $b=0$, for $i\in [N]$, measure $\ket{f_{i,0}^{-1}(y_i)}$ to get $x_i\in f_{i,0}^{-1}(y_i)$ and output $\sigma\seteq \{x_i\}_{i\in [N]}$.
    \item If $b=1$, for $i\in [N]$, run $\A(\pp_i,\ket{f_{i,0}^{-1}(y_i)})$ to get $x'_i$. If $f_{i,1}(x'_i) \neq y_i$ for all $i\in [N]$, it aborts. Otherwise, it outputs $\sigma \seteq (i^*,x'_{i^*})$ where $i^*$ is the smallest index such that $f_{i^*,1}(x'_{i^*}) = y_{i^*}$. 
\end{itemize}
\item[$\vrfy(\pp,\vk,b,\sigma)\rightarrow \top/\bot$:] Given $\pp=\{\pp_i\}_{i\in [N]}$, $\vk=\{y_i\}_{i\in [N]}$, $b\in \bit$, and a signature $\sigma$, do the following.
\begin{itemize}
    \item If $b=0$, parse $\sigma=\{x_i\}_{i\in [N]}$, and output $\top$ if $f_{i,0}(x_i) = y_i$ for all $i\in [N]$ and $\bot$ otherwise. 
    \item If $b=1$, parse $\sigma=(i,x'_i)$, and output $\top$ if $f_{i,1}(x'_i) = y_i$ and $\bot$ otherwise. 
\end{itemize}
\end{description}

\noindent\textbf{Correctness.}
It is easy to see that the signing algorithm outputs a valid signature whenever it does not abort. By \Cref{eq:break_eow}, the probability that the signing algorithm abort (when $b=1$) is 
\begin{align*}
    (1-1/\poly)^{N} = \negl(\secp)
\end{align*}
by $N= \poly(\secp)\cdot \secp$.

\smallskip
\noindent\textbf{Security.}
Suppose that there is a non-uniform QPT adversary that breaks the above one-shot signature scheme. The adversary is given $\pp=\{\pp_i\}_{i\in [N]}$ and  
finds 
$\vk=\{y_i\}_{i\in [N]}$,
$\sigma_0=\{x_i\}_{i\in [N]}$,  and
$\sigma_1=(i^*,x'_{i^*})$ such that 
$f_{i,0}(x_i) = y_i$ for all $i\in [N]$ and 
$f_{i^*,1}(x'_{i^*}) = y_{i^*}$ 
with a non-negligible probability. 
In particular, 
when the above happens, $(x_{i^*},x'_{i^*})$ forms a claw, i.e., 
we have $f_{i^*,0}(x_{i^*})=f_{i^*,1}(x'_{i^*})$.  
Thus, by randomly guessing $i^*$ and embedding a problem instance of the claw-freeness into the $i^*$-th coordinate, we can break the claw-freeness of the STF $(\setup,\eval,\swap)$, which is a contradiction. 
Thus, the above one-shot signature scheme is secure.

This completes the proof of \Cref{item:gap_one}.

\smallskip
\noindent\textbf{Proof of \Cref{item:gap_two}.}
The proof is similar to that of \Cref{item:gap_one}. The difference is that since we only assume the STF is not uniform conversion hard, we can only assume that \Cref{eq:break_eow} holds for infinitely many $\secp$ rather than all $\secp$. In this case, the correctness of the above one-shot signature scheme only holds for infinitely many $\secp$. To deal with this, we modify the verification algorithm 
so that it approximates $\A$'s success probability up to additive error $1/(4\poly(\secp))$ (except for a negligible probability) and simply accepts if the approximated success probability is smaller than $1/(2\poly(\secp))$. 
Then, the correctness holds on all $\secp\in \mathbb{N}$ because
\begin{itemize}
    \item if the real success probability is smaller than $1/(4\poly(\secp))$, the estimated success probability is smaller than $1/(2\poly(\secp))$ with overwhelming probability, and thus the verification algorithm accepts with overwhelming probability on these security parameters, and
    \item if the real success probability is larger than $1/(4\poly(\secp))$, the signing algorithm should succeed in generating a valid proof with overwhelming probability and thus the  verification algorithm accepts with overwhelming probability on these security parameters. 
\end{itemize}
For the security, we observe that the estimated success probability is smaller than $1/(2\poly(\secp))$ with a negligible probability when the real success probability is larger than $1/\poly(\secp)$. Thus, for those security parameters, the adversary should find valid signatures in the original scheme. Since there are infinitely many such $\secp$, this is not possible by the claw-freeness of the STF. 
\end{proof}
\section{More Applications of Our Conversion}\label{sec:more_application}

\subsection{Construction from One-Way Permutations 
via Dumais-Mayers-Salvail Commitment}\label{sec:construction_DMS}
Dumais, Mayers and Salvail~\cite{EC:DumMaySal00} constructed a perfectly hiding and computationally binding commitment from one-way permutations as follows.\footnote{We describe it in the canonical form as in \cite[Section 5]{AC:Yan22}.}
Let $f:\bit^n \rightarrow \bit^n$ be a one-way permutation.  
Then, we define a canonical quantum bit commitment scheme $\{Q_{\dms,0},Q_{\dms,1}\}$ as follows:

\begin{align*}
    Q_{\dms,b}\ket{0}_{\regC,\regR}:=\frac{1}{\sqrt{2^n}}\sum_{x\in \bit^n}((H^b)^{\otimes n}\ket{f(x)})_{\regC}\ket{x}_{\regR}.
\end{align*}
\begin{theorem}[{\cite{EC:DumMaySal00}}]\label{thm:DMS}
If $f$ is a one-way permutation, $\{Q_{\dms,0},Q_{\dms,1}\}$ is perfectly hiding and computationally binding.  
\end{theorem}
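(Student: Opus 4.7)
The plan is to handle perfect hiding and computational binding separately, since each needs a rather different argument.

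\noindent\emph{Perfect hiding.} I would just trace out $\regR$ and do a direct calculation. We have
\[
\Tr_\regR\!\left[Q_{\dms,b}\ket{0}\bra{0}Q_{\dms,b}^\dagger\right]
= \frac{1}{2^n}\sum_{x\in\bit^n}(H^b)^{\otimes n}\ket{f(x)}\bra{f(x)}(H^b)^{\otimes n}.
\]
Since $f$ is a permutation, $\sum_x \ket{f(x)}\bra{f(x)} = I$, so the reduced state equals $\frac{1}{2^n}(H^b)^{\otimes n} I (H^b)^{\otimes n} = I/2^n$ for both $b\in\bit$. The reduced states on $\regC$ are identical, so the scheme is perfectly hiding.

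\noindent\emph{Computational binding.} I plan to reduce to the one-wayness of $f$. Assume for contradiction there is a polynomial-time computable $U_{\regR,\regZ}$ and polynomial-size advice $\ket{\tau}_\regZ$ such that
\[
\varepsilon := \left|\bra{0}_{\regC,\regR}Q_{\dms,1}^\dagger (I_\regC \otimes U_{\regR,\regZ}) Q_{\dms,0}\ket{0}_{\regC,\regR}\ket{\tau}_\regZ\right|
\]
is non-negligible (which is equivalent to the binding definition since $Q_{\dms,1}\ket{0}\bra{0}Q_{\dms,1}^\dagger$ is a rank-one projector). The key structural observation is that $Q_{\dms,1}\ket{0} = (H^{\otimes n}_\regC \otimes I_\regR)Q_{\dms,0}\ket{0}$, so the attack $U$ effectively simulates a Hadamard on $\regC$ by acting only on $\regR$ and the ancilla. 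Expanding the inner product in the computational basis of $\regC$ gives
\[
\varepsilon = \frac{1}{2^{3n/2}}\left|\sum_{x,y}(-1)^{f(x)\cdot y}\bra{x}_\regR U_{\regR,\regZ}\ket{f^{-1}(y)}_\regR\ket{\tau}_\regZ\right|,
\]
and I would use this Fourier-type sum to extract a preimage of a random challenge $y^*=f(x^*)$ by a Goldreich--Levin / BB84-style measurement: prepare the superposition $2^{-n/2}\sum_x\ket{x}_\regR$, apply $U$, and then use a measurement that, guided by $\ket{y^*}$, concentrates amplitude on $f^{-1}(y^*)$.

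\noindent\emph{Main obstacle.} The routine part is the hiding calculation; the hard part will be turning the non-negligible amplitude $\varepsilon$ in the ``Hadamard direction'' into a non-negligible probability of outputting a specific preimage. A direct Cauchy--Schwarz on the expansion above loses a factor of $2^n$, so a naive reduction fails. The proposal is to follow the original DMS approach, which uses an entanglement-based analysis (essentially a reduction from a BB84 coin-flipping cheater to an OWP inverter): by purifying the challenge $y^*$ into a maximally entangled register and relating the attack's action to a two-basis measurement, one recovers a preimage with probability $\Omega(\varepsilon^2/\poly(n))$, contradicting one-wayness of $f$.
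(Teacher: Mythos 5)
The paper does not actually prove \Cref{thm:DMS}; it is imported verbatim from \cite{EC:DumMaySal00} (in the canonical-form phrasing of \cite{AC:Yan22}), so there is no internal proof to compare your attempt against. Your perfect-hiding calculation is correct and is the standard one: tracing out $\regR$ and using that $f$ is a bijection gives $\frac{1}{2^n}(H^b)^{\otimes n}\left(\sum_x\ket{f(x)}\bra{f(x)}\right)(H^b)^{\otimes n}=I/2^n$ for both $b$.

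For binding, your setup is fine: since $Q_{\dms,1}\ket{0}\bra{0}Q_{\dms,1}^\dagger$ is rank one, the binding quantity of \Cref{def:binding} equals $\varepsilon=\left\|\left(\bra{0}Q_{\dms,1}^\dagger\otimes I_\regZ\right)(I_\regC\otimes U)\left(Q_{\dms,0}\ket{0}\ket{\tau}\right)\right\|$, and your Fourier expansion of $\varepsilon$ is correct. You are also right that a term-wise Cauchy--Schwarz only yields $\varepsilon\le 1$, because for each $y$ the vectors $\bra{x}_\regR U\ket{f^{-1}(y)}_\regR\ket{\tau}_\regZ$ have total squared norm $1$ over $x$. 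The gap is the step after that. The inverter you sketch --- ``prepare $2^{-n/2}\sum_x\ket{x}_\regR$, apply $U$, then a measurement guided by $\ket{y^*}$'' --- is not a reduction and cannot be made one naively: $U$'s guarantee only concerns its action on the correlated state $Q_{\dms,0}\ket{0}\ket{\tau}$; pinning $\regC$ to $\ket{y^*}$ in that state forces $\regR=\ket{f^{-1}(y^*)}$, which is exactly what the inverter lacks, while feeding the uncorrelated uniform superposition into $U$ gives no guarantee at all. The purification / two-basis argument of \cite{EC:DumMaySal00} (or Yan's honest-binding analysis) is therefore the actual content of the binding proof, and your plan gestures at it without carrying it out. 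Since the paper likewise just cites \cite{EC:DumMaySal00}, deferring to that reference is fine; just be aware that what you have written does not itself close the computational-binding argument.
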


\if0
Koshiba and Odaira~\cite{KO09} showed that the same construction works under a weaker assumption that $f$ is an \emph{almost-onto} one-way functions at the cost of weakening the perfect hiding to statistical hiding. 
\begin{definition}[Almost-onto one-way function]
We say that a one-way function $f:\bit^n\rightarrow \bit^m$ is an almost onto one-way function if the distribution of $f(x)$ for $x\leftarrow \bit^n$ is statistically close to the distribution on $\bit^m$. 
\end{definition}
\begin{theorem}[{\cite{KO09}}]\label{thm:KO}
If $f$ is an almost-onto one-way function, $\{Q_{\dms,0},Q_{\dms,1}\}$ is statistically hiding and computationally binding.  
\end{theorem}
\fi

By applying our conversion to $\{Q_{\dms,0},Q_{\dms,1}\}$, we obtain the following scheme $\{Q'_{\dms,0},Q'_{\dms,1}\}$: 
\begin{align*}
    Q'_{\dms,b}\ket{0}_{\regC',\regR'}:=
    \frac{1}{\sqrt{2^{n+1}}}
    \left(
    \sum_{x\in \bit^n}
    \left(
    \ket{0,x}_{\regC'}
    \ket{f(x)}_{\regR'}
    +(-1)^b
     \ket{1,x}_{\regC'}
     (H^{\otimes n}\ket{f(x)})_{\regR'}
     \right)
    \right).
\end{align*}

By \Cref{thm:DMS} and \Cref{thm:conversion}, we obtain the following theorem.

\begin{theorem}\label{thm:dual_DMS}
If $f$ is a one-way permutation, $\{Q'_{\dms,0},Q'_{\dms,1}\}$ is computationally hiding and perfectly binding. 
\end{theorem}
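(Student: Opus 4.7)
The plan is to invoke Theorem~\ref{thm:conversion} directly on the Dumais--Mayers--Salvail scheme, treating $\{Q_{\dms,0},Q_{\dms,1}\}$ as the base commitment and $\{Q'_{\dms,0},Q'_{\dms,1}\}$ as its dual. By Theorem~\ref{thm:DMS}, the DMS base scheme is perfectly hiding and computationally binding whenever $f$ is a one-way permutation, so I only need to transport these two properties through the conversion.

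First, I would sanity-check that the explicit formula displayed in the excerpt for $Q'_{\dms,b}\ket{0}_{\regC',\regR'}$ really is the canonical dual of $\{Q_{\dms,0},Q_{\dms,1}\}$ in the sense of Theorem~\ref{thm:conversion}. Recall that the dual satisfies
\begin{align*}
Q_b'\ket{0}_{\regC',\regR'} = \frac{1}{\sqrt{2}}\Bigl((Q_{\dms,0}\ket{0})_{\regC,\regR}\ket{0}_{\regD} + (-1)^{b}(Q_{\dms,1}\ket{0})_{\regC,\regR}\ket{1}_{\regD}\Bigr),
\end{align*}
with $\regC'=(\regR,\regD)$ and $\regR'=\regC$. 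Substituting the defining expressions for $Q_{\dms,0}\ket{0}$ and $Q_{\dms,1}\ket{0}$ and regrouping the registers yields exactly the formula given for $Q'_{\dms,b}\ket{0}$ in the excerpt, up to an inessential relabeling of the two sub-registers comprising $\regC'$. So Theorem~\ref{thm:conversion} applies to this exact pair.

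Next, I apply the conversion twice. Item~(\ref{item:hiding_to_binding}) of Theorem~\ref{thm:conversion} with $\mathtt{X}=\text{``perfectly''}$ converts the perfect hiding of the base DMS scheme into perfect binding of the dual $\{Q'_{\dms,0},Q'_{\dms,1}\}$. Item~(\ref{item:binding_to_hiding}) with $\mathtt{Y}=\text{``computationally''}$ converts the computational binding of the base DMS scheme into computational hiding of the dual. Combining these two applications gives exactly the claim that $\{Q'_{\dms,0},Q'_{\dms,1}\}$ is computationally hiding and perfectly binding.

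There is no real obstacle here: the proof reduces to one application of Theorem~\ref{thm:conversion} to Theorem~\ref{thm:DMS}, with only a brief book-keeping check that the explicit dual formula in the excerpt matches the generic template. The main interest of the statement lies not in the proof itself but in the resulting object: a single-call, non-interactive, computationally hiding and perfectly binding quantum bit commitment from any one-way permutation, obtained as the flavor-dual of the classical DMS construction.
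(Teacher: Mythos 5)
Your proposal is correct and matches the paper's argument exactly: the paper obtains \Cref{thm:dual_DMS} as an immediate corollary of \Cref{thm:DMS} and \Cref{thm:conversion}, which is precisely what you do (your "apply the conversion twice" is just invoking items~(\ref{item:hiding_to_binding}) and~(\ref{item:binding_to_hiding}) of the single \Cref{thm:conversion}). The sanity check that the displayed formula for $Q'_{\dms,b}\ket{0}$ agrees with the canonical dual of $\{Q_{\dms,0},Q_{\dms,1}\}$ up to ordering of the sub-registers in $\regC'$ is correct and is the only bookkeeping step required.
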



\if0
Koshiba and Odaira~\cite{KO09} showed that almost-onto one-way functions can be constructed from  approximable-preimage-size one-way functions defined below. \takashi{I found that their proof of this fact not very convincing.
Let me explain it for the regular case, which is simpler. Let $f:\bit^n\rightarrow\bit^\ell$ be a regular one-way function. Then, their construction of an almost-onto one-way function is 
$$
f'(h,x):=h(f(x))
$$
where $h:\bit^{\ell}\rightarrow \bit^m$ is a sufficiently compressing universal function. The almost-onto property easily follows from the leftover hash lemma. However, the one-wayness is unclear even though they claim that follows from a "simple reduction". Given $y=h(f(x))$, the adversary may find $x'$ such that  $y=h(f(x'))$ but  $f(x')\neq f(x)$. So there seems no trivial reduction from the one-wayness of $f'$ to that of $f$.
I think we can simply remove this construction. 
}\minki{I think what you said is right, though I didn't check the detail of~\cite{KO09}.}
\begin{definition}[Approximable-preimage-size functions]
We say that a function $f:\bit^n\rightarrow \bit^m$ has approximable-preimage-size if there is a polynomial-size (classical) circuit that computes $|x\in \bit^n:f(x)=y|$ from $y$.
\end{definition}
\minki{The definition should be slightly changed so that the circuit can compute $\lfloor \log (|x\in \bit^n:f(x)=y|)\rfloor$ or something similar.}\takashi{I don't see why.}
We remark that any \emph{regular} one-way function\footnote{We say that a one-way function is regular if its preimage size is identical for all images.} is also an approximable-preimage-size one-way function since we can use the preimage-size (that is the same for all images) as non-uniform advice.

The construction of almost-onto one-way functions of \cite{KO09} makes only a single call to the base approximable-preimage-size one-way function. 
Thus, we obtain the following corollary of \Cref{thm:dual_DMS}.
\begin{corollary}
For an approximable-preimage-size one-way function $f$, there is a computationally hiding and statistically binding canonical quantum bit commitment scheme that makes a single (quantum) call to $f$.
\end{corollary}

We are not aware of \emph{classical} construction of a commitment scheme from approximable-preimage-size one-way functions that makes a constant number of calls even allowing interaction. To our knowledge, essentially the only known way to construct a commitment scheme from  approximable-preimage-size one-way functions  is to first construct PRGs, and then convert it to commitments by~\cite{JC:Naor91}. Though constructions of PRGs from approximable-preimage-size one-way functions is a little simpler than that from general one-way functions~\cite[Section 5.1]{SIAM:HILL99}, it makes a super-constant number of calls to the base scheme. We remark that this is inherent: Holenstein and Sinha~\cite{FOCS:HolSin12} showed that any black-box construction of PRGs require at least $\Omega(\lambda/\log \lambda)$ calls to the base function even if we start from \emph{regular} one-way functions that are stronger than approximable-preimage-size one-way functions.    

\begin{remark}[Construction from any one-way function.]
A subsequent work by Koshiba and Odaira~\cite{KO11} showed how to weaken the assumption to the existence of \emph{any} one-way functions by proposing a modified construction. By applying our conversion to the scheme, we can obtain a computationally hiding and statistically binding scheme based on any one-way functions. However, the construction makes $\Omega(\secp^3)$ calls to the base one-way function. We remark that there are (classical) constructions of PRGs that make $\Omega(\secp^3)$ calls to the base one-way function~\cite{STOC:HaiReiVad10,STOC:VadZhe12}. 
Thus, there may not be an advantage over known constructions from PRGs~\cite{YWLQ15,AC:Yan22}. 
\end{remark}
\fi

\subsection{Constructions from Injective One-Way Functions via Goldreich-Levin Theorem}\label{sec:construction_GL} 
\takashi{I moved this here because there seems no advantage over the construction in \Cref{sec:new_inj}.}
Goldreich and Levin~\cite{STOC:GolLev89} showed that for any one-way function $f:\bit^n\rightarrow \bit^m$, $r\cdot x$ is computationally indistinguishable from a uniform bit given $(f(x),r)$ for $x,r\sample \bit^n$. Here, $r\cdot x:=\sum_{i\in \bit^n}r_i x_i \mod 2$ where $r_i$ and $x_i$ are the $i$-th bits of $r$ and $x$, respectively.   
It is well-known that the above theorem gives us a simple \emph{classical} non-interactive  commitment scheme that is computationally hiding and perfectly binding from any injective one-way function: Let $f:\bit^n\rightarrow \bit^m$ be an injective one-way function. Then, a commitment to bit $b\in \bit$ is set to be $(f(x),r,r\cdot x \oplus b)$. By purifying this construction, we obtain the following canonical quantum bit commitment scheme $\{Q_{\gl,0},Q_{\gl,1}\}$:

\begin{align*}
    Q_{\gl,b}\ket{0}_{\regC,\regR}:=\frac{1}{2^n}\sum_{x\in \bit^n,r\in\bit^n}\ket{f(x),r,r\cdot x \oplus b}_{\regC}\ket{x,r}_{\regR}.
\end{align*}

By Goldreich-Levin theorem (or its quantum version with a better reduction loss shown by Adcock and Cleve~\cite{AC02}), it is straightforward to prove the following theorem. 
\begin{theorem}\label{thm:GL}
If $f$ is an injective one-way function, $\{Q_{\gl,0},Q_{\gl,1}\}$ is computationally hiding and perfectly binding. 
\end{theorem}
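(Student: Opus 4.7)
The plan is to prove perfect binding and computational hiding separately, with the binding part following easily from the injectivity of $f$, and the hiding part reducing to the Goldreich-Levin hardcore predicate theorem (or its quantum version by Adcock-Cleve).

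For perfect binding, I would first compute the reduced state on $\regC$. Since the register $\regR$ contains $\ket{x,r}$ and different $(x,r)$ are orthogonal, tracing out $\regR$ yields the classical mixture
\begin{align*}
\rho_b := \Tr_{\regR}\bigl(Q_{\gl,b}\ket{0}\bra{0}Q_{\gl,b}^\dagger\bigr) = \frac{1}{2^{2n}}\sum_{x,r\in\bit^n}\ket{f(x),r,r\cdot x \oplus b}\bra{f(x),r,r\cdot x \oplus b}.
\end{align*}
Because $f$ is injective, the map $(x,r)\mapsto(f(x),r)$ is injective, and the third bit $r\cdot x\oplus b$ is uniquely determined by $(f(x),r)$ and $b$. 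Hence the supports of $\rho_0$ and $\rho_1$ are disjoint subsets of the computational basis. Then for any state $\ket{\tau}_\regZ$ and any unitary $U_{\regR,\regZ}$, the reduced state on $\regC$ of $(I_\regC\otimes U_{\regR,\regZ})((Q_0\ket{0})_{\regC,\regR}\ket{\tau}_\regZ)$ is still $\rho_0$ (since $U$ acts only on $\regR,\regZ$), which is orthogonal to the projector $Q_1\ket{0}\bra{0}Q_1^\dagger$ by the support argument. This verifies \Cref{def:binding} with LHS equal to $0$.

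For computational hiding, suppose toward contradiction that some non-uniform QPT adversary $\A$ distinguishes $\rho_0$ from $\rho_1$ with non-negligible advantage. Since $\rho_b$ is the classical mixture over uniformly random $(x,r)\in\bit^n\times\bit^n$ of the computational basis state $\ket{f(x),r,r\cdot x\oplus b}$, $\A$ can be equivalently viewed as a quantum predictor that, given a sample $(y,r,c)$ where $(x,r)\gets \bit^n\times\bit^n$, $y:=f(x)$, and $c:=r\cdot x\oplus b$ for a uniformly random $b\gets\bit$, outputs $b$ with noticeable advantage over $1/2$. By a standard reduction (toggle the last bit and compare with $\A$'s guess), this implies a QPT algorithm that predicts the Goldreich-Levin hardcore bit $r\cdot x$ from $(f(x),r)$ with non-negligible advantage, contradicting the (quantum-secure) Goldreich-Levin theorem applied to the one-way function $f$.

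The only mildly delicate step is the hiding reduction: one must argue that a quantum distinguisher for two classical mixtures $\rho_0,\rho_1$ yields a quantum predictor of the same quality, which is immediate because a computational-basis measurement of $\rho_b$ exactly samples the classical distribution, so the distinguisher's advantage is preserved when it is fed such a classical sample. Everything else is bookkeeping.
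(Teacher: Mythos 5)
Your proof is correct and matches the paper's intended approach: the paper gives no explicit proof for this theorem, merely citing the Goldreich–Levin hardcore predicate theorem (and its quantum analogue by Adcock–Cleve) and calling the rest straightforward, and your argument is exactly that straightforward argument spelled out. Both the binding reasoning (disjoint computational-basis supports of $\rho_0$ and $\rho_1$, preserved under $U$ acting only on $\regR,\regZ$, forcing the overlap with $Q_{\gl,1}\ket{0}$ to vanish) and the hiding reduction (measuring $\rho_b$ in the computational basis recovers the classical GL distinguishing game) are sound.
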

 
 By applying our conversion to $\{Q_{\gl,0},Q_{\gl,1}\}$, we obtain the following scheme $\{Q'_{\gl,0},Q'_{\gl,1}\}$.

\begin{align*}
    Q'_{\gl,b}\ket{0}_{\regC',\regR'}:=\frac{1}{2^n}\sum_{x\in \bit^n,r\in\bit^n}
    \left(
    \ket{0,x,r}_{\regC'}\ket{f(x),r,r\cdot x}_{\regR'}
    +(-1)^b \ket{1,x,r}_{\regC'}\ket{f(x),r,r\cdot x\oplus 1}_{\regR'}
    \right).
\end{align*}

By \Cref{thm:conversion,thm:GL}, we obtain the following theorem.

\begin{theorem}\label{thm:dual_GL}
If $f$ is an injective one-way function, $\{Q'_{\gl,0},Q'_{\gl,1}\}$ is perfectly hiding and computationally binding. 
\end{theorem}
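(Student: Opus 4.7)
The plan is to derive Theorem \ref{thm:dual_GL} as an immediate consequence of the flavor-conversion machinery (Theorem \ref{thm:conversion}) applied to the base scheme $\{Q_{\gl,0},Q_{\gl,1}\}$, whose security properties are already captured by Theorem \ref{thm:GL}. First, I would invoke Theorem \ref{thm:GL} to record that, under the assumption that $f$ is an injective one-way function, the canonical scheme $\{Q_{\gl,0},Q_{\gl,1}\}$ is computationally hiding and perfectly binding.

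Next, I would verify that the explicitly written scheme $\{Q'_{\gl,0},Q'_{\gl,1}\}$ is precisely the dual commitment obtained by instantiating the transformation of Theorem \ref{thm:conversion} on $\{Q_{\gl,0},Q_{\gl,1}\}$. Concretely, recalling that Theorem \ref{thm:conversion} gives
\[
Q_b'\ket{0}_{\regC',\regR'}=\tfrac{1}{\sqrt{2}}\!\left((Q_0\ket{0})_{\regC,\regR}\ket{0}_{\regD}+(-1)^b(Q_1\ket{0})_{\regC,\regR}\ket{1}_{\regD}\right),
\]
with $\regC'=(\regR,\regD)$ and $\regR'=\regC$, I would substitute the explicit expressions for $Q_{\gl,0}\ket{0}$ and $Q_{\gl,1}\ket{0}$, reorder the tensor factors to match the convention $\regC'=(\regD,\regX,\regR),\ \regR'=(\regF,\regR,\regH)$ used in the statement, and check that the resulting state coincides (up to the standard normalization) with the displayed $Q'_{\gl,b}\ket{0}_{\regC',\regR'}$. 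This bookkeeping is routine: the $\ket{0}_\regD$ branch produces $\ket{f(x),r,r\cdot x}$ in $\regR'$ and the $\ket{1}_\regD$ branch produces $\ket{f(x),r,r\cdot x\oplus 1}$, matching the given formula.

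Finally, I would apply Theorem \ref{thm:conversion} with the choice $\mathtt{X}=\text{computationally}$ and $\mathtt{Y}=\text{perfectly}$: Item \ref{item:hiding_to_binding} converts the computational hiding of $\{Q_{\gl,0},Q_{\gl,1}\}$ into computational binding of $\{Q'_{\gl,0},Q'_{\gl,1}\}$, while Item \ref{item:binding_to_hiding} converts the perfect binding of $\{Q_{\gl,0},Q_{\gl,1}\}$ into perfect hiding of $\{Q'_{\gl,0},Q'_{\gl,1}\}$. Combining both conclusions yields Theorem \ref{thm:dual_GL}.

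The only real point of care (rather than an obstacle) is the identification step in the middle paragraph: one has to be careful that the dual construction is literally the scheme written in the theorem statement, and in particular that the registers labelled $\regC'$ and $\regR'$ are identified correctly with $(\regR,\regD)$ and $\regC$ of the base scheme, so that Theorem \ref{thm:conversion} applies verbatim. Once this matching is checked, the proof is essentially a one-line corollary.
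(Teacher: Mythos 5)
Your proposal matches the paper's argument exactly: the paper establishes the security of the base scheme $\{Q_{\gl,0},Q_{\gl,1}\}$ via \Cref{thm:GL}, observes that $\{Q'_{\gl,0},Q'_{\gl,1}\}$ is precisely the scheme produced by the conversion, and then cites \Cref{thm:conversion,thm:GL} together as the one-line derivation. Your middle verification step (checking the register identification $\regC'=(\regR,\regD)$, $\regR'=\regC$ and that the displayed $Q'_{\gl,b}\ket{0}$ is literally the dual of $Q_{\gl,b}\ket{0}$) is left implicit in the paper, so your write-up is if anything slightly more explicit.
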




\smallskip
\noindent\textbf{Construction from injective one-way functions with trusted setup.} 

Similarly to the schemes in \Cref{sec:new_inj}, the above schemes work based on injective one-way functions with trusted setup. 
Let $\mathcal{R}$ be the randomness space for the setup and $f_R$ be the (description of) injective one-way function generated from the randomness $R\in \mathcal{R}$. Then, we construct a modified scheme $\{Q_{\glsetup,0},Q_{\glsetup,1}\}$:

\begin{align*}
    Q_{\glsetup,b}\ket{0}_{\regC,\regR}:=\frac{1}{2^n\sqrt{|\mathcal{R}|}}\sum_{x\in \bit^n,r\in\bit^n,R\in\mathcal{R}}\ket{f_R(x),r,r\cdot x \oplus b,R}_{\regC}\ket{x,r,R}_{\regR}.
\end{align*}

It is easy to show that $\{Q_{\glsetup,0},Q_{\glsetup,1}\}$ is computationally hiding and perfectly binding. By applying our conversion to $\{Q_{\glsetup,0},Q_{\glsetup,1}\}$, we obtain the following scheme $\{Q'_{\glsetup,0},Q'_{\glsetup,1}\}$: 

\begin{align*}
    Q'_{\glsetup,b}\ket{0}_{\regC',\regR'}:=\frac{1}{2^n\sqrt{|\mathcal{R}|}}\sum_{x\in \bit^n,r\in\bit^n,R\in\mathcal{R}}
    \Big(
    \ket{0,x,r,R}_{\regC'}\ket{f_R(x),r,r\cdot x,R}_{\regR'}\\
    +(-1)^b \ket{1,x,r,R}_{\regC'}\ket{f_R(x),r,r\cdot x\oplus 1,R}_{\regR'}
    \Big).
\end{align*}

By \Cref{thm:conversion}, $\{Q'_{\glsetup,0},Q'_{\glsetup,1}\}$ is perfectly hiding and computationally binding.

\subsection{Construction from Collapsing Hash Functions via Halevi-Micali Commitments}\label{sec:construction_collapsing_HM}
Halevi and Micali~\cite{C:HalMic96} constructed a two-message statistically hiding and computationally binding commitment scheme from any collision-resistant hash functions in the classical setting. Unruh~\cite{EC:Unruh16} pointed out that the scheme may not be secure against quantum adversaries, and showed that it is secure if we assume a stronger security than the collision-resistance called \emph{collapsing} property~\Cref{def:collapsing}. 

In this section, we consider the canonical form of the scheme of \cite{C:HalMic96} and show that it is statistically hiding and computationally binding assuming the collapsing hash functions. Then, we convert it into computationally hiding and statistically binding one by our conversion. 

\smallskip\noindent\textbf{Preparation.}
Before describing the canonical form of the scheme of \cite{C:HalMic96}, we define universal functions and the leftover hash lemma.

\begin{definition}[Universal functions.]
A polynomial-time computable function family $\mathcal{F}=\{f_{k}:\bit^L\rightarrow \bit^{\ell}\}_{k\in \mathcal{K}_{\mathcal{F}}}$ is universal if for any $x,x'\in \bit^{L}$ such that $x\neq x'$, we have 
\begin{align*}
    \Pr_{k\gets \mathcal{K}_{\mathcal{F}}}[f_k(x)=f_k(x')]=2^{-\ell}.
\end{align*}
\end{definition}

For any polynomials $L,\ell$ in the security parameter, 
there unconditionally exists a universal function family from $\bit^{L}$ to $\bit^\ell$~\cite{CW79,SIAM:HILL99}.

\begin{lemma}[Leftover hash lemma~{\cite{SIAM:HILL99}}]\label{lem:LHL} 
Let $\mathcal{F}=\{f_{k}:\bit^L\rightarrow \bit^{\ell}\}_{k\in \mathcal{K}_{\mathcal{F}}}$ be a universal function family. Let $X$ be a random variable over $\bit^L$ such that $H_{\infty}(X)\geq \ell + 2\log \epsilon^{-1}$ where $H_{\infty}(X):= -\log \max_{x\in \bit^L}\Pr[X=x]$.
Then, we have 
\begin{align*}
    \Delta((k,f_k(X)),(k,U_{\ell}))\leq \epsilon
\end{align*}
where $\Delta$ denotes the statistical distance, $k\gets \mathcal{K}$, and $U_\ell\gets \bit^\ell$. 
\end{lemma}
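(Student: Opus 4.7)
The plan is to follow the standard collision-probability route and reduce statistical distance to the $L_2$ norm of the joint distribution $(k, f_k(X))$. Concretely, I will use the elementary fact that for any distribution $D$ on a finite set $S$,
\[
    \Delta(D, U_S) \le \tfrac{1}{2}\sqrt{|S|\cdot \mathrm{Col}(D) - 1},
\]
where $\mathrm{Col}(D) = \sum_{s\in S}\Pr_D[s]^2$ is the collision probability of $D$ and $U_S$ is uniform over $S$. This is immediate from Cauchy--Schwarz once one relates the $L_1$ distance to the $L_2$ distance.

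Next, I would compute $\mathrm{Col}((k, f_k(X)))$ by sampling independent copies $(k, X)$ and $(k', X')$. Since the key components must coincide (probability $1/|\mathcal K|$) and then conditioning on $k=k'$, the $f_k$ values must match, this gives
\[
    \mathrm{Col}((k, f_k(X))) = \tfrac{1}{|\mathcal K|}\Pr_{k,X,X'}[f_k(X)=f_k(X')].
\]
Splitting on whether $X=X'$ and applying the universality of $\mathcal F$ (which supplies $\Pr_k[f_k(x)=f_k(x')]=2^{-\ell}$ whenever $x\neq x'$) bounds the inner probability by $\mathrm{Col}(X) + 2^{-\ell}$. The min-entropy hypothesis yields $\mathrm{Col}(X) \le 2^{-H_\infty(X)} \le 2^{-\ell}\epsilon^2$.

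Finally, I combine the two ingredients. The ambient set has size $|\mathcal K|\cdot 2^\ell$, so
\[
    |\mathcal K|\cdot 2^\ell \cdot \mathrm{Col}((k,f_k(X))) - 1 \;\le\; 2^\ell\bigl(\mathrm{Col}(X) + 2^{-\ell}\bigr) - 1 \;\le\; \epsilon^2,
\]
and plugging into the Cauchy--Schwarz bound gives $\Delta((k,f_k(X)),(k,U_\ell))\le \epsilon/2 \le \epsilon$, as desired. The only slightly subtle step is the first one (the $L_1$-to-$L_2$ bound involving $-1$), but this is completely standard; the rest is a direct computation and does not require any new ideas beyond the universality property.
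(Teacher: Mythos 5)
Your proof is correct, and the paper itself does not provide one — it cites this lemma directly from \cite{SIAM:HILL99}. The collision-probability argument you give (bound the $L_1$ distance by the $L_2$ distance via Cauchy--Schwarz, compute the collision probability of $(k,f_k(X))$ by conditioning on $k=k'$ and splitting on $X=X'$, then invoke universality and $\mathrm{Col}(X)\le 2^{-H_\infty(X)}$) is exactly the standard proof from that reference, and all the arithmetic checks out, giving the slightly stronger bound $\epsilon/2$.
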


\smallskip\noindent\textbf{Construction.} 
Let $\mathcal{H}=\{H_k:\bit^L\rightarrow \bit^\ell\}_{k\in \mathcal{K}_{\mathcal{H}}}$ be a collapsing function family and $\mathcal{F}=\{f_{k'}:\bit^L\rightarrow \bit\}_{k'\in \mathcal{K}_{\mathcal{F}}}$ be a universal function family where $L=\ell+2\secp +1$. 
The canonical form of the scheme of \cite{C:HalMic96}  $\{Q_{\halmic,0},Q_{\halmic,1}\}$ is described as follows:
\begin{align*}
Q_{\halmic,b}\ket{0}_{\regC,\regR}:=  
\frac{1}{\sqrt{2^L|\mathcal{K}_{\mathcal{H}}||\mathcal{K}_{\mathcal{F}}|}}
\sum_{k\in \mathcal{K}_\mathcal{H},k'\in \mathcal{K}_\mathcal{F},x\in \bit^L}
\ket{k,k',H_k(x),f_{k'}(x)\oplus b}_{\regC}\ket{k,k',x}_{\regR}
\end{align*}

\begin{theorem}\label{thm:halmic}
$\{Q_{\halmic,0},Q_{\halmic,1}\}$ is statistically hiding and computationally binding.
\end{theorem}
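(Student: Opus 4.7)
The plan is to handle hiding and binding separately: statistical hiding via the leftover hash lemma, and computational binding via a reduction to the collapsing property of $\mathcal{H}$, following the spirit of the quantization of Halevi--Micali by Unruh~\cite{EC:Unruh16}.

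For statistical hiding, I would trace out $\regR$ from $Q_{\halmic,b}\ket 0_{\regC,\regR}$ to obtain the classical mixture on $\regC$ given by the distribution of $(k,k',H_k(X),f_{k'}(X)\oplus b)$ with $k,k',X$ uniform. Since $H_k(X)$ carries at most $\ell$ bits of information about $X\gets\bit^L$, the average conditional min-entropy of $X$ given $(k,H_k(X))$ is at least $L-\ell = 2\secp+1$. By the generalized leftover hash lemma applied to the universal family $\mathcal{F}$ with output length $1$, the distribution of $(k,k',H_k(X),f_{k'}(X))$ is $2^{-\secp}$-close to $(k,k',H_k(X),U)$ for uniform $U\gets\bit$, and the same holds after XOR-ing in the constant $b$. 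Thus the mixtures on $\regC$ corresponding to $b=0$ and $b=1$ are statistically $2^{-\Omega(\secp)}$-close, proving statistical hiding.

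For computational binding, suppose toward contradiction that some polynomial-time unitary $U_{\regR,\regZ}$ and advice $\ket{\tau}_\regZ$ break binding with non-negligible advantage. Writing $\regC = (\regC_1,\regC_2)$ and $\regR = (\regR_1,\regR_2)$ with the key parts in $\regC_1,\regR_1$, the key parts of $\regC$ and $\regR$ are perfectly correlated and unaffected by $U$; hence after measuring $(k,k',y,c)$ from $\regC$, the residual state in $\regR_2$ is the normalized uniform superposition over $S_{k,k',y,c}:=\{x:H_k(x)=y,\,f_{k'}(x)=c\}$, and $U$ must map it to the analogous superposition over $S_{k,k',y,c\oplus 1}$ with non-negligible overlap on average. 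The critical structural observation is that $G_{k,k'}(x):=(H_k(x),f_{k'}(x))$ inherits collapsing from $\mathcal{H}$, because a distinguisher against $G$-collapsing lifts to one against $\mathcal{H}$-collapsing by using the public $k'$ to coherently compute $f_{k'}(x)$ into an extra coordinate. Therefore inserting a computational-basis measurement of $\regR_2$ \emph{before} applying $U$ cannot change the binding success probability by more than a negligible amount: the efficient tester that applies $U$, then $Q_{\halmic,1}^\dagger$, then measures whether $(\regC,\regR)$ is in $\ket 0$ would otherwise serve as an efficient distinguisher in the $G_{k,k'}$-collapsing game on $\regR_2$.

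With the measurement inserted, I obtain a collision-finding algorithm against $\mathcal{H}$: given a challenge key $k$, sample $k'\gets \mathcal{K}_\mathcal{F}$, prepare the honest $b=0$ commitment, measure $\regC$ to learn $(y,c)$ and $\regR_2$ to get $x_0\in S_{k,k',y,c}$, apply $U$ on $(\regR_2,\regZ)$, and measure $\regR_2$ in the computational basis to get $x_1$. A Cauchy--Schwarz bound shows that the probability that $x_1\in S_{k,k',y,c\oplus 1}$ is at least the (still non-negligible) probability that $U\ket{x_0}\ket\tau$ projects onto the normalized dual superposition over $S_{k,k',y,c\oplus 1}$; whenever this event occurs, $H_k(x_1)=y=H_k(x_0)$ while $f_{k'}(x_0)=c\neq c\oplus 1=f_{k'}(x_1)$, so $x_0\neq x_1$ and $(x_0,x_1)$ is a collision for $H_k$. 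Since collapsing implies collision-resistance, this contradicts the collapsing of $\mathcal{H}$.

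The main obstacle I anticipate is the measurement-insertion step. One must phrase the tester so that the \emph{valid adversary} condition of the collapsing game is met---namely, that after measuring $\regR_2$, the value in $\regC_2$ deterministically equals $G_{k,k'}(x)$---and then carry the negligible gap granted by collapsing through the subsequent Cauchy--Schwarz step and the averaging over $(k,k',y,c)$ without losing the non-negligibility. Each of these pieces is conceptually routine, but the bookkeeping must ensure that the final collision probability remains non-negligible.
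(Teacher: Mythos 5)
Your route is the paper's: leftover hash lemma for statistical hiding, and for binding, a reduction to collapsing combined with Cauchy--Schwarz that turns a binding attack into a collision finder for $\mathcal H$. Using $G_{k,k'}$-collapsing as an intermediary rather than invoking $\mathcal H$-collapsing directly is harmless, and your lifting argument for it is correct.

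The one step you gloss over is the passage from the binding advantage to the per-fiber projection probability after measuring $\regC$. The binding projector $Q_{\halmic,1}\ket{0}\bra{0}Q_{\halmic,1}^\dagger$ is rank one and coherent across $\regC$-values, so it does not commute with a measurement of $\regC$; dephasing $\regC$ can shrink the projection probability, and your assertion that ``after measuring $(k,k',y,c)$ from $\regC$, \dots $U$ must map it \dots with non-negligible overlap on average'' is not automatic. What rescues it is an extra Cauchy--Schwarz step applied before the one you already plan: write the binding amplitude as $\sum_c \sqrt{p_c q_c}\,\ket{v_c}_{\regZ}$, where $c=(k,k',y,z)$, $p_c$ and $q_c$ are the $\regC$-marginal probabilities under $Q_{\halmic,0}\ket{0}$ and $Q_{\halmic,1}\ket{0}$ respectively, and $\ket{v_c}$ is the $\regZ$-residual of projecting $U$ applied to the normalized fiber state over $S_{k,k',y,z}$ onto the normalized superposition over $S_{k,k',y,z\oplus 1}$; then the triangle inequality followed by Cauchy--Schwarz with $\sum_c q_c=1$ gives that the squared binding advantage is at most $\sum_c p_c\|\ket{v_c}\|^2$, which \emph{is} $\regC$-diagonal and is exactly the quantity your collapsing and membership-test steps act on (collapsing applies since the fiber state is a superposition over $H_k$-preimages of a single $y$). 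This is the content of the paper's inequality chain leading to \Cref{eq:break_binding_HM}. With this step filled in, your argument is complete and coincides with the paper's up to the order in which the Cauchy--Schwarz and measurement-insertion steps are applied.
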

\begin{proof}
~

\smallskip
\noindent\textbf{Hiding.}
For $b\in \bit$, we have 
\begin{align*}
    &\Tr_{\regR}(Q_{\halmic,b}\ket{0}_{\regC,\regR})\\
    &=
    \frac{1}{2^L|\mathcal{K}_{\mathcal{H}}||\mathcal{K}_{\mathcal{F}}|}\sum_{k\in \mathcal{K}_\mathcal{H},k'\in \mathcal{K}_\mathcal{F},x\in \bit^L}
\ket{k,k',H_k(x),f_{k'}(x)\oplus b}_{\regC}\bra{k,k',H_k(x),f_{k'}(x)\oplus b}_{\regC}
\end{align*}
Thus,  breaking the hiding property is equivalent to distinguishing the classical distributions  
$\{(k,k',H_k(x),f_{k'}(x)\oplus b):k\gets \mathcal{K}_\mathcal{H},k'\gets \mathcal{K}_\mathcal{F},x\gets \bit^L \}$ 
for $b=0,1$. 
For any fixed $k\in \mathcal{K}_\mathcal{H}$ and $y\in \bit^\ell$, let $X_{k,y}$ be the conditional distribution of
$x\gets \bit^L$ conditioned on $H_k(x)=y$. Since $\ell$-bit side information can decrease the min-entropy by at most $\ell$, we have $H_{\infty}(X_{k,y})\geq L-\ell=2\secp + 1$. Thus, by the leftover hash lemma (\Cref{lem:LHL}), for any fixed $k$, we have  
\begin{align*}
    \Delta((k,k',H_k(x),f_{k'}(x)),(k,k',H_k(x),U_1))\leq 2^{-\secp}
\end{align*}
where $k'\gets \mathcal{K}_\mathcal{F}$, $x\gets \bit^L$, and $U_1\gets \bit$. 
Combined with the above observation, this implies that $\{Q_{\halmic,0},Q_{\halmic,1}\}$ is statistically hiding.

\smallskip
\noindent\textbf{Binding.}
Suppose that the $\{Q_{\halmic,0},Q_{\halmic,1}\}$ is not computationally binding. Then there exists a polynomial-time computable unitary $U$ over $(\regR,\regZ)$ and a auxiliary state $\ket{\tau}_{\regZ}$ such that 
\begin{align*}
    \left\|(Q_{\halmic,1}\ket{0}\bra{0}Q_{\halmic,1}^\dagger)_{\regC,\regR}(I_{\regC}\otimes U_{\regR,\regZ})((Q_{\halmic,0}\ket{0})_{\regC,\regR}\ket{\tau}_{\regZ})\right\|
\end{align*}
is non-negligible. In particular, its square is also non-negligible. 

Let $\regRK$ and $\regRX$ be sub-registers of $\regR$ that store $(k,k')$ and $x$, respectively. \takashi{I hope this explanation makes sense.} 
For $k\in \mathcal{K}_{\mathcal{H}}$, $k'\in \mathcal{K}_{\mathcal{F}}$, $y\in \bit^\ell$, and $z\in \bit$, we define a subset $S_{k,k',y,z}\subseteq \bit^L$ as 
\begin{align*}
    S_{k,k',y,z}:=\{x\in \bit^L: H_k(x)=y~\land~f_{k'}(x)=z\}
\end{align*}
and define the state $\ket{S_{k,k',y,z}}_{\regRX}$ as 
\begin{align*}
    \ket{S_{k,k',y,z}}_{\regRX}:=\frac{1}{\sqrt{|S_{k,k',y,z}|}}\sum_{x\in S_{k,k',y,z}}\ket{x}_{\regRX}.
\end{align*}
For notational convenience, we also define a (non-normalized) state $\ket{\widetilde{S}_{k,k',y,z}}_{\regRX}$ as 
\begin{align*}
    \ket{\widetilde{S}_{k,k',y,z}}_{\regRX}:=
    \sqrt{|S_{k,k',y,z}|}\ket{S_{k,k',y,z}}_{\regRX}=\sum_{x\in S_{k,k',y,z}}\ket{x}_{\regRX}.
\end{align*}
Clearly, for any  $k\in \mathcal{K}_\mathcal{H}$ and $k'\in \mathcal{K}_\mathcal{F}$, 
we have
\begin{align} \label{eq:sum_S}
   \sum_{
  y\in \bit^\ell, z\in \bit
  }
  |S_{k,k',y,z}|
  =
  2^L.
\end{align}
Then, it holds that 
\begin{align}
   & \left\|(Q_{\halmic,1}\ket{0}\bra{0}Q_{\halmic,1}^\dagger)_{\regC,\regR}(I_{\regC}\otimes U_{\regR,\regZ})((Q_{\halmic,0}\ket{0})_{\regC,\regR}\ket{\tau}_{\regZ})\right\|^2 \notag\\
  =& 
   \frac{1}{(2^{L}|\mathcal{K}_{\mathcal{H}}||\mathcal{K}_{\mathcal{F}}|)^2}
   \left\|\sum_{
  \substack{
  k\in \mathcal{K}_\mathcal{H},k'\in \mathcal{K}_\mathcal{F},\\
  y\in \bit^\ell, z\in \bit
  }
  }
 \bra{k,k'}_{\regRK}\bra{\widetilde{S}_{k,k',y,z\oplus 1}}_{\regRX}
 U_{\regR,\regZ}
 \ket{k,k'}_{\regRK}\ket{\widetilde{S}_{k,k',y,z}}_{\regRX}\ket{\tau}_{\regZ}
 \right\|^2 \notag\\
  \leq& 
   \frac{1}{(2^{L}|\mathcal{K}_{\mathcal{H}}||\mathcal{K}_{\mathcal{F}}|)^2}
   \left\|\sum_{
  \substack{
  k\in \mathcal{K}_\mathcal{H},k'\in \mathcal{K}_\mathcal{F},\\
  y\in \bit^\ell, z\in \bit
  }
  }
 \bra{\widetilde{S}_{k,k',y,z\oplus 1}}_{\regRX}
 U_{\regR,\regZ}
 \ket{k,k'}_{\regRK}\ket{\widetilde{S}_{k,k',y,z}}_{\regRX}\ket{\tau}_{\regZ}
 \right\|^2 \notag\\
=& 
   \frac{1}{(2^{L}|\mathcal{K}_{\mathcal{H}}||\mathcal{K}_{\mathcal{F}}|)^2}
   \left\|\sum_{
  \substack{
  k\in \mathcal{K}_\mathcal{H},k'\in \mathcal{K}_\mathcal{F},\\
  y\in \bit^\ell, z\in \bit
  }
  }
  \sum_{x\in S_{k,k',y,z\oplus 1}}
 \bra{x}_{\regRX}
 U_{\regR,\regZ}
 \ket{k,k'}_{\regRK}\ket{\widetilde{S}_{k,k',y,z}}_{\regRX}\ket{\tau}_{\regZ}
 \right\|^2 \notag\\
    \leq& 
   \frac{1}{(2^{L}|\mathcal{K}_{\mathcal{H}}||\mathcal{K}_{\mathcal{F}}|)^2}
   \left(\sum_{
  \substack{
  k\in \mathcal{K}_\mathcal{H},k'\in \mathcal{K}_\mathcal{F},\\
  y\in \bit^\ell, z\in \bit
  }
  }
  \sum_{x\in S_{k,k',y,z\oplus 1}}
  \left\|
 \bra{x}_{\regRX}
 U_{\regR,\regZ}
  \ket{k,k'}_{\regRK}\ket{\widetilde{S}_{k,k',y,z}}_{\regRX}\ket{\tau}_{\regZ}
 \right\|\right)^2 \notag\\
     \leq& 
   \frac{1}{2^{L}|\mathcal{K}_{\mathcal{H}}||\mathcal{K}_{\mathcal{F}}|}
   \sum_{
  \substack{
  k\in \mathcal{K}_\mathcal{H},k'\in \mathcal{K}_\mathcal{F},\\
  y\in \bit^\ell, z\in \bit
  }
  }
  \sum_{x\in S_{k,k',y,z\oplus 1}}
  \left\|
 \bra{x}_{\regRX}
 U_{\regR,\regZ}
  \ket{k,k'}_{\regRK}\ket{\widetilde{S}_{k,k',y,z}}_{\regRX}\ket{\tau}_{\regZ}
 \right\|^2 \notag\\
    \leq& 
   \frac{1}{|\mathcal{K}_{\mathcal{H}}||\mathcal{K}_{\mathcal{F}}|}
   \sum_{
  \substack{
  k\in \mathcal{K}_\mathcal{H},k'\in \mathcal{K}_\mathcal{F},\\
  y\in \bit^\ell, z\in \bit
  }
  }
  \frac{|S_{k,k',y,z}|}{2^L}
  \sum_{x\in S_{k,k',y,z\oplus 1}}
  \left\|
 \bra{x}_{\regRX}
 U_{\regR,\regZ}
   \ket{k,k'}_{\regRK}\ket{S_{k,k',y,z}}_{\regRX}\ket{\tau}_{\regZ}
 \right\|^2
  \label{eq:break_binding_HM}
\end{align}
where the first equality follows from the definition of $\{Q_{\halmic,0},Q_{\halmic,1}\}$, the second equality follows from the definition of $\ket{\widetilde{S}_{k,k',y,z\oplus 1}}_{\regRX}$, the second inequality follows from the triangle inequality,  the third inequality follows from the Cauchy–Schwarz inequality and \Cref{eq:sum_S}, and the fourth inequality follows from $L\geq \ell+1$. 
Therefore, the value of \Cref{eq:break_binding_HM} is non-negligible.
\takashi{The above argument looks a little bit awkward to me. Is there a better one?}

Below, we give an algorithmic interpretation for the value of \Cref{eq:break_binding_HM}. Let $\A$ be an algorithm that works as follows with advice $\ket{\tau}$. 
\begin{description}
\item[$\A(k,k';\ket{\tau})$:] Given $k\in \mathcal{K}_{\mathcal{H}}$ and $k'\in \mathcal{K}_{\mathcal{F}}$ as input and advice $\ket{\tau}$, it generates a state
\begin{align*}
    \sum_{x\in \bit^{L}}\ket{x}_{\regRX}\ket{H_k(x),f_{k'}(x)}_{\regA}
\end{align*}
where $\regA$ is an additional register and measures $\regA$. Let $(y,z)$ be the outcome. At this point, the state in $\regRX$ collapses to $\ket{S_{k,k',y,z}}_{\regRX}$. Then, it computes $U_{\regR,\regZ}\ket{k,k'}_{\regRK}\ket{S_{k,k',y,z}}_{\regRX}\ket{\tau}_{\regZ}$ and measures $\regRX$. Let $x$ be the outcome. If $x\in S_{k,k',y,z\oplus 1}$, then it outputs $1$. Otherwise, it outputs $0$.  
\end{description}
The probability that the measurement outcome of $\regA$ by $\A$ is $(y,z)$ is $\frac{|S_{k,k',y,z}|}{2^L}$. Therefore, $\Pr_{k\sample \mathcal{K}_{\mathcal{H}},k'\sample \mathcal{K}_{\mathcal{F}}}[\A(k,k';\ket{\tau})=1]$ is exactly the value of \Cref{eq:break_binding_HM}, which is non-negligible.

Next, we consider a modified algorithm $\A'$ that works similarly to $\A$ except that it measures $\regRX$ before applying $U_{\regR,\regZ}$. 
Equivalently, instead of generating the state $\sum_{x\in \bit^{L}}\ket{x}_{\regRX}\ket{H_k(x),f_{k'}(x)}_{\regA}$, $\A'$ classically samples $x\sample \bit^L$, computes $y:=H_k(x)$ and $z:=f_{k'}(y)$ and uses $\ket{x}_{\regRX}$ instead of $\ket{S_{k,k',y,z}}_{\regRX}$. 
By a straightforward reduction to the collapsing property of $\mathcal{H}$, 
\begin{align*}
    \left|\Pr_{k\sample \mathcal{K}_{\mathcal{H}},k'\sample \mathcal{K}_{\mathcal{F}}}[\A'(k,k';\ket{\tau})=1]
    -
    \Pr_{k\sample \mathcal{K}_{\mathcal{H}},k'\sample \mathcal{K}_{\mathcal{F}}}[\A(k,k';\ket{\tau})=1]
    \right|=\negl(\secp).
\end{align*}
Therefore, $\Pr_{k\sample \mathcal{K}_{\mathcal{H}},k'\sample \mathcal{K}_{\mathcal{F}}}[\A'(k,k';\ket{\tau})=1]$ is non-negligible.

By using the above, we construct a non-uniform QPT algorithm $\B$ that breaks the collision-resistance of $\mathcal{H}$ as follows. 

\begin{description}
\item[$\B(k;\ket{\tau})$:] Given an input $k\in \mathcal{K}_{\mathcal{H}}$ and advice $\ket{\tau}$, it picks $k'\sample \mathcal{K}_{\mathcal{F}}$ and $x\sample \bit^N$. 
It computes $U_{\regR,\regZ}\ket{k,k'}_{\regRK}\ket{x}_{\regRX}\ket{\tau}_{\regZ}$ and measures $\regRX$. Let $x'$ be the outcome. It outputs $(x,x')$.
\end{description}

It is easy to see that 
\begin{align*}
    \Pr_{k\sample \mathcal{K}_{\mathcal{H}}}[x'\in S_{k,k',H_k(x),f_{k'}(x)\oplus 1}:(x,x')\sample\B(k;\ket{\tau})]
    =
    \Pr_{k\sample \mathcal{K}_{\mathcal{H}},k'\sample \mathcal{K}_{\mathcal{F}}}[\A'(k,k';\ket{\tau})=1]
\end{align*}
where $k'$ in the LHS is the one picked by $\B$. 
Since the RHS is non-negligible, the LHS is non-negligible. 
Moreover, when $x'\in S_{k,k',H_k(x),f_{k'}(x)\oplus 1}$, we have $H_k(x)=H_k(x')$ and $x\neq x'$ by the definition of $S_{k,k',H_k(x),f_{k'}(x)\oplus 1}$. Therefore, 
\begin{align*}
    \Pr_{k\sample \mathcal{K}_{\mathcal{H}}}[H_k(x)=H_k(x')~\land~x\neq x':(x,x')\sample\B(k;\ket{\tau})]
\end{align*}
is non-negligible. 
This means that $\B$ with advice $\ket{\tau}$ breaks the collision-resistance of $\mathcal{H}$. This contradicts the assumption that $\mathcal{H}$ is collapsing since the collapsing property implies the collision-resistance. 
Thus, $\{Q_{\halmic,0},Q_{\halmic,1}\}$ is computationally binding. This completes the proof of \Cref{thm:halmic}.
\takashi{Is there a more direct proof?}
\end{proof}

By applying our conversion, we obtain the following scheme $\{Q'_{\halmic,0},Q'_{\halmic,1}\}$
\begin{align*}
Q'_{\halmic,b}\ket{0}_{\regC',\regR'}:=  
\frac{1}{\sqrt{2^{L+1}|\mathcal{K}_{\mathcal{H}}||\mathcal{K}_{\mathcal{F}}|}}
\sum_{k\in \mathcal{K}_\mathcal{H},k'\in \mathcal{K}_\mathcal{F},x\in \bit^L}
\bigg(
\ket{0,k,k',x}_{\regC'}\ket{k,k',H_k(x),f_{k'}(x)}_{\regR'}\\
+(-1)^b\ket{1,k,k',x}_{\regC'}\ket{k,k',H_k(x),f_{k'}(x)\oplus 1}_{\regR'}\bigg)
\end{align*}

By \Cref{thm:halmic,thm:conversion}, we obtain the following theorem.
\begin{theorem}
$\{Q'_{\halmic,0},Q'_{\halmic,1}\}$ is computationally hiding and statistically binding.
\end{theorem}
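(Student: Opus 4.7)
The plan is to recognize $\{Q'_{\halmic,0},Q'_{\halmic,1}\}$ as nothing more than the dual commitment of $\{Q_{\halmic,0},Q_{\halmic,1}\}$ in the sense of \Cref{thm:conversion}, and then invoke the conversion theorem together with \Cref{thm:halmic}. Concretely, recall that the dual scheme $\{Q'_0,Q'_1\}$ of $\{Q_0,Q_1\}$ from \Cref{thm:conversion} satisfies
\[
Q'_b\ket{0}_{\regC',\regR'}= \tfrac{1}{\sqrt{2}}\bigl((Q_{0}\ket{0})_{\regC,\regR}\ket{0}_{\regD}
    +(-1)^{b} (Q_{1}\ket{0})_{\regC,\regR}\ket{1}_{\regD} \bigr),
\]
with $\regC'=(\regR,\regD)$ and $\regR'=\regC$.

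First I would substitute the explicit expressions for $Q_{\halmic,0}\ket{0}$ and $Q_{\halmic,1}\ket{0}$ into the right-hand side above. The two branches differ only in whether the last bit of the $\regC$-register is $f_{k'}(x)$ or $f_{k'}(x)\oplus 1$; pulling out the relative phase $(-1)^b$ and relabeling registers so that $\regC'=(\regR,\regD)$ collects the $\ket{k,k',x}$ and the flag $\ket{0}/\ket{1}$, while $\regR'=\regC$ collects the $\ket{k,k',H_k(x),\cdot}$, reproduces verbatim the expression displayed for $Q'_{\halmic,b}\ket{0}_{\regC',\regR'}$ in the excerpt. Hence $\{Q'_{\halmic,0},Q'_{\halmic,1}\}$ really is the dual of $\{Q_{\halmic,0},Q_{\halmic,1}\}$.

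Next I would apply \Cref{thm:conversion}. By \Cref{thm:halmic}, the base scheme is statistically hiding and computationally binding. Taking $\mathtt{X}=\text{statistically}$ in \Cref{item:hiding_to_binding} of \Cref{thm:conversion} turns statistical hiding of the base scheme into statistical binding of the dual, and taking $\mathtt{Y}=\text{computationally}$ in \Cref{item:binding_to_hiding} turns computational binding of the base scheme into computational hiding of the dual. Combining the two conclusions gives the claim.

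There is essentially no obstacle: all the real work sits in \Cref{thm:halmic} (where statistical hiding reduces to the leftover hash lemma and computational binding reduces to the collapsing property of $\mathcal{H}$) and in \Cref{thm:conversion} (the structural result derived from \Cref{thm:AAS_generalized}). The only point that requires care is the register bookkeeping in matching our scheme to the template of \Cref{thm:conversion}; once that identification is made explicit, the theorem follows as a one-line corollary.
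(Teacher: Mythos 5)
Your proposal is correct and matches the paper exactly: the paper derives this theorem by stating that $\{Q'_{\halmic,0},Q'_{\halmic,1}\}$ is obtained by applying the conversion of \Cref{thm:conversion} to $\{Q_{\halmic,0},Q_{\halmic,1}\}$, and then invoking \Cref{thm:halmic} together with \Cref{thm:conversion}. Your careful check that the displayed expression for $Q'_{\halmic,b}\ket{0}$ is indeed the dual of $Q_{\halmic,b}\ket{0}$ up to register reordering is exactly the only nontrivial step, and it is correct.
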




\else 

\smallskip
{\bf Acknowledgements.}
MH is supported by
a KIAS Individual Grant QP089801.
TM is supported by
JST Moonshot R\verb|&|D JPMJMS2061-5-1-1, 
JST FOREST, 
MEXT QLEAP, 
the Grant-in-Aid for Scientific Research (B) No.JP19H04066, 
the Grant-in Aid for Transformative Research Areas (A) 21H05183,
and 
the Grant-in-Aid for Scientific Research (A) No.22H00522.

\bibliographystyle{splncs04} 
\bibliography{abbrev3,crypto,reference}

\newcommand{\etalchar}[1]{$^{#1}$}
\begin{thebibliography}{YWLQ15}

\bibitem[AAS20]{AAS20}
Scott Aaronson, Yosi Atia, and Leonard Susskind.
\newblock On the hardness of detecting macroscopic superpositions.
\newblock {\em Electron. Colloquium Comput. Complex.}, page 146, 2020.

\bibitem[AC02]{AC02}
Mark Adcock and Richard Cleve.
\newblock A quantum goldreich-levin theorem with cryptographic applications.
\newblock In Helmut Alt and Afonso Ferreira, editors, {\em {STACS} 2002, 19th
  Annual Symposium on Theoretical Aspects of Computer Science, Antibes - Juan
  les Pins, France, March 14-16, 2002, Proceedings}, volume 2285 of {\em
  Lecture Notes in Computer Science}, pages 323--334. Springer, 2002.

\bibitem[ADMP20]{AC:ADMP20}
Navid Alamati, Luca {De Feo}, Hart Montgomery, and Sikhar Patranabis.
\newblock Cryptographic group actions and applications.
\newblock In Shiho Moriai and Huaxiong Wang, editors, {\em ASIACRYPT~2020,
  Part~II}, volume 12492 of {\em {LNCS}}, pages 411--439. Springer, Heidelberg,
  December 2020.

\bibitem[AGKZ20]{STOC:AGKZ20}
Ryan Amos, Marios Georgiou, Aggelos Kiayias, and Mark Zhandry.
\newblock One-shot signatures and applications to hybrid quantum/classical
  authentication.
\newblock In Konstantin Makarychev, Yury Makarychev, Madhur Tulsiani, Gautam
  Kamath, and Julia Chuzhoy, editors, {\em 52nd ACM STOC}, pages 255--268.
  {ACM} Press, June 2020.

\bibitem[AGM21]{Alagic_2021}
Gorjan Alagic, Tommaso Gagliardoni, and Christian Majenz.
\newblock Can you sign a quantum state?
\newblock {\em Quantum}, 5:603, dec 2021.

\bibitem[AQY22]{C:AnaQiaYue22}
Prabhanjan Ananth, Luowen Qian, and Henry Yuen.
\newblock Cryptography from pseudorandom quantum states.
\newblock In Yevgeniy Dodis and Thomas Shrimpton, editors, {\em CRYPTO~2022,
  Part~I}, volume 13507 of {\em {LNCS}}, pages 208--236. Springer, Heidelberg,
  August 2022.

\bibitem[Bab16]{STOC:Babai16}
L{\'a}szl{\'o} Babai.
\newblock Graph isomorphism in quasipolynomial time [extended abstract].
\newblock In Daniel Wichs and Yishay Mansour, editors, {\em 48th ACM STOC},
  pages 684--697. {ACM} Press, June 2016.

\bibitem[BB84]{BB84}
Charles~H. Bennett and Gilles Brassard.
\newblock {Quantum cryptography: Public key distribution and coin tossing}.
\newblock In {\em Proceedings of IEEE International Conference on Computers,
  Systems, and Signal Processing}, pages 175--179, 1984.

\bibitem[BB21]{TCC:BitBra21}
Nir Bitansky and Zvika Brakerski.
\newblock Classical binding for quantum commitments.
\newblock In Kobbi Nissim and Brent Waters, editors, {\em TCC~2021, Part~I},
  volume 13042 of {\em {LNCS}}, pages 273--298. Springer, Heidelberg, November
  2021.

\bibitem[BC91]{C:BraCre90}
Gilles Brassard and Claude Cr{\'e}peau.
\newblock Quantum bit commitment and coin tossing protocols.
\newblock In Alfred~J. Menezes and Scott~A. Vanstone, editors, {\em CRYPTO'90},
  volume 537 of {\em {LNCS}}, pages 49--61. Springer, Heidelberg, August 1991.

\bibitem[BCKM21]{C:BCKM21b}
James Bartusek, Andrea Coladangelo, Dakshita Khurana, and Fermi Ma.
\newblock One-way functions imply secure computation in a quantum world.
\newblock In Tal Malkin and Chris Peikert, editors, {\em CRYPTO~2021, Part~I},
  volume 12825 of {\em {LNCS}}, pages 467--496, Virtual Event, August 2021.
  Springer, Heidelberg.

\bibitem[BCM{\etalchar{+}}18]{FOCS:BCMVV18}
Zvika Brakerski, Paul Christiano, Urmila Mahadev, Umesh~V. Vazirani, and Thomas
  Vidick.
\newblock A cryptographic test of quantumness and certifiable randomness from a
  single quantum device.
\newblock In Mikkel Thorup, editor, {\em 59th FOCS}, pages 320--331. {IEEE}
  Computer Society Press, October 2018.

\bibitem[BDS17]{BDS17}
Shalev Ben-David and Or~Sattath.
\newblock Quantum tokens for digital signatures.
\newblock Cryptology ePrint Archive, Paper 2017/094, 2017.
\newblock \url{https://eprint.iacr.org/2017/094}.

\bibitem[BHY09]{EC:BelHofYil09}
Mihir Bellare, Dennis Hofheinz, and Scott Yilek.
\newblock Possibility and impossibility results for encryption and commitment
  secure under selective opening.
\newblock In Antoine Joux, editor, {\em EUROCRYPT~2009}, volume 5479 of {\em
  {LNCS}}, pages 1--35. Springer, Heidelberg, April 2009.

\bibitem[BJ15]{C:BroJef15}
Anne Broadbent and Stacey Jeffery.
\newblock Quantum homomorphic encryption for circuits of low {T}-gate
  complexity.
\newblock In Rosario Gennaro and Matthew J.~B. Robshaw, editors, {\em
  CRYPTO~2015, Part~II}, volume 9216 of {\em {LNCS}}, pages 609--629. Springer,
  Heidelberg, August 2015.

\bibitem[BY91]{C:BraYun90}
Gilles Brassard and Moti Yung.
\newblock One-way group actions.
\newblock In Alfred~J. Menezes and Scott~A. Vanstone, editors, {\em CRYPTO'90},
  volume 537 of {\em {LNCS}}, pages 94--107. Springer, Heidelberg, August 1991.

\bibitem[CDMS04]{TCC:CDMS04}
Claude Cr{\'e}peau, Paul Dumais, Dominic Mayers, and Louis Salvail.
\newblock Computational collapse of quantum state with application to oblivious
  transfer.
\newblock In Moni Naor, editor, {\em TCC~2004}, volume 2951 of {\em {LNCS}},
  pages 374--393. Springer, Heidelberg, February 2004.

\bibitem[CKR11]{ICALP:ChaKerRos11}
Andr{\'e} Chailloux, Iordanis Kerenidis, and Bill Rosgen.
\newblock Quantum commitments from complexity assumptions.
\newblock In Luca Aceto, Monika Henzinger, and Jiri Sgall, editors, {\em ICALP
  2011, Part~I}, volume 6755 of {\em {LNCS}}, pages 73--85. Springer,
  Heidelberg, July 2011.

\bibitem[CLLZ21]{C:CLLZ21}
Andrea Coladangelo, Jiahui Liu, Qipeng Liu, and Mark Zhandry.
\newblock Hidden cosets and applications to unclonable cryptography.
\newblock In Tal Malkin and Chris Peikert, editors, {\em CRYPTO~2021, Part~I},
  volume 12825 of {\em {LNCS}}, pages 556--584, Virtual Event, August 2021.
  Springer, Heidelberg.

\bibitem[CLM{\etalchar{+}}18]{AC:CLMPR18}
Wouter Castryck, Tanja Lange, Chloe Martindale, Lorenz Panny, and Joost Renes.
\newblock {CSIDH}: An efficient post-quantum commutative group action.
\newblock In Thomas Peyrin and Steven Galbraith, editors, {\em ASIACRYPT~2018,
  Part~III}, volume 11274 of {\em {LNCS}}, pages 395--427. Springer,
  Heidelberg, December 2018.

\bibitem[CLS01]{EC:CreLegSal01}
Claude Cr{\'e}peau, Fr{\'e}d{\'e}ric L{\'e}gar{\'e}, and Louis Salvail.
\newblock How to convert the flavor of a quantum bit commitment.
\newblock In Birgit Pfitzmann, editor, {\em EUROCRYPT~2001}, volume 2045 of
  {\em {LNCS}}, pages 60--77. Springer, Heidelberg, May 2001.

\bibitem[Cou06]{Couveignes06}
Jean-Marc Couveignes.
\newblock Hard homogeneous spaces.
\newblock Cryptology ePrint Archive, Paper 2006/291, 2006.
\newblock \url{https://eprint.iacr.org/2006/291}.

\bibitem[CW79]{CW79}
Larry Carter and Mark~N. Wegman.
\newblock Universal classes of hash functions.
\newblock {\em J. Comput. Syst. Sci.}, 18(2):143--154, 1979.

\bibitem[DFS04]{C:DamFehSal04}
Ivan Damg{\aa}rd, Serge Fehr, and Louis Salvail.
\newblock Zero-knowledge proofs and string commitments withstanding quantum
  attacks.
\newblock In Matthew Franklin, editor, {\em CRYPTO~2004}, volume 3152 of {\em
  {LNCS}}, pages 254--272. Springer, Heidelberg, August 2004.

\bibitem[DH76]{DH76}
Whitfield Diffie and Martin~E. Hellman.
\newblock New directions in cryptography.
\newblock {\em {IEEE} Trans. Inf. Theory}, 22(6):644--654, 1976.

\bibitem[DMS00]{EC:DumMaySal00}
Paul Dumais, Dominic Mayers, and Louis Salvail.
\newblock Perfectly concealing quantum bit commitment from any quantum one-way
  permutation.
\newblock In Bart Preneel, editor, {\em EUROCRYPT~2000}, volume 1807 of {\em
  {LNCS}}, pages 300--315. Springer, Heidelberg, May 2000.

\bibitem[DS22]{cryptoeprint:2022/786}
Marcel Dall'Agnol and Nicholas Spooner.
\newblock On the necessity of collapsing.
\newblock Cryptology ePrint Archive, Paper 2022/786, 2022.
\newblock \url{https://eprint.iacr.org/2022/786}.

\bibitem[{ElG}85]{ElGamal85}
Taher {ElGamal}.
\newblock A public key cryptosystem and a signature scheme based on discrete
  logarithms.
\newblock {\em {IEEE} Transactions on Information Theory}, 31:469--472, 1985.

\bibitem[FGS19]{FGS19}
Vyacheslav Futorny, Joshua~A. Grochow, and Vladimir~V. Sergeichuk.
\newblock Wildness for tensors.
\newblock {\em Linear Algebra and its Applications}, 566:212--244, apr 2019.

\bibitem[FUYZ22]{FUYZ20}
Junbin Fang, Dominique Unruh, Jun Yan, and Dehua Zhou.
\newblock How to base security on the perfect/statistical binding property of
  quantum bit commitment?
\newblock In Sang~Won Bae and Heejin Park, editors, {\em 33rd International
  Symposium on Algorithms and Computation, {ISAAC} 2022, December 19-21, 2022,
  Seoul, Korea}, volume 248 of {\em LIPIcs}, pages 26:1--26:12. Schloss
  Dagstuhl - Leibniz-Zentrum f{\"{u}}r Informatik, 2022.

\bibitem[GJMZ22]{GJMZ22}
Sam Gunn, Nathan Ju, Fermi Ma, and Mark Zhandry.
\newblock Commitments to quantum states.
\newblock arXiv:2210.05138, 2022.

\bibitem[GL89]{STOC:GolLev89}
Oded Goldreich and Leonid~A. Levin.
\newblock A hard-core predicate for all one-way functions.
\newblock In {\em 21st ACM STOC}, pages 25--32. {ACM} Press, May 1989.

\bibitem[GMR84]{FOCS:GolMicRiv84}
Shafi Goldwasser, Silvio Micali, and Ronald~L. Rivest.
\newblock A ``paradoxical'' solution to the signature problem (extended
  abstract).
\newblock In {\em 25th FOCS}, pages 441--448. {IEEE} Computer Society Press,
  October 1984.

\bibitem[GPV08]{STOC:GenPeiVai08}
Craig Gentry, Chris Peikert, and Vinod Vaikuntanathan.
\newblock Trapdoors for hard lattices and new cryptographic constructions.
\newblock In Richard~E. Ladner and Cynthia Dwork, editors, {\em 40th ACM STOC},
  pages 197--206. {ACM} Press, May 2008.

\bibitem[GRS00]{GRS00}
Oded Goldreich, Ronitt Rubinfeld, and Madhu Sudan.
\newblock Learning polynomials with queries: The highly noisy case.
\newblock {\em {SIAM} J. Discret. Math.}, 13(4):535--570, 2000.

\bibitem[HILL99]{SIAM:HILL99}
Johan H{\aa}stad, Russell Impagliazzo, Leonid~A. Levin, and Michael Luby.
\newblock A pseudorandom generator from any one-way function.
\newblock {\em {SIAM} J. Comput.}, 28(4):1364--1396, 1999.

\bibitem[HM96]{C:HalMic96}
Shai Halevi and Silvio Micali.
\newblock Practical and provably-secure commitment schemes from collision-free
  hashing.
\newblock In Neal Koblitz, editor, {\em CRYPTO'96}, volume 1109 of {\em
  {LNCS}}, pages 201--215. Springer, Heidelberg, August 1996.

\bibitem[HMS04]{C:HolMauSjo04}
Thomas Holenstein, Ueli~M. Maurer, and Johan Sj{\"o}din.
\newblock Complete classification of bilinear hard-core functions.
\newblock In Matthew Franklin, editor, {\em CRYPTO~2004}, volume 3152 of {\em
  {LNCS}}, pages 73--91. Springer, Heidelberg, August 2004.

\bibitem[HR07]{STOC:HaiRei07}
Iftach Haitner and Omer Reingold.
\newblock Statistically-hiding commitment from any one-way function.
\newblock In David~S. Johnson and Uriel Feige, editors, {\em 39th ACM STOC},
  pages 1--10. {ACM} Press, June 2007.

\bibitem[HRV10]{STOC:HaiReiVad10}
Iftach Haitner, Omer Reingold, and Salil~P. Vadhan.
\newblock Efficiency improvements in constructing pseudorandom generators from
  one-way functions.
\newblock In Leonard~J. Schulman, editor, {\em 42nd ACM STOC}, pages 437--446.
  {ACM} Press, June 2010.

\bibitem[Imp95]{Impagliazzo95}
Russell Impagliazzo.
\newblock A personal view of average-case complexity.
\newblock In {\em Proceedings of the Tenth Annual Structure in Complexity
  Theory Conference, Minneapolis, Minnesota, USA, June 19-22, 1995}, pages
  134--147. {IEEE} Computer Society, 1995.

\bibitem[JD11]{PQCRYPTO:JaoDeFo11}
David Jao and Luca {De Feo}.
\newblock Towards quantum-resistant cryptosystems from supersingular elliptic
  curve isogenies.
\newblock In Bo-Yin Yang, editor, {\em Post-Quantum Cryptography - 4th
  International Workshop, PQCrypto 2011}, pages 19--34. Springer, Heidelberg,
  November~/~December 2011.

\bibitem[JLS18]{C:JiLiuSon18}
Zhengfeng Ji, Yi-Kai Liu, and Fang Song.
\newblock Pseudorandom quantum states.
\newblock In Hovav Shacham and Alexandra Boldyreva, editors, {\em CRYPTO~2018,
  Part~III}, volume 10993 of {\em {LNCS}}, pages 126--152. Springer,
  Heidelberg, August 2018.

\bibitem[JQSY19]{TCC:JQSY19}
Zhengfeng Ji, Youming Qiao, Fang Song, and Aaram Yun.
\newblock General linear group action on tensors: {A} candidate for
  post-quantum cryptography.
\newblock In Dennis Hofheinz and Alon Rosen, editors, {\em TCC~2019, Part~I},
  volume 11891 of {\em {LNCS}}, pages 251--281. Springer, Heidelberg, December
  2019.

\bibitem[KKNY05]{EC:KKNY05}
Akinori Kawachi, Takeshi Koshiba, Harumichi Nishimura, and Tomoyuki Yamakami.
\newblock Computational indistinguishability between quantum states and its
  cryptographic application.
\newblock In Ronald Cramer, editor, {\em EUROCRYPT~2005}, volume 3494 of {\em
  {LNCS}}, pages 268--284. Springer, Heidelberg, May 2005.

\bibitem[KO09]{KO09}
Takeshi Koshiba and Takanori Odaira.
\newblock Statistically-hiding quantum bit commitment from
  approximable-preimage-size quantum one-way function.
\newblock In Andrew~M. Childs and Michele Mosca, editors, {\em Theory of
  Quantum Computation, Communication, and Cryptography, 4th Workshop, {TQC}
  2009, Waterloo, Canada, May 11-13, 2009, Revised Selected Papers}, volume
  5906 of {\em Lecture Notes in Computer Science}, pages 33--46. Springer,
  2009.

\bibitem[KO11]{KO11}
Takeshi Koshiba and Takanori Odaira.
\newblock Non-interactive statistically-hiding quantum bit commitment from any
  quantum one-way function.
\newblock arXiv:1102.3441, 2011.

\bibitem[Kre21]{Kre21}
William Kretschmer.
\newblock Quantum pseudorandomness and classical complexity.
\newblock In Min{-}Hsiu Hsieh, editor, {\em 16th Conference on the Theory of
  Quantum Computation, Communication and Cryptography, {TQC} 2021, July 5-8,
  2021, Virtual Conference}, volume 197 of {\em LIPIcs}, pages 2:1--2:20.
  Schloss Dagstuhl - Leibniz-Zentrum f{\"{u}}r Informatik, 2021.

\bibitem[LC97]{LC97}
Hoi-Kwong Lo and Hoi~Fung Chau.
\newblock Is quantum bit commitment really possible?
\newblock {\em Physical Review Letters}, 78(17):3410, 1997.

\bibitem[Mah18]{FOCS:Mahadev18b}
Urmila Mahadev.
\newblock Classical homomorphic encryption for quantum circuits.
\newblock In Mikkel Thorup, editor, {\em 59th FOCS}, pages 332--338. {IEEE}
  Computer Society Press, October 2018.

\bibitem[May97]{May97}
Dominic Mayers.
\newblock Unconditionally secure quantum bit commitment is impossible.
\newblock {\em Physical review letters}, 78(17):3414, 1997.

\bibitem[MY22]{C:MorYam22}
Tomoyuki Morimae and Takashi Yamakawa.
\newblock Quantum commitments and signatures without one-way functions.
\newblock In Yevgeniy Dodis and Thomas Shrimpton, editors, {\em CRYPTO~2022,
  Part~I}, volume 13507 of {\em {LNCS}}, pages 269--295. Springer, Heidelberg,
  August 2022.

\bibitem[Nao91]{JC:Naor91}
Moni Naor.
\newblock Bit commitment using pseudorandomness.
\newblock {\em Journal of Cryptology}, 4(2):151--158, January 1991.

\bibitem[OTU00]{C:OkaTanUch00}
Tatsuaki Okamoto, Keisuke Tanaka, and Shigenori Uchiyama.
\newblock Quantum public-key cryptosystems.
\newblock In Mihir Bellare, editor, {\em CRYPTO~2000}, volume 1880 of {\em
  {LNCS}}, pages 147--165. Springer, Heidelberg, August 2000.

\bibitem[Reg09]{JACM:Regev09}
Oded Regev.
\newblock On lattices, learning with errors, random linear codes, and
  cryptography.
\newblock {\em J. {ACM}}, 56(6):34:1--34:40, 2009.

\bibitem[RS06]{cryptoeprint:2006/145}
Alexander Rostovtsev and Anton Stolbunov.
\newblock Public-key cryptosystem based on isogenies.
\newblock Cryptology ePrint Archive, Paper 2006/145, 2006.
\newblock \url{https://eprint.iacr.org/2006/145}.

\bibitem[Sho99]{Shor99}
Peter~W. Shor.
\newblock Polynomial-time algorithms for prime factorization and discrete
  logarithms on a quantum computer.
\newblock {\em {SIAM} Rev.}, 41(2):303--332, 1999.

\bibitem[Unr12]{EC:Unruh12}
Dominique Unruh.
\newblock Quantum proofs of knowledge.
\newblock In David Pointcheval and Thomas Johansson, editors, {\em
  EUROCRYPT~2012}, volume 7237 of {\em {LNCS}}, pages 135--152. Springer,
  Heidelberg, April 2012.

\bibitem[Unr16a]{AC:Unruh16}
Dominique Unruh.
\newblock Collapse-binding quantum commitments without random oracles.
\newblock In Jung~Hee Cheon and Tsuyoshi Takagi, editors, {\em ASIACRYPT~2016,
  Part~II}, volume 10032 of {\em {LNCS}}, pages 166--195. Springer, Heidelberg,
  December 2016.

\bibitem[Unr16b]{EC:Unruh16}
Dominique Unruh.
\newblock Computationally binding quantum commitments.
\newblock In Marc Fischlin and Jean-S{\'{e}}bastien Coron, editors, {\em
  EUROCRYPT~2016, Part~II}, volume 9666 of {\em {LNCS}}, pages 497--527.
  Springer, Heidelberg, May 2016.

\bibitem[VZ12]{STOC:VadZhe12}
Salil~P. Vadhan and Colin~Jia Zheng.
\newblock Characterizing pseudoentropy and simplifying pseudorandom generator
  constructions.
\newblock In Howard~J. Karloff and Toniann Pitassi, editors, {\em 44th ACM
  STOC}, pages 817--836. {ACM} Press, May 2012.

\bibitem[Yan21]{AC:Yan21}
Jun Yan.
\newblock Quantum computationally predicate-binding commitments with
  application in quantum zero-knowledge arguments for {NP}.
\newblock In Mehdi Tibouchi and Huaxiong Wang, editors, {\em ASIACRYPT~2021,
  Part~I}, volume 13090 of {\em {LNCS}}, pages 575--605. Springer, Heidelberg,
  December 2021.

\bibitem[Yan22]{AC:Yan22}
Jun Yan.
\newblock General properties of quantum bit commitments (extended abstract).
\newblock In Shweta Agrawal and Dongdai Lin, editors, {\em Advances in
  Cryptology - {ASIACRYPT} 2022 - 28th International Conference on the Theory
  and Application of Cryptology and Information Security, Taipei, Taiwan,
  December 5-9, 2022, Proceedings, Part {IV}}, volume 13794 of {\em Lecture
  Notes in Computer Science}, pages 628--657. Springer, 2022.

\bibitem[YWLQ15]{YWLQ15}
Jun Yan, Jian Weng, Dongdai Lin, and Yujuan Quan.
\newblock Quantum bit commitment with application in quantum zero-knowledge
  proof (extended abstract).
\newblock In Khaled~M. Elbassioni and Kazuhisa Makino, editors, {\em Algorithms
  and Computation - 26th International Symposium, {ISAAC} 2015, Nagoya, Japan,
  December 9-11, 2015, Proceedings}, volume 9472 of {\em Lecture Notes in
  Computer Science}, pages 555--565. Springer, 2015.

\bibitem[Zha19]{EC:Zhandry19b}
Mark Zhandry.
\newblock Quantum lightning never strikes the same state twice.
\newblock In Yuval Ishai and Vincent Rijmen, editors, {\em EUROCRYPT~2019,
  Part~III}, volume 11478 of {\em {LNCS}}, pages 408--438. Springer,
  Heidelberg, May 2019.

\end{thebibliography}

\fi

\end{document}